\def\BibTeX{{\rm B\kern-.05em{\sc i\kern-.025em b}\kern-.08em
    T\kern-.1667em\lower.7ex\hbox{E}\kern-.125emX}}
\newif\ifcomments
\newif\ifshowremoved
\definecolor{lightblue}{RGB}{210,210,225}
\definecolor{lightred}{RGB}{225,210,210}
\definecolor{lightgreen}{RGB}{210,225,210}
\definecolor{lightyellow}{RGB}{225,222,200}
\definecolor{lightpurple}{RGB}{225,210,225}
\definecolor{warningyellow}{RGB}{247, 245, 187}
\definecolor{darkergreen}{RGB}{0,64,0}
\definecolor{darkred}{RGB}{128,0,0}
\definecolor{darkblue}{RGB}{0,0,139}
\definecolor{darkgreen}{RGB}{0,128,0}
\definecolor{darkpurple}{RGB}{128,0,128}
\definecolor{warningorange}{RGB}{124, 81, 0}
\definecolor{eyecancerpink}{rgb}{1.0, 0.0, 1.0}
\definecolor{radiationyellow}{rgb}{0.8, 1.0, 0.0}
\def\enablecomments{
  \commentstrue
  \enableheader
}
\def\enableheader{
\AddEverypageHook{{
\begin{textblock*}{\paperwidth}(0mm,2mm) {
\centering
\textcolor{gray}{\textbf{
\Huge! D R A F T !
}\\\vspace*{1mm}
\footnotesize
}
}
\end{textblock*}
}}\par
\AddEverypageHook{{
\begin{textblock*}{\paperwidth}(0mm,9mm) {
\centering
\textcolor{gray}{
\footnotesize
Disable author comments to remove this watermark.
}\\
}
\end{textblock*}
}}\par
}
\def\THICKhrulefill{\leavevmode \leaders \hrule height 5pt\hfill \kern \z@}
\newcommand{\highlightedremark}[4]{%
  \begin{center}\fcolorbox{#1}{#2}{%
  \begin{minipage}{.98\linewidth}\color{#1}%
  \textbf{\THICKhrulefill[ #3 ]\THICKhrulefill}%
  \par\noindent#4\end{minipage}}\end{center}%
}
\newcommand{\removed}[1]{\ifcomments\highlightedremark{warningorange}{warningyellow}{REMOVED CONTENT}{\ifshowremoved #1 \else Add ``\textbackslash showremovedtrue'' below ``\textbackslash enablecomments'' to display removed content. \fi}\fi}
\renewcommand{\removed}[1]{}
\newif\ifdraft
\newif\ifextended
\definecolor{lightgray}{rgb}{.9,.9,.9}
\definecolor{darkgray}{rgb}{.4,.4,.4}
\definecolor{purple}{rgb}{0.65, 0.12, 0.82}
\definecolor{darkgreen}{rgb}{0, 0.5, 0}
\definecolor{turquoise}{rgb}{0, 0.5, 0.5}
\definecolor{plum}{rgb}{.4, .14, .37}
\definecolor{mediumgreen}{HTML}{009900}
\definecolor{mediumred}{HTML}{CC0000}
\newcommand{\define}{:=}
\newcommand{\nil}{\epsilon}
\newcommand{\cons}[2]{{#1}::{#2}}
\newcommand{\BB}{\mathbb{B}}
\newcommand{\NN}{\mathbb{N}}
\newcommand{\arrayof}[1]{[{#1}]}
\newcommand{\sequenceof}[1]{\mathcal{L}(#1)}
\newcommand{\setof}[1]{\mathcal{P}(#1)}
\newcommand{\lam}[2]{\lambda {#1}. \, {#2}}
\newcommand{\none}{\bot}
\newcommand{\some}[1]{\left \lfloor {#1} \right \rfloor}
\newcommand{\access}[2]{{#1}. \mathbf{#2}} 
\newcommand{\size}[1]{|{#1}|}
\newcommand{\domain}[1]{\mathcal{D}({#1})}
\newcommand{\bytearray}{\arrayof{\BB^8}}
\newcommand{\integer}[1]{\NN_{#1}}
\newcommand{\logevents}{\textit{Ev}_{\textit{log}}}
\newcommand{\accounts}{\mathbb{A}}
\newcommand{\blockheaders}{\mathcal{H}}
\newcommand{\teffects}{N}
\newcommand{\transenvs}{\mathcal{T}_{\textit{env}}}
\newcommand{\exenvs}{I}
\newcommand{\mstates}{M}
\newcommand{\stackof}[1]{\mathcal{L}(#1)}
\newcommand{\callstacks}{\mathbb{S}}
\newcommand{\gstates}{\Sigma}
\newcommand{\project}[2]{{#1}\downarrow_{#2}}
\newcommand{\filtercallscreates}[1]{\textsf{calls}_{#1}}
\newcommand{\sstep}[3]{{#1} \vDash {#2} \, \rightarrow {#3}} 
\newcommand{\ssteps}[3]{{#1} \vDash {#2} \, \rightarrow^* {#3}} 
\newcommand{\ssteptrace}[4]{{#1} \vDash {#2} \, \xrightarrow[]{#4} {#3}} 
\newcommand{\sstepstrace}[4]{{#1} \vDash {#2} \, \xrightarrow[]{#4}^* {#3}}
\newcommand{\callstack}{S} 		
\newcommand{\transenv}{\Gamma} 	 
\newcommand{\originator}{\textsf{o}} 
\newcommand{\gasprize}{\textit{prize}}
\newcommand{\blockheader}{H}
\newcommand{\parent}{\textit{parent}}
\newcommand{\beneficiary}{\textit{beneficiary}}
\newcommand{\difficulty}{\textit{difficulty}}
\newcommand{\blocknumber}{\textit{number}}
\newcommand{\gaslimit}{\textit{gaslimit}}
\newcommand{\timestamp}{\textit{timestamp}}
\newcommand{\mstate}{\mu}
\newcommand{\exenv}{\iota}
\newcommand{\gstate}{\sigma}
\newcommand{\regstate}[3]{(#1, #2, #3)}
\newcommand{\regstatefull}[4]{(#1, #2, #3)}
\newcommand{\excstate}{\textit{EXC}}
\newcommand{\haltstate}[4]{\textit{HALT}(#1, #2, #3)}
\newcommand{\lgas}{\textit{g}}
\newcommand{\callstackplain}{S_{\textit{P}}}
\newcommand{\smstate}[5]{(#1, #2, #3, #4, #5)}
\newcommand{\sexenv}[5]{(#1, #2, \allowbreak #3, #4, #5)}
\newcommand{\accountstate}[4]{(#1, #2, #3, #4)}
\newcommand{\gas}{\textsf{g}}
\newcommand{\data}{\textsf{d}}
\newcommand{\pc}{\textsf{pc}}
\newcommand{\stack}{\textsf{s}}
\newcommand{\memo}{\textsf{m}}
\newcommand{\activeaccount}{\textsf{actor}}
\newcommand{\inputdata}{\textsf{input}}
\newcommand{\sender}{\textsf{sender}}
\newcommand{\tvalue}{\textsf{value}}
\newcommand{\activecode}{\textsf{code}}
\newcommand{\nonce}{\textsf{n}}
\newcommand{\balance}{\textsf{b}}
\newcommand{\accountcode}{\textsf{code}}
\newcommand{\transeffects}{\eta}
\newcommand{\exconf}[2]{(#1, #2)} 
\newcommand{\concatstack}[2]{{#1}++{#2} }
\newcommand{\recipient}{\textit{to}}
\newcommand{\contractaddress}{a} 	
\newcommand{\contractcode}{\textit{code}}
\newcommand{\annotate}[2]{{#1}_{#2}}
\renewcommand{\annotate}[2]{#1_#2}
\newcommand{\exstate}{s}
\newcommand{\finalstate}[1]{\textit{final} \, ({#1})}
\newcommand{\equalupto}[1]{=_{/{#1}}}
\newcommand{\nif}{n_{\textit{if}}}
\newcommand{\ISZERO}{\textsf{ISZERO}}
\newcommand{\SUICIDE}{\textsf{SELFDESTRUCT}}
\newcommand{\ADD}{\textsf{ADD}}
\newcommand{\CALLCODE}{\textsf{CALLCODE}}
\newcommand{\DELEGATECALL}{\textsf{DELEGATECALL}}
\newcommand{\pluseq}{\mathrel{+}=}
\newcommand{\minuseq}{\mathrel{-}=}
\newcommand{\cond}[3]{{#1} \, ? \,  {#2}\, : \,  {#3}}
\newcommand{\mini}[2]{\textit{min} \, (#1, #2)}
\newcommand{\update}[3]{{#1}[{#2} \rightarrow {#3}]}
\newcommand{\arraypos}[2]{{#1} \, [#2]}
\newcommand{\updategstate}[3]{{#1} \big \langle {#2} \rightarrow {#3} \big \rangle}
\newcommand{\concat}[2]{{#1} \cdot {#2}}
\newcommand{\extract}[3]{{#1}[#2, #3]}
\newcommand{\inc}[3]{{#1}[{#2}\pluseq{#3}]}
\newcommand{\dec}[3]{{#1}[{#2}\minuseq{#3}]}
\newcommand{\account}[4]{(#1, #2, #3, #4)}
\newcommand{\suicideset}{\textit{S}_\dagger}
\renewcommand{\exconf}[2]{#1}
\newcommand{\CALL}{\textsf{CALL}}
\newcommand{\STOP}{\textsf{STOP}}
\newcommand{\RETURN}{\textsf{RETURN}}
\newcommand{\CREATE}{\textsf{CREATE}}
\newcommand{\SHA}{\textsf{SHA3}}
\newcommand{\BLOCKHASH}{\textsf{BLOCKHASH}}
\newcommand{\JUMP}{\textsf{JUMP}}
\newcommand{\JUMPI}{\textsf{JUMPI}}
\newcommand{\EXTCODESIZE}{\textsf{EXTCODESIZE}}
\newcommand{\EXTCODECOPY}{\textsf{EXTCODECOPY}}
\newcommand{\curropcode}[2]{\omega_{#1, #2}}
\newcommand{\valu}{\textit{va}}
\newcommand{\io}{\textit{io}}
\newcommand{\is}{\textit{is}}
\newcommand{\oo}{\textit{oo}}
\newcommand{\os}{\textit{os}}
\newcommand{\aw}{\textit{aw}}
\newcommand{\addr}{\textit{addr}}
\newcommand{\emptyarray}{\epsilon}
\newcommand{\memext}[3]{M\,(#1, #2, #3)} 
\newcommand{\costmem}[2]{C_{\textit{mem}} \, (#1, #2)}
\newcommand{\costs}{\textit{c}}
\newcommand{\cd}{\textit{cd}}
\newcommandx{\pcheckpremscallhelp}[6]{\textsf{checkPrems}^{\textsf{\CALL}}_{\tempa} \, ({\tempb}, {\tempc}, {\tempd}, {\tempe}, {\tempf}, {\tempg}, {\temph}, {\tempi}, {#1}, {#2}, {#3}, {#4}, {#5}, {#6})}
\newcommand{\pos}{\textit{pos}}
\newcommand{\lpc}{\textit{pc}}
\newcommand{\code}{\textit{code}}
\newcommand{\stor}{\textit{stor}}
\newcommand{\siz}{\textit{size}}
\newcommand{\arrayinterval}[3]{{#1}[#2, #3]}
\newcommand{\accountnoncev}{\textit{nonce}}
\newcommand{\accountbalancev}{\textit{balance}}
\newcommand{\accountstorv}{\textit{stor}}
\newcommand{\accountcodev}{\textit{code}}
\newcommand{\datav}{d}
\newcommand{\downv}{\textit{down}}
\newcommand{\upv}{\textit{up}}
\newcommand{\bitstringtobytearray}[1]{{#1}_{\bytearray}}
\newcommand{\bytearraytobitstring}[1]{{#1}_\BB}
\newcommand{\sinteger}[1]{\textit{Int}_{#1}}
\newcommand{\callstacksplain}{\callstacks_{\textit{plain}}}
\newcommand{\accountv}{\textit{acc}}
\newcommand{\binopv}{\textit{i}_\textit{bin}}
\newcommand{\binops}{\textit{Inst}_{\textit{bin}}}
\newcommand{\binopcost}[1]{\text{cost}_\textit{bin}(#1)}
\newcommand{\binopfun}[1]{\text{fun}_\textit{bin}(#1)}
\newcommand{\signed}[1]{{#1}^-}
\newcommand{\unsigned}[1]{{#1}^+}
\newcommand{\signof}[1]{\textit{sign}{(#1)}}
\newcommand{\bitand}{\&} 
\newcommand{\bitor}{\|}
\newcommand{\bitxor}{\oplus}
\newcommand{\bitneg}{\neg}
\newcommand{\keccak}[1]{\textit{Keccak}(#1)}
\newcommand{\funP}[3]{P\,(#1, #2, #3)}
\newcommand{\parentc}{\textsf{parent}}
\newcommand{\beneficiaryc}{\textsf{beneficiary}}
\newcommand{\difficultyc}{\textsf{difficulty}}  
\newcommand{\blocknumberc}{\textsf{number}}
\newcommand{\gaslimitc}{\textsf{gaslimit}}
\newcommand{\timestampc}{\textsf{timestamp}}
\newcommand{\SUB}{\textsf{SUB}}
\newcommand{\LT}{\textsf{LT}}
\newcommand{\GT}{\textsf{GT}}
\newcommand{\EQ}{\textsf{EQ}}
\newcommand{\AND}{\textsf{AND}}
\newcommand{\OR}{\textsf{OR}}
\newcommand{\XOR}{\textsf{XOR}}
\newcommand{\SLT}{\textsf{SLT}}
\newcommand{\SGT}{\textsf{SGT}}
\newcommand{\MUL}{\textsf{MUL}}
\newcommand{\DIV}{\textsf{DIV}}
\newcommand{\SDIV}{\textsf{SDIV}}
\newcommand{\MOD}{\textsf{MOD}}
\newcommand{\SMOD}{\textsf{SMOD}}
\newcommand{\SIGNEXTEND}{\textsf{SIGNEXTEND}}
\newcommand{\BYTE}{\textsf{BYTE}}
\newcommand{\ADDMOD}{\textsf{ADDMOD}}
\newcommand{\MULMOD}{\textsf{MULMOD}}
\newcommand{\NOT}{\textsf{NOT}}
\newcommand{\EXP}{\textsf{EXP}}
\newcommand{\ADDRESS}{\textsf{ADDRESS}}
\newcommand{\CALLER}{\textsf{CALLER}}
\newcommand{\CALLVALUE}{\textsf{CALLVALUE}}
\newcommand{\CODESIZE}{\textsf{CODESIZE}}
\newcommand{\BALANCE}{\textsf{BALANCE}}
\newcommand{\ORIGIN}{\textsf{ORIGIN}}
\newcommand{\CALLDATASIZE}{\textsf{CALLDATASIZE}}
\newcommand{\CALLDATALOAD}{\textsf{CALLDATALOAD}}
\newcommand{\CODECOPY}{\textsf{CODECOPY}}
\newcommand{\CALLDATACOPY}{\textsf{CALLDATACOPY}}
\newcommand{\GASPRICE}{\textsf{GASPRICE}}
\newcommand{\COINBASE}{\textsf{COINBASE}}
\newcommand{\TIMESTAMP}{\textsf{TIMESTAMP}}
\newcommand{\NUMBER}{\textsf{NUMBER}}
\newcommand{\DIFFICULTY}{\textsf{DIFFICULTY}}
\newcommand{\GASLIMIT}{\textsf{GASLIMIT}}
\newcommand{\MLOAD}{\textsf{MLOAD}}
\newcommand{\MSTORE}{\textsf{MSTORE}}
\newcommand{\SLOAD}{\textsf{SLOAD}}
\newcommand{\SSTORE}{\textsf{SSTORE}}
\newcommand{\PC}{\textsf{PC}}
\newcommand{\MSIZE}{\textsf{MSIZE}}
\newcommand{\GAS}{\textsf{GAS}}
\newcommand{\LOG}[1]{\textsf{LOG}{#1}}
\newcommand{\INVALID}{\textsf{INVALID}}
\newcommand{\origin}{origin}
\newcommand{\edgestep}[3]{\mathcal{C}, \cd \vDash \,{#2} \, \fedge {#3}} 
\newcommand{\predstep}[3]{\mathcal{C}, \cd \vDash \, {#2} \, \qedge {#3}} 
\newcommand{\stateupdate}[1]{f~=(\lambda \cfgstate. #1)} 
\newcommand{\predupdate}[1]{Q~=(\lambda \cfgstate. #1)} 
\newcommand{\cfgnode}[2]{(#1, #2)} 
\newcommand{\None}{\bot}    
\newcommand{\Def}[1]{\textit{Def}~=~#1}    
\newcommand{\Use}[1]{\textit{Use}~=~#1}    
\newcommand{\Defs}[1]{\textit{Def}~=~\{#1\}}    
\newcommand{\Uses}[1]{\textit{Use}~=~\{#1\} 
}    
\newcommand{\optional}[1]{\lfloor #1 \rfloor}
\newcommand{\memvar}[2]{{#1}^{\cfgmem}}
\newcommand{\stackvar}[1]{{#1}^{\cfgstack}}
\newcommand{\storvar}[2]{{#1}^{\cfgstor}}
\newcommand{\envvar}[1]{{#1}^{e}}
\newcommand{\locenvvar}[1]{{#1}^{\cfglocenv}}
\newcommand{\globenvvar}[1]{{#1}^{\cfgglobenv}}
\newcommand{\extenv}{\textsf{external}}
\newcommand{\precontract}{C}
\newcommand{\instruction}{\textit{op}}
\newcommand{\nextpc}{\lpc_\textit{next}}
\newcommand{\pre}{\textit{pre}}
\newcommand{\toevmstate}{\textit{toEVM}}
\newcommand{\tocfgstate}{\textit{toCFG}}
\newcommand{\cfgstate}{\theta}
\newcommand{\cfgstatefull}{\Theta}
\newcommand{\cfgstatecopy}{\overline{\cfgstate}}
\newcommand{\cfgstatecopybot}{\cfgstatecopy_\bot}
\newcommand{\restrict}[2]{\project{#1}{\domain{#2}}}
\newcommand{\tempvar}[1]{\overline{#1}}
\newcommand{\transactionstep}[3]{#1 \vDash #2 \xrightarrow{T} #3}
\newcommand{\combistep}[3]{#1 \vDash #2 \hookrightarrow #3}
\newcommand{\cfgstep}[4]{#1 \vDash \, {#2} \, -\! {#3} \! \xrightarrow{} {#4}} 
\newcommand{\acfgstep}[3]{{#1} \, -\! {#2} \! \xrightarrow{} {#3}} 
\newcommand{\fact}{\Uparrow\!\! f}
\newcommand{\qact}{(Q)_{\surd}}
\newcommand{\cfgact}{a}
\newcommand{\cfgnodindex}{i}
\newcommand{\cfgmedstep}[3]{#1 \vDash \, {#2} \, \xRightarrow{} {#3}} 
\newcommand{\cfgactedge}[1]{-\! {#1}\! \xrightarrow{}}
\newcommand{\cfgconfig}[2]{\langle #1, #2 \rangle}
\newcommand{\acfgconfigstep}[3]{{#1}\, -\! {#2}\! \xrightarrow{}{#3}}
\newcommand{\evmcontract}{C}
\newcommand{\node}{n}
\newcommand{\memvarconc}[2]{{#1}^{m}.\textsf{S}}
\newcommand{\memvarconcname}{\text{\textsf{S}-variable}\xspace}
\newcommand{\memvarconcnames}{\text{\textsf{S}-variables}\xspace}
\newcommand{\memvarabs}[2]{{#1}^{m}.\textsf{D}}
\newcommand{\memvarabsname}{\text{\textsf{D}-variable}\xspace}
\newcommand{\memvarabsnames}{\text{\textsf{D}-variables}\xspace}
\newcommand{\storvarconc}[2]{{#1}^{g}.\textsf{S}}
\newcommand{\storvarabs}[2]{{#1}^{g}.\textsf{D}}
\newcommand{\varconc}[1]{{#1}.\textsf{S}}
\newcommand{\varabs}[1]{{#1}.\textsf{D}}
\newcommand{\storeinstructions}{\textit{Inst}_{\textit{\SSTORE}}}
\newcommand{\bss}[1]{\textit{BS}(#1)}
\newcommand{\lfp}[1]{\textit{lfp}(#1)}
\newcommand{\rules}[1]{\mathcal{R}(#1)}
\newcommand{\pat}{\ensuremath{\textsf{P}}}
\newcommand{\astates}{\Theta}
\newcommand{\edgelist}{\textit{as}}
\newcommand{\defn}{\textit{Def}}
\newcommand{\usen}{\textit{Use}}
\newcommand{\fun}[2]{\lambda{#1}.{#2}}
\newcommand{\updatevarset}[5]{#1 \leftarrow (#2 \define #3)_{#4 \in #5}} 
\newcommand{\updatestate}[3]{#1 \leftarrow #2 \define #3} 
\newcommand{\accesscfgstate}[2]{#1 [{#2}]}
\newcommand{\cfgstack}{\textit{ls}}
\newcommand{\cfgmem}{m}
\newcommand{\cfgstor}{g}
\newcommand{\cfglocenv}{\textit{el}}
\newcommand{\cfgglobenv}{\textit{eg}}
\newcommand{\cfgexternal}{\textit{external}}
\newcommand{\cfgmemconc}{{m.\textsf{S}}}
\newcommand{\cfgmemabs}{{m.\textsf{D}}}
\newcommand{\cfgstorconc}{{g.\textsf{S}}}
\newcommand{\cfgstorabs}{{g.\textsf{D}}}
\newcommand{\load}{\textit{load}}
\newcommand{\exitnode}{\textit{exit}}
\newcommand{\exceptionnode}{\textit{exception}}
\newcommand{\haltnode}{\textit{halt}}
\newcommand{\cfgstatequiv}[1]{\approx_{\cfgstate}^{#1}}
\newcommand{\msize}{\textsf{i}}
\newcommand{\applycall}{\textit{applyCall}}
\newcommand{\gaspc}[1]{\lgas}
\newcommand{\msizepc}[1]{\msize}
\newcommand{\cfgexternalpc}[1]{\cfgexternal}
\newcommand{\maxint}[1]{\textsf{MAX}^{\textit{Int}_{#1}}}
\newcommand{\startvalue}{\textit{start}}
\newcommand{\restartcounter}{\textit{reset}}
\newcommand{\getcallnodeoffset}{\textit{getNodeOffset}}
\newcommand{\loctype}{\mathcal{L}}
\newcommand{\tagtype}{\mathcal{T}}
\newcommand{\instructions}{\mathcal{I}}
\newcommand{\ASSIGN}{\textsf{ASSIGN}}
\newcommand{\STATICCALL}{\textsf{STATICCALL}}
\newcommand{\CREATETWO}{\textsf{CREATE2}}
\newcommand{\applycreate}{\textit{applyCreate}}
\newcommand{\startmemoffset}{o_\textit{mem}}
\newcommand{\isregular}[1]{\textit{isRegular}(#1)}
\newcommand{\ishalt}[1]{\textit{isHalt}(#1)}
\newcommand*\textnode[1]{\tikz[baseline=(char.base)]{
            \node[shape=circle, fill= black, text= white, inner sep=0.8pt] (char) {#1};}}
\newcommand{\tracenoninterference}{\text{trace noninterference}}
\newcommand{\Tracenoninterference}{\text{Trace noninterference}}
\newcommand{\tracenoninter}[3]{\textit{TNI}(#1, #2, #3)}
\newcommand{\inputvars}{Z}
\newcommand{\inputvar}{z}
\newcommand{\inputcomponents}{\textsf{Z}}
\newcommand{\inputcomponent}{\textsf{z}}
\newcommand{\componentToVar}{\textit{toVar}}
\newcommand{\outputproj}{f}
\newcommand{\patnoninter}[3]{\pat_{#1,#2}^{#3}}
\newcommand{\istep}[6]{\langle #1, #2 \rangle \xrightarrow[]{#3}^{#4} \langle #5, #6\rangle }
\newcommand{\nodesucc}[1]{{#1}^+}
\newcommand{\defnode}[1]{n_{#1}}
\newcommand{\defnodes}[1]{N_{#1}}
\newcommand{\independent}[2]{#1 ~\bot~ #2}
\newcommand{\strongconsistent}[2]{\textit{s-consistent}(#1, #2)}
\newcommand{\salt}{\textit{salt}}
\lstdefinelanguage{Horst}{
    keywords=[1]{:=,
        ;,
        clause,
        expect,
        for,
        in,
        init,
        query,
        rule,
        test}
    keywordstyle=[1]\color{blue}\bfseries,
    keywords=[2]{
        const,
        datatype,
        eqtype,
        op,
        pred,
        sel},	
    keywordstyle=[2]\color{teal}\bfseries,
    keywords=[3]{bool,
        int},	
    keywordstyle=[3]\color{violet}\bfseries,
    identifierstyle=\color{black},
    sensitive=false,
    otherkeywords={!,[,],?,:,@,->},
    comment=[l]{//},
    morecomment=[s]{/*}{*/},
    commentstyle=\color{orange}\ttfamily,
    stringstyle=\color{red}\ttfamily,
    morestring=[b]',
    morestring=[b]", 
    escapechar=~,
}
\definecolor{verylightgray}{rgb}{.97,.97,.97}
\lstdefinelanguage{Solidity}{
    keywords=[1]{anonymous, assembly, assert, balance, break, call, callcode, case, catch, class, constant, continue, constructor, contract, debugger, default, delegatecall, delete, do, else, emit, event, experimental, export, external, false, finally, for, function, gas, if, implements, import, in, indexed, instanceof, interface, internal, is, length, library, log0, log1, log2, log3, log4, memory, modifier, new, payable, pragma, private, protected, public, pure, push, require, return, returns, revert, selfdestruct, send, solidity, storage, struct, suicide, super, switch, then, this, throw, transfer, true, try, typeof, using, value, view, while, with, addmod, ecrecover, keccak256, mulmod, ripemd160, sha256, sha3}, 
    keywordstyle=[1]\color{blue}\bfseries,
    keywords=[2]{address, bool, byte, bytes, bytes1, bytes2, bytes3, bytes4, bytes5, bytes6, bytes7, bytes8, bytes9, bytes10, bytes11, bytes12, bytes13, bytes14, bytes15, bytes16, bytes17, bytes18, bytes19, bytes20, bytes21, bytes22, bytes23, bytes24, bytes25, bytes26, bytes27, bytes28, bytes29, bytes30, bytes31, bytes32, enum, int, int8, int16, int24, int32, int40, int48, int56, int64, int72, int80, int88, int96, int104, int112, int120, int128, int136, int144, int152, int160, int168, int176, int184, int192, int200, int208, int216, int224, int232, int240, int248, int256, mapping, string, uint, uint8, uint16, uint24, uint32, uint40, uint48, uint56, uint64, uint72, uint80, uint88, uint96, uint104, uint112, uint120, uint128, uint136, uint144, uint152, uint160, uint168, uint176, uint184, uint192, uint200, uint208, uint216, uint224, uint232, uint240, uint248, uint256, var, void, ether, finney, szabo, wei, days, hours, minutes, seconds, weeks, years},	
    keywordstyle=[2]\color{teal}\bfseries,
    alsoletter={.},
    keywords=[3]{block, blockhash, coinbase, difficulty, gaslimit, number, timestamp, msg, data, gas, msg.sender, sig, value, now, tx, gasprice, origin},	
    keywordstyle=[3]\color{violet}\bfseries,
    identifierstyle=\color{black},
    sensitive=false,
    comment=[l]{//},
    morecomment=[s]{/*}{*/},
    commentstyle=\color{orange}\ttfamily,
    stringstyle=\color{red}\ttfamily,
    morestring=[b]',
    morestring=[b]", 
    escapechar=|, 
    numbers=left, 
    basicstyle=\scriptsize\ttfamily,
    numberstyle=\scriptsize,
    xleftmargin=1em,
}
\newcommand{\securify}{\text{Securify}}
\newcommand{\may}{\pred{May}}
\newcommand{\must}{\pred{Must}}
\newcommand{\inst}[1]{\text{\color{blue} #1}}
\newcommand{\pred}[1]{\text{\color{violet}  \textit{#1}}}
\newcommand{\iio}{\textbf{InstIndOf}}
\newcommand{\vio}{\textbf{VarIndOf}}
\newcommand{\fedge}{-\!\fact\! \xrightarrow{}}
\newcommand{\qedge}{-\!\qact\! \xrightarrow{}}
\newcommand{\ds}{\textit{ds}}
\newcommand{\as}{\textit{as}}
\newcommand{\bsi}{\textit{BS}}
\newcommand{\conf}[2]{\langle #1,~#2 \rangle}
\newcommand{\sconf}[2]{\conf{#2}{#1}}
\newcommand{\nstep}[4]{\xrightarrow{#1}^*_{#2}\mid_{#4}^{#3}~}
\newcommand{\nmstep}[5]{\xrightarrow{#1}^{#5}_{#2}\mid_{#4}^{#3}~}
\newcommand{\bs}[2]{#1 \in \textit{BS}(#2)}
\newcommand{\notbs}[2]{#1 \not \in \textit{BS}(#2)}
\newcommand{\bsn}{\textit{BS}(N)}
\newcommand{\defset}[2]{#1 \in \textit{Def}(#2)}
\newcommand{\useset}[2]{#1 \in \textit{Use}(#2)}
\newcommand{\notuseset}[2]{#1 \not \in \textit{Use}(#2)}
\newcommand{\eqstatevar}[3]{#1(#3) = #2(#3)}
\newcommand{\sucnode}[1]{n_{#1}^{+1}}
\newcommand{\projectpath}[2]{#1 \downarrow_{#2}}
\newcommand{\projectpatheqlength}[3]{\mid \projectpath{#1}{#3} \mid~ =~ \mid \projectpath{#2}{#3} \mid}
\newcommand{\rv}[2]{\textit{rv}~#1~#2}
\newcommand{\transscd}[2]{#1 \xrightarrow{~}_{cd}^* #2}
\newcommand{\countnodes}[2]{\mid_{#2}^{#1}}
\newcommand{\mayrule}{\Leftarrow_{\may}}
\newcommand{\mustrule}{\Leftarrow_{\must}}
\newcommand{\IR}{\textsf{IR}\xspace}
\newcommand{\loc}{x}
\newcommand{\mem}{\textit{Mem}}
\newcommand{\mt}{\textit{X}^{m}.{D}}
\newcommand{\mk}{\textit{X}^{m}.{S }}
\newcommand{\horstify}{\textsc{HoRStify}\xspace}
\newcommand{\horst}{\textsc{HoRSt}\xspace}
\newtheorem{theorem}{Theorem}
\newtheorem*{theorem*}{Theorem}
\newtheorem{lemma}{Lemma}
\newtheorem{definition}{Definition}
\newtheorem{assumption}{Assumption}
\Crefname{assumption}{Assumption}{Assumptions}
\crefname{assumption}{assumption}{assumptions}
\begin{document}

\title{\horstify:
Sound Security Analysis of \\Smart Contracts
\thanks{
    This work has been supported by the Heinz Nixdorf Foundation through a Heinz Nixdorf Research Group (HN-RG) and funded by the Deutsche Forschungsgemeinschaft (DFG, German Research Foundation) under Germany's Excellence Strategy---EXC 2092 CASA---390781972, and through grant 389792660 as part of TRR~248---CPEC, see \url{https://perspicuous-computing.science}.
    }
}

\author{\IEEEauthorblockN{Sebastian Holler\IEEEauthorrefmark{3}\IEEEauthorrefmark{1}, Sebastian Biewer\IEEEauthorrefmark{2}, Clara Schneidewind\IEEEauthorrefmark{1}}

\IEEEauthorblockA{\IEEEauthorrefmark{1}\textit{Max-Planck-Institute for Security \& Privacy} \\
\textit{Universitätsstraße, Bochum, Germany} }
\IEEEauthorblockA{\IEEEauthorrefmark{2}\textit{Saarland University}, \IEEEauthorrefmark{3}\textit{Saarbrücken Graduate School of Computer Science} \\
\textit{Saarland Informatics Campus, Saarbr\"ucken, Germany} }
}

\maketitle

\begin{abstract}
    The cryptocurrency Ethereum is the most widely used execution platform for smart contracts. Smart contracts are distributed applications, which govern financial assets and, hence, can implement advanced financial instruments, such as decentralized exchanges or autonomous organizations (DAOs). 
    Their financial nature makes smart contracts an attractive attack target, as demonstrated by numerous exploits on popular contracts resulting in financial damage of millions of dollars.
    This omnipresent attack hazard motivates the need for sound static analysis tools, which assist smart contract developers in eliminating contract vulnerabilities a priori to deployment.
     
    Vulnerability assessment that is sound and insightful for EVM contracts is a formidable challenge because contracts execute low-level bytecode in a largely unknown and potentially hostile execution environment.
    So far, there exists no provably sound automated analyzer that allows for the verification of security properties based on program dependencies, even though prevalent attack classes fall into this category. 
    In this work, we present \horstify, the first automated analyzer for dependency properties of Ethereum smart contracts based on sound static analysis. \horstify grounds its soundness proof on a formal proof framework for static program slicing that we instantiate to the semantics of EVM bytecode.
    We demonstrate that \horstify is flexible enough to soundly verify the absence of famous attack classes such as timestamp dependency and, at the same time, performant enough to analyze real-world smart contracts. 
\end{abstract}

\begin{IEEEkeywords}
Ethereum, Smart Contract, Blockchain, Dependency Analysis, Security, Tool
\end{IEEEkeywords}

\section{Introduction}

Modern cryptocurrencies enable mutually mistrusting users to conduct financial operations without relying on a central trusted authority.
Foremost, the cryptocurrency Ethereum supports the trustless execution of arbitrary quasi Turing-complete
programs, so-called smart contracts~\cite{wood2014ethereum}, which manage money in the virtual currency Ether.

The expressiveness of smart contracts gives rise to 
a whole distributed financial ecosystem known as Decentralized Finance (DeFi), which encompasses a multitude of (financial) applications such as brokerages\cite{zanzi2020nsbchain,li2018blockchain}, decentralized exchanges~\cite{jain2022coin,adamik2018smartexchange,zima2017coincer} or decentralized autonomous organizations~\cite{sims2021decentralised,hassan2021decentralized}.
However, smart contracts have shown to be particularly prone to programming errors that lead to devastating financial losses~\cite{hacks}. 
These severe incidents can be attributed to different factors. 
First, smart contracts are agents that interact with a widely unpredictable and potentially hostile environment. 
Accounting for all possible environment behaviors adds a layer of complexity to smart contract development. 
Second, smart contracts manage real money. This financial nature makes them an extraordinarily lucrative attack target. 
Third, transactions in blockchain-based cryptocurrencies, like Ethereum, are inherently immutable. As a consequence, not only the effects of exploits are persistent, but also vulnerable smart contracts cannot be patched. 
Given this state of affairs, it is of utmost importance to preempt contract vulnerabilities a priori to contract deployment. 

Sound static analysis tools allow for reasoning about all possible runtime behaviors without deploying a contract on the blockchain. 
In this way, smart contract developers and users can reliably identify and eliminate harmful behavior before publishing or interacting with Ethereum smart contracts. 
However, as shown in recent works~\cite{schneidewind2020ethor,schneidewind2020good}, most automatic static analyzers for Ethereum smart contracts that promise soundness guarantees cannot live up to their soundness claims. 

To the best of our knowledge, the only tools targeting sound and automated static analyses of smart contract security properties are Securify~\cite{tsankov2018securify}, ZEUS~\cite{kalra2018zeus}, EtherTrust~\cite{GMS::CAV18}, NeuCheck~\cite{lu2019neucheck}, and eThor~\cite{schneidewind2020ethor}. The soundness claims of ZEUS, Securify, EtherTrust, and NeuCheck are systematically confuted in~\cite{schneidewind2020good} and~\cite{schneidewind2020ethor}.

The analysis tool eThor~\cite{schneidewind2020ethor} comes with a rigorous soundness proof but only supports the verification of reachability properties. 
While this is sufficient to characterize the absence of interesting attack classes, many other smart contract security properties do not fall within this property fragment. 
Grishenko et al.~\cite{grishchenko2018semantic} give a semantic characterization of security properties that characterize the absence of prominent classes of smart contract bugs. 
Most of these properties fall into the class of non-interference-style two-safety properties that we will refer to as \emph{dependency properties} and fall out of the scope of eThor's analysis. 
The only tool that, up to now, targeted the (sound) verification of dependency properties was the tool Securify~\cite{tsankov2018securify}---which was empirically shown unsound in~\cite{schneidewind2020ethor}.

\paragraph*{Our Contributions}
In this work, we revisit Securify's approach. 
In this course, we analyze the peculiar challenges in designing a sound static dependency analysis tool for Ethereum smart contracts.
We show how to overcome these obstacles with a principled approach based on rigorous formal foundations. 
Leveraging a formal proof framework for static program slicing~\cite{wasserrab2009pdg}, we design a provably sound dependency analysis for Ethereum smart contracts on the level of Ethereum Virtual Machine (EVM) bytecode. 
Finally, we give an implementation of the analyzer \horstify that performs the static dependency analysis via a logical encoding, which can be automatically solved by Datalog solvers.
We demonstrate how to use \horstify to automatically verify dependency properties on smart contracts, such as the ones defined in~\cite{grishchenko2018semantic}.
Concretely, we make the following contributions: 
\begin{itemize}
\item 
We study the root causes of the soundness issues of the state-of-the-art Ethereum smart contract analysis tool Securify~\cite{tsankov2018securify} that so far had only been reported through empirical evidence in~\cite{schneidewind2020ethor,schneidewind2020good}. 
In this course, we uncover new soundness problems in Securify's analysis, which we can show to affect real-world smart contracts.

\item We devise a new dependency analysis for EVM bytecode based on program slicing following the static program framework presented in~\cite{wasserrab2009pdg}.
\item We prove this dependency analysis to be sound with respect to a formal semantics of EVM bytecode. 
\item We show how to approach relevant smart contract security properties presented in~\cite{grishchenko2018semantic} with the dependency analysis.
\item We present \horstify, an automated prototype static analysis tool that implements the dependency analysis.
\item We demonstrate that \horstify overcomes the soundness issues of Securify while showing comparable performance and small precision loss on real-world smart contracts.
\end{itemize}

The remainder of the paper is organized as follows: 
~\Cref{sec:overview} overviews our approach;
~\Cref{sec:background-ethereum} introduces the necessary background on Ethereum smart contract execution; 
~\Cref{sec:challenges} discusses the challenges in designing sound static analysis tools for smart contract dependency analysis;
~\Cref{sec:background-slicing} introduces the slicing proof framework from~\cite{wasserrab2009pdg} that our analysis builds on; 
~\Cref{sec:theory} presents our static analysis based on program slicing and its soundness proof;
~\Cref{sec:evaluation} reports on our prototype implementation \horstify and its practical evaluation; and~\Cref{sec:conclusion} concludes the paper.
\label{sec:intro}

\section{Overview}
\label{sec:overview}
\begin{figure}
    \includegraphics[width=\columnwidth]{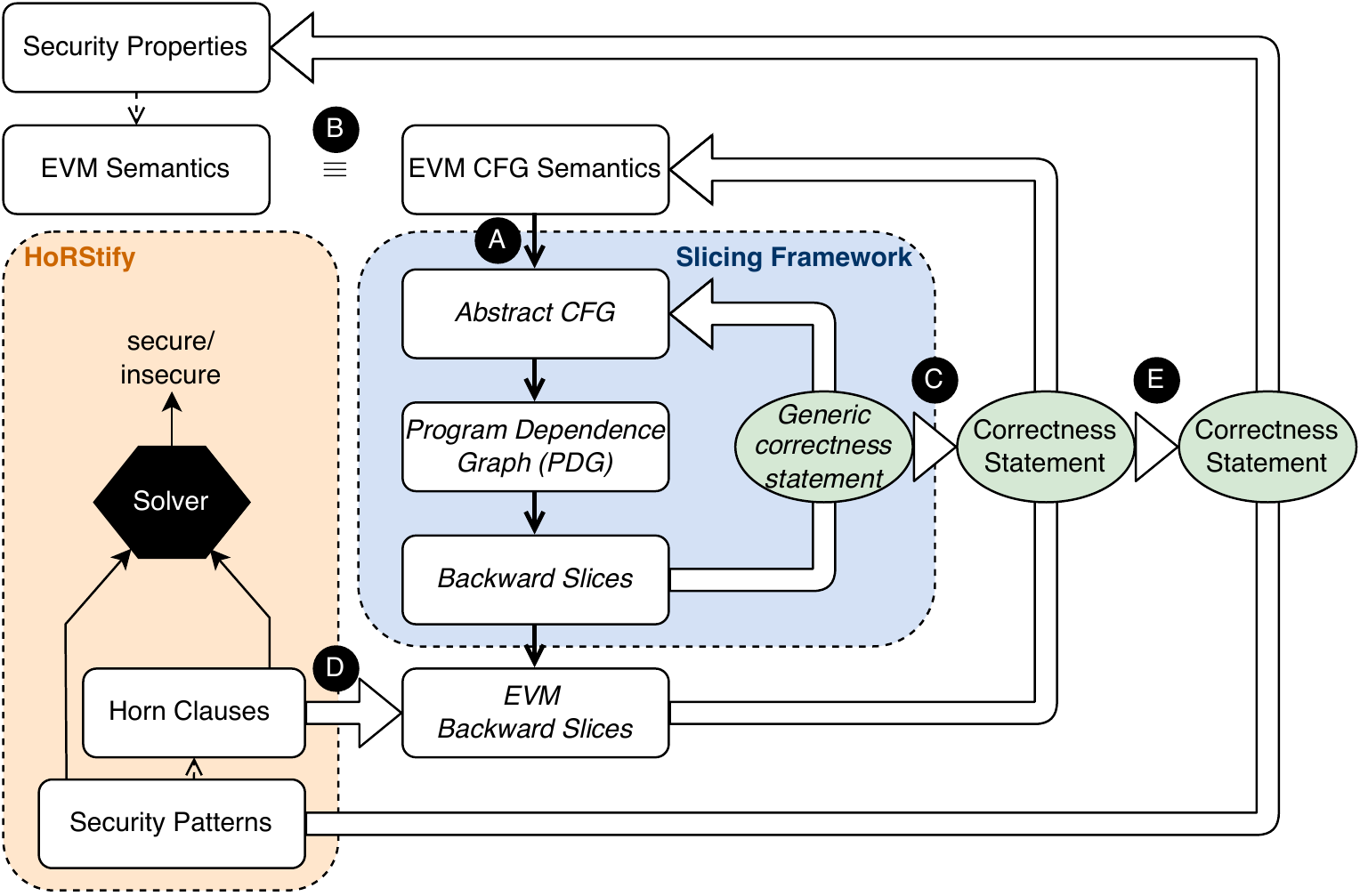}
    \caption{Overview on the formal guarantees of \horstify}
    \label{fig:overview}
\end{figure}

In this paper, we develop a dependency analysis tool for EVM bytecode that is designed in accordance with formal correctness statements providing overall soundness guarantees.
The correctness proof is modularized as depicted in \Cref{fig:overview}.
The core module is a generic proof framework~\cite{wasserrab2009pdg} for backward slicing using abstract control flow graphs (CFGs). 
In these CFGs, each node is annotated with all variables it reads and all variables it writes.
The backward slice of a node is a set containing all nodes that possibly influence the variables written in the respective node.
The framework extends the abstract CFG to a program dependence graph (PDG) by explicitly defining the data and control dependencies between the nodes.
For this PDG, the framework establishes a generic correctness statement for slicing: whenever a node influences another, the influencing node appears in the backward slice of the influenced node. 
To obtain the correctness result for a concrete programming language the abstract CFG representation is instantiated for a concrete program semantics. 

We instantiate the framework for EVM bytecode by devising a new EVM CFG semantics. 
We show (\textnode{A}) that the EVM CFG semantics satisfies all requirements for instantiation and (\textnode{B}) that it is equivalent to a formalisation of the EVM bytecode semantics.
From this, we obtain backward slicing for EVM contracts with a corresponding correctness statement (\textnode{C}).

For the actual analysis, we express dependencies 
in EVM contracts by means of dependency predicates which 
we characterize by (fixpoints over) a set of logical rules, given in the form of \emph{Constrained Horn Clauses (CHC)}. 
Most importantly, we show that if the backward slice of some program point contains some other program point, then the (potential) dependency between these two program points is also captured by the predicate encoding (\textnode{D}).  

From the EVM bytecode analyzer Securify~\cite{tsankov2018securify} we adopt the idea of defining so-called security patterns to soundly approximate the satisfaction (or violation) of a security property.
A \emph{security pattern} is a set of facts over dependency predicates, which characterize the form of dependencies that are ruled out by the pattern.
In contrast to Securify, our formal characterization of dependency predicates enables a correctness statement for the approximating behavior of the security patterns w.r.t. their corresponding property (\textnode{E}).

 Finally, we present the prototype tool \horstify that implements our dependency analysis and uses the Datalog engine Soufflé to perform the fixpoint computation and to check whether a security pattern is matched.
A pattern match guarantees (in)security w.r.t. the respective security property.

\paragraph*{Challenges}

The main challenge of designing a practical and sound dependency analysis for EVM bytecode is finding precise and performant abstractions that tame the complexity of EVM bytecode while maintaining soundness guarantees. 
As we will show in~\Cref{sec:challenges}, EVM bytecode's language design makes this task particularly hard: 
Non-standard language features introduce corner cases that are easily overlooked or make it necessary to enhance the analysis with custom optimizations that can lead to unsoundness when done in an ad-hoc manner.
As a consequence, it is of paramount importance to construct a sound analysis tool with formal foundations that are flexible enough to cover those subtleties. 

The slicing framework~\cite{wasserrab2009pdg} enables a modular soundness proof that separates the standard argument for the correctness of slicing from the characterization of program dependencies.
However, even though this reduces the proof effort, a naive instantiation of the framework would introduce a multitude of superfluous dependencies and hence lead to a highly imprecise analysis.
For this reason, the key challenge lies in the design of the  EVM CFG semantics. 
We will show how to approach these challenges with a solid theoretical foundation and by circumventing the bothersome technical hurdles without compromising the soundness of the analysis. 

\ifextended \else
In this paper, we give a high-level overview of the relevant theorems and proofs and refer the interested reader to an extended version of this paper~\cite{horstify4ever} for the technical details.
\fi

\section{Background on Ethereum Smart Contracts}
\label{sec:background-ethereum}
\label{sec:whereverTheRouletteIs}
 Ethereum smart contracts are distributed applications that are jointly executed by the users of the Ethereum blockchain. 
In the following, we shortly overview the workings of Ethereum and the resulting particularities of the Ethereum smart contract execution environment.

\paragraph{Ethereum}
\label{subsec:ethereum}
The cryptocurrency Ethereum supports smart contracts via an account-based execution model. The global state of the system is given by accounts whose states are modified through the execution of transactions. 
All accounts have in common that they hold a balance in the currency Ether. An account can be either an  \emph{external account} that is owned by a user of the system and that solely supports user-authorized money transfers, or
a \emph{contract account} that manages its spending behavior autonomously by means of a program associated with the contract that may use its own persistent storage to provide advanced stateful functionalities. 

Users interact with accounts via transactions. 
Transactions either call existing accounts or create new contract accounts. A call transaction transfers an amount of money (that could be $0$) to the target account and triggers the execution of the account's code if the target is a contract account. 
A contract execution can modify the contract's persistent storage and potentially initiates further transactions. In this case, we speak of \emph{internal} transactions, as opposed to \emph{external} transactions, which are initiated by users on behalf of external accounts.

\paragraph{Smart Contract Languages}
Smart contracts are specified in \emph{EVM bytecode} and executed by the \emph{Ethereum Virtual Machine} (EVM).
EVM bytecode is a stack-based low-level language that supports standard instructions for stack manipulation, arithmetics, jumps, and memory access. 
On top, EVM's instruction set includes blockchain-specific opcodes, for example, to access transaction information and to initiate internal transactions.
While the EVM bytecode is technically Turing-complete, the execution of smart contracts is bounded by a transaction-specific resource limit. With each transaction, the originator sets this limit in the unit \emph{gas} and pays for it upfront. 
During the execution, instructions consume gas. The execution halts with an exception if running out of gas and reverts all effects of the prior execution. 

In practice, Ethereum smart contracts are written in high-level languages---foremost, Solidity~\cite{solidity}---and compiled to EVM bytecode. 
Solidity is an imperative language that mimics features of object-oriented languages like Java but supports additional primitives for accessing blockchain information and performing transactions.
For better readability, we will give examples using the Solidity syntax even though our analysis operates on EVM bytecode. 
We will introduce relevant Solidity language features throughout the paper when needed. 

\paragraph{Adversarial Execution Environment}
The blockchain environment poses novel challenges to the programmers of smart contracts. 
As opposed to programs that run locally, smart contracts are executed in an untrusted environment. This means, in particular, that certain system parameters cannot be fully trusted. 
A prominent example of this issue is Ethereum's block timestamp: 
In Ethereum's blockchain-based consensus mechanism, the system is advanced by appending a bulk of transactions grouped into a block to the blockchain, a distributed tamper-resistant data structure. 
These blocks are created by special system users, so-called miners. 
While all system users check that blocks only contain valid transactions, the correctness of a block's metadata cannot easily be verified. So is each block required to carry a timestamp, but due to the lack of synchronicity in the system, this timestamp can only be checked to lie within a plausible range.
This enables a miner to choose the value of the block timestamp freely within this range.
The following example illustrates how this peculiarity can be exploited in a smart contract:

\begin{lstlisting}[language=Solidity]
function spinWheel() private (uint) {
	return block.timestamp % 37;  }
\end{lstlisting}
The function \lstinline|spinWheel()| implements a spinning wheel 
that determines a random number between 0 and 36 based on the block timestamp (accessed via \lstinline|block.timestamp|). 
Based on such a function, a contract could implement a roulette game where players bet money on the outcome of the spinning wheel.
While the system timestamp may serve as a decent source of randomness for programs that run locally, this is not the case for smart contracts. 
A miner could easily tweak the timestamp of a block containing an invocation of the \lstinline|spinWheel| function and thereby influence its outcome.
In this way, a miner could ensure to win the roulette game themself.

\section{Challenges in Sound Dependency Analysis}
\label{sec:challenges}

As recently demonstrated in the literature~\cite{schneidewind2020good}, the sound analysis of Ethereum smart contracts is a challenging problem; most analysis tools aiming at provable soundness guarantees fall short of their goal. 
This can be mainly attributed to the non-standard language features of the EVM bytecode language and the unusual execution model of the EVM:
Smart contracts are executed in a (potentially) hostile environment, which can interact with, and even, schedule contracts.
The smart contract execution is dependent on the gas resource and the low-level EVM bytecode language features little static information. 
As a consequence, execution heavily depends on unknown runtime parameters, which makes it hard to reason statically about contract behaviors in a sound and reasonably precise and efficient manner. 
 This incentives the incorporation of ad-hoc optimizations, which increase the complexity of the analysis even further.
Consequently, it is crucial to establish rigorous formal foundations for EVM bytecode analysis and to align the implementation with these foundations.
In the following, we demonstrate how the lack of formal foundations affects the guarantees of the state-of-the-art analysis tool Securify~\cite{tsankov2018securify}.

\subsection{Securify}
The automated analyzer Securify is the only analysis tool up to now that aims at giving provable guarantees for dependency analyses of EVM bytecode contracts. 
%
    It decompiles the bytecode into a stackless \emph{intermediate representation} (\IR), where values are stored in variables in static single assignment (SSA) form rather than on a stack.
    Further, it 
    determines the CFG of the contract and encodes the transitive control and data flow dependencies between variables and program locations as a set of \emph{dependency predicates}.
  While it is not possible to specify arbitrary (security) properties in Securify, the tool allows for defining \emph{compliance patterns} and \emph{violation patterns} that serve as ``approximations'' for the satisfaction and, respectively, the violation of the property.
  These patterns are defined over the dependency predicates
  and can be checked automatically using the Datalog solver Soufflé~\cite{jordan2016souffle}.
  A \emph{compliance pattern is sound w.r.t.~a property}, if satisfying the pattern implies satisfaction of the property, and, analogously, a \emph{violation pattern is sound w.r.t.~a property} if satisfying the pattern implies violation of the property.
  If neither of the patterns is satisfied, the satisfaction of the property is inconclusive.
  Obviously, it cannot be that for the same contract and for the same property a sound compliance and violation pattern hold simultaneously.

\begin{figure}
    \small
	\begin{align*}
		\textit{compliance: } 
        &\pred{all}~ \inst{jump}(L_1,Y,\_),\inst{sstore}(L_2,\_,\_). \\ 
        &\quad \pred{MustFollow}(L_1,L_2) \land \pred{DetBy}(L_1,Y,\inst{caller}) \\
		\textit{violation: } 
        & \pred{some}~\inst{sstore}(L_1, X, \_). \\ 
        & \quad \neg \pred{MayDepOn}(X, \inst{caller}) \wedge \neg \pred{MayDepOn}(L_1, \inst{caller})
	\end{align*}
    \normalsize
	\caption{\textsf{Restricted Write} compliance and violation pattern \cite{tsankov2018securify}}
	\label{fig:rest_write_pattern}
    \vspace{-0.5cm}
\end{figure}

  An example of a security property is the \emph{restricted write} (RW) property for 
  storage locations.
  Intuitively, a contract satisfies RW, if for all 
  storage locations, there is at least one caller address that cannot write to this location.
  \Cref{fig:rest_write_pattern} shows a compliance\footnote{The Securify implementation contains two compliance patterns; one is shown in~\cite{tsankov2018securify}, the other one is shown in \Cref{fig:rest_write_pattern}. 
  } and violation pattern for RW~\cite{tsankov2018securify}.

The compliance pattern for RW states that for all conditional jump instructions at program location $L_1$ that branch on condition $Y$ ($\inst{jump}(L_1,Y,\_)$)
and for all storage write instructions at location $L_2$ ($\inst{sstore}(L_2,\_,\_)$) that are necessarily preceded by such jump instructions ($\pred{MustFollow}(L_1,L_2)$), 
it must hold that at location $L_1$ the condition $Y$ must be determined by the caller of the transaction ($\pred{DetBy}(L_1,Y,\inst{caller}$)). 
The violation pattern for RW states that there is some storage write instruction at location $L_1$ writing to storage address $X$ ($\inst{sstore}(L_1, X, \_)$) such that that neither the address $X$ nor the execution of the storage instruction at $L_1$ may depend on the caller of the transaction ($\neg \pred{MayDepOn}(X, \inst{caller}) \wedge \neg \pred{MayDepOn}(L_1, \inst{caller})$).

\subsection{Soundness Issues}
\label{sec:soundness-issues}

Even though Securify characterizes security properties and their corresponding compliance and violation patterns, no formal connection between patterns and properties is established. 
In particular, they do not prove the soundness of the patterns they propose in \cite{tsankov2018securify}  w.r.t. the properties they are supposed to approximate.
Doing so would require 
1) to prove that the dependency predicates imply semantic notions of independence (sound core analysis)
and 2) to prove that the semantic notions implied by the security patterns indeed imply the security properties (sound security patterns).
In the following, we use the example of the RW property to show how the absence of formal soundness arguments causes Securify to miss corner cases that undermine its soundness guarantees.

\subsubsection{Sound Core Analysis}
Securify does not draw a
connection between the dependency predicates and the EVM bytecode semantics. 
This leads to mismatches between the intuitions for the predicates and their definitions.

\paragraph{Must-analysis}\label{sec:mustAnalysis}
\begin{figure}
    \begin{lstlisting}
contract Start { bool test = false;
    function flipper() public {
        if (uint(msg.sender) * 0 == 0)  |\label{line:must}|
        {   test = !test; } } } |\label{line:must:assignment}|
\end{lstlisting}
    \caption{$\securify$ counterexample: must-analysis}
    \label{code:counter_must}
    \vspace{-0.5cm}
\end{figure}

Securify's dependency analysis and predicates can be attributed to one of two categories: 
 a \emph{may-analysis}  aims at over-approximating possible control and dataflow dependencies, encoded by \emph{may-predicates}, and
 a \emph{must-analysis}  aims at capturing dependencies and deducing \emph{must-predicates} that show a definite effect on the actual execution.
According to their usage in the security patterns, 
negated may-predicates imply a notion of independence, while must-predicates should imply a form of determination.
More precisely, it is stated that the must-predicate $\pred{DetBy}(L, Y, T)$ ``indicates that a different value of T guarantees that the value of Y changes.''~\cite{tsankov2018securify}
This guarantee, however, is violated in the contract shown in~\Cref{code:counter_must},
where Securify inferred that \lstinline|test| is determined by the caller although every caller can change the value of \lstinline|test|:   
The check of the conditional evaluates to \lstinline|true| for any value of \lstinline|msg.sender|, hence allowing every caller to write the \lstinline|test| field.
Still, Securify reports this contract to match the compliance pattern, indicating that the condition in line~\ref{line:must} would be determined by the caller.
The underlying reason for this problem is of substantial nature: 
The must-analysis under-approximates control flows but over-approximates data flows. 
More precisely, a variable $X$ is considered to be determined by a variable $Y$ if $Y$ occurs in the expression assigned to $X$. 
Since \lstinline|msg.sender| appears in the condition expression in line~\ref{line:must}, the condition is considered to be determined by \lstinline|msg.sender| even though it actually is independent of \lstinline|msg.sender|.
This treatment makes the must-analysis inherently unsound.

Due to this substantial mismatch between the intuition for the $\pred{DetBy}$ predicate and its implementation, it is unclear whether 
adjusting the implementation of the must-analysis such that it is sound, could result in a performant and precise analysis. 
So, in this work, we will focus on the may-analysis.

\paragraph{Memory Abstraction}

\begin{figure}
    \begin{lstlisting}
contract Start { bool test = false;   
    function storeTest(uint c) public {
        address[] memory a = new address[](7);
        for (uint i = 0; i < 7; i++) {
            a[i] = msg.sender;}
        if (a[0] != address(0)) {test = !test;} } }|\label{line:mem:caller-check}|
\end{lstlisting}
    \caption{$\securify$ counterexample: storage abstraction}
    \label{code:counter_storage}
    \vspace{-0.5cm}
\end{figure}

For establishing a sound may-analysis, it is crucial to overapproximate dependencies for all relevant system components that can interact with each other. 
In particular, this includes stack, memory and storage variables, because values are written from the stack to the local memory and persistent storage, and back. 
However, the addresses of memory and storage accesses are not statically known but specified on the stack. 
E.g., the EVM instruction $\MSTORE(x, y)$ denotes that the value in stack variable $y$ should be written to the address as given in stack variable $x$.
Consequently, the concrete memory address at which the value in $y$ will be stored may only be known at runtime. 
This poses a big challenge to static analysis since for precisely modeling the dependencies on different memory and storage cells, their accesses need to be known statically. 
Otherwise, the dependencies on all memory and storage cells would need to be merged, resulting in a substantial precision loss. 
In practice, memory and storage addresses can in most cases be precomputed by partial evaluation~\cite{tsankov2018securify}.
Hence, this preprocessing information can be used to enhance the analysis precision.

Securify implements this optimization in an unsound way, as illustrated by the example in~\Cref{code:counter_storage}. 
Here, function \lstinline|storeTest| locally
 defines a new  address array \lstinline|a| of size $7$ and initializes all its elements with the contract caller \lstinline|msg.sender|. 
The write access to the \lstinline|test| variable is restricted by the condition that the first array element \lstinline|a[0]| (which obviously contains the caller address) is not $0$. 
Consequently, the contract satisfies the RW property. 
Still, Securify certifies a violation of the RW pattern w.r.t. \lstinline{test}\footnote{This is indeed unsoundness and not imprecision: Securify guarantees that a property does not hold if the violation pattern matches. Only inconclusive cases (i.e., no compliance and no violation pattern matches) cause imprecision.}.
The example illustrates that the analysis does not consider that a memory address may be statically unknown at the point of writing but known at the point of reading. 
Since writing to the array is done in a loop, for the assignment \lstinline|a[i] = msg.sender| the memory address cannot be statically determined. 
For the condition in line~\ref{line:mem:caller-check}, in contrast, the memory address for \lstinline|a[0]| can be precomputed.
However, Securify fails to account for the fact that dependencies of an unknown memory access should propagate to all concrete memory addresses.

\paragraph{Reentrancy handling}

\begin{figure}
    \begin{lstlisting}
contract Start { bool test = false;
    address a; |\label{line:re:a}|
    function setAddress(address addr) public  
    {   a = addr;  }|\label{line:re:conflict}|  
    function flipper () public {
        try Start(this).setAddress(msg.sender) {  |\label{line:re:setAddress}|
            if (a != address(0)) { test = !test; } |\label{line:re:caller-check}|
        } catch { revert();  }  }  }
\end{lstlisting}
    \caption{$\securify$ counterexample: reentrancy handling}
    \label{code:counter_reentrancy}
    \vspace{-0.5cm}
\end{figure}

Smart contracts are reactive programs in the sense that they can transfer control to other contracts and are subject to \emph{reentrancy}, i.e., while awaiting the return of the other contract, this contract may call the waiting contract again. 
\Cref{code:counter_reentrancy} shows a simple case of reentrancy.
In this variant of \Cref{code:counter_must}, function \lstinline|flipper| calls the contract's function \lstinline|setAddress| within a new internal transaction\footnote{A reasonable contract would call a function of the same contract directly so that such a call would be translated to a $\JUMP$ by the compiler. The chosen syntax enforces that the function call will be translated to a $\CALL$ instead.}.
\lstinline|flipper| uses \lstinline|setAddress| to store the caller \lstinline|msg.sender| in the storage location \lstinline|a| (defined in line~\ref{line:re:a}).
Then, \lstinline|flipper| modifies the critical storage location \lstinline|test| if and only if the address stored in \lstinline|a| is not zero.
Ethereum contracts are executed non-concurrently, so the value of \lstinline|a| remains unchanged after line~\ref{line:re:setAddress} and before the evaluation of the condition in line~\ref{line:re:caller-check}.

Consequently, a caller with address $0$ can never write to the \lstinline|test| field and the contract satisfies the RW property w.r.t. \lstinline|test|. 
Still, Securify reports a match of the violation pattern for \lstinline|test|.
Inspection of the Securify code reveals that it does not model potential dependencies between arguments of external calls and storage locations accessible via reentrancy.

\paragraph{External call handling}
Aside from reentrancy, 
external calls may affect the local execution state in multiple ways. 

The success of an external call is indicated by placing a corresponding flag on the stack and the return value of the call (if existent) is written to a memory fragment that is specified as an argument to the call.
These effects may depend on the recipient and the arguments of the call. 
The example in~\Cref{fig:call-counter} illustrates how ignoring those dependencies causes an unsoundness in Securify:
\begin{figure}
\begin{lstlisting}
contract Check { 
	function testZero (address a) public {
	    assert (a == address(0)); } }	
contract Start {
	bool test = false; 
	address check = address(42);  
	function flip() public {
	    try Check(check).testZero(msg.sender){
	        test = !test; 
	    } catch {return;} } }
\end{lstlisting}
\caption{Securify counter example: external call handling}
\label{fig:call-counter}
\end{figure}
In this example, the sender check is outsourced to the method \lstinline|testZero| of another contract. 
The assignment of variable \lstinline|test| 
depends on 
whether \lstinline|testZero| returns without the \lstinline|assert| throwing an exception, which in turn depends on 
(the input data) \lstinline|msg.sender|. 
Hence, this contract satisfies the RW property.
Still, Securify reports a violation, since no dependencies between the input to the call and the call output are modeled.

\paragraph{Gas handling}

\begin{figure}
    \begin{lstlisting}
contract Start { bool test = false;
    function flipper () public {
        require(gasleft() > 10000);  |\label{line:gas:require}|
        bool flip = false;
        if (msg.sender == address(0)) { |\label{line:gas:condi}|
        {   while (gasleft() >= 5000)
                { flip = !flip; } }  |\label{line:gas:idnoc}|
        if (gasleft() < 5000) {test = flip;} } }    |\label{line:gas:testAssign}|
\end{lstlisting}
    \caption{$\securify$ counterexample: gas handling}
    \label{code:counter_gas}
    \vspace{-0.5cm}
\end{figure}

\Cref{code:counter_gas} shows a contract that indirectly restricts write access to storage \lstinline{test} by consuming the gas resource in a controlled way.
In line~\ref{line:gas:require}, the contract ensures that it is executed with a generous amount of gas; if not enough gas is available, the execution is aborted and no caller is able to write to \lstinline{test}.
The code between lines~\ref{line:gas:condi} and~\ref{line:gas:idnoc} essentially wastes masses of gas if the caller address is equal to 0, and, otherwise, consumes very little gas.
The crux of the contract is in line~\ref{line:gas:testAssign}: From the amount of gas that is left, the contract can determine if the caller's address is equal to 0---this is the case if and only if less than 5000 gas units are left.
Hence, depending on the amount of available gas, either no caller or only caller 0 can write to \lstinline{test}. 
So, there is always at least one caller that cannot write to \lstinline{test}---the contract satisfies the RW property. %
However, Securify reports a violation of this property. 
The reason for this wrong analysis result is that Securify does not track dependencies for the gas resource.

\subsubsection{Sound Security Properties}
Since the dependency predicates do not have a semantic characterization, the soundness of the security patterns w.r.t. their corresponding property cannot be proven. 
Indeed, Schneidewind et al.~\cite{schneidewind2020ethor} provide counter examples for the soundness of 13 out of the 17 security patterns given in~\cite{tsankov2018securify}.
Above that, the unsoundness of the RW property undeniably manifests in line~\ref{line:re:conflict} of the contract we constructed in \Cref{code:counter_reentrancy}.
For this example, Securify reports simultaneously(!) satisfaction of a compliance and a violation pattern for the RW property w.r.t. \lstinline{a}.
This refutes the claim that compliance and violation patterns constitute sufficient criteria for property compliance and violation, respectively.

\section{Analysis Foundations}
\label{sec:background-slicing}
To design a sound static analysis for EVM bytecode based on program slicing, we instantiate the slicing proof framework from~\cite{wasserrab2009pdg} with a formal bytecode semantics as defined in~\cite{grishchenko2018semantic}. 
Before discussing the instantiation in~\Cref{sec:theory}, we shortly overview both frameworks. 

\subsection{EVM bytecode semantics}
\label{subsec:semantics}

{The EVM semantics was formally defined in~\cite{grishchenko2018semantic} in form of a small-step semantics.} 
We use a \emph{linearized} representation of the semantics inspired by Securify, where the use of the stack is  replaced by the usage of local variables in SSA form.
We will call these variables \emph{stack variables} and, in the following, always refer to the linearized representation of the semantics.

Formally, the semantics of EVM bytecode is given by a small-step relation $\sstep{\transenv}{\exconf{\callstack}{\transeffects}}{\exconf{\callstack'}{\transeffects'}}$. 
The relation describes how a contract, whose execution state is given by a callstack $\callstack$, can progress to callstack $\callstack'$ under a transaction environment $\transenv$. 
The transaction environment $\transenv$ holds information about the external transaction that initiated execution. 
We let $\ssteps{\transenv}{\exconf{\callstack}{\transeffects}}{\exconf{\callstack'}{\transeffects'}}$ denote the reflexive transitive closure of the small-step relation and call the pair $(\transenv, \callstack)$ a \emph{configuration}.
The details of the components of the EVM configurations can be found in ~\cite{grishchenko2018semantic}.

\noindent The overall state of an external transaction execution is captured by a callstack $\callstack$. 
The elements of the callstack model the states of all (pending) internal transactions. 
Internal transactions can either be pending, as indicated by a regular execution state $\regstate{\mstate}{\exenv}{\gstate}$, or terminated. 
The state of a pending transaction encompasses, the current global state $\gstate$, the execution environment $\exenv$ and the machine state $\mstate$.
The global state $\gstate$ describes the state of all accounts of the system and is defined as a partial mapping between account addresses and account states. 
The execution environment $\exenv$, among others, contains the $\code$ of the currently executing contract.
We model the code of a contract as a function $\evmcontract$ that maps program counters to tuples $(\instruction(\vec{x}), \nextpc, \pre)$, where $\instruction$ denotes an opcode from the EVM instruction set, $\vec{x}$ is the vector of input and output (stack) variables to this opcode, and $\nextpc$ denotes the  program counter for the next instruction.
Further, we instrument each instruction with a list $\pre$ of precomputed values for the arguments $\vec{x}$. 
This instrumentation is only introduced for analysis purposes and does not affect the execution.

The machine state $\mstate$ captures the state of the local machine and holds the amount of gas ($\lgas$) available for execution, the program counter ($\lpc$), the local memory, and the state of the (linearized) stack variables ($\textit{s}$).

\paragraph{Small-step Rules}
We illustrate the working of the EVM bytecode semantics using the example of the $\ADD$ instruction.
This instruction takes two values as input and writes their sum back to its return variable.
\begin{mathpar}
\small\infer{
\arraypos{\access{\exenv}{\code}}{\access{\mstate}{\pc}}= (\ADD(r, a, b), \nextpc, \pre) \\
\access{\mstate}{\gas} \geq 3 \\
\mstate'= \dec{\update{\update{\mstate}{\stack}{\update{\access{\mstate}{\stack}}{r}{\access{\mstate}{\stack}(a)+\access{\mstate}{\stack}(b)}}}{\pc}{\nextpc}}{\gas}{3}}
{\ssteptrace{\transenv}{\cons{\regstatefull{\mstate}{\exenv}{\gstate}{\transeffects}}{\callstack}}{\cons{\regstatefull{\mstate'}{\exenv}{\gstate}{\transeffects}}{\callstack}}{
		\ADD(a, b)
	}}
\end{mathpar}
Given a sufficient amount of gas (here $3$ units), an $\ADD$ instruction with result (stack) variable $r$ and operand (stack) variables $a$ and $b$ writes the sum of the values of $a$ and $b$ to $r$ and advances the program counter to $\nextpc$.
These effects, as well as the subtraction of the gas cost, are reflected in the updated machine state $\mstate'$.

\paragraph{Security properties}
\label{subsec:properties}

Previous work~\cite{grishchenko2018semantic} has shown that there are several generic smart contract security properties, which are desirable irrespective of the individual contract logic. 
The properties formally defined in~\cite{grishchenko2018semantic} are integrity properties that aim at ruling out the influence of attacker behavior on sensitive contract actions, in particular, the spending of money.
These properties are e.g., the independence of a contract's spending behavior from miner-controlled parameters (as the block timestamp) or mutable contract state.
Further,~\cite{grishchenko2018semantic} introduces the notion of call integrity, which requires that the spending behavior of a contract is independent of the code of other smart contracts. 
Since call integrity is hard to verify in the presence of reentering exeutions, a proof strategy is devised that decomposes call integrity into one reachability property (single-entrancy) that restricts reentering executions and two local dependency properties. 
These local dependency properties ensure that the spending behavior of the contract does not depend on the return effects of calls to other (unknown) contracts (effect independence) or immediately on the code of such contracts (code independence). 

Focussing on integrity, the security properties from~\cite{grishchenko2018semantic} are given as non-interference-style notions. 
We illustrate this with the example of timestamp independence, a property that requires that the block timestamp cannot influence a contract's spending behavior and hence would rule out vulnerabilities as those in the roulette example:


\begin{definition}[Independence of the block timestamp]
    \label{def:timestampIndependence}
    A contract $C$ is independent of the block timestamp if for all reachable configurations $(\transenv, \cons{\annotate{s}{C}}{\callstack})$ it holds for all $\transenv'$ that 
    \vspace{-0.2cm}
\noindent
    {\small
    \begin{align*}
     \transenv \equalupto{\timestamp} \transenv' 
    ~\land \sstepstrace{\transenv}{\cons{\annotate{\exstate}{C}}{\callstack}}{\cons{\annotate{\exstate'}{C}}{\callstack}}{\pi}
    ~\land~\finalstate{\exstate'} \\
    \land~ \sstepstrace{\transenv'}{\cons{\annotate{\exstate}{c}}{\callstack}}{\cons{\annotate{\exstate''}{C}}{\callstack}}{\pi'} ~\land~ \finalstate{\exstate''} 
    \implies \project{\pi}{\filtercallscreates{C}} = \project{\pi'}{\filtercallscreates{C}}
    \end{align*}}
    \vspace{-0.5cm}
    \end{definition}

This definition requires that two executions of the contract $C$ starting in the same execution state $\annotate{\exstate}{C}$ and in transaction environments $\transenv$ and $\transenv'$ that are equal up to the block timestamp (denoted by $\transenv \equalupto{\timestamp} \transenv'$) exhibit the same calling behavior (captured by the call traces $\project{\pi}{\filtercallscreates{C}}$). 
Intuitively, this ensures that the contract $C$ may not perform different money transfers based on the block timestamp. 
The roulette example trivially violates this property since, based on the block timestamp, the prize will be paid out to a different user.
\subsection{Program Slicing}
\label{subsec:slicing}
Static program slicing is a method for capturing the dependencies between different program points (nodes) and variables in a program.
Intuitively, the program slice of some program node $n$ in a program $P$ consists of all those nodes $n'$ in $P$ that may affect the values of variables written in $n$. 
Program slices are constructed based on the program dependence graph (PDG) that models the control and data dependencies between the nodes of a program. 
In the following, we will review the static slicing framework by Wasserraab et al. \cite{wasserrab2009pdg}, which establishes a language-independent correctness result for slicing based on abstract control flow graphs (CFGs). 

\paragraph{Abstract control flow graph}
An abstract CFG is a language-agnostic representation of program semantics. Technically, an abstract CFG is parametrized by a set of program states $\astates$ and defined by a set of nodes (representing program points) and a set of directed edges between nodes. 
Edges may be of two different types: 
State-changing edges $n \fedge n'$ alter the program state $\cfgstate \in \astates$ by applying the function $f$ to $\cfgstate$ and predicate edges $n \qedge n'$ guard the transition between $n$ and $n'$ with the predicate $Q$ on the program state $\cfgstate$.
We write $n \xrightarrow{\edgelist}^* n'$ to denote that node $n$ can be reached $n'$ using the edges in the list $\edgelist$. 
Abstract CFG edges can be related to actual runs of the program
by lifting them to a small-step relation of the form $\acfgstep{\sconf{\cfgstate}{n}}{a}{\sconf{\cfgstate'}{n'}}$.


\REPLACEMfor{}{220707}{
\paragraph{Backward slices}}
{\paragraph{PDG and backward slices}}

    The PDG for a program consists of the same nodes as the CFG for this program and has edges that indicate data and control dependencies.
    To make data dependencies inferable,
    each node $n$ is annotated with a set of variables that are written (short \emph{Def set}, written $\defn(n)$) and a set of variables that are read by the outgoing edges of the node (short \emph{Use set}, written $\usen(n)$). 
    A node $n'$ is data dependent on node $n$ (written $n \xrightarrow{~}_{dd} n'$) if $n$ defines a variable $Y$ ($Y \in \defn(n)$), which is used by $n'$ ($Y \in \usen(n')$) and $n'$ is reachable from $n$ in the CFG without passing another node that defines $Y$. 
    A node $n'$ is (standard) control dependent on node $n$ (written $n \xrightarrow{~}_{cd} n'$) if $n'$ is reachable from $n$ in the CFG, but $n$ can as well reach the program's exit node without passing through $n'$ and all other nodes on the path from $n$ to $n'$ cannot reach the exit node without passing through $n'$.
    So intuitively, $n$ is the node at which the decision is made whether $n'$ will be executed or not.
    Based on the data and control flow edges of the PDG, the backward slice of a node $n$ (written $\bss{n}$) is defined as the set of all nodes $n'$ that can reach $n$ within the PDG.

\paragraph{Correctness statement}
The generic correctness statement for slicing proven in~\cite{wasserrab2009pdg} is stated as follows:
\begin{theorem}
    \label{proof:correctness_slicing}
    \textit{Correctness of Slicing Based on Paths \cite{wasserrab2009pdg}}
    {\small
    \begin{mathpar}
        \infer{
            \sconf{\cfgstate}{n} \xrightarrow{\as}^* \sconf{\cfgstate'}{n'} \\
        }
        {
            \exists~ as'.~ \sconf{\cfgstate'}{n} \xrightarrow{\as'}^*_{\textit{BS}(n')} \sconf{\cfgstate''}{n'} 
            ~ \wedge~
            \projectpath{\textit{as}}{\textit{BS}(n')} = as' \\
            ~\wedge~ (\forall~ \useset{V}{n'}. \eqstatevar{\cfgstate'}{\cfgstate''}{V})
        }
    \end{mathpar}}
\end{theorem}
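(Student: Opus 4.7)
My approach would proceed by induction on the length of the execution trace $\as$. In the base case, $\as$ is empty and $n = n'$; choosing $\as' = \emptyarray$ and $\cfgstate'' = \cfgstate'$ satisfies all three conditions trivially, since $n \in \bss{n}$ and the projection of an empty trace is empty.

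For the inductive step, I would decompose $\as$ by isolating the last transition $\sconf{\cfgstate_0}{n_0} \xrightarrow{a} \sconf{\cfgstate'}{n'}$ and apply the induction hypothesis to the prefix reaching $n_0$. The central technical obstacle---shared by every slicing correctness proof---is to establish that the sliced execution faithfully reproduces the relevant control and data flow. Concretely, I would argue (i) that data dependencies originating outside the slice cannot affect any variable in $\usen(n')$, using the closure property that any node defining a variable of $\usen(n')$ and reaching $n'$ without an intervening redefinition must itself lie in $\bss{n'}$; and (ii) that control-flow decisions made outside the slice cannot divert the sliced execution from reaching $n'$, using the property that every node on which $n'$ is standard-control-dependent lies in $\bss{n'}$. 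These two facts together give a step-wise state-agreement invariant on the currently relevant use variables.

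The hardest case is when the original execution traverses a long stretch of nodes outside $\bss{n'}$ between two slice-relevant nodes. For this case I would prove an auxiliary simulation lemma, by a secondary induction, asserting that such detours can be ``collapsed'' in the sliced execution: either $n'$ is reached directly via the next slice edge, or the execution can be routed to the next node of $\bss{n'}$ without disturbing $\usen(n')$. This step would exploit the post-dominance structure built into the standard control-dependency definition, which guarantees that non-slice nodes on a path from a slice node to $n'$ are all ``bypassable''. Composing this lemma with the induction hypothesis then yields the required sliced trace $\as'$, the state agreement $\eqstatevar{\cfgstate'}{\cfgstate''}{V}$ for each $V \in \usen(n')$, and the projection equality $\projectpath{\as}{\bss{n'}} = \as'$ by construction.
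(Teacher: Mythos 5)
You should first note that the paper contains no proof of this statement: \Cref{proof:correctness_slicing} is imported verbatim from the slicing framework of Wasserrab et al.\ \cite{wasserrab2009pdg}, where it is established in an Isabelle/HOL formalization; the paper only \emph{uses} it (and its set-indexed variant, \Cref{proof:correctness_slicing_sets}). So there is no in-paper argument to compare against, and your sketch has to be judged as a reconstruction of the external proof. As such it identifies the right ingredients: the closure of $\bss{n'}$ under data dependence, the role of standard control dependence and post-dominance in making non-slice detours bypassable, and the need for a simulation-style lemma relating the original and sliced runs.

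There is, however, a genuine gap in how you set up the induction. Peeling off the \emph{last} transition and applying the induction hypothesis to the prefix ending at $n_0$ gives you a sliced execution with respect to $\textit{BS}(n_0)$, not $\textit{BS}(n')$ --- the slice in the theorem's conclusion is anchored at the endpoint of the path, so the IH you obtain does not compose with the final step. The standard repair (and what the formalized proof does) is to generalize the statement so that the target slice is a fixed parameter --- e.g.\ a node set $S$ with $n' \in S$ as in \Cref{proof:correctness_slicing_sets} --- and to strengthen the invariant from agreement on $\usen(n')$ to agreement on the \emph{relevant variables} $\rv{S}{m}$ at each intermediate node $m$ (cf.\ \Cref{def:rel_vars}); one then inducts over the path from the front, distinguishing whether the current node lies in $\bss{S}$ (the edge is kept, and Def/Use well-formedness shows the relevant-variable set is correctly transformed) or not (a $\tau$-move to the deterministic successor, which neither defines a relevant variable nor, by the post-dominance argument you mention, diverts the path away from $S$). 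Your phrase ``currently relevant use variables'' gestures at this, but without making $\rv{S}{m}$ and the fixed target slice explicit, the inductive step as you describe it does not go through.
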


Intuitively, the theorem states that whenever a node $n$ can reach some node $n'$ in the PDG ($\sconf{\cfgstate}{n} \xrightarrow{\as}^* \sconf{\cfgstate'}{n'}$), then removing all outgoing edges from nodes not in the backward slice of $n'$ ($\sconf{\cfgstate}{n} \xrightarrow{\as'}^*_{\textit{BS}(n')} \sconf{\cfgstate''}{n'}$) without altering the path through the PDG in any other way ($\projectpath{\textit{as}}{\textit{BS}(n')} = as'$) has no impact on $n'$. 
Having no impact on $n'$ means that variables used in $n'$ are assigned to the same values regardless of whether the edges have been removed or not ($\forall \useset{V}{n'}.~ \eqstatevar{\cfgstate'}{\cfgstate''}{V}$).
We call the PDG without the above-mentioned edges also \emph{sliced PDG} or \emph{sliced graph}. 

\section{Sound EVM Dependency Analysis}
\label{sec:theory}
In the following, we instantiate the slicing proof framework~\cite{wasserrab2009pdg} to accurately capture program dependencies of EVM smart contracts in terms of program slices. We then give a logical characterization of such program slices, which allows for the automatic computation of dependencies between different program points and variables with the help of a Datalog solver. 
The generic correctness statement of the slicing proof framework guarantees that the slicing-based dependencies soundly over-approximate all real program dependencies.
We show how to use this result to automatically verify relevant smart contract security properties such as the independence of the transaction environment and the independence of mutable account state as defined in~\cite{grishchenko2018semantic}.

\subsection{Instantiation of Slicing Proof Framework}
\label{subsec:instantiation}
\newcommand{\func}{\textit{func}}
\newcommand{\addresses}{\mathcal{A}}  
\newcommand{\annotations}{\addresses \times \{ \bot \}} 
\renewcommand{\none}{\textit{None}}
\renewcommand{\c}{\mathcal{C}}
\newcommand{\start}{\textit{start}}
\newcommand{\exit}{\textit{exit}}
\renewcommand{\curropcode}[2]{\omega_{\mathcal{C},#1}}

We instantiate the abstract CFG from the slicing framework with the linearized EVM semantics. 

The concrete layout of the instantiation heavily influences the resulting backward slices and the precision of the analysis. 
In the following, we sketch the most interesting aspects of our instantiation of the CFG components and how they contribute to the design of a precise dependency analysis.

\paragraph*{Preprocessing Information}
For a precise analysis, it is indispensable to preprocess contracts to aggregate as much statically obtainable information as possible---without compromising the soundness of the overall analysis.
For example, knowing the precise destination of jump instructions is crucial to reconstruct control flow precisely, and, moreover, this information usually can be easily reconstructed, especially, when contracts were compiled from a high-level language with structured control flow.

\ifextended
We require in the following that the preprocessed information is correct: 

\begin{definition}[Sound Preprocessing]
A contract $\precontract$ has sound preprocessing information if for  all execution states $\annotate{\exstate}{C}$ with an initial machine state running contract $C$ it holds that if
$\ssteps{\transenv}{\cons{\annotate{\exstate}{C}}{\callstack}}{\cons{\annotate{\exstate'}{C}}{\callstack}}$ 
then
\small{
\begin{align*}
\precontract(\access{\access{\exstate'}{\mstate}}{\pc}) = (\instruction(\vec{x}), \nextpc, \pre) 
\\
\Rightarrow \forall i \in [0, \size{\vec{x}} -1 ].~ 
\pre[i] = \optional{\access{\mstate}{\stack}(x_i)} ~\lor~ \pre[i] = \None
\end{align*}
}
\end{definition}
\fi

In the remainder, we assume that all existing preprocessing information is correct and sufficient to reconstruct the contract's CFG.
Recall that, formally, we consider a contract a function, such that for a program counter $\lpc$, $\precontract(\lpc) = (\instruction(\vec{x}), \lpc_{\text{next}}, \pre)$ where 
$\pre$ contains the preprocessing information for the instruction $\instruction(\vec{x})$:
for every $\vec{x}[i]$, $\pre[i]$ either holds a precomputed static value, or $\bot$ to indicate that no static value could be inferred.
Note that we restrict preprocessing to stack variables.
For our analysis, we are only interested in precomputed values for memory and storage locations and jump destinations.

\paragraph*{CFG States}
The edges of the CFG are labeled with state-changing functions or predicates on states.
For EVM bytecode programs, the CFG state $\cfgstate$ is partitioned into stack variables (denoted by $\stackvar{x}$), memory variables ($\memvar{x}{}$), storage variables ($\storvar{x}{}$) and local ($\locenvvar{x}$) and global ($\globenvvar{x}$) environmental variables.  
Memory and storage variables represent cells in the local memory, respectively the global storage of the contract under analysis.
Local environment variables contain the information of the execution environment that is specific to an internal transaction.
Global environmental variables denote environmental information whose accessibility is not limited to a single internal transaction, like the state of other contracts and the block timestamp. 
Environmental information that cannot be directly accessed during the execution (such as the storage of other contracts) is hidden in the dedicated global environmental variable $\globenvvar{\extenv}$.

\paragraph*{CFG Nodes, Edges \& Def and Use Sets}

To transform an EVM bytecode program into a CFG, we map every program counter $\pc$ to one or more nodes $(\lpc, i)$ in the CFG (where $i\in\NN$ is used to distinguish between multiple nodes for $\lpc$).
We call a node $(\lpc, 0)$ \emph{initial node (for $\lpc$)} and nodes $(\lpc, i)$ with $i > 0$ \emph{intermediate nodes (for $\lpc$)}.
%
%
Since the size of the callstack below the translated callstack element may influence the contract execution, the rule set defining the CFG transformation constructs a relation of the form  
$\cfgstep{\evmcontract, \cd}{\cfgnode{\lpc}{\cfgnodindex}}{\cfgact}{\cfgnode{\lpc'}{\cfgnodindex'}}$, where $\precontract$ is the contract for which the CFG is constructed, $\cd$ is the size of the callstack, and $a$ stands for either a $\qact$ action (for a predicate edge) or $\fact$ action (for a state-changing edge).
With every rule, we also provide Def and Use sets.
The Use sets contain all variables whose values are retrieved from the state $\cfgstate$ in the definition of the $Q$ predicate or $f$ function.
Similarly, the definition set contains all variables that are overwritten by the function $f$ (and is always empty for predicate edges).%

\begin{figure}
	\scriptsize
	\begin{mathpar}
		\infer{
			\precontract(\lpc) = (\JUMPI(\stackvar{x_1}, \stackvar{x_2}), \lpc_\text{next}, \pre) \\
			\stateupdate{\cfgstate \leftarrow \envvar{\lgas} \define \cfgstate[\envvar{\lgas}] - 10}
		}
		{\edgestep{}{\cfgnode{\lpc}{0}}{\cfgnode{\lpc}{1}}}\\
		{\Defs{\envvar{\lgas}} \\
				\Uses{\envvar{\lgas}}} \\
		\infer{
			\precontract(\lpc) = (\JUMPI(\stackvar{x_1}, \stackvar{x_2}), \lpc_\text{next}, \pre) \\
			\predupdate{\cfgstate[\stackvar{x_2}] = 0} 
		}
		{\predstep{}{\cfgnode{\lpc}{1}}{\cfgnode{\lpc_\text{next}}{0}}}\\
		{\Def{\emptyset} \\
			\Uses{\stackvar{x_2}}} 
	\end{mathpar}
	\caption{$\JUMPI$ abstract CFG instantiation}
	\label{fig:jumpi_exampl}
\end{figure}

\Cref{fig:jumpi_exampl} shows two exemplary rules for the conditional jump instruction \JUMPI{}.
The first argument to \JUMPI{} is the jump destination and the second argument is the condition variable that must be non-zero for the jump to happen. 
We only show rules for the case that the condition is false, i.e., the jump does not happen. 
The upper rule defines a state-changing edge that deducts the gas that has to be paid for a \JUMPI{} instruction.
Appropriately, both Def and Use sets contain the gas variable because the current gas value must be read from and the reduced value updated in state $\cfgstate$.
Note that the edge goes from the initial node for $\lpc$ to an intermediate node for $\lpc$, because a second step is necessary to decide whether the program should jump.
The second step, depicted by the lower rule, continues in the intermediate node for $\lpc$ and checks if the condition (in variable $\stackvar{x_2}$) is false (i.e., if it is zero) via a predicate edge.
In this case, the execution proceeds to the initial node representing $\lpc_\text{next}$.
$\stackvar{x_2}$ is the only variable used by $Q$, hence it is the only variable in the Use set.%

It can be shown that the CFG semantics and EVM semantics coincide via two simulation relations where every (multi-)step in the CFG semantics between initial nodes is simulated by a step of the bytecode semantics and vice versa.

\subsection{Core abstractions} 
\label{subsec:core_abst}
We review the most interesting aspects of the CFG semantics and how they lead to a precise dependency analysis. 
In this course, we will show how to overcome the challenges presented in~\Cref{sec:challenges}.

\paragraph{Gas abstraction}
In the EVM, the execution of instructions consumes gas. 
If the gas is not sufficient to finish the execution of a contract, it is aborted with an exception.
Modeling this behavior accurately would result in a very imprecise analysis, since, technically, every instruction would be control-dependant on all its preceding instructions. 
This is as the execution of an instruction depends on whether prior instructions led to an out-of-gas exception. 
However, in practice, users should only call contracts with a sufficient amount of gas since, otherwise, the contract execution exceptionally halts. 
For this reason, there exist static analysis tools for computing (sound) gas bounds~\cite{albert2021don} and even Solidity's online compiler
provides gas estimates for smart contract execution. 

Hence, for our analysis we assume that a contract does not run out of gas and do not model the corresponding behavior in the CFG semantics.
We remark that Securify also makes this assumption implicitly; we spell it out explicitly as follows:

	\begin{assumption}[Absence of local out-of-gas exceptions (informal)]
		\label{asm:gas}
		A contract execution \emph{does not exhibit local-out-of-gas exceptions} if each local exception can be attributed to the execution of an $\INVALID$ opcode.
	\end{assumption}

In contrast to Securify, we do not ignore gas entirely, but model the gas reduction for all instructions.
This allows capturing dependencies such as the one highlighted in \Cref{code:counter_gas} (and missed by Securify).
In the CFG, we always model the gas reduction as a separate edge involving an intermediate node (e.g., with the upper rule in \Cref{fig:jumpi_exampl}).
The Def set of one node contains only the gas variable, while the Def set of the other node only contains the (stack) variables involved in the actual instruction.
\begin{figure}
		\scriptsize
		\begin{mathpar}
			\infer{
				\precontract(\lpc) =  (\ADD(\stackvar{y},\stackvar{x_1}, \stackvar{x_2}), \lpc_\text{next}, \pre) \\
				\stateupdate{\cfgstate \leftarrow \envvar{\lgas} \define \cfgstate[\envvar{\lgas}] - 3} 
			}
			{\edgestep{\precontract, \cd}{\cfgnode{\lpc}{0}}{\cfgnode{\lpc}{1}}} \\
			{\Defs{\envvar{\lgas}} \\
				\Uses{\envvar{\lgas}}}
		\end{mathpar}
		\begin{mathpar}
			\infer{
				\precontract(\lpc) =  (\ADD(\stackvar{y},\stackvar{x_1}, \stackvar{x_2}), \lpc_\text{next}, \pre) \\
				\stateupdate{\cfgstate \leftarrow \stackvar{y} \define \cfgstate[\stackvar{x_1}] + \cfgstate[\stackvar{x_2}]}
			}
			{\edgestep{\precontract, \cd}{\cfgnode{\lpc}{1}}{\cfgnode{\lpc_\textit{next}}{0}}} \\
			{\Defs{\stackvar{y}} \\
				\Uses{\stackvar{x_1},\stackvar{x_2}}}
		\end{mathpar}
		\caption{CFG semantics rules for the $\ADD$ instruction.}
		\label{fig:rule-add}
	\end{figure}
An example for that is given by the (simplified) CFG  rules of the \ADD{} instruction 
in~\Cref{fig:rule-add}.
Technically, an \ADD{} instruction performs two types of state updates: it decreases the gas and performs addition on stack variables. 
Since those two state updates are independent, their execution can be split into two different nodes. 
As a consequence, the node $(\lpc, 1)$ is not data-dependent on nodes writing the gas variable.

\begin{figure}
\includegraphics[width=0.8\columnwidth]{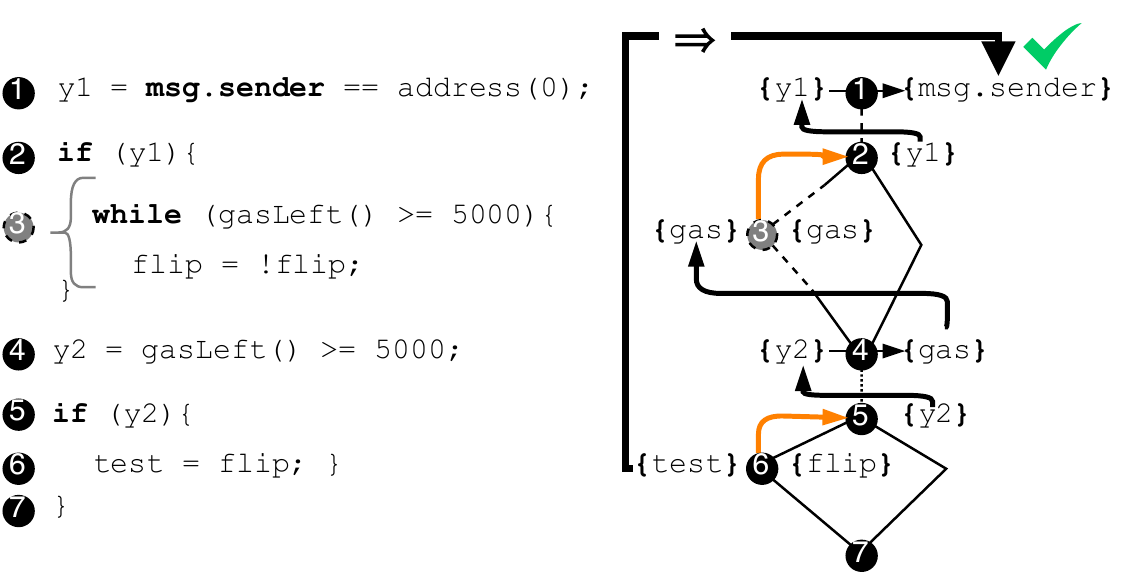}
\caption{Example control flow with gas dependencies. 
Def sets are given at the left of each node, Use sets at the right. 
Data dependencies are indicated by black arrows, control dependencies by orange ones.
}
\label{fig:gas-dependencies}
\end{figure}

Still, the gas abstraction is sound (under~\Cref{asm:gas}) and correctly captures the dependencies of the example in~~\Cref{code:counter_gas}: 
\Cref{fig:gas-dependencies} shows an incomplete and simplified CFG of the example in~\Cref{code:counter_gas} with annotated Def and Use sets.
The example illustrates how the CFG captures the dependency of the storage write (\lstinline|test = flip|) on the \lstinline|msg.sender| variable. 
The storage write in \textnode{6} is control dependant on the conditional \lstinline|y2| in \textnode{5}, and \textnode{5} depends on node \textnode{4} where \lstinline|y2| is defined.
\textnode{4} accesses the gas value, so a dependency between \textnode{4} and the gas nodes is established.
 Node \textnode{3} is one of these gas nodes (there are more not shown in the picture). The execution of \textnode{3} depends on condition \lstinline|y1| checked in \textnode{2}, so it is control dependant on \textnode{2}. Node \textnode{1} defines \lstinline|y1|, so \textnode{2} depends on \textnode{1}.
 Thus, there is a transitive dependency between writing to \lstinline|test| in \textnode{6} and reading \lstinline|msg.sender| in \textnode{1}.

\paragraph{Memory Abstraction}

To precisely model memory and storage accesses in a CFG, it is important to know statically as many memory and storage locations as possible.
Assume that such statical information is not available: then memory (or storage) cannot be separated into regions and all read and write operations introduce dependencies with the whole memory (or storage).
This would introduce many false dependencies.
During a preprocessing step, such static information can be inferred.
But, as demonstrated in \Cref{sec:challenges}, using preprocessed data may introduce unsoundness. 
This requires careful integration of preprocessing information into the CFG defining rules.
In the following we consider only memory variables; all ideas equally apply to storage variables.

\begin{figure}
	\includegraphics[width=\columnwidth]{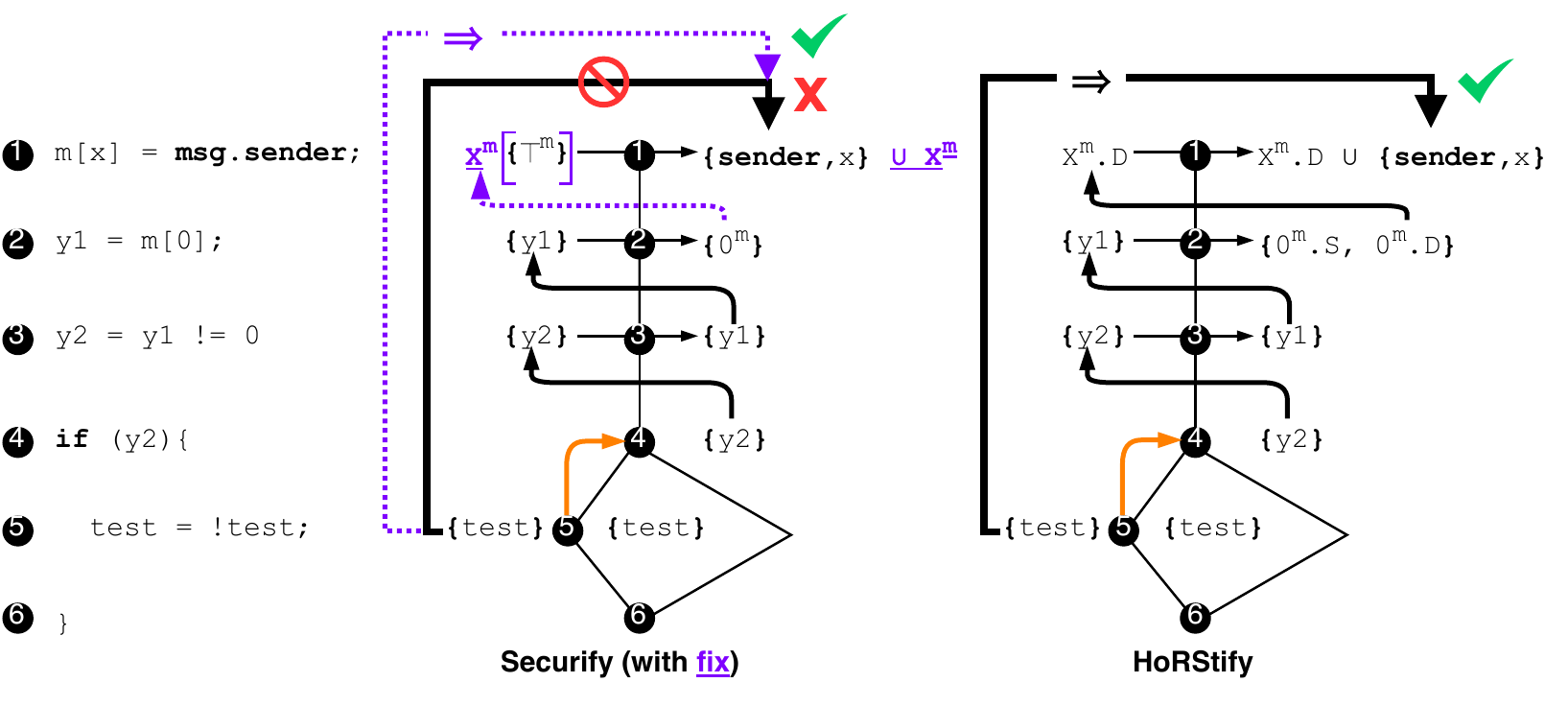}
\caption{Simplified version of contract in~\Cref{code:counter_storage} satisfying the RW property with PDGs depicting the dependencies modeled by Securify and \horstify.}
\label{fig:memory-example-dependent}
\end{figure}

We propose a, to the best of our knowledge, novel memory abstraction that is sound and provides high precision.
To position our approach between unsound and imprecise memory abstractions, we revisit \Cref{code:counter_storage} in a simplified version that is depicted as a CFG in \Cref{fig:memory-example-dependent}.
The black and solid line parts of the left CFG visualize how Securify misses the dependency between \lstinline|msg.sender| (\textnode{1}) and writing to \lstinline|test| (\textnode{5}). In Securify,
write accesses to unknown memory locations are assumed to write a special memory variable $\memvar{\top}{}$. 
However, when reading from a statically known memory location (as done in \textnode{2}), Securify does not consider that a value could have been written to this location when the location was not statically known, i.e., that the value could have been stored in $\memvar{\top}{}$: the Use set of \textnode{2} contains only $\memvar{0}{}$, but not $\memvar{\top}{}$.
A hypothetical fix for this unsoundness is to replace the variable $\memvar{\top}{}$ by the whole set $\memvar{X}{}$ of all memory variables.
This fix is depicted in violet in~\Cref{fig:memory-example-dependent}.
Now, the dependency of the read access in \textnode{2} to the write operation in \textnode{1} is naturally established.
One should notice, however, that this interpretation implies that the Use set of node \textnode{1} needs to contain all variables in $\memvar{X}{}$ as well: a new value is written to one unknown location, but for all other locations the value is ``copied'' from the existing memory cells, and hence, all these cells need to be included in the Use set. 
Even though fixing the soundness issue, this modeling would lead to an imprecise analysis as depicted in~\Cref{fig:memory-example-independent}.
\begin{figure}
	\includegraphics[width=\columnwidth]{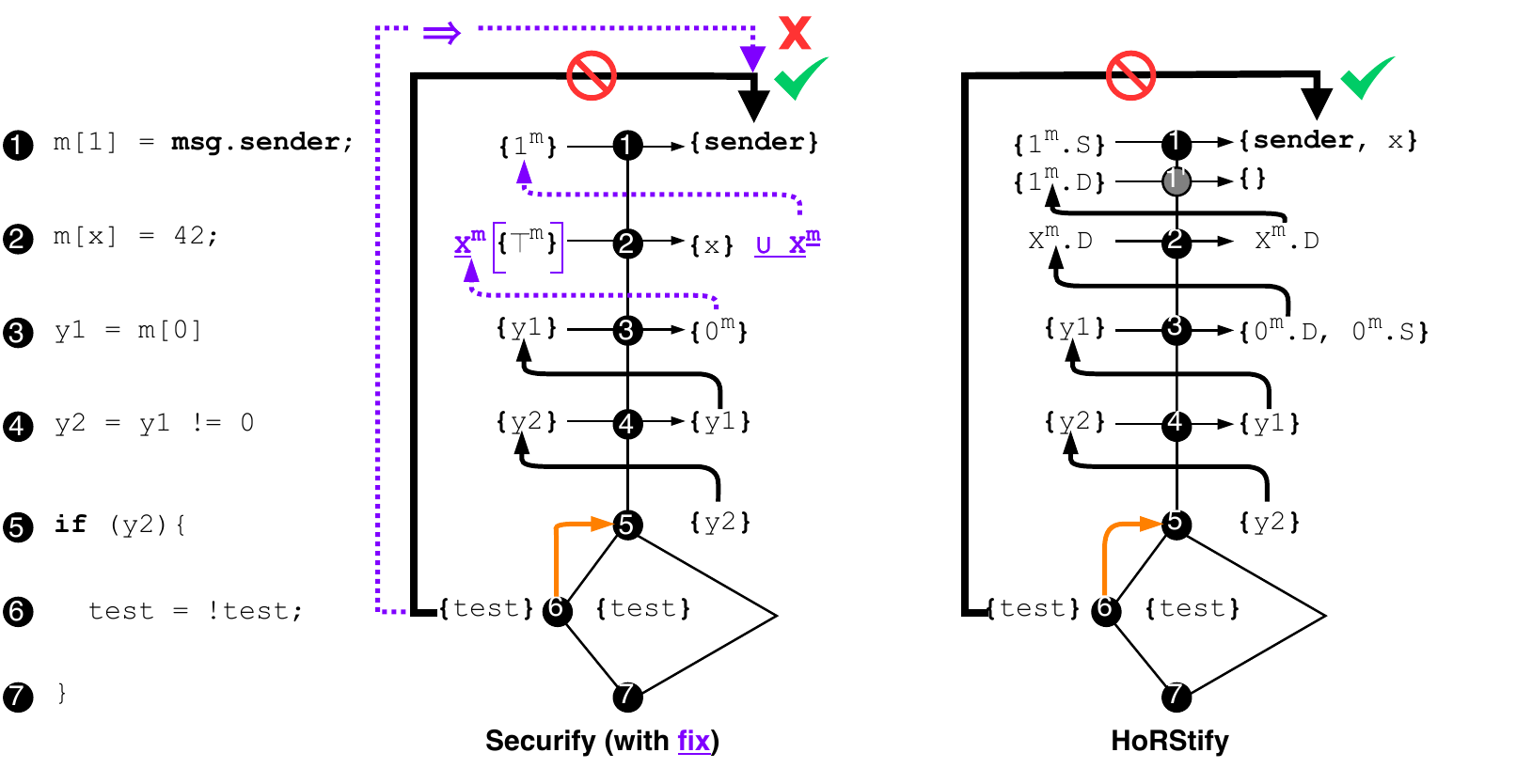}
	\caption{Contract violating the RW property with PDGs depicting the dependencies as modeled by Securify and \horstify.}
	\label{fig:memory-example-independent}
\end{figure}
%
This variant of \Cref{fig:memory-example-dependent} first writes \lstinline|msg.sender| to the known memory location \lstinline|1| in node \textnode{1} and then writes a value to an unknown memory location in node \textnode{2}. 
Since the condition \lstinline|y2| only depends on the value in memory location \lstinline|0| while \lstinline|msg.sender| was written to location \lstinline|1|, the final write to the \lstinline|test| variable in \textnode{6} does \textbf{not} depend on \lstinline|msg.sender|. 
However, the hypothetical fix of Securify infers a possible dependency between \textnode{6} and \lstinline|msg.sender| (shown in violet in the left CFG in~\Cref{fig:memory-example-independent}).
This imprecision is caused by interpreting a write to an unknown memory location as a write to possibly all memory locations as this requires the Use set in \textnode{2} to contain $\memvar{X}{}$. 
This creates a dependency between the assignment of location \lstinline|1| to \lstinline|msg.sender| in \textnode{1} and the memory access in \textnode{2}. 

Our memory abstraction is sound but more precise than the hypothetical fix above.
For every memory variable $x$ we use two sub-variables instead:
\emph{\memvarconcname} $\memvarconc{x}{}$ stores values that are assigned to $x$ when the memory location for $x$ is statically known, and \emph{\memvarabsname} $\memvarabs{x}{}$ stores values assigned to $x$ when $x$'s location is not statically known.~
During the execution, every write access to a memory variable $x$ stores the assigned value in $\memvarabs{x}{}$, unless the memory location for $x$ is statically known, in which case $\memvarconc{x}{}$ stores the value and $\memvarabs{x}{}$ is set to $\bot$. 
Correspondingly, when reading from a variable (regardless of the memory location being statically known or not), first, the value of the \memvarabsname is read, and only if it is $\bot$, the value of the \memvarconcname is taken.
We model this read access 
with the function

\noindent
{\small 
\begin{align*}
	\textit{load}~\cfgstate~\loc &= 
	\begin{cases}
		\cfgstate[\memvarconc{\loc}{}] \qquad \text{if } \cfgstate[\memvarabs{\loc}{}] = \None \\
		\cfgstate[\memvarabs{\loc}{}] \qquad \text{otherwise.} 
	\end{cases}
\end{align*}
}

This two-layered memory abstraction ensures that the execution is deterministic and that the read values coincide with those obtained during an execution without prior preprocessing. 
\begin{figure*}
	\scriptsize
	\begin{minipage}{0.33\textwidth}
	\begin{mathpar}
		\infer{
			\evmcontract(\lpc) = (\MLOAD(\stackvar{y}, \stackvar{x})), \lpc_\textit{next}, \pre) \\
			\pre[1] = \None \\
			\stateupdate{\cfgstate \leftarrow \stackvar{y} \define \textit{load}~\cfgstate~(\cfgstate[\stackvar{x}])} \\
		}
		{\edgestep{\transenv}{\cfgnode{\lpc}{0}}{\cfgnode{\lpc}{1}}}
		\\
		{\Defs{\stackvar{y}} \\
			\Use{\mt \cup \mk \cup \{\stackvar{x}\}}}
		
			\infer{
				\evmcontract(\lpc) = (\MSTORE(\stackvar{x_1}, \stackvar{x_2}), \lpc_\textit{next}, \pre) \\
				\pre[0] = \None \\ 
				\stateupdate{\cfgstate \leftarrow \varabs{\cfgstate[\stackvar{x_1}]} \define \cfgstate[\stackvar{x_2}]} \\
			}
			{\edgestep{\transenv}{\cfgnode{\lpc}{0}}{\cfgnode{\lpc}{1}}}
			\\
			{\Def{\mt} \\
				\Use{\mt \cup \{\stackvar{x_1}, \stackvar{x_2}\}}}

	\end{mathpar}
\end{minipage}
\begin{minipage}{0.33\textwidth}
\begin{mathpar}
	\infer{
		\evmcontract(\lpc) = (\MLOAD(\stackvar{y}, \stackvar{x})), \lpc_\textit{next}, \pre) \\
		\pre[1] = \optional{\memvar{\loc}{}} \\
		\stateupdate{\cfgstate \leftarrow \stackvar{y} \define \textit{load}~\cfgstate~\memvar{\loc}{}} \\
	}
	{\edgestep{\transenv}{\cfgnode{\lpc}{0}}{\cfgnode{\lpc}{1}}}
	\\
	{\Defs{\stackvar{y}} \\
		\Use{\{\memvarconc{\loc}{}, \memvarabs{\loc}{}\}}}

		\infer{
			\evmcontract(\lpc) = (\MSTORE(\stackvar{x_1}, \stackvar{x_2}), \lpc_\textit{next}, \pre) \\
			\pre[1] = \optional{\memvar{\loc}{}} \\
			\stateupdate{\cfgstate \leftarrow \varconc{\memvar{\loc}{}} \define \cfgstate[\stackvar{x_2}]} \\
		}
		{\edgestep{\transenv}{\cfgnode{\lpc}{0}}{\cfgnode{\lpc}{1}}}
		\\
		{\Defs{\memvarconc{\loc}{}} \\
			\Use{\{ \stackvar{x_2} \}}}

\end{mathpar}
\end{minipage}
\begin{minipage}{0.33\textwidth}
	\begin{mathpar}
		\infer{
			\evmcontract(\lpc) = (\MSTORE(\stackvar{x_1}, \stackvar{x_2}), \lpc_\textit{next}, \pre) \\
			\pre[0] = \optional{\memvar{\loc}{}} \\ 
			\stateupdate{\cfgstate \leftarrow \varabs{\memvar{\loc}{}} \define \bot} \\
		}
		{\edgestep{\transenv}{\cfgnode{\lpc}{1}}{\cfgnode{\lpc}{2}}}
		\\
		{\Defs{\memvarabs{\loc}{}} \\
			\Use{\emptyset}}
	\end{mathpar}
	\end{minipage}
	\caption{$\MLOAD$ memory abstraction instantiation}
	\label{fig:memory_abs}
\end{figure*}%
$\textit{load}$ is used in the inference rules in \Cref{fig:memory_abs} that define the memory read and write operations for the CFG semantics.
In these rules, we use $\mk$ for the set of all \memvarconcnames and $\mt$ for the set of all \memvarabsnames.
The leftmost \MSTORE{} rule is for the case that a value is written to a memory location that could not be statically inferred (i.e., $\pre[0] = \bot$). There, any of the memory variables from $\mt$ might be redefined, hence the Def set contains all variables in $\mt$.
As discussed for the hypothetical fix of Securify, also the Use set needs to include $\mt$, 
because we must not interrupt potential dependencies for memory cells that are not changed by this \MSTORE{} instruction.
%
An example 
for this is node \textnode{2} in~\Cref{fig:memory-example-independent} (right CFG).
The \memvarconcnames are not part of the Use set and hence not part of the value intermingling in \textnode{2}. 
This removes the imprecision that occurred in the proposed hypothetical fix above.
Still, the \MLOAD{} rules make sure that no dependencies to \memvarconcnames are missed by adding both \memvarabsnames and \memvarconcnames to the Use set.
This way, the connection between the memory location and the stored value is preserved; $\memvarabs{\loc_1}{}$ does not inherit any data dependencies from $\memvarconc{\loc_2}{}$ for locations $\loc_1 \neq \loc_2$.
An example for that is given in~\Cref{fig:memory-example-independent}, where memory location \lstinline|0| does not inherit the dependency from memory location \lstinline|1| written in \textnode{1}. 
This is thanks to the node splitting at \textnode{1} that breaks the propagation of dependencies on precomputed locations to dynamic ones.

\paragraph{Call Abstraction}
Contract calls in Ethereum trigger a multitude of possible (side) effects. 
When calling another account, the control flow is handed over to the code residing in this account.
This code may initiate further internal transactions, e.g., perform money transfers or even reenter the calling contract before reporting back the result to the callee.

This behavior poses a big challenge to sound static analysis since all possible effects of interactions with other (potentially unknown) contracts need to be over-approximated. 
Securify avoids this challenge by sacrificing soundness and ignoring all data dependencies arising from external calls (including effects of reentrancy) as demonstrated by the examples in~\Cref{code:counter_reentrancy} and~\Cref{fig:call-counter}.
In contrast, to give a sound and precise characterization of these dependencies, we first simplify the problem by restricting our analysis to a set of well-behaved smart contracts and then model the remaining dependencies in a fine-grained manner.

The class of smart contracts that we target are such contracts that cannot write storage variables in reentering executions. 
This restriction rules out race conditions on contract variables and as such is a highly-desirable property that can be easily achieved (e.g., by a strict local locking discipline). 
We call contracts satisfying this restriction \emph{store unreachable}:
\begin{assumption}[Store unreachability (informal)]
    \label{def:storeunreachability}
    A contract $\evmcontract$ is store unreachable if all its reentering executions cannot reach an $\SSTORE$ instruction.
\end{assumption}
The contract in~\Cref{code:counter_reentrancy} trivially violates store unreachability since the field \lstinline|a| can be written in a reentering execution.
This could be easily fixed by guarding each function with a lock that blocks reentering executions. 
Store unreachability is a local reachability property of the contract under analysis and as such falls in the scope of the sound analysis tool eThor~\cite{schneidewind2020ethor} and hence can be automatically verified.



Even when focussing on store unreachable contracts, the program dependencies induced by external calls are manifold and often subtle.
\Cref{fig:call-rule-external} shows one (slightly simplified) rule of the CFG semantics for external calls. 
As seen in the previous examples, node splitting is used to separate the dependencies of different variables. 
The rule displayed in~\Cref{fig:call-rule-external} gives one of the rules for setting a call's return value (written to the stack variable $\stackvar{y}$) and updating the external environment (represented by variable $\globenvvar{\extenv}$) according to the call effects.

\begin{figure}
    \scriptsize
    \begin{mathpar}
        \infer{
            \precontract(\pc) = (\CALL(\stackvar{y}, \stackvar{\lgas}, \stackvar{\recipient}, \stackvar{\valu}, \stackvar{\io}, \stackvar{\is}, \stackvar{\oo}, \stackvar{\os}), \pc', \pre)\\
            f_1 = \fun{\cfgstate}{\updatestate{\cfgstate}{\stackvar{y}}
            {\accesscfgstate{\applycall(\cfgstate, \precontract, \pc)}{\stackvar{y}}}} \\
            f_2 = \fun{\cfgstatefull}{\updatestate{\cfgstate}{\globenvvar{\cfgexternalpc{\pc'}}}
            {\accesscfgstate{\applycall(\cfgstate, \precontract, \pc)}{\globenvvar{\cfgexternalpc{\pc'}}}}}  \\
            f = \fun{\cfgstate}{f_2(f_1(\cfgstatefull))}\\
        }
        {\edgestep{\transenv}{\cfgnode{\pc}{0}}{\cfgnode{\pc}{1}}} \\
        {\Def{  \{ \stackvar{y}, \globenvvar{\cfgexternalpc{\pc'}}}\}} \\
            \Use{ \{ \stackvar{\lgas}, \stackvar{\recipient}, \stackvar{\valu}, \stackvar{\io}, \stackvar{\is}, \stackvar{\oo}, \stackvar{\os}, \locenvvar{\gaspc{\pc}}, \locenvvar{\activeaccount} \}
             ~\cup~ \memvar{X} ~\cup~ \globenvvar{X}~\cup~ \storvar{X}{\pc}
            } 
    \end{mathpar}
    \caption{Simplified CFG rule for the $\CALL$ opcode}
    \label{fig:call-rule-external}
\end{figure}

To obtain the updated CFG state after a call, the rule uses the function $\applycall$, which executes the internal transaction initiated by the \CALL{} opcode\footnote{We define $\applycall$ using the EVM semantics and hence can infer Def and Use sets from the corresponding EVM semantics rules.}.
The CFG state resulting from this execution is then used to describe the state updates (in the case of the given rule, the updates on the variables $\stackvar{y}$ and $\globenvvar{\extenv}$, as indicated by the Def set). 
Even though the whole CFG state $\cfgstate$ is taken as an argument by $\applycall$, not all variables in $\cfgstate$ can influence all aspects of the state after returning.
The variables that indeed may affect $\stackvar{y}$ and $\globenvvar{\extenv}$ are given in the Use set. 
More precisely, the result of a call may still depend on the global state, so all global environmental variables ($\globenvvar{X}$), as well as the global variables of the contract under analysis itself ($\storvar{X}{}$). 
Additionally, the execution of the called contract can be influenced by the parameters given to the call: 
The argument $\stackvar{\lgas}$ attributes to the amount of gas given to the call, $\stackvar{\recipient}$ gives the address of the recipient account and $\stackvar{\valu}$ the amount of money transferred with the call.
The arguments $\stackvar{\io}$ and $\stackvar{is}$ specify the memory fraction (offset and size) from which input data to the call is read and $\stackvar{\oo}$ and $\stackvar{\os}$ correspondingly define the memory fraction where the call's result data will be written.
In the given simplified rule, we consider that the concrete memory fragments could not be precomputed and hence all memory ($\memvar{X}{}$) could potentially be input data to the call.
The Use set also contains the calling account (as given in $\locenvvar{\activeaccount}$), since this information is made accessible during a call. 
Finally, the Use set contains the amount of gas that is available at the point of calling (given by $\locenvvar{\lgas}$) since this value may influence the amount of gas given to the call.

We want to highlight two forms of dependencies, which may erroneously be assumed to be ruled out by the assumption of store unreachability:
First, the Use set explicitly contains the storage variables ($\storvar{X}{}$) of the contract under analysis, even though we assume this contract to be store unreachable and (by the semantics) its storage variables cannot be accessed by any other contract.
Second, both the Def and the Use set contain the variable $\globenvvar{\extenv}$ that represents the external environment (in particular the state of other contract accounts). 
This implies that the rule in~\Cref{fig:call-rule-external} explicitly models information to be stored and retrieved from contract accounts during an external call.
In~\Cref{code:storage-call-example,code:external-example}, we illustrate the need for these dependencies by two examples.
\begin{figure}
    \lstinputlisting{storage-call-example.sol}
    \caption{Example: Reading storage variables during reentering execution.}
    \label{code:storage-call-example}
\end{figure}

\begin{figure}
    \lstinputlisting{external-example.sol}
    \caption{Example: Propagating dependencies via an external contract account.}
    \label{code:external-example}
\end{figure}

The example in~\Cref{code:storage-call-example} shows how dependencies on a storage variable are introduced by reading a contract variable during a reentering execution. 
Note that store unreachability only assures that reentering executions can not write contract variables, but does not prevent read accesses.
The example gives another version of the \lstinline|Test| contract, which performs the check of \lstinline|msg.sender| in an indirect way: 
First, it writes \lstinline|msg.sender| to the contract variable \lstinline|sender|.
To read the variable again, a \lstinline|RetrieveSender| contract \lstinline|rs| is used as a proxy:
\footnote{Note that in Ethereum, a contract is identified by its address. 
In Solidity, the syntax \lstinline|RetrieveSender rs = RetrieveSender (address(42))| means that the contract at address 42 is assumed to be (of the type) RetrieveSender and accessible via variable \lstinline|rs|.} 
The \lstinline|Test| contract calls \lstinline|RetrieveSender|'s \lstinline|getTestSender| function (in line~\ref{line:call-retrieve-sender}), which in turn reenters \lstinline|Test| via its \lstinline|getSender| function (in line~\ref{line:recall-test}) to obtain the value of \lstinline|sender|.
This value is finally returned to contract \lstinline|Test|.
As a consequence, the return variable \lstinline|a| in line~\ref{line:a-check-address} contains the value of \lstinline|msg.sender|, and so the assignment of variable \lstinline|test| is dependent on \lstinline|msg.sender|.
This dependency, however, can only be tracked when considering that the contract's own storage variables may influence the return value of an external call.

The example in~\Cref{code:external-example} shows how dependencies can be propagated via another contract account. 
Note that store unreachability is a contract-specific property that only ensures that the contract under analysis is not written in reentering executions. 
The assumption does not restrict the storage modification of other contracts. 
The version of the \lstinline|Test| contract given in~\Cref{code:external-example} uses the contract \lstinline|SaveAddr| to propagate the value of \lstinline|msg.sender|. 
To this end, it first writes the value of \lstinline|msg.sender| into the \lstinline|addr| storage variable of the \lstinline|SaveAddr| contract \lstinline|sa| using the \lstinline|set| function (in line~\ref{line:setAddr}). 
Afterwards, it retrieves the value back by accessing \lstinline|c|'s storage via the \lstinline|get| function (in line~\ref{line:getAddr}). 
Consequently, the return variable \lstinline|a| contains the value of \lstinline|msg.sender| in line~\ref{line:a-check-addr-external} what makes the following write to \lstinline|test| dependent on that value.
This dependency can only be faithfully modeled when considering that an external call may change the state of other accounts, and may also be influenced by this state. 
This motivates why the $\globenvvar{\extenv}$ variable needs to be included in both the Def and the Use set of the rule in~\Cref{fig:call-rule-external}.



\subsection{Soundness Reasoning via Dependency Predicates}
\label{subsec:predicates}


Inspired by Securify, we define dependency predicates that can capture the data and control flow dependencies induced by the PDG (as given through the CFG semantics). 
They are inhabited via a set of logical rules (CHCs) $\rules{\evmcontract}$ that describe the data and control flow propagation through the PDG of a contract $\evmcontract$. 
More formally, the transitive closure of the $\evmcontract$'s PDG is computed as the least fixed point over $\rules{\evmcontract}$ (denoted by $\lfp{\rules{\evmcontract}}$).
Most prominently, $\lfp{\rules{\evmcontract}}$ includes the predicates $\pred{VarMayDepOn}$ and $\pred{InstMayDepOn}$.
Intuitively, $\pred{VarMayDepOn}(y,x)$ states that the value of variable $y$ may depend on the value of variable $x$ and $\pred{InstMayDepOn}(\node,x)$ says that the reachability of node $\node$ may depend on the value of variable $x$.
In the following, let $\node_x$ and $\node_y$ denote nodes that define variables $x$ and $y$, respectively.
The formal relation between dependency predicates and backward slices is captured by the following lemma: 

\begin{lemma}[Fixpoint Characterization of Backward Slices]
    \label{lem:fixed-point-backward}
Let $x$ and $y$ be variables and $\evmcontract$ be a contract. The following holds:
\begin{enumerate}
    \item \small{$(\exists \node_x~\node_y.~ \bs{\node_x}{\node_y}) \Rightarrow  \pred{VarMayDepOn}(y,x) \in \lfp{\rules{C}} $}
    \item \small{
    \(
        \begin{aligned}[t]
            (\exists ~\node~\node_\textit{if}~\node_x. ~ \node_\textit{if} \xrightarrow{~}_{cd} \node ~\land~ \bs{\node_x}{\node_\textit{if}}) \\
             \Rightarrow \pred{InstMayDepOn}(\node,x) \in \lfp{\rules{C}}
        \end{aligned}
      \)}
\end{enumerate}

\end{lemma}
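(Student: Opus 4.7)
The plan is to prove both parts by induction on the structure of paths in the program dependence graph, matching each path to a derivation in the CHC system $\rules{C}$. By definition, $\bs{\node_x}{\node_y}$ means that there exists a (possibly empty) path from $\node_x$ to $\node_y$ in the PDG consisting of data dependency edges ($\xrightarrow{~}_{dd}$) and control dependency edges ($\xrightarrow{~}_{cd}$); the induction measure is the length of such a path. Since $\lfp{\rules{C}}$ is defined as the least fixed point, it suffices to exhibit a derivation tree whose leaves are base facts and whose internal nodes are instances of $\rules{C}$.

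For part (1), I would first verify the base case: when the path has length one (or zero, when $\node_x$ itself defines a variable later re-read at $\node_y$). A single data dependency edge $\node_x \xrightarrow{~}_{dd} \node_y$ arises from $x \in \defn(\node_x)$ and $y$ being defined at $\node_y$ with $x \in \usen(\node_y)$, with no intervening redefinition of $x$; the corresponding base rule of $\rules{C}$ for such Def/Use patterns directly introduces $\pred{VarMayDepOn}(y,x)$. A single control dependency edge would lift via a bridging rule to the same predicate when the controlling branch variable influences $y$. The inductive step decomposes a longer path into a leading edge followed by a shorter path; the inductive hypothesis yields an intermediate dependency $\pred{VarMayDepOn}(z,x) \in \lfp{\rules{C}}$ for some variable $z$ defined at the first waypoint, and a transitivity rule of $\rules{C}$ combines this with the trailing edge's dependency to conclude $\pred{VarMayDepOn}(y,x) \in \lfp{\rules{C}}$. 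The case distinction on the type of the trailing edge (data vs.\ control) is handled by the respective transitive propagation rules.

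For part (2), I would reduce it to part (1) together with a dedicated rule lifting variable dependencies at a branching node to instruction dependencies at the controlled node. Since $\node_\textit{if} \xrightarrow{~}_{cd} \node$, $\rules{C}$ contains a rule of the form: if the condition variable $y$ tested at $\node_\textit{if}$ satisfies $\pred{VarMayDepOn}(y,x)$, then $\pred{InstMayDepOn}(\node, x)$. Applying part (1) to the subpath from $\node_x$ to $\node_\textit{if}$ yields this hypothesis, and a single further rule application delivers the conclusion.

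The main obstacle is ensuring that every combination of edge kinds encountered along a PDG path is covered by some rule in $\rules{C}$, so the induction step never gets stuck. This is delicate where data and control edges interleave, because the two sorts of predicates ($\pred{VarMayDepOn}$ and $\pred{InstMayDepOn}$) must be bridged in both directions: a control dependency on a branching node must be convertible into a variable dependency at subsequent defining nodes, and, dually, a variable dependency at a branch condition must be promotable to an instruction dependency at all control-dependent successors. I would therefore organize the proof around a careful enumeration of the rule schemas in $\rules{C}$ and align each with exactly one of the PDG edge patterns from \cite{wasserrab2009pdg}, treating the bridging rules as the crucial glue. Once the rule-to-edge correspondence is established, the induction is routine.
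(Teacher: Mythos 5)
Your overall strategy---induction on the length of a PDG path from $\node_x$ to $\node_y$, pairing each edge with rules of $\rules{C}$ and using bridging rules between $\pred{VarMayDepOn}$ and $\pred{InstMayDepOn}$ to interleave data and control dependence---is the right spirit, and it matches the justification the paper itself gives. Note, however, that the paper never actually carries out this proof: in the appendix the lemma is only restated in contrapositive form (as the ``Variable Dependency Predicate'' and ``Instruction Dependency Predicate'' lemmas) and left unproven, with the supporting argument confined to the informal discussion of how the CHCs are generated from the Def/Use sets and the $\pred{MayControls}$ relation. So there is no official proof to deviate from, and your plan is a legitimate way to fill that hole.

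The concrete gap in your plan is the claimed one-to-one alignment between PDG edges and rule schemas. The rule set $\rules{C}$ is not organized edge-per-edge: the local dependency predicates enumerate the cross-product of the Def and Use sets at each program counter, and the fixpoint rules then propagate dependencies \emph{along the CFG}, program counter by program counter, guarded by no-reassignment side conditions (e.g., the $\pred{NoReassignMem}$ rules and the per-$\pc$ indexing of $\pred{MemMayDependOn}$). A single data-dependence edge $\node \xrightarrow{~}_{dd} \node'$ of the PDG---which requires a CFG path from $\node$ to $\node'$ with no intervening redefinition---is therefore simulated by a \emph{chain} of propagation-rule applications along that witnessing CFG path, not by one rule instance. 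Your induction on PDG path length consequently needs a nested argument (or an auxiliary lemma) showing that for every dd-edge there exists a CFG path along which the propagation rules fire at every step and the no-reassignment guards are satisfied; without this, the base case ``the corresponding base rule directly introduces $\pred{VarMayDepOn}(y,x)$'' does not go through. A second, smaller caveat: the implemented predicates only track dependencies on variables representable as tags (static environment variables), so the lemma as you prove it must either be restricted to such $x$ or the rule set must be read as extended in the way the appendix sketches; your proof should say which.
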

\Cref{lem:fixed-point-backward} states 1) that whenever there is a node $\defnode{x}$ defining $x$ in the backward slice of a node $\defnode{y}$ defining $y$, then $\pred{VarMayDepOn}(y,x)$ is derivable from the CHCs in $\rules{\evmcontract}$ and 2) that whenever there is a node $\node_x$ defining $x$ in the backward slice of a node $\node_\textit{if}$ on which node $\node$ is control dependent then $\pred{InstMayDepOn}(\node,x)$ is derivable from $\rules{\evmcontract}$. 
The intuition behind statement 2) is that node $\node$ is controlled by $\node_\textit{if}$ (by the definition of standard control dependence), which means that $\node_\textit{if}$ is a branching node. 
$\bs{\node_x}{\node_\textit{if}}$ indicates that the branching condition of $\node_\textit{if}$ depends on variable $x$ and, hence, so does the reachability of $\node$.

Next, we give an explicit semantic characterization of the dependency predicates, which we prove sound using~\Cref{proof:correctness_slicing}. 
This explicit characterization enables us to compose \emph{security patterns} as a set of different facts over dependency predicates and to reason about them in a modular fashion.
As a consequence, we can show in~\Cref{sec:soundApproxPatterns} that checking the inclusion of security patterns in the least fixpoint of the rule set 
$\rules{\evmcontract}$ is sufficient to prove non-interference-style properties.
Concretely, we can characterize facts from the $\pred{VarMayDepOn}$ predicate as follows:
\begin{theorem}[Soundness of Dependency Predicates]
    \label{thm:soundness-predicates}
    \small{
\begin{align*}
    \forall x~y.~\pred{VarMayDepOn}(y,x) \not \in \lfp{\rules{C}} 
    \Rightarrow \independent{y}{x}
\end{align*}}
with $\independent{y}{x}$ given as:
\small{
\begin{align*}
    \forall \defnode{x}~i~\cfgstate_1~\cfgstate_2~\cfgstate_1'.~ 
        \cfgstate_1 \equalupto{x} \cfgstate_2 ~\land~ \istep{\nodesucc{\defnode{x}}}{\cfgstate_1}{\defnodes{y}}{i}{\node}{\cfgstate_1'} \\
        \Rightarrow \exists \cfgstate_2'.~ \istep{\nodesucc{\defnode{x}}}{\cfgstate_2}{\defnodes{y}}{i}{\node}{\cfgstate_2'}  ~\land~ \cfgstate_1'(y) = \cfgstate_2'(y)
\end{align*}
}
where 
$\nodesucc{\defnode{x}}$ denotes the unique successor node of $\defnode{x}$, and $\defnodes{y}$ the set of all nodes defining $y$.
$\istep{\nodesucc{\defnode{x}}}{\cfgstate_1}{\defnodes{y}}{i}{\node}{\cfgstate_1'}$ describes an execution from $\defnode{x}$ to $\node$ that passes exactly $i$ nodes defining $y$.
\end{theorem}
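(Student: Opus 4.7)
The plan is to derive the semantic independence statement from the two ingredients that have already been developed in the excerpt: the fixpoint characterization of backward slices (Lemma~1) and the generic slicing correctness (Theorem~1). The overall strategy is to contrapose Lemma~1 so that the non-derivability of $\pred{VarMayDepOn}(y,x)$ translates into the structural statement that no defining node of $x$ appears in the backward slice of any defining node of $y$, and moreover that no defining node of $x$ reaches (via the backward slice) any branching node on which a $y$-defining node is control dependent. This is exactly what the two clauses of Lemma~1 give us, once we read them in the reverse direction; we therefore obtain both the absence of relevant data and the absence of relevant control dependencies from $x$ to $y$.

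With this structural statement in hand, I would proceed by induction on $i$, the number of $y$-defining nodes crossed between $\nodesucc{\defnode{x}}$ and $\node$. In the base case $i=0$, no redefinition of $y$ occurs, so if we can show that the execution from $\cfgstate_2$ can mimic the execution from $\cfgstate_1$ along the same path, then $y$ keeps the value it had in $\nodesucc{\defnode{x}}$'s state, and $\cfgstate_1 \equalupto{x} \cfgstate_2$ together with $x \neq y$ closes the argument. The path-mimicking is precisely what Theorem~1 provides: the predicate edges along the path are fired by the sliced subgraph, and since no branching node on the path has a backward slice touching $\defnode{x}$, the guards evaluate identically on $\cfgstate_1$ and $\cfgstate_2$. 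In the inductive step, I would pick the first $y$-defining node $\node_y$ on the path and apply Theorem~1 to $\node_y$: the equality of $\cfgstate_1, \cfgstate_2$ outside $x$ plus $\defnode{x} \notin \bsi(\node_y)$ forces the two executions to agree on all variables used at $\node_y$, hence to assign the same value to $y$ at $\node_y$. From $\nodesucc{\node_y}$ onwards we are in a situation with $i-1$ remaining $y$-defining nodes and two states that agree outside $x$ (they actually agree on strictly more, including $y$), so the induction hypothesis closes the case.

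The main obstacle, in my view, is not the high-level inductive pattern but the bookkeeping needed to translate between the single-step relation $\xrightarrow{}$ used in the statement of the theorem and the path-based formulation underlying Theorem~1, while simultaneously distinguishing the two flavors of dependency captured by $\pred{VarMayDepOn}$. Concretely, I would need to verify that the CHC rule set $\rules{\evmcontract}$ is rich enough that $\pred{VarMayDepOn}(y,x)$ subsumes both the data-flow propagation of $x$ into $y$ (Lemma~1 clause~1) and the control-flow propagation of $x$ into any $y$-defining node (clause~2 used at every branching ancestor of $\defnodes{y}$). Only this combined reading of the predicate makes the path-alignment step above valid; a weaker reading would leave a gap in the base case, where the execution from $\cfgstate_2$ could otherwise diverge at a branching node whose guard depends on $x$. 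Once this alignment property is established, lifting Theorem~1 to the particular CFG semantics devised in Section~\ref{subsec:instantiation} is routine and carries through the argument as sketched.
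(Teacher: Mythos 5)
Your high-level decomposition --- contraposing \Cref{lem:fixed-point-backward} to obtain $\notbs{\defnode{x}}{\defnodes{y}}$ and then invoking slicing correctness --- matches the paper's, but the core of your argument has a genuine gap. In the base case (and implicitly in every inductive step) you claim that the execution from $\cfgstate_2$ can follow \emph{the same path} as the one from $\cfgstate_1$ because ``no branching node on the path has a backward slice touching $\defnode{x}$'' and hence ``the guards evaluate identically''. This is not implied by the hypothesis. You only know that $\defnode{x}$ does not lie in the backward slice of the $y$-defining nodes; a branching node on the path that is \emph{not} itself in $\bsi(\defnodes{y})$ may perfectly well have a guard depending on $x$, and the two executions then genuinely diverge there. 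The theorem does not require the paths to coincide, only that the second execution reaches $\node$ with the same count $i$ and the same value of $y$ --- and establishing that for diverging paths is exactly the hard part. \Cref{proof:correctness_slicing} does not bridge this either: it compares one execution with its own sliced projection from the \emph{same} initial state; it says nothing about matching two executions started from $x$-differing states.

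The paper closes this gap with machinery your sketch does not account for. First, it proves (\Cref{lemma:no_x_use_in_bs}) that no node of the backward slice reached after $\defnode{x}$ uses $x$ --- this does not follow trivially from $\notbs{\defnode{x}}{\defnodes{y}}$ and needs its own argument about last definitions. This makes the \emph{sliced} execution (in which nodes outside the slice take a canonical $\tau$-successor independent of the state) literally identical from $\cfgstate_1$ and from $\cfgstate_2$. Second, it lifts the sliced execution from $\cfgstate_2$ back to a real execution (\Cref{lemma:VMDP_big}) while preserving the number of visits to $\defnodes{y}$ and the values of the relevant variables; this lifting is where the possibly different concrete paths outside the slice get reconstructed. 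Your induction on $i$ never constructs the $\cfgstate_2$-execution in the divergent case and never argues why the count $i$ is preserved, so as written the proof does not go through. As a minor point, clause~2 of \Cref{lem:fixed-point-backward} is not needed here: branching nodes that control nodes of the slice are already absorbed into $\bsi(\defnodes{y})$ by transitivity, so clause~1 alone suffices; clause~2 serves the separate instruction-independence statement.
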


The theorem states that if $\pred{VarMayDepOn}(y,x)$ is not included in $\lfp{\rules{C}}$ then $y$ is independent of $x$ ($\independent{y}{x}$).
A variable $y$ is considered independent of $x$ if for any two configurations $\cfgstate_1$ and $\cfgstate_2$ that are equal up to $x$, and any execution starting at node $\nodesucc{\defnode{x}}$, the first node after $x$ is defined, passing $i$ nodes that define $y$, and ending in a node $\node$ at state $\cfgstate_1'$, one can find a matching execution from $\cfgstate_2$ that passes the same number of nodes defining $y$ and ends at node $\node$ in a state $\cfgstate_2'$ such that $\cfgstate_2'$ and $\cfgstate_1'$ agree on $y$.
This definition ensures loop sensitivity: it captures that during a looping execution, every individual occurrence of a node defining $y$ can be matched by the other execution---so that the values of $y$ agree whenever $y$ gets reassigned.
The proof of~\Cref{thm:soundness-predicates} uses~\Cref{lem:fixed-point-backward} and~\Cref{proof:correctness_slicing}.
For the full proof and a similar characterization of $\pred{InstMayDepOn}(i,x)$, we refer to\ifextended ~Appendix~\ref{sec:may-analysis}.\else~\cite{horstify4ever}.\fi

\subsection{Sound Approximation of Security Properties}
\label{sec:soundApproxPatterns}
With \Cref{thm:soundness-predicates} we are able to formally connect dependency predicates and (independence-based) security properties. 
%
We take \emph{\tracenoninterference{}} as a concrete example, which comprises a whole class of non-interference-style security properties. 
Concretely, we consider \tracenoninterference\ w.r.t. a set of EVM configuration components $\inputcomponents$, which includes, for example, the block timestamp. A predicate $\outputproj$ defines \emph{instructions of interest}. If two executions of a contract $C$ start in configurations that differ only in the components in $\inputcomponents$, then the instructions of interest must coincide in the two traces that result from these executions.

	\begin{definition}[\Tracenoninterference]
	Let $\precontract$ be an EVM contract, $\inputcomponents$ be a set of components of EVM configurations and $\outputproj$ be a predicate on instructions. 
	Then \tracenoninterference{} of contract $\precontract$ w.r.t. $\inputcomponents$ and $\outputproj$ (written $\tracenoninter{\evmcontract}{\inputcomponents}{\outputproj}$) is defined as follows: 
	{\small
		\begin{align*}
			\tracenoninter{\evmcontract}{\inputcomponents}{\outputproj}
			&\define
			\forall ~\transenv ~\transenv' ~\exstate ~\exstate'~ t~ t' ~\pi. ~\pi'~ 
			\\
			&(\transenv, \exstate) \equalupto{\inputvars} (\transenv', \exstate') 
			\\
			&\Rightarrow \sstepstrace{\transenv}{\cons{\annotate{\exstate}{\evmcontract}}{\callstack}}{\cons{\annotate{t}{\evmcontract}}{\callstack}}{\pi} ~\land~ \finalstate{t}
			\\
			&\Rightarrow \sstepstrace{\transenv}{\cons{\annotate{\exstate'}{\evmcontract}}{\callstack}}{\cons{\annotate{t'}{\evmcontract}}{\callstack}}{\pi'} ~\land~ \finalstate{t'}
			\\
			&\Rightarrow \project{\pi}{\outputproj} = \project{\pi'}{\outputproj}
		\end{align*}
	}
		where $\project{\pi}{\outputproj}$ denotes the trace filtered by $\outputproj$, so containing only the instructions satisfying $\outputproj$.
		
	\end{definition}

	The dependency properties defined in~\cite{grishchenko2018semantic} can be expressed in terms of \tracenoninterference.
	E.g., the timestamp independence property in~\Cref{def:timestampIndependence} is captured as an instance of
	\tracenoninterference{} as follows:
	$$\tracenoninter{\evmcontract}{\{\access{\transenv}{\timestamp}\}}{\fun{\instruction}{\instruction = \CALL}}$$

	We show that we can give a sufficient criterion for \tracenoninterference{} in terms of dependency predicates. 
	More precisely, we give a set $\patnoninter{\inputcomponents}{\outputproj}{\precontract}$ of facts, such that $\patnoninter{\inputcomponents}{\outputproj}{\rules{\precontract}} \cap \lfp{\rules{\evmcontract}} = \emptyset$ implies $\tracenoninter{\evmcontract}{\inputcomponents}{\outputproj}$. 
	Practically, this means that we can prove $\tracenoninter{\evmcontract}{\inputcomponents}{\outputproj}$ by computing the least fixpoint over the CHCs $\rules{\precontract}$ (e.g., using a datalog engine) and then check whether it contains any fact from $\patnoninter{\inputcomponents}{\outputproj}{\precontract}$.
	For components in $\inputcomponents$, we assume a function $\componentToVar$ that maps components of the EVM semantic domain to CFG variables.
	The dependency predicates constituting a security pattern for \tracenoninterference{} are defined as
	{\small\begin{align*}
		\patnoninter{\inputcomponents}{\outputproj}{\precontract}\define
		&\{  \pred{InstMayDepOn}(\lpc, \componentToVar(\inputcomponent)) ~|~ \inputcomponent \in \inputcomponents \\
			&~\land~ \precontract(\lpc) = \instruction(\vec{x}, \nextpc, \pre)
			~\land~ \outputproj(\instruction)
			\} \\
		&\cup 
		\{  \pred{VarMayDepOn}(x_i, \componentToVar(\inputcomponent)) ~|~ \inputcomponent \in \inputcomponents \land \lpc\in\textsf{dom}({\precontract}) \\
		&~\land~ C(\lpc) = (\instruction(\vec{x}, \nextpc, \pre))
		~\land~ \outputproj(\instruction)
		~\land~ x_i \in \vec{x} \qquad\!\!
		\}.
	\end{align*}} 
	The following theorem shows that $\patnoninter{\inputcomponents}{\outputproj}{\precontract}$ is a security pattern for \tracenoninterference{}:
	\begin{theorem}[Soundness of \tracenoninterference]
		\label{thm:soundness-tni}
		Let $\precontract$ be a contract, $\inputcomponents$ a set of components, and $\outputproj$ an instruction-of-interest predicate.
		Then it holds that
		%
		{\small 
		\begin{align*}
			(\forall p\in \patnoninter{\inputcomponents}{\outputproj}{\precontract}.~ p \not \in \lfp{\rules{\precontract}}) 
			\Rightarrow \tracenoninter{\precontract}{\inputcomponents}{\outputproj}.
		\end{align*}}
	\end{theorem}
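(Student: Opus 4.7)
The plan is to assume the antecedent (that the pattern set $\patnoninter{\inputcomponents}{\outputproj}{\precontract}$ is disjoint from $\lfp{\rules{\precontract}}$), translate both EVM executions into CFG executions via the simulation relation between the two semantics mentioned at the end of Section~\ref{subsec:instantiation}, and then repeatedly invoke \Cref{thm:soundness-predicates} to propagate per-variable semantic independence into trace-level coincidence of instructions of interest. Concretely, I would first lift the two EVM traces $\sstepstrace{\transenv}{\cons{\annotate{\exstate}{\evmcontract}}{\callstack}}{\cons{\annotate{t}{\evmcontract}}{\callstack}}{\pi}$ and its $\transenv',\exstate'$ counterpart to CFG paths between initial nodes; by construction of $\componentToVar$ the initial CFG states $\cfgstate_1,\cfgstate_2$ coincide on every variable except possibly those in $\componentToVar(\inputcomponents)$.

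Next, I would reduce the general hypothesis that the two states differ on a \emph{set} of variables to the pairwise setting of \Cref{thm:soundness-predicates} by a hybrid argument: enumerate $\inputcomponents = \{\inputcomponent_1,\ldots,\inputcomponent_k\}$ and build a chain $\cfgstate_1 = \bar\cfgstate_0,\bar\cfgstate_1,\ldots,\bar\cfgstate_k = \cfgstate_2$ in which $\bar\cfgstate_j$ and $\bar\cfgstate_{j-1}$ agree except on $\componentToVar(\inputcomponent_j)$. If I can show that neighbouring states produce coinciding projected traces, transitivity yields the claim. For each adjacent pair, the assumption that $\patnoninter{\inputcomponents}{\outputproj}{\precontract}\cap\lfp{\rules{\precontract}}=\emptyset$ gives, via \Cref{thm:soundness-predicates} (and the analogous statement for $\pred{InstMayDepOn}$ referenced in \Cref{lem:fixed-point-backward} and Section~\ref{subsec:predicates}), that for every instruction-of-interest program counter $\lpc$ the reachability of the corresponding node is independent of $\componentToVar(\inputcomponent_j)$, and for every argument variable $x_i$ of the instruction at $\lpc$ its value is independent of $\componentToVar(\inputcomponent_j)$.

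From here I would conclude by induction on the index $i$ of instruction-of-interest visits along the CFG path. The $\pred{InstMayDepOn}$ independence forces the two neighbouring executions to visit the \emph{same} instruction-of-interest node at the $i$-th such visit, while the $\pred{VarMayDepOn}$ independence forces the argument variables $\vec x$ at that visit to carry identical values; translating back through the simulation, this means the $i$-th entry of $\project{\pi}{\outputproj}$ equals the $i$-th entry of the projected trace for $\bar\cfgstate_{j-1}$, and hence by transitivity equals the $i$-th entry of $\project{\pi'}{\outputproj}$. Combined with the fact that both executions terminate in a final state (so both projected traces are finite and of the same length by the same independence), this yields $\project{\pi}{\outputproj} = \project{\pi'}{\outputproj}$.

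The main obstacle is bridging the existential ``matching execution'' in \Cref{thm:soundness-predicates} with the \emph{actual} second execution. The theorem only asserts that \emph{some} run from $\cfgstate_2$ passes the same number of $y$-defining nodes and ends at $\node$ with matching value for $y$; a priori this witness need not be the run realized by the second EVM execution. Closing this gap requires exploiting determinism of the CFG semantics together with the $\pred{InstMayDepOn}$ independence applied to every branching node that lies on a control-dependency path to an instruction of interest: the two real runs cannot diverge at any such branch, so they must coincide with the matched witness up to the next instruction of interest. Handling this compositionally while preserving the loop-sensitive matching of node visits across the hybrid chain (so that the per-link matchings concatenate into a single global matching) is where the bulk of the technical care goes.
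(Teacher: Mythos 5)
Your proposal is correct and follows the same overall architecture as the paper's proof: derive pairwise variable/instruction independence from the emptiness of the pattern set (via \Cref{lem:fixed-point-backward} and \Cref{thm:soundness-predicates} together with its $\pred{InstMayDepOn}$ counterpart), lift both EVM runs to CFG runs through the simulation theorem, and then induct on the sequence of instruction-of-interest visits, using instruction independence to force the two runs to reach the same node of interest at each index and variable independence to force the argument values to agree there. You also correctly identify the two places where the real work happens: reconciling the existential witness of \Cref{thm:soundness-predicates} with the actual second execution (which the paper, like you, closes via determinism of the CFG semantics plus contradiction arguments against instruction independence in its helper lemma, \Cref{lem:soundness-help}), and the length/prefix argument for the projected traces. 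The one genuine divergence is your hybrid chain over $\inputcomponents$: the paper does not enumerate components and interpolate states one variable at a time; instead its helper lemma takes equality up to the whole set $V$ as a hypothesis while assuming only the pairwise independences, and argues directly by contradiction. Your hybrid decomposition is arguably the cleaner way to bridge that set-versus-single-variable mismatch, at the cost of having to show that the per-link, loop-sensitive visit matchings compose transitively across intermediate states that need not correspond to reachable EVM configurations (harmless here, since the independence notions quantify over arbitrary CFG states); the paper's route avoids the chain but leans more heavily on the informal step from pairwise independence to set-difference robustness inside the helper lemma.
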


The absence of facts from $\patnoninter{\inputcomponents}{\outputproj}{\precontract}$ in $\lfp{\rules{\evmcontract}}$ ensures that the reachability of all instructions satisfying $\outputproj$ is independent of variables  representing components in $\inputcomponents$ and that all arguments $x_i$ of such instructions are independent of $\inputcomponent$ as well. 
These independences imply \tracenoninterference{} since they ensure that in two executions starting in configurations equal up to $\inputcomponents$, all instructions satisfying $\outputproj$ are executed in the same order (otherwise their reachability would depend on $\inputcomponents$) and with the same arguments (otherwise their argument variables would depend on $\inputcomponents$).
Consequently, such executions produce the same traces, when only considering instructions satisfying $\outputproj$. 
A full proof of~\Cref{thm:soundness-tni} can be found in\ifextended~Appendix~\ref{sec:tni}. \else~\cite{horstify4ever}.\fi

\subsection{Discussion}
In this section, we presented a sound analysis pipeline for checking security properties for linearized EVM bytecode contracts by means of reasoning about dependencies between variables or instructions.
While our work was inspired by Securify~\cite{tsankov2018securify}, 
we developed new formal foundations for the dependency analysis of EVM bytecode contracts and in this way revealed several sources of unsoundness in the analysis of Securify.
Further, we provide soundness proofs for the analysis pipeline end-to-end\ifextended. \else; all theorems and proofs are available in the extended version of this paper~\cite{horstify4ever}. \fi
The key pillars of the soundness proof are
\begin{inparaenum}[i)]
\item that our EVM CFG semantics satisfies all conditions to be used with the slicing framework~\cite{wasserrab2009pdg}, 
\item that the EVM linearized bytecode semantics and the CFG semantics are equivalent,
\item that our set of CHCs encodes an over-approximation of dependencies in an EVM contract, and 
\item that the generic security pattern $\patnoninter{\inputcomponents}{\outputproj}{\precontract}$ is a sound approximation of \tracenoninterference{}.
\end{inparaenum}
The proofs are valid under assumptions that are clearly stated in this paper.
For \Cref{asm:gas,def:storeunreachability} we point out the existence of other sound tools~\cite{schneidewind2020ethor,albert2021don} that can check these assumptions.

We assume that EVM smart contracts are provided in a (stack-less) linearized form.
Transforming into such a representation from a stack-based one is a well-studied problem~\cite{leung1999static} and a standard step performed by most static analysis tools~\cite{grech2019gigahorse,tsankov2018securify}.
Up to this requirement, our analysis is parametric with respect to other preprocessing steps.
More precisely, our analysis pipeline is sound for contracts with sound preprocessing information, and hence, in particular, for contracts without any preprocessing information but jump destinations needed for the CFG (cf. \Cref{subsec:instantiation}).
This gives the flexibility, to enhance the precision of the analysis through the incorporation of soundly precomputed values and makes the design of sound preprocessing an orthogonal problem. 
There exist already works on soundly precomputing jump destinations for EVM bytecode~\cite{grishchenko2020static}, which are to be complemented with other precomputing steps in the future.

\section{Evaluation}
\label{sec:evaluation}
The focus of this paper is on the theoretical foundations of a sound dependency analysis of smart contracts.
However, we demonstrate the practicality of the presented approach by developing the prototype analyzer \horstify.
We do not implement the logical rules from~\Cref{subsec:predicates} directly in Soufflé (as done by Securify), but encode them in the \horst specification language~\cite{schneidewind2020ethor}.
The \horst language is a high-level language for the specification of CHCs. 
By introducing this additional abstraction layer, we get a close correspondence between our theoretical rules and their actual implementation and, hence, anticipate a lower risk of implementation mistakes that may invalidate soundness claims in the implementation. 

\horstify accepts as input a set of dependency facts encoding the security patterns specified in the \horst language and Ethereum smart contracts in the EVM bytecode format.
It first invokes Securify's decompiler to transform the contract into 
a linearized representation and does some lightweight preprocessing to obtain the precomputable values (cf.~\Cref{subsec:instantiation}).
Then, \horstify uses our formal specification of the CFG construction rules and the \horst framework to create a Soufflé executable for the analysis and invokes it.


To reduce the risks of implementation mistakes, we proceeded in two steps.
First, we encoded Securify's RW violation pattern in the \horst language to execute HoRStify with this pattern and the contracts in \Cref{code:counter_storage,code:counter_gas,fig:call-counter}\footnote{We did not consider the contract in \Cref{code:counter_must} since it concerns the must-analysis and the contract in \Cref{code:counter_reentrancy}, which violates \Cref{def:storeunreachability}.}. 
In contrast to Securify, \horstify correctly determines that these contracts do not satisfy the RW violation pattern.
In addition to these corner cases, we successfully evaluated \horstify on Securify's internal test suite 
involving 25 contracts.

\newcommand{\securifyspecificiy}{\textsf{S}_{\textit{Sec}}}
\newcommand{\horstifyspecificiy}{\textsf{S}_{\textit{Hor}}}
\newcommand{\tnhy}{\textit{tn}_{\textit{Hor}}}
\newcommand{\tnsy}{\textit{tn}_{\textit{Sec}}}
\newcommand{\fnhy}{\textit{fn}_{\textit{Hor}}}
\newcommand{\fnsy}{\textit{fn}_{\textit{Sec}}}
\newcommand{\tphy}{\textit{tp}_{\textit{Hor}}}
\newcommand{\tpsy}{\textit{tp}_{\textit{Sec}}}
\newcommand{\fphy}{\textit{fp}_{\textit{Hor}}}
\newcommand{\fpsy}{\textit{fp}_{\textit{Sec}}}
\newcommand{\nsucccontracts}{\size{\textit{dataset}}}
\newcommand{\tnh}{2}
\newcommand{\fnh}{0}

\newcommand{\goodmismatches}{29}

  \begin{table}
    \begin{center}
      {
        \footnotesize
        \begin{center}
          \begin{tabular}{ccccc}
            \toprule
            \multirow{2}{*}{contracts} & \multirow{2}{*}{errors} & \multirow{2}{*}{timeouts} & contracts & \multirow{2}{*}{$\varnothing$ time (ms)} \\
            & & & \textbackslash (errors $\cup$ timeouts) & \\
            \midrule
            \multirow{2}{*}{720} & \textbf{H} 34 & \textbf{H} 46& \multirow{2}{*}{634} &  \textbf{H} 7055\\
            & \textbf{S} 34 & \textbf{S} 30& & \textbf{S} 3107\\
            \bottomrule \\
          \end{tabular}
        \end{center}
      }
      \caption{Large-scale evaluation of \horstify{} (\textbf{H}) and Securify (\textbf{S}).}
      \label{tab:eval}
    \end{center}
  \end{table}

Next, we conduct a large-scale evaluation of \horstify and Securify on real-world contracts. 
To this end, we use the sanitized dataset from~\cite{schneidewind2020ethor} that consists of 720 distinct smart contracts from the Ethereum blockchain.
We compare the performance of Securify and \horstify on this dataset for both the RW pattern and for timestamp independence (TS) as defined in~\Cref{sec:soundApproxPatterns}. 
We manually inspect all contracts on which Securify and \horstify report a different result.

  \begin{figure}
    \begin{center}
    \includegraphics[width=\columnwidth]{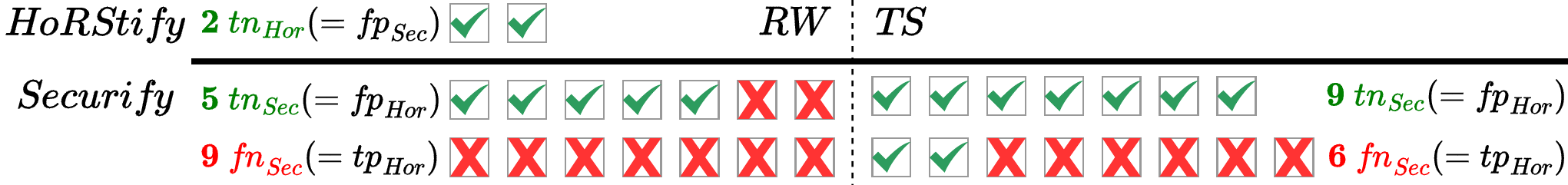}
    \end{center}
    \caption[Classification of mismatching results]{Classification of mismatching results of \horstify{} (above) and Securify (below) for the RW (left) and TS (right) property\footnotemark. 
    Ticks indicate correct matches (\textit{tn}) and crosses wrong matches (\textit{fn}) of the respective tool.
    $\tnhy$/$\tnsy$, $\fnhy$/$\fnsy$, $\tphy$/$\tpsy$, $\fphy$/$\fpsy$ denote true negatives, false negatives, true positives, and true negatives of \horstify/Securify, respectively.}
    \label{fig:mismatches}
  \end{figure}

\footnotetext{For TS we only consider the 165 contracts from the dataset containing a TIMESTAMP opcode, as Securify labels other contracts as trivially secure. The manual classification is a conservative best-effort estimate.}

\Cref{tab:eval} shows the evaluation results.
The average execution time of \horstify is approximately 2.3 times longer than for Securify.
Consequently, \horstify suffers from more timeouts than Securify; the execution of both tools is aborted after one minute.
  \Cref{fig:mismatches} visualizes the manual classification for those smart contracts where \horstify{} and Securify disagree. 
  There are only two contracts where \horstify{} matches the corresponding pattern, but Securify does not. 
  Recall that for a sound tool, a pattern match indicates the discovery of provable independencies that imply either property violation (RW) or compliance (TS).
  An erroneous pattern match by \horstify would present a soundness issue (false negative).
  We carefully examined the two examples and could confirm them not to constitute false negatives of \horstify{} but false positives of Securify ($\fpsy$), unveiling an imprecision of Securify.
  This seems surprising since our analysis generally tracks more dependencies than the one of Securify. 
  However, while \horstify implements standard control dependence to encode control dependencies (e.g., to compute join points after loops), Securify implements a less precise custom algorithm. 

  The contracts where Securify matches a pattern, but \horstify{} does not, can either reveal soundness issues (false negatives) of Securify ($\fnsy$) or a precision loss (false positives) of \horstify ($\fphy$). 
  Indeed, in the \goodmismatches{} contracts that are flagged only by Securify, we find both cases (as shown at the bottom of~\Cref{fig:mismatches}), as we will illustrate with two examples:

\begin{figure}[t]
\begin{lstlisting}
contract RNG {
  mapping (address => uint) nonces;
  uint public last;
  function RandomNumber() returns(uint) {
    return RandomNumberFromSeed(
      uint(sha3(block.number))^uint(sha3(now))
      ^uint(msg.sender)^uint(tx.origin)); }
  function RandomNumberFromSeed(uint seed) returns(uint) {
    nonces[msg.sender]++;
    last = seed^(uint(sha3(block.blockhash(block.number),
                              nonces[msg.sender]))
                    *0x000b0007000500030001);
    return last; }
  function Guess(uint _guess) returns (bool) {
    if (RandomNumber() == _guess) {
      if (!msg.sender.send(this.balance)) throw; |\label{lottery-ln:send}|
        RandomNumberGuessed(_guess, msg.sender);
        return true; }
    return false; } }
\end{lstlisting}
\caption{Lottery Contract 0xaed5a41450b38fc0ea0f6f203a985653fe187d9c}
\label{fig:rnd-contract}
\end{figure}

\Cref{fig:rnd-contract} shows a (slightly shortened) version of a contract classified as safe for TS according to Securify, but that HoRStify (correctly) reports as vulnerable.
It is a lottery contract that pays out a user who manages to guess a random number (function \lstinline|Guess|). 
The random number is generated from blockchain and transaction-specific values, including the timestamp (accessed via \lstinline|now| in \lstinline|RandomNumberFromSeed|).
Hence, the payout in line~\ref{lottery-ln:send} is not independent of the timestamp. 
Securify fails to detect this dependency due to its unsound memory abstraction (as described in~\Cref{sec:soundness-issues}): 
As Ethreum's hash function (\lstinline|sha3|) reads input from the local memory, the timestamp is written to the memory where its dependencies are lost.

\begin{figure}[t]
\begin{lstlisting}
contract lottery{
  address[] public tickets;
  function buyTicket(){
    if (msg.value != 1/10) throw;
    if (msg.value == 1/10)
      tickets.push(msg.sender); |\label{lottery2-ln:push}|
      address(0x88a1e54971b31974b2be4d9c67546abbd0a3aa8e)
        .send(msg.value/40);
    if (tickets.length >= 5) runLottery(); } |\label{lottery2-ln:if}|
  function runLottery() internal {
    tickets[addmod(now, 0, 5)].send((1/1000)*95);
    runJackpot();}
  function runJackpot() internal {
    if(addmod(now, 0, 150) == 0)
      tickets[addmod(now, 0, 5)].send(this.balance);
    delete tickets; } } |\label{lottery2-ln:delete}|
    \end{lstlisting}
    \caption{Lottery contract 0xe120100349a0b1BF826D2407E519D75C2Fe8f859}
    \label{fig:fp-contract}
\end{figure}

\REPLACESfor{}{230124}{The soundness of \horstify comes at the prize of some false positives. 
A typical example is depicted in~\Cref{fig:fp-contract}. }
{\Cref{fig:fp-contract} shows an example of a false positive for \horstify.}
The contract implements a lottery where users can register (via \lstinline|buyTicket|) and whenever 5 users were registered, one of them is selected as a winner. 
Despite the obvious timestamp dependency, the contract shows RW violations, which \horstify fails to prove.\footnote{Note that this is not a soundness issue since the soundness of \horstify ensures that independencies can be proven. In the case of violation patterns as RW the independence constitutes an unwanted effect and hence, we can only use it to prove the vulnerability of a contract, not its safety.}
E.g., the \lstinline|tickets| array is updated without performing a check on the sender. 
\horstify does not detect this vulnerability due to its sound storage abstraction: 
In line \ref{lottery2-ln:push}, the caller (\lstinline|msg.sender|) is appended to the \lstinline|tickets| array. 
Since the array position to which \lstinline|msg.sender| will be added cannot be statically known, \horstify needs to assume \lstinline|msg.sender| to be written to any position. 
When checking the size of \lstinline|tickets| in line~\ref{lottery2-ln:if}, the condition is considered dependent on \lstinline|msg.sender| (because in the abstraction, \lstinline|msg.sender| is considered to potentially affect all storage locations, including the one containing the array size). 
Thus, the \lstinline|delete| operation in line~\ref{lottery2-ln:delete} is considered dependent on \lstinline|msg.sender|. 
One should notice, that only the unsoundness of Securify's storage abstraction, enables Securify to correctly detect the RW violation in this case.

Overall, based on our evaluation results, we can bound the precision loss of \horstify w.r.t. Securify. 
More concretely, when considering that Securify has a specificity\footnote{
    The specificity is a standard precision measure and is calculated as $\frac{\textit{tn}}{\textit{tn} + \textit{fp}}$
    }
of $\securifyspecificiy$ on the full dataset, then one can easily show that it holds for the specificity $\horstifyspecificiy$ of \horstify that $\horstifyspecificiy \geq \securifyspecificiy + \frac{\tnhy - \tnsy}{\nsucccontracts}$ 
\REPLACESfor{}{230124}{where $\tnhy$ are the true negatives for \horstify found within the manually inspected mismatching contracts, and $\tnsy$ are the true negatives for Securify found within these contracts.}
{where $\tnhy$ are the true negatives for \horstify, and $\tnsy$ are the true negatives for Securify found within the manually inspected mismatching contracts.}
Inserting the results from~\Cref{fig:mismatches}, we can show that $\horstifyspecificiy$ can be at most $0.5$ percentage points less than $\securifyspecificiy$ for RW on the given dataset and at most $5.4$ percent points less for TS. 

We refer to \href{https://horstify.org}{\texttt{horstify.org}} for more information about \horstify.

\section{Related Work}
\label{sec:realated_work}
Existing approaches to enforce the correctness of Ethereum smart contracts can be broadly categorized into analyses at design time and analyses at runtime.
The latter include methods like runtime monitoring~\cite{ellul2018runtime,wang2019vultron} or information flow control mechanisms~\cite{cecchetti2021compositional}.
Such dynamic analysis approaches, however, have limited applicability to the Ethereum blockchain, since they either require fundamental updates to the workings of the EVM or impose tremendous costs in terms of gas.
Static analyses, in contrast, verify smart contracts at design time before they become immutable objects on the blockchain. 
Most static analyzers are bug-finding tools (such as Oyente~\cite{luu2016making}, EthBMC~\cite{frank2020ethbmc}, and Maian~\cite{nikolic2018finding}) 
that aim to reduce the number of contracts that are wrongly claimed to be buggy (false positives). 
To this end, these tools usually rely on the symbolic execution of the contract under analysis.
The dual objective of bug-finding is to prove a smart contract secure. 
Analyzers following this objective do not only aim at producing a low number of false negatives in practice but to give provable guarantees for their analysis result, e.g., that a contract flagged as safe is guaranteed to enjoy a corresponding security property.
The only example of a tool, which comes with a provable soundness claim, so far, is the analyzer eThor~\cite{schneidewind2020ethor}, whose analysis relies on abstract interpretation.

Symbolic execution and abstract interpretation have in common to target properties that can be decided for a finite prefix of a single (yet arbitrary) execution trace of a smart contract (so-called \emph{reachability properties}). 
    However, many generic security properties for smart contracts (as defined in~\cite{grishchenko2018semantic}) require comparing \emph{two} execution traces from different initial configurations and fall into the broader category of \emph{2-safety properties}. 
    To check 2-safety properties with tools whose analysis is limited to reachability properties (such as eThor) requires an overapproximation of the original property in terms of reachability. 
    But finding such a meaningful over-approximation, which does not result in an intolerable precision loss, is not always possible.
    In~\cite{grishchenko2018semantic}, it is, e.g., shown how to overapproximate the call integrity 2-safety property (characterizing the absence of reentrancy attacks) by a reachability property (single-entrancy) and two other properties, which are captured by our notion of trace noninterference.
    However, trace noninterference properties still concern two execution traces and hence cannot be verified using eThor.
    \horstify (inspired by the unsound Securify tool~\cite{tsankov2018securify}) devises a different analysis technique, which immediately accommodates the analysis of trace noninterference. 
    As opposed to the analysis underlying eThor, this technique does not allow for verifying general reachability properties, but a special class of 2-safety properties (including trace noninterference).
    \horstify and eThor, hence, can be seen as complementing tools that target incomparable property classes.
    The call integrity property falls neither in the scope of eThor nor \horstify{}, but its overapproximation decomposes it into trace noninterference properties (within the scope of \horstify) and a reachability property (within the scope of eThor). 
    Other generic security properties from~\cite{grishchenko2018semantic} for characterizing the independence of miner-controlled parameters (including timestamp independence) immediately constitute trace noninterference properties and as such can be analyzed by \horstify but not by eThor.

More complex properties involving both universal and existential quantification of execution traces~\cite{ClarksonS08,DBLP:conf/esop/DArgenioBBFH17} cannot be checked by either \horstify or eThor.

\section{Conclusion}
\label{sec:conclusion}
\ifextended
In this work, we present the first provably sound static dependency analysis for EVM bytecode. 
Taking up the approach of the state-of-the-art static analyzer Securify~\cite{tsankov2018securify}, we uncover conceptual soundness issues of the tool, so we replace the underlying analysis and spelled out formal soundness guarantees.
The soundness proof of our analysis relies on the proof framework from~\cite{wasserrab2009pdg} for static program slicing, which we instantiate for EVM bytecode. 
The slicing framework can capture the notion of may-dependence, whereas we elucidated that the must-dependence promoted by Securify raises soundness questions already at the conceptional basis.
Although we removed support for must-dependence, we could show that the resulting analysis is flexible enough to soundly characterize relevant smart contract security properties such as timestamp dependence. 
Finally, we demonstrate the practicality of the approach by providing the prototypical analyzer HoRStify. HoRStify encodes the slicing-based dependency analysis as logical rules that can be automatically solved by the Datalog solver Soufflé; it can verify real-world smart contracts, and even though being provable sound, shows performance comparable to Securify. 
\else
In this work, we present the first provably sound static dependency analysis for EVM bytecode. 
Taking up the approach of the state-of-the-art static analyzer Securify~\cite{tsankov2018securify}, we uncover conceptual soundness issues of the tool, so we replace the underlying analysis and spell out formal soundness guarantees.
Even though we need to tighten the scope of the Securify analysis (removing the must-analysis) for achieving soundness guarantees, we can show that the resulting analysis is flexible enough to soundly characterize a generic class of non-interference-style properties, such as timestamp independence.
We demonstrate the practicality of the approach by providing the prototypical analyzer \horstify. 
We show that it can verify real-world smart contracts, and even though being provable sound, shows performance comparable to Securify. 
\fi

\bibliographystyle{plain}
\bibliography{Bibliography.bib}

\ifextended
\newpage
\onecolumn

\begin{appendices}

\section{Sound Dependency Analysis for EVM bytecode} 
\subsection{Instantiation of the Slicing Framework}

\subsubsection{State transformation}
\label{appendix:subsec:state-trans}
 We formally define the state $\cfgstate$ of the EVM as used in the CFG semantics. 
Afterward, we define the state transformation functions $\tocfgstate$ and $\toevmstate$ that convert between the different EVM state representations. 

\paragraph*{EVM state}
We revisit the formal definition of the EVM state as given in~\cite{grishchenko2018semantic}. 

In the following, we will use $\BB$ to denote the set $\{0,1\}$ of bits and accordingly $\BB^{x}$ for sets of bitstrings of size $x$. 
We further let $\integer{x}$ denote the set of non-negative integers representable by $x$ bits and allow for implicit conversion between those two representations (assuming bitstrings to represent a big-endian encoding of natural numbers). 
In addition, we will use the notation $\arrayof{X}$ (resp. $\stackof{X}$) for arrays (resp. lists) of elements from the set $X$. We use standard notations for operations on arrays and lists. 
In particular we write $\arraypos{a}{\pos}$ to access position $\pos \in [1, \size{a} - 1]$ of array $a \in \arrayof{X}$ and $\arrayinterval{a}{\downv}{\upv}$ to access the subarray of size $\upv - \downv$ from position $\downv \in  [1, \size{a} - 1]$ to $\upv \in  [1, \size{a} - 1]$. In case that $\downv > \upv$ this operation results in the empty array $\emptyarray$. 
In addition, we write $\concat{a_1}{a_2}$ for the concatenation of two arrays $a_1, a_2 \in \arrayof{X}$. 

In the following formalization, we will make use of bytearrays $b \in \bytearray$. To this end, we will assume functions $\bitstringtobytearray{(\cdot)} \in \BB^x \to \bytearray$ and $\bytearraytobitstring{(\cdot)} \in \bytearray \to \BB^x$ to chunk bitstrings with size dividable by $8$ to bytearrays and vice versa.
To denote the zero byte, we write $0^8$ and, accordingly, for an array of zero bytes of size n, we write $0^{8\cdot n}$.  

For lists, we denote the empty list by $\nil$ and write $\cons{x}{\textit{xs}}$ for placing element $x \in X$ on top of list $\textit{xs} \in \stackof{X}$. 
In addition, we write $\concatstack{\textit{xs}}{\textit{ys}}$ for concatenating lists $\textit{xs}, \textit{ys} \in \stackof{X}$. 

We let $\addresses$ denote the set of $160$-bit addresses ($\BB^{160}$). 

In Figure~\ref{fig:grammar'} we give a full grammar for call stacks: 
\begin{figure*}[h]
\begin{mathpar}
\begin{array}{rlclll}
\text{Call stacks} & \callstacks & \ni & \callstack & \define & \cons{\excstate}{\callstackplain} ~|~  \cons{\haltstate{\gstate}{d}{g}{\transeffects}}{\callstackplain} ~|~ \callstackplain  \\
\text{Plain call stacks} & 
\callstacksplain & \ni & \callstackplain & \define & \cons{\regstate{\mstate}{\exenv}{\gstate}}{\callstackplain} \\
\text{Machine states} & \mstates & \ni & \mstate & \define & \smstate{\lgas}{\lpc}{m}{i}{s} \\
\text{Execution environments} & \exenvs & \ni & \exenv & \define & \sexenv{\textit{actor}}{\textit{input}}{\textit{sender}}{\textit{value}}{\textit{code}}\\
\text{Global states} & \gstates  & \ni &\gstate & &  \\
\text{Account states} & \accounts & \ni & \accountv & \define & \accountstate{n}{b}{\textit{code}}{\textit{stor}} ~|~ \bot   \\ 
\text{Transaction environments} & \transenvs & \ni & \transenv & \define & (o, \textit{prize}, H) \\
\text{Block headers} & \blockheaders & \ni & \blockheader & \define & (\parent, \beneficiary, \difficulty, \blocknumber, \gaslimit, \timestamp)\\
\\
\text{Notations:} 
&\multicolumn{5}{c}{
d \in \bytearray, \quad g \in \integer{256}, \quad \transeffects \in \teffects, \quad 
o \in \addresses, \quad \textit{prize} \in \integer{256}, \quad H \in \blockheaders} \\
&\multicolumn{5}{c}{
\lgas \in \integer{256}, \quad \lpc \in \integer{256}, \quad m \in \integer{256} \to \integer{256} \quad i \in \integer{256}, \quad s \in \integer{8} \to \integer{256}}  \\
&\multicolumn{5}{c}{
\textit{sender} \in \addresses \quad \textit{input} \in \bytearray \quad \textit{sender} \in \addresses \quad \textit{value} \in \integer{256} \quad \textit{code} \in \bytearray} \\
&\multicolumn{5}{c}{
b \in \integer{256} \quad \stor \in \integer{256} \to \integer{256} \quad L \in \sequenceof{\logevents} \quad \suicideset \subseteq \addresses \quad 
\gstates = \addresses \to \accounts 
} \\
&\multicolumn{5}{c}{
    \parent \in \integer{256} \quad \beneficiary \in \addresses \quad \difficulty \in \integer{256}} \\
&\multicolumn{5}{c}{    
    \blocknumber \integer{256} \quad \gaslimit \in \integer{256} \quad \timestamp \in \integer{256}
}
\end{array}
\end{mathpar}
\caption{Grammar for calls stacks and transaction environments}
\label{fig:grammar'}
\end{figure*}


Note that the grammar was slightly adapted to account for the fact that the local stack is assumed to be precomputed to local variables.
Further, the intermediate representation assumes all memory accesses to be aligned, meaning that memory acceses only occure at addresses that are multiples of $32$.

\paragraph*{CFG state}
We formally define the state for the CFG semantics of EVM bytecode. 
The formal definition of the CFG state is given as follows:

\begin{align*}
    \text{State} && \cfgstate &\define (\cfgstack, \cfgmemconc, \cfgmemabs, \cfgstorconc, \cfgstorabs, \cfglocenv, \cfgglobenv) \\
    \text{Stack} && \cfgstack &\in  \integer{8} \to \integer{256} \\
    \text{Local Static Memory} && \cfgmemconc & \in \integer{256} \to \integer{256}  \\
    \text{Local Dynamic Memory} && \cfgmemabs & \in \integer{256} \to \integer{256} \cup \{ \bot \} \\
    \text{Global Static Storage} && \cfgstorconc & \in  \integer{256} \to \integer{256} \\
    \text{Global Dynamic Storage} && \cfgstorabs & \in  \integer{256} \to \integer{256} \cup \{ \bot \} \\
    \text{Local Environment} && \cfglocenv & \define (\gaspc{\lpc}, \msizepc{\lpc}, \textit{actor}, \textit{input}, \textit{sender}, \textit{value}) \\
    \text{Global Environment} && \cfgglobenv & \define (\parent, \beneficiary, \difficulty, \blocknumber, \gaslimit, \timestamp, \originator, \gasprize, \cfgexternalpc{\lpc}) \\
    \text{External Global Environment} && \cfgexternal & \define (b, n, \gstate)
\end{align*}

We will treat the CFG state as a heterogeneous mapping and write 
$\accesscfgstate{\cfgstate}{\stackvar{x}}$ for $\access{\cfgstate}{\cfgstack}(x)$; 
$\accesscfgstate{\cfgstate}{\memvarconc{x}{\lpc}}$ for $\access{\cfgstate}{\cfgmemconc}(x)$; 
$\accesscfgstate{\cfgstate}{\memvarabs{x}{\lpc}}$ for $\access{\cfgstate}{\cfgmemabs}(x)$; 
$\accesscfgstate{\cfgstate}{\storvarconc{x}{\lpc}}$ for $\access{\cfgstate}{\cfgstorconc}(x)$; 
$\accesscfgstate{\cfgstate}{\storvarabs{x}{\lpc}}$ for $\access{\cfgstate}{\cfgstorabs}(x)$;
$\accesscfgstate{\cfgstate}{\locenvvar{x}}$ for $\access{\access{\cfgstate}{\cfglocenv}}{x}$; 
and $\accesscfgstate{\cfgstate}{\globenvvar{x}}$ for $\access{\access{\cfgstate}{\cfgglobenv}}{x}$.
In particular, we will treat (static and dynamic) memory and storage locations as variables and will write 
$\memvarabs{X}{}$ for the set of all dynamic memory locations,
$\memvarconc{X}{}$ for the set of all static memory locations, 
$\storvarabs{X}{}$ for the set of all dynamic storage locations,
and $\storvarconc{X}{}$ for the set of all static storage locations.
Further, we use $\memvar{X}{}$ and $\storvar{X}{}$ to denote the (distinct) sets of all memory and, respectively, storage locations. 

The state is partitioned according to the granularity of the analysis. 
All components whose dependencies are explicitly tracked occur on the top level. 

\paragraph{State transformation}

In the following we will assume the $\load$ function to be defined on both memory locations $\memvar{x}{\lpc}$ and storage locations $\storvar{x}{\lpc}$.  
\begin{align*}
	\load~\cfgstate~\loc &= 
	\begin{cases}
		\cfgstate[\varabs{\loc_{}}] \qquad \text{if } \cfgstate[\varconc{\loc_{}}] = \None \\
		\cfgstate[\varconc{\loc_{}}] \qquad \text{otherwise.} 
	\end{cases}
\end{align*}
where $\loc \in \memvar{X}{} \cup \storvar{X}{}$

Using this, the translation between the different state types can be defined as follows: 

{\small
\begin{align*}
\tocfgstate(\transenv, \exstate) \define 
\begin{cases}
    (\cfgstate, \precontract, \lpc) 
& \regstate{\mstate}{\exenv}{\gstate} = \exstate 
~\land~ \cfgstack = \access{\mstate}{\stack}
~\land~ \cfgmemconc = \fun{(i, \lpc)}{\access{\mstate}{\memo}(i)}
~\land~  \cfgmemabs = \fun{(i, \lpc)}{\bot} \\
&~\land~ \cfgstorconc = \fun{(i, \lpc)}{\access{\gstate(\access{\exenv}{\activeaccount})}{\stor}(i)}
~\land~  \cfgstorabs = \fun{(i, \lpc)}{\bot}
\\
& ~\land~ \cfglocenv = (\fun{\lpc}{\access{\mstate}{\gas}}, \fun{\lpc}{\access{\mstate}{i}}, \access{\exenv}{\activeaccount}, \access{\exenv}{\activeaccount}, \access{\exenv}{\sender}, \access{\exenv}{\valu}) \\
& ~\land~ \cfgexternal = \fun{\lpc}{(\access{\gstate(\access{\exenv}{\activeaccount})}{\balance}, \access{\gstate(\access{\exenv}{\activeaccount})}{\nonce}, \gstate)}
~\land~  (\originator, \gasprize, \blockheader) = \transenv \\
& ~\land~ (\parent, \beneficiary, \difficulty, \blocknumber, \gaslimit, \timestamp) = \blockheader \\
& ~\land~ \cfgglobenv = (\parent, \beneficiary, \difficulty, \blocknumber, \gaslimit, \timestamp, \originator, \gasprize, \cfgexternal) \\
& ~\land~ \cfgstate = (\cfgstack, \cfgmemconc, \cfgmemabs, \cfgstorconc, \cfgstorabs, \cfglocenv, \cfgglobenv) 
~\land~ \precontract =   \access{\exenv}{\activecode} \\
&~\land~ \lpc = \access{\mstate}{\pc} 
\end{cases}
\end{align*} }

Note that we will usually write $\cfgstate = \tocfgstate(\transenv, \exstate)$ to implicitely drop the reconstructed contract $\evmcontract$ and program counter $\lpc$.

{\small
\begin{align*}
\toevmstate(\cfgstate, \precontract, \lpc) = 
\begin{cases}
(\transenv, \exstate) & 
 (\cfgstack, \cfgmemconc, \cfgmemabs, \cfgstorconc, \cfgstorabs, \cfglocenv, \cfgglobenv) = \cfgstate \\
&~\land~  (\parent, \beneficiary, \difficulty, \blocknumber, \gaslimit, \timestamp, \originator, \gasprize, \cfgexternal) = \cfgglobenv \\
& ~\land (b, n, \gstate') = \cfgexternal \\
& ~\land~ (\lgas, i, \textit{actor}, \textit{input}, \textit{sender}, \textit{value}) = \cfglocenv \\
& ~\land~ \code = \fun{\lpc}{(\access{\precontract(\lpc)}{\instruction}, \access{\precontract(\lpc)}{\nextpc})}\\
& ~\land~ \mstate = (\lgas, \lpc, \fun{x}{\load~\cfgstate~\memvar{x}{\lpc}}, i, \cfgstack)
~\land~ \exenv = (\activeaccount, \inputdata, \sender, \tvalue, \code) \\
&~\land~ \gstate = \update{\gstate'}{\activeaccount}{(b, n, \code, \fun{x}{\load ~\cfgstate~ \storvar{x}{\lpc}})}  \\
&~\land~ \blockheader =  (\parent, \beneficiary, \difficulty, \blocknumber, \gaslimit, \timestamp)
~\land~ \transenv = (\originator, \gasprize, \blockheader) 
\end{cases}
\end{align*}
}

For defining the EVM CFG semantics, we assume $\cfgstatefull = \cfgstate \uplus \cfgstatecopy$ to represent the heterogeneous mapping that maps two copies of the variables from $\cfgstate$ to their respective values. 
We will denote the copy of variable $x$ from $\cfgstate$ in $\cfgstatecopy$ as $\tempvar{x}$. 
Variables $\tempvar{x}$ in $\cfgstatecopy$ function as temporal variables and are initially set to $\bot$.
We denote with $\cfgstatecopybot$ the partial mapping that maps all variables $\tempvar{x}$ to $\bot$.
Temporal variables are only needed to track individual variable dependencies for opcodes that are initiating internal transactions.
In this case, many variables are updated simultanously and to distinguish the different dependencies in a fine-grained manner, the updates are first written into temporal variables and copied to their corresponding variables in $\cfgstate$ only later.

Note that for state updates only touching variables in $\cfgstate$, by convention we write $\fun{\cfgstate}{\langle \textit{exp} \rangle}$ while we use $\fun{\cfgstatefull}{\langle \textit{exp} \rangle}$ to denote state updates that may also touch temporal variables.

 \subsubsection{CFG semantics}
 \label{appendix:subsec:cfg-semantics}

We closely follow the semantic rules given in~\cite{grishchenko2018semantic} and group the rules whenever possible.
\paragraph{Binary Stack Operations}

We first give the rules  for binary stack operations
We define 
\begin{align*}
\binops \define \{ \ADD,  \SUB, \LT,  \GT,  \EQ, \AND, \OR, \XOR, \SLT,  \SGT, \MUL, \DIV, \SDIV, \\
\MOD, \SMOD, \SIGNEXTEND, \BYTE \} 
\end{align*}
and 
\begin{align*}
\binopcost{\binopv} = 
\begin{cases}
3 & \binopv \in  \{\ADD, \SUB, \LT, \GT, \SLT, \SGT, \EQ, \AND, \OR, \XOR, \BYTE \}  \\
5 &\binopv \in \{\MUL, \DIV, \SDIV, \MOD, \SMOD, \SIGNEXTEND \} \\
\end{cases}
\end{align*}
and 
\begin{align*}
\binopfun{\binopv} = 
\begin{cases}
\lambda (a,b).\, a + b \mod 2^{256} & \binopv = \ADD \\
\lambda (a,b).\,a - b \mod 2^{256} & \binopv = \SUB \\
\lambda (a,b).\, \cond{a < b}{1}{0} & \binopv = \LT \\
\lambda (a,b). \, \cond{a > b}{1}{0} & \binopv = \GT \\
\lambda (a,b). \, \cond{\signed{a} < \signed{b}}{1}{0} & \binopv = \SLT \\
\lambda (a,b). \, \cond{\signed{a} > \signed{b}}{1}{0} & \binopv = \SGT \\
\lambda (a,b). \, \cond{a = b}{1}{0} & \binopv = \EQ \\
\lambda (a,b).\, a \bitand b & \binopv = \AND\\
\lambda (a,b).\, a \bitor b & \binopv = \OR \\
\lambda (a,b).\, a \bitxor b & \binopv = \XOR \\
\lambda (a,b).\, a \cdot b \mod 2^{256} & \binopv = \MUL \\
\lambda (a,b).\, \cond{(b = 0)}{0}{\lfloor a \div b \rfloor}  & \binopv = \DIV\\
\lambda (a,b).\, \cond{(b=0)}{0}{a \mod b}  & \binopv = \MOD\\
\lambda (a,b).\, (b=0)?~0~:~(a= 2^{255}\land \signed{b} = -1)?~2^{256}~: \\
~\textit{let}~x = \signed{a} \div \signed{b} ~\textit{in}~ \unsigned{(\signof{x} \cdot \lfloor | x| \rfloor)}  & \binopv = \SDIV\\
\lambda (a,b).\, \cond{(b=0)}{0}{\unsigned{(\signof{a} \cdot |a| \mod |b|)}} &\binopv = \SMOD\\
\lambda (o, b). \, \cond{(o \geq 32)}{0}{\concat{\extract{b}{8 \cdot o}{8 \cdot o + 7}}{0^{248}}} & \binopv = \BYTE \\ 
\lambda (a, b). \, \textit{let}~x = 256-8(a+1) ~\textit{in} \\
~\textit{let}~ s = \arraypos{b}{x} ~\textit{in}~ \concat{s^x}{\extract{b}{x}{255}} & \binopv = \SIGNEXTEND 
\end{cases}
\end{align*}

where $\signof{\cdot}: \sinteger{x} \to \{-1, 1\}$ is defined as 
\begin{align*}
\signof{x} = 
\begin{cases}
1 & x \geq 0 \\
0 & \text{otherwise}
\end{cases}
\end{align*}
and $\bitand$, $\bitor$ and $\bitxor$ are bitwise and, or and xor, respectively.

Exceptions to the normal binary operations are the exponentiation as this instruction uses non-constant costs and the computation of the Keccack-256 hash. 

The CFG rules for these operations are given as follows: 

\begin{mathpar}
    \infer{
        \precontract(\pc) =  (\instruction(\stackvar{y},\stackvar{x_1}, \stackvar{x_2}), \pc', \pre) \\
        \instruction \in \binops\\
        f = \fun{\cfgstate}{\updatestate{\cfgstate}{\stackvar{y}}{\binopfun{\instruction}(\cfgstate[\stackvar{x_1}], \cfgstate[\stackvar{x_2}])}}
    }
    {\edgestep{\precontract, \cd}{\cfgnode{\pc}{0}}{\cfgnode{\pc}{1}}} \\
    {\Defs{\stackvar{y}} \\
        \Uses{\stackvar{x_1},\stackvar{x_2}}}
\end{mathpar}
\begin{mathpar}
    \infer{
        \precontract(\pc) =  (\instruction(\stackvar{y},\stackvar{x_1}, \stackvar{x_2}), \pc', \pre) \\
        \instruction \in \binops\\
        f = \fun{\cfgstate}{\updatestate{\cfgstate}{\envvar{\gaspc{\pc'}}}{\cfgstate[\envvar{\gaspc{\pc}}] - \binopcost{\instruction}}} 
    }
    {\edgestep{\precontract, \cd}{\cfgnode{\pc}{1}}{\cfgnode{\pc'}{0}}} \\
    {\Defs{\locenvvar{\gaspc{\pc'}}} \\
        \Uses{\locenvvar{\gaspc{\pc}}}}
\end{mathpar}

Exceptions to the normal binary operations are the exponentiation as this instruction uses non-constant costs and the computation of the Keccack-256 hash. 
We give their CFG semantics rules separately: 

\begin{mathpar}
    \infer{
        \precontract(\pc) =  (\EXP(\stackvar{y},\stackvar{x_1}, \stackvar{x_2}), \pc', \pre) \\
        f = \fun{\cfgstate}{\updatestate{\cfgstate}{\stackvar{y}}{\cfgstate[\stackvar{x_1}]^{\cfgstate[\stackvar{x_2}]}  \mod 2^{256}}} \\
    }
    {\edgestep{\precontract, \cd}{\cfgnode{\pc}{0}}{\cfgnode{\pc}{1}}} \\
    {\Defs{\stackvar{y}} \\
        \Use{ \{ \stackvar{x_1},\stackvar{x_2}} \} }
\end{mathpar}
\begin{mathpar}
    \infer{
        \precontract(\pc) =  (\EXP(\stackvar{y},\stackvar{x_1}, \stackvar{x_2}), \pc', \pre) \\
        \costs = \fun{\cfgstate}{\cond{(\cfgstate[\stackvar{x_2}] = 0)}{10}{10 + 10* (1 + \left \lfloor \log_{256}{\cfgstate[\stackvar{x_2}]} \right \rfloor)}} \\
        f = \fun{\cfgstate}{\updatestate{\cfgstate}{\envvar{\gaspc{\pc'}}}{\cfgstate[\envvar{\gaspc{\pc}}] - \costs(\cfgstate)}}  \\    }
    {\edgestep{\precontract, \cd}{\cfgnode{\pc}{1}}{\cfgnode{\pc'}{0}}} \\
    {\Defs{\locenvvar{\gaspc{\pc'}}} \\
        \Uses{\locenvvar{\gaspc{\pc}}, \stackvar{x_2}}}
\end{mathpar}


\begin{mathpar}
    \infer{
        \precontract(\pc) =  (\SHA(\stackvar{y},\stackvar{x_1}, \stackvar{x_2}), \pc', \pre) \\
        v =  \fun{\cfgstate}{\load_{m}~ \accesscfgstate{\cfgstate}{\stackvar{x_1}}~\accesscfgstate{\cfgstate}{\stackvar{x_2}}} \\
        f = \fun{\cfgstate}{\updatestate{\cfgstate}{\stackvar{y}}{\keccak{v(\cfgstate)}}}
    }
    {\edgestep{\precontract, \cd}{\cfgnode{\pc}{0}}{\cfgnode{\pc}{1}}} \\
    {\Defs{\stackvar{y}} \\
        \Use{ \{ \stackvar{x_1},\stackvar{x_2} \} ~\cup~ \memvarabs{X}{} ~\cup~ \memvarconc{X}{} }}
\end{mathpar}
\begin{mathpar}
    \infer{
        \precontract(\pc) =  (\SHA(\stackvar{y},\stackvar{x_1}, \stackvar{x_2}), \pc', \pre) \\
        \pos = \fun{\cfgstate}{\accesscfgstate{\cfgstate}{\stackvar{x_1}}}\\
        \siz = \fun{\cfgstate}{\accesscfgstate{\cfgstate}{\stackvar{x_2}}}\\
        \aw = \fun{\cfgstate}{\memext{\accesscfgstate{\cfgstate}{\locenvvar{\msizepc{\pc}}}}{\pos(\cfgstate)}{\siz(\cfgstate)}}\\ 
        \costs = \fun{\cfgstate}{\costmem{\accesscfgstate{\cfgstate}{\locenvvar{\msizepc{\pc}}}}{\aw(\cfgstate)} + 30 + 6 \cdot \left \lceil \frac{\siz(\cfgstate)}{32} \right \rceil} \\
        f = \fun{\cfgstate}{\updatestate{\cfgstate}{\envvar{\gaspc{\pc'}}}{\cfgstate[\envvar{\gaspc{\pc}}] - \costs(\cfgstate)}}  \\    }
    {\edgestep{\precontract, \cd}{\cfgnode{\pc}{1}}{\cfgnode{\pc}{2}}} \\
    {\Defs{\locenvvar{\gaspc{\pc}}} \\
        \Uses{\locenvvar{\gaspc{\pc'}}, \locenvvar{\msizepc{\pc}}, \stackvar{x_1}, \stackvar{x_2}}}
\end{mathpar}
\begin{mathpar}
    \infer{
        \precontract(\pc) =  (\SHA(\stackvar{y},\stackvar{x_1}, \stackvar{x_2}), \pc', \pre) \\
        \pos = \fun{\cfgstate}{\accesscfgstate{\cfgstate}{\stackvar{x_1}}}\\
        \siz = \fun{\cfgstate}{\accesscfgstate{\cfgstate}{\stackvar{x_2}}}\\
        \aw = \fun{\cfgstate}{\memext{\accesscfgstate{\cfgstate}{\locenvvar{\msizepc{\pc}}}}{\pos(\cfgstate)}{\siz(\cfgstate)}}\\ 
        f = \fun{\cfgstate}{\updatestate{\cfgstate}{\envvar{\msizepc{\pc'}}}{\aw(\cfgstate)}}  \\    }
    {\edgestep{\precontract, \cd}{\cfgnode{\pc}{2}}{\cfgnode{\pc'}{0}}} \\
    {\Defs{\locenvvar{\msizepc{\pc'}}} \\
        \Uses{\locenvvar{\msizepc{\pc}},  \stackvar{x_1}, \stackvar{x_2}}}
\end{mathpar}

Where $\load_{m}~\cfgstate~o~s$ is defined as 
\begin{align*}
    \load_{m}~\cfgstate~o~s \define 
    \begin{cases}
        0 & s = 0\\
        (\load~\cfgstate~\memvar{o}{\lpc}) * 256^{(s-1)} + \load~\cfgstate~(o+1)~(s-1)  & s > 0
    \end{cases}
\end{align*}

\paragraph{Unary Stack Operations}
\begin{mathpar}
    \infer{
        \precontract(\pc) =  (\ISZERO(\stackvar{y},\stackvar{x}), \pc', \pre) \\
        r = \fun{\cfgstate}{\cond{(\accesscfgstate{\cfgstate}{\stackvar{x}}=0)}{1}{0} } \\
        f = \fun{\cfgstate}{\updatestate{\cfgstate}{\stackvar{y}}{r(\cfgstate)}}
    }
    {\edgestep{\precontract, \cd}{\cfgnode{\pc}{0}}{\cfgnode{\pc}{1}}} \\
    {\Defs{\stackvar{y}} \\
        \Uses{\stackvar{x}}}
\end{mathpar}
\begin{mathpar}
    \infer{
        \precontract(\pc) =  (\ISZERO(\stackvar{y},\stackvar{x}), \pc', \pre) \\        
        f = \fun{\cfgstate}{\updatestate{\cfgstate}{\envvar{\gaspc{\pc'}}}{\cfgstate[\envvar{\gaspc{\pc}}] - 3}} 
    }
    {\edgestep{\precontract, \cd}{\cfgnode{\pc}{1}}{\cfgnode{\pc'}{0}}} \\
    {\Defs{\locenvvar{\gaspc{\pc'}}} \\
        \Uses{\locenvvar{\gaspc{\pc}}}}
\end{mathpar}
\begin{mathpar}
    \infer{
        \precontract(\pc) =  (\NOT(\stackvar{y},\stackvar{\stackvar{x}}), \pc', \pre) \\
        r = \fun{\cfgstate}{\bitneg(\accesscfgstate{\cfgstate}{x})} \\
        f = \fun{\cfgstate}{\updatestate{\cfgstate}{\stackvar{y}}{r(\cfgstate)}}
    }
    {\edgestep{\precontract, \cd}{\cfgnode{\pc}{0}}{\cfgnode{\pc}{1}}} \\
    {\Defs{\stackvar{y}} \\
        \Uses{\stackvar{x}}}
\end{mathpar}
\begin{mathpar}
    \infer{
        \precontract(\pc) =  (\NOT(\stackvar{y},\stackvar{x}), \pc', \pre) \\        
        f = \fun{\cfgstate}{\updatestate{\cfgstate}{\envvar{\gaspc{\pc'}}}{\cfgstate[\envvar{\gaspc{\pc}}] - 3}} 
    }
    {\edgestep{\precontract, \cd}{\cfgnode{\pc}{1}}{\cfgnode{\pc'}{0}}} \\
    {\Defs{\locenvvar{\gaspc{\pc'}}} \\
        \Uses{\locenvvar{\gaspc{\pc}}}}
\end{mathpar}
where $\bitneg$ is bitwise negation. 

\paragraph{Ternary Stack Operations}

\begin{mathpar}
    \infer{
        \precontract(\pc) =  (\ADDMOD(\stackvar{y},\stackvar{x_1}, \stackvar{x_2}, \stackvar{x_3}), \pc', \pre) \\
        a = \fun{\cfgstate}{\accesscfgstate{\cfgstate}{\stackvar{x_1}}} \\
        b = \fun{\cfgstate}{\accesscfgstate{\cfgstate}{\stackvar{x_2}}} \\
        c = \fun{\cfgstate}{\accesscfgstate{\cfgstate}{\stackvar{x_3}}} \\
        r = \fun{\cfgstate}{\cond{(c(\cfgstate)=0)}{0}{(a(\cfgstate) + b(\cfgstate)) \mod c(\cfgstate)}} \\
        f = \fun{\cfgstate}{\updatestate{\cfgstate}{\stackvar{y}}{r(\cfgstate)}}
    }
    {\edgestep{\precontract, \cd}{\cfgnode{\pc}{0}}{\cfgnode{\pc}{1}}} \\
    {\Defs{\stackvar{y}} \\
        \Uses{\stackvar{x_1},\stackvar{x_2}, \stackvar{x_3}}}
\end{mathpar}
\begin{mathpar}
    \infer{
        \precontract(\pc) =  (\ADDMOD(\stackvar{y},\stackvar{x_1}, \stackvar{x_2}, \stackvar{x_3}), \pc', \pre) \\
        f = \fun{\cfgstate}{\updatestate{\cfgstate}{\envvar{\gaspc{\pc'}}}{\cfgstate[\envvar{\gaspc{\pc}}] - 8}} 
    }
    {\edgestep{\precontract, \cd}{\cfgnode{\pc}{1}}{\cfgnode{\pc'}{0}}} \\
    {\Defs{\locenvvar{\gaspc{\pc'}}} \\
        \Uses{\locenvvar{\gaspc{\pc}}}}
\end{mathpar}

\begin{mathpar}
    \infer{
        \precontract(\pc) =  (\MULMOD(\stackvar{y},\stackvar{x_1}, \stackvar{x_2}, \stackvar{x_3}), \pc', \pre) \\
        a = \fun{\cfgstate}{\accesscfgstate{\cfgstate}{\stackvar{x_1}}} \\
        b = \fun{\cfgstate}{\accesscfgstate{\cfgstate}{\stackvar{x_2}}} \\
        c = \fun{\cfgstate}{\accesscfgstate{\cfgstate}{\stackvar{x_3}}} \\
        r = \fun{\cfgstate}{\cond{(c(\cfgstate)=0)}{0}{(a(\cfgstate) \cdot b(\cfgstate)) \mod c(\cfgstate)}} \\
        f = \fun{\cfgstate}{\updatestate{\cfgstate}{\stackvar{y}}{r(\cfgstate)}}
    }
    {\edgestep{\precontract, \cd}{\cfgnode{\pc}{0}}{\cfgnode{\pc}{1}}} \\
    {\Defs{\stackvar{y}} \\
        \Uses{\stackvar{x_1},\stackvar{x_2}, \stackvar{x_3}}}
\end{mathpar}
\begin{mathpar}
    \infer{
        \precontract(\pc) =  (\MULMOD(\stackvar{y},\stackvar{x_1}, \stackvar{x_2}, \stackvar{x_3}), \pc', \pre) \\
        f = \fun{\cfgstate}{\updatestate{\cfgstate}{\envvar{\gaspc{\pc'}}}{\cfgstate[\envvar{\gaspc{\pc}}] - 8}} 
    }
    {\edgestep{\precontract, \cd}{\cfgnode{\pc}{1}}{\cfgnode{\pc'}{0}}} \\
    {\Defs{\locenvvar{\gaspc{\pc}}} \\
        \Uses{\locenvvar{\gaspc{\pc}}}}
\end{mathpar}

\paragraph{Accessing the execution environment}

\begin{mathpar}
    \infer{
        \precontract(\pc) =  (\ADDRESS(\stackvar{y}), \pc', \pre) \\
        r = \fun{\cfgstate}{\accesscfgstate{\cfgstate}{\locenvvar{\activeaccount}}} \\
        f = \fun{\cfgstate}{\updatestate{\cfgstate}{\stackvar{y}}{r(\cfgstate)}}
    }
    {\edgestep{\precontract, \cd}{\cfgnode{\pc}{0}}{\cfgnode{\pc}{1}}} \\
    {\Defs{\stackvar{y}} \\
        \Uses{\locenvvar{\activeaccount}}}
\end{mathpar}

Most instructions for accessing the execution environment have the same gas cost. For this reason we summarize the rule for gas substraction for them.

\begin{mathpar}
    \infer{
        \precontract(\pc) =  (\instruction(\stackvar{y}), \pc', \pre) \\    
        \instruction \in \{ \ADDRESS, \CALLER, \CALLVALUE, \CODESIZE, \CALLDATASIZE \} \\    
        f = \fun{\cfgstate}{\updatestate{\cfgstate}{\envvar{\gaspc{\pc'}}}{\cfgstate[\envvar{\gaspc{\pc}}] - 2}} 
    }
    {\edgestep{\precontract, \cd}{\cfgnode{\pc}{1}}{\cfgnode{\pc'}{0}}} \\
    {\Defs{\locenvvar{\gaspc{\pc'}}} \\
        \Uses{\locenvvar{\gaspc{\pc}}}}
\end{mathpar}

\begin{mathpar}
    \infer{
        \precontract(\pc) =  (\CALLER(\stackvar{y}), \pc', \pre) \\
        r = \fun{\cfgstate}{\accesscfgstate{\cfgstate}{\locenvvar{\sender}}} \\
        f = \fun{\cfgstate}{\updatestate{\cfgstate}{\stackvar{y}}{r(\cfgstate)}}
    }
    {\edgestep{\precontract, \cd}{\cfgnode{\pc}{0}}{\cfgnode{\pc}{1}}} \\
    {\Defs{\stackvar{y}} \\
        \Uses{\locenvvar{\sender}}}
\end{mathpar}

\begin{mathpar}
    \infer{
        \precontract(\pc) =  (\CALLVALUE(\stackvar{y}), \pc', \pre) \\
        r = \fun{\cfgstate}{\accesscfgstate{\cfgstate}{\locenvvar{\tvalue}}} \\        
        f = \fun{\cfgstate}{\updatestate{\cfgstate}{\stackvar{y}}{r(\cfgstate)}}
    }
    {\edgestep{\precontract, \cd}{\cfgnode{\pc}{0}}{\cfgnode{\pc}{1}}} \\
    {\Defs{\stackvar{y}} \\
        \Uses{\locenvvar{\tvalue}}}
\end{mathpar}

\begin{mathpar}
    \infer{
        \precontract(\pc) =  (\CODESIZE(\stackvar{y}), \pc', \pre) \\
        r = \fun{\cfgstate}{\size{\accesscfgstate{\cfgstate}{\locenvvar{\code}}}} \\
        f = \fun{\cfgstate}{\updatestate{\cfgstate}{\stackvar{y}}{r(\cfgstate)}}
    }
    {\edgestep{\precontract, \cd}{\cfgnode{\pc}{0}}{\cfgnode{\pc}{1}}} \\
    {\Defs{\stackvar{y}} \\
        \Uses{\locenvvar{\code}}}
\end{mathpar}

\begin{mathpar}
    \infer{
        \precontract(\pc) =  (\CALLDATASIZE(\stackvar{y}), \pc', \pre) \\
        r = \fun{\cfgstate}{\size{\accesscfgstate{\cfgstate}{\locenvvar{\inputdata}}}} \\
        f = \fun{\cfgstate}{\updatestate{\cfgstate}{\stackvar{y}}{r(\cfgstate)}}
    }
    {\edgestep{\precontract, \cd}{\cfgnode{\pc}{0}}{\cfgnode{\pc}{1}}} \\
    {\Defs{\stackvar{y}} \\
        \Uses{\locenvvar{\inputdata}}}
\end{mathpar}

We give individual rules for accessing the code and input data. 

The $\CALLDATALOAD$ instruction accesses a word of the call data at a specified position

\begin{mathpar}
    \infer{
        \precontract(\pc) =  (\CALLDATALOAD(\stackvar{y},\stackvar{x}), \pc', \pre) \\
        a = \fun{\cfgstate} = \accesscfgstate{\cfgstate}{\stackvar{x}}\\
        d = \fun{\cfgstate}{\accesscfgstate{\cfgstate}{\locenvvar{\inputdata}}} \\ 
        \siz = \fun{\cfgstate}{\size{d(\cfgstate)}}\\ 
        k = \fun{\cfgstate}{\cond{(\siz(\cfgstate) - a(\cfgstate) < 0)}{0} {\mini{\siz(\cfgstate) - a(\cfgstate)}{32}}}\\
        v = \fun{\cfgstate}{\arraypos{d(\cfgstate)}{a(\cfgstate), a(\cfgstate) + k(\cfgstate) - 1}} \\
        r = \fun{\cfgstate}{\concat{v(\cfgstate)}{0^{256 - k(\cfgstate)\cdot 8}}} \\
        f = \fun{\cfgstate}{\updatestate{\cfgstate}{\stackvar{y}}{r(\cfgstate)}}
    }
    {\edgestep{\precontract, \cd}{\cfgnode{\pc}{0}}{\cfgnode{\pc}{1}}} \\
    {\Defs{\stackvar{y}} \\
    \Uses{\locenvvar{\code}}}
\end{mathpar}
\begin{mathpar}
    \infer{
        \precontract(\pc) =  (\CALLDATALOAD(\stackvar{y},\stackvar{x}), \pc', \pre) \\        
        f = \fun{\cfgstate}{\updatestate{\cfgstate}{\envvar{\gaspc{\pc'}}}{\cfgstate[\envvar{\gaspc{\pc}}] - 3}} 
    }
    {\edgestep{\precontract, \cd}{\cfgnode{\pc}{1}}{\cfgnode{\pc'}{0}}} \\
    {\Defs{\locenvvar{\gaspc{\pc'}}} \\
        \Uses{\locenvvar{\gaspc{\pc}}}}
\end{mathpar}

\begin{mathpar}
    \infer{
        \precontract(\pc) =  (\CALLDATACOPY(\stackvar{y},\stackvar{x_1}, \stackvar{x_2}, \stackvar{x_3}), \pc', \pre) \\
        \pos_\memo = \fun{\cfgstate}{\accesscfgstate{\cfgstate}{\stackvar{x_1}}}\\
        \pos_\data = \fun{\cfgstate}{\accesscfgstate{\cfgstate}{\stackvar{x_2}}}\\
        \siz = \fun{\cfgstate}{\accesscfgstate{\cfgstate}{\stackvar{x_3}}}\\
        \datav = \fun{\cfgstate}{\accesscfgstate{\cfgstate}{\locenvvar{\inputdata}}} \\ 
        k = \fun{\cfgstate}{\cond{(\size{\datav(\cfgstate)} - \pos_\data(\cfgstate) < 0}{0}{\mini{\size{\datav(\cfgstate)}- \pos_\data(\cfgstate)}{\siz(\cfgstate)}}}\\
        d' = \fun{\cfgstate}{\arraypos{\datav(\cfgstate)}{\pos_\data(\cfgstate), \pos_\data(\cfgstate) + k(\cfgstate) - 1}}\\
        d = \fun{\cfgstate}{\concat{d'(\cfgstate)}{0^{8 \cdot(\siz(\cfgstate) - k(\cfgstate))}}}\\
        f = \fun{\cfgstate}{\updatevarset{\cfgstate}{\memvarabs{i}{\pc'}}{d(\cfgstate)[i]}{i}{[\pos_\memo(\cfgstate), \pos_\memo(\cfgstate) + \siz(\cfgstate) -1]}}
    }
    {\edgestep{\precontract, \cd}{\cfgnode{\pc}{0}}{\cfgnode{\pc}{1}}} \\
    {\Def{\memvarabs{X}{\pc'}} \\
    \Uses{\locenvvar{\inputdata}, \stackvar{x_1}, \stackvar{x_2}, \stackvar{x_3}}}
\end{mathpar}
\begin{mathpar}
    \infer{
        \precontract(\pc) =  (\CALLDATACOPY(\stackvar{y},\stackvar{x_1}, \stackvar{x_2}, \stackvar{x_3}), \pc', \pre) \\
        \pos_\memo = \fun{\cfgstate}{\accesscfgstate{\cfgstate}{\stackvar{x_1}}}\\
        \pos_\data = \fun{\cfgstate}{\accesscfgstate{\cfgstate}{\stackvar{x_2}}}\\
        \siz = \fun{\cfgstate}{\accesscfgstate{\cfgstate}{\stackvar{x_3}}}\\
        \aw = \fun{\cfgstate}{\memext{\accesscfgstate{\cfgstate}{\locenvvar{\msizepc{\pc}}}}{\pos_\memo(\cfgstate)}{\siz(\cfgstate)}} \\
        \costs = \fun{\cfgstate}{\costmem{\accesscfgstate{\cfgstate}{\locenvvar{\msizepc{\pc}}}}{\aw(\cfgstate)}+3 + 3 \cdot \left \lceil \frac{\siz(\cfgstate)}{32} \right \rceil}\\     
        f = \fun{\cfgstate}{\updatestate{\cfgstate}{\envvar{\gaspc{\pc'}}}{\cfgstate[\envvar{\gaspc{\pc}}] - \costs(\cfgstate)}} 
    }
    {\edgestep{\precontract, \cd}{\cfgnode{\pc}{1}}{\cfgnode{\pc'}{2}}} \\
    {\Defs{\locenvvar{\gaspc{\pc'}}} \\
        \Uses{\locenvvar{\gaspc{\pc}}, \locenvvar{\msizepc{\pc}}, \stackvar{x_1}, \stackvar{x_2}, \stackvar{x_3}}}
\end{mathpar}
\begin{mathpar}
    \infer{
        \precontract(\pc) =  (\CALLDATACOPY(\stackvar{y},\stackvar{x_1}, \stackvar{x_2}, \stackvar{x_3}), \pc', \pre) \\
        \pos_\memo = \fun{\cfgstate}{\accesscfgstate{\cfgstate}{\stackvar{x_1}}}\\
        \pos_\data = \fun{\cfgstate}{\accesscfgstate{\cfgstate}{\stackvar{x_2}}}\\
        \siz = \fun{\cfgstate}{\accesscfgstate{\cfgstate}{\stackvar{x_3}}}\\
        \aw = \fun{\cfgstate}{\memext{\accesscfgstate{\cfgstate}{\locenvvar{\msizepc{\pc}}}}{\pos_\memo(\cfgstate)}{\siz(\cfgstate)}} \\
        f = \fun{\cfgstate}{\updatestate{\cfgstate}{\envvar{\msizepc{\pc'}}}{\aw(\cfgstate)}} 
    }
    {\edgestep{\precontract, \cd}{\cfgnode{\pc}{2}}{\cfgnode{\pc'}{0}}} \\
    {\Defs{\locenvvar{\msizepc{\pc'}}} \\
        \Uses{\locenvvar{\msizepc{\pc}}, \stackvar{x_1}, \stackvar{x_3}}}
\end{mathpar}

The rules for copying a fraction of the code to memory ($\CODECOPY$) are similar: 

\begin{mathpar}
    \infer{
        \precontract(\pc) =  (\CODECOPY(\stackvar{y},\stackvar{x_1}, \stackvar{x_2}, \stackvar{x_3}), \pc', \pre) \\
        \pos_\memo = \fun{\cfgstate}{\accesscfgstate{\cfgstate}{\stackvar{x_1}}}\\
        \pos_\code= \fun{\cfgstate}{\accesscfgstate{\cfgstate}{\stackvar{x_2}}}\\
        \siz = \fun{\cfgstate}{\accesscfgstate{\cfgstate}{\stackvar{x_3}}}\\
        \datav = \fun{\cfgstate}{\accesscfgstate{\cfgstate}{\locenvvar{\activecode}}} \\ 
        k = \fun{\cfgstate}{\cond{(\size{\datav(\cfgstate)} - \pos_\code(\cfgstate) < 0}{0}{\mini{\size{\datav(\cfgstate)}- \pos_\data(\cfgstate)}{\siz(\cfgstate)}}}\\
        d' = \fun{\cfgstate}{\arraypos{\datav(\cfgstate)}{\pos_\code(\cfgstate), \pos_\code(\cfgstate) + k(\cfgstate) - 1}}\\
        d = \fun{\cfgstate}{\concat{d'(\cfgstate)}{\STOP^{\siz(\cfgstate) - k(\cfgstate)}}}\\
        f = \fun{\cfgstate}{\updatevarset{\cfgstate}{\memvarabs{i}{\pc'}}{d(\cfgstate)[i]}{i}{[\pos_\memo(\cfgstate), \pos_\memo(\cfgstate) + \siz(\cfgstate) -1]}}
    }
    {\edgestep{\precontract, \cd}{\cfgnode{\pc}{0}}{\cfgnode{\pc}{1}}} \\
    {\Def{\memvarabs{X}{\pc'}} \\
    \Uses{\locenvvar{\activecode}, \stackvar{x_1}, \stackvar{x_2}, \stackvar{x_3}}}
\end{mathpar}
\begin{mathpar}
    \infer{
        \precontract(\pc) =  (\CODECOPY(\stackvar{y},\stackvar{x_1}, \stackvar{x_2}, \stackvar{x_3}), \pc', \pre) \\
        \pos_\memo = \fun{\cfgstate}{\accesscfgstate{\cfgstate}{\stackvar{x_1}}}\\
        \pos_\code = \fun{\cfgstate}{\accesscfgstate{\cfgstate}{\stackvar{x_2}}}\\
        \siz = \fun{\cfgstate}{\accesscfgstate{\cfgstate}{\stackvar{x_3}}}\\
        \aw = \fun{\cfgstate}{\memext{\accesscfgstate{\cfgstate}{\locenvvar{\msizepc{\pc}}}}{\pos_\memo(\cfgstate)}{\siz(\cfgstate)}} \\
        \costs = \fun{\cfgstate}{\costmem{\accesscfgstate{\cfgstate}{\locenvvar{\msizepc{\pc}}}}{\aw(\cfgstate)}+3 + 3 \cdot \left \lceil \frac{\siz(\cfgstate)}{32} \right \rceil}\\     
        f = \fun{\cfgstate}{\updatestate{\cfgstate}{\envvar{\gaspc{\pc'}}}{\cfgstate[\envvar{\gaspc{\pc}}] - \costs(\cfgstate)}} 
    }
    {\edgestep{\precontract, \cd}{\cfgnode{\pc}{1}}{\cfgnode{\pc'}{2}}} \\
    {\Defs{\locenvvar{\gaspc{\pc'}}} \\
        \Uses{\locenvvar{\gaspc{\pc}}, \locenvvar{\msizepc{\pc}}, \stackvar{x_1}, \stackvar{x_2}, \stackvar{x_3}}}
\end{mathpar}
\begin{mathpar}
    \infer{
        \precontract(\pc) =  (\CODECOPY(\stackvar{y},\stackvar{x_1}, \stackvar{x_2}, \stackvar{x_3}), \pc', \pre) \\
        \pos_\memo = \fun{\cfgstate}{\accesscfgstate{\cfgstate}{\stackvar{x_1}}}\\
        \pos_\code = \fun{\cfgstate}{\accesscfgstate{\cfgstate}{\stackvar{x_2}}}\\
        \siz = \fun{\cfgstate}{\accesscfgstate{\cfgstate}{\stackvar{x_3}}}\\
        \aw = \fun{\cfgstate}{\memext{\accesscfgstate{\cfgstate}{\locenvvar{\msizepc{\pc}}}}{\pos_\memo(\cfgstate)}{\siz(\cfgstate)}} \\
        f = \fun{\cfgstate}{\updatestate{\cfgstate}{\envvar{\msizepc{\pc'}}}{\aw(\cfgstate)}} 
    }
    {\edgestep{\precontract, \cd}{\cfgnode{\pc}{2}}{\cfgnode{\pc'}{0}}} \\
    {\Defs{\locenvvar{\msizepc{\pc'}}} \\
        \Uses{\locenvvar{\msizepc{\pc}}, \stackvar{x_1}, \stackvar{x_3}}}
\end{mathpar}

Note that the rules for $\CALLDATACOPY$ and $\CODECOPY$ could easily be refined to account for preprocessing information.

\paragraph{Accessing the transaction environment}

\begin{mathpar}
    \infer{
        \precontract(\pc) =  (\ORIGIN(\stackvar{y}), \pc', \pre) \\
        r = \fun{\cfgstate}{\accesscfgstate{\cfgstate}{\globenvvar{\origin}}} \\
        f = \fun{\cfgstate}{\updatestate{\cfgstate}{\stackvar{y}}{r(\cfgstate)}}
    }
    {\edgestep{\precontract, \cd}{\cfgnode{\pc}{0}}{\cfgnode{\pc}{1}}} \\
    {\Defs{\stackvar{y}} \\
        \Uses{\globenvvar{\origin}}}
\end{mathpar}

Most instructions for accessing the transaction environment have the same gas cost. For this reason we summarize the rule for gas substraction for them.

\begin{mathpar}
    \infer{
        \precontract(\pc) =  (\instruction(\stackvar{y}), \pc', \pre) \\   
        \instruction \in \{ \ORIGIN, \GASPRICE, \COINBASE, \TIMESTAMP, \NUMBER, \GASLIMIT, \DIFFICULTY \} \\ 
        f = \fun{\cfgstate}{\updatestate{\cfgstate}{\envvar{\gaspc{\pc'}}}{\cfgstate[\envvar{\gaspc{\pc}}] - 2}} 
    }
    {\edgestep{\precontract, \cd}{\cfgnode{\pc}{1}}{\cfgnode{\pc'}{0}}} \\
    {\Defs{\locenvvar{\gaspc{\pc'}}} \\
        \Uses{\locenvvar{\gaspc{\pc}}}}
\end{mathpar}

\begin{mathpar}
    \infer{
        \precontract(\pc) =  (\GASPRICE(\stackvar{y}), \pc', \pre) \\
        r = \fun{\cfgstate}{\accesscfgstate{\cfgstate}{\globenvvar{\gasprize}}} \\
        f = \fun{\cfgstate}{\updatestate{\cfgstate}{\stackvar{y}}{r(\cfgstate)}}
    }
    {\edgestep{\precontract, \cd}{\cfgnode{\pc}{0}}{\cfgnode{\pc}{1}}} \\
    {\Defs{\stackvar{y}} \\
        \Uses{\globenvvar{\gasprize}}}
\end{mathpar}

\begin{mathpar}
    \infer{
        \precontract(\pc) =  (\COINBASE(\stackvar{y}), \pc', \pre) \\
        r = \fun{\cfgstate}{\accesscfgstate{\cfgstate}{\globenvvar{\beneficiary}}} \\
        f = \fun{\cfgstate}{\updatestate{\cfgstate}{\stackvar{y}}{r(\cfgstate)}}
    }
    {\edgestep{\precontract, \cd}{\cfgnode{\pc}{0}}{\cfgnode{\pc}{1}}} \\
    {\Defs{\stackvar{y}} \\
        \Uses{\globenvvar{\beneficiary}}}
\end{mathpar}

\begin{mathpar}
    \infer{
        \precontract(\pc) =  (\TIMESTAMP(\stackvar{y}), \pc', \pre) \\
        r = \fun{\cfgstate}{\accesscfgstate{\cfgstate}{\globenvvar{\timestamp}}} \\
        f = \fun{\cfgstate}{\updatestate{\cfgstate}{\stackvar{y}}{r(\cfgstate)}}
    }
    {\edgestep{\precontract, \cd}{\cfgnode{\pc}{0}}{\cfgnode{\pc}{1}}} \\
    {\Defs{\stackvar{y}} \\
        \Uses{\globenvvar{\timestamp}}}
\end{mathpar}

\begin{mathpar}
    \infer{
        \precontract(\pc) =  (\NUMBER(\stackvar{y}), \pc', \pre) \\
        r = \fun{\cfgstate}{\accesscfgstate{\cfgstate}{\globenvvar{\blocknumber}}} \\
        f = \fun{\cfgstate}{\updatestate{\cfgstate}{\stackvar{y}}{r(\cfgstate)}}
    }
    {\edgestep{\precontract, \cd}{\cfgnode{\pc}{0}}{\cfgnode{\pc}{1}}} \\
    {\Defs{\stackvar{y}} \\
        \Uses{\globenvvar{\blocknumber}}}
\end{mathpar}

\begin{mathpar}
    \infer{
        \precontract(\pc) =  (\GASLIMIT(\stackvar{y}), \pc', \pre) \\
        r = \fun{\cfgstate}{\accesscfgstate{\cfgstate}{\globenvvar{\gaslimit}}} \\
        f = \fun{\cfgstate}{\updatestate{\cfgstate}{\stackvar{y}}{r(\cfgstate)}}
    }
    {\edgestep{\precontract, \cd}{\cfgnode{\pc}{0}}{\cfgnode{\pc}{1}}} \\
    {\Defs{\stackvar{y}} \\
        \Uses{\globenvvar{\gaslimit}}}
\end{mathpar}

\begin{mathpar}
    \infer{
        \precontract(\pc) =  (\DIFFICULTY(\stackvar{y}), \pc', \pre) \\
        r = \fun{\cfgstate}{\accesscfgstate{\cfgstate}{\globenvvar{\difficulty}}} \\
        f = \fun{\cfgstate}{\updatestate{\cfgstate}{\stackvar{y}}{r(\cfgstate)}}
    }
    {\edgestep{\precontract, \cd}{\cfgnode{\pc}{0}}{\cfgnode{\pc}{1}}} \\
    {\Defs{\stackvar{y}} \\
        \Uses{\globenvvar{\difficulty}}}
\end{mathpar}

\begin{mathpar}
    \infer{
        \precontract(\pc) =  (\BLOCKHASH(\stackvar{y}, \stackvar{x}), \pc', \pre) \\
        n = \fun{\cfgstate}{\accesscfgstate{\cfgstate}{\stackvar{x}}} \\
        r = \fun{\cfgstate}{\funP{\accesscfgstate{\cfgstate}{\globenvvar{\parent}}}{n(\cfgstate)}{0}} \\
        f = \fun{\cfgstate}{\updatestate{\cfgstate}{\stackvar{y}}{r(\cfgstate)}}
    }
    {\edgestep{\precontract, \cd}{\cfgnode{\pc}{0}}{\cfgnode{\pc}{1}}} \\
    {\Defs{\stackvar{y}} \\
        \Uses{\globenvvar{\parent}}}
\end{mathpar}

where the function $\funP{h}{n}{a}$ tries to access the block with number $n$ by traversing the block chain starting from $h$ until the counter $a$ reaches the limit of $256$ or the genesis block is reached. 
\begin{align*}
\funP{h}{n}{a} \define 
\begin{cases}
0 & n > \access{h}{\blocknumberc} \lor a = 256 \lor h = 0 \\
h & n = \access{h}{\blocknumberc} \\
\funP{\access{h}{\parentc}}{n}{a + 1} & \text{otherwise}
\end{cases}
\end{align*}

\begin{mathpar}
    \infer{
        \precontract(\pc) =  (\BLOCKHASH(\stackvar{y}), \pc', \pre) \\   
        f = \fun{\cfgstate}{\updatestate{\cfgstate}{\envvar{\gaspc{\pc'}}}{\cfgstate[\envvar{\gaspc{\pc}}] - 20}} 
    }
    {\edgestep{\precontract, \cd}{\cfgnode{\pc}{1}}{\cfgnode{\pc'}{0}}} \\
    {\Defs{\locenvvar{\gaspc{\pc'}}} \\
        \Uses{\locenvvar{\gaspc{\pc}}}}
\end{mathpar}

\paragraph{Accessing the global state}

\begin{mathpar}
    \infer{
        \precontract(\pc) =  (\BALANCE(\stackvar{y}, \stackvar{x}), \pc', \pre) \\
        a = \fun{\cfgstate}{\accesscfgstate{\cfgstate}{\stackvar{x}}} \\
        r = \fun{\cfgstate}{
            \cond{
                (\access{\accesscfgstate{\cfgstate}{\globenvvar{\cfgexternalpc{\pc}}}}{\gstate}(a(\cfgstate) \mod{} 2^{160}) = \accountstate{\accountnoncev}{\accountbalancev}{\accountstorv}{\accountcodev})}{\accountbalancev}{0}}\\
        f = \fun{\cfgstate}{\updatestate{\cfgstate}{\stackvar{y}}{r(\cfgstate)}}
    }
    {\edgestep{\precontract, \cd}{\cfgnode{\pc}{0}}{\cfgnode{\pc}{1}}} \\
    {\Defs{\stackvar{y}} \\
        \Uses{\globenvvar{\cfgexternalpc{\pc}}}}
\end{mathpar}

\begin{mathpar}
    \infer{
        \precontract(\pc) =  (\BALANCE(\stackvar{y}, \stackvar{x}), \pc', \pre) \\
        f = \fun{\cfgstate}{\updatestate{\cfgstate}{\envvar{\gaspc{\pc'}}}{\cfgstate[\envvar{\gaspc{\pc}}] - 400}} 
    }
    {\edgestep{\precontract, \cd}{\cfgnode{\pc}{1}}{\cfgnode{\pc'}{0}}} \\
    {\Defs{\locenvvar{\gaspc{\pc'}}} \\
        \Uses{\locenvvar{\gaspc{\pc}}}}
\end{mathpar}

\begin{mathpar}
    \infer{
        \precontract(\pc) =  (\EXTCODESIZE(\stackvar{y}, \stackvar{x}), \pc', \pre) \\
        a = \fun{\cfgstate}{\accesscfgstate{\cfgstate}{\stackvar{x}}} \\
        r = \fun{\cfgstate}{\size{
                (\access{\left(\access{\accesscfgstate{\cfgstate}{\globenvvar{\cfgexternalpc{\pc}}}}{\gstate}(a(\cfgstate) \mod{} 2^{160})\right)}{\accountcode}}}\\
        f = \fun{\cfgstate}{\updatestate{\cfgstate}{\stackvar{y}}{r(\cfgstate)}}
    }
    {\edgestep{\precontract, \cd}{\cfgnode{\pc}{0}}{\cfgnode{\pc}{1}}} \\
    {\Defs{\stackvar{y}} \\
        \Uses{\globenvvar{\cfgexternalpc{\pc}}}}
\end{mathpar}

\begin{mathpar}
    \infer{
        \precontract(\pc) =  (\EXTCODESIZE(\stackvar{y}, \stackvar{x}), \pc', \pre) \\        
        f = \fun{\cfgstate}{\updatestate{\cfgstate}{\envvar{\gaspc{\pc'}}}{\cfgstate[\envvar{\gaspc{\pc}}] - 700}} 
    }
    {\edgestep{\precontract, \cd}{\cfgnode{\pc}{1}}{\cfgnode{\pc'}{0}}} \\
    {\Defs{\locenvvar{\gaspc{\pc'}}} \\
        \Uses{\locenvvar{\gaspc{\pc}}}}
\end{mathpar}

\begin{mathpar}
    \infer{
        \precontract(\pc) =  (\EXTCODECOPY(\stackvar{y},\stackvar{x_1}, \stackvar{x_2}, \stackvar{x_3}, \stackvar{x_4}), \pc', \pre) \\
        a = \fun{\cfgstate}{\accesscfgstate{\cfgstate}{\stackvar{x_1}}} \\
        \pos_\memo = \fun{\cfgstate}{\accesscfgstate{\cfgstate}{\stackvar{x_2}}}\\
        \pos_\code= \fun{\cfgstate}{\accesscfgstate{\cfgstate}{\stackvar{x_3}}}\\
        \siz = \fun{\cfgstate}{\accesscfgstate{\cfgstate}{\stackvar{x_4}}}\\
        \datav = \fun{\cfgstate}{\access{\left(\access{\accesscfgstate{\cfgstate}{\globenvvar{\cfgexternalpc{\pc}}}}{\gstate}(a \mod{} 2^{160}) \right)}{\accountcode}} \\ 
        k = \fun{\cfgstate}{\cond{(\size{\datav(\cfgstate)} - \pos_\code(\cfgstate) < 0}{0}{\mini{\size{\datav(\cfgstate)}- \pos_\data(\cfgstate)}{\siz(\cfgstate)}}}\\
        d' = \fun{\cfgstate}{\arraypos{\datav(\cfgstate)}{\pos_\code(\cfgstate), \pos_\code(\cfgstate) + k(\cfgstate) - 1}}\\
        d = \fun{\cfgstate}{\concat{d'(\cfgstate)}{\STOP^{\siz(\cfgstate) - k(\cfgstate)}}}\\
        f = \fun{\cfgstate}{\updatevarset{\cfgstate}{\memvarabs{i}{\pc'}}{d(\cfgstate)[i]}{i}{[\pos_\memo(\cfgstate), \pos_\memo(\cfgstate) + \siz(\cfgstate) -1]}}
    }
    {\edgestep{\precontract, \cd}{\cfgnode{\pc}{0}}{\cfgnode{\pc}{1}}} \\
    {\Def{\memvarabs{X}{\pc'}} \\
    \Uses{\globenvvar{\cfgexternalpc{\pc}}, \stackvar{x_1}, \stackvar{x_2}, \stackvar{x_3}, \stackvar{x_4}}}
\end{mathpar}
\begin{mathpar}
    \infer{
        \precontract(\pc) =  (\EXTCODECOPY(\stackvar{y},\stackvar{x_1}, \stackvar{x_2}, \stackvar{x_3}, \stackvar{x_4}), \pc', \pre) \\
        a = \fun{\cfgstate}{\accesscfgstate{\cfgstate}{\stackvar{x_1}}} \\
        \pos_\memo = \fun{\cfgstate}{\accesscfgstate{\cfgstate}{\stackvar{x_2}}}\\
        \pos_\code= \fun{\cfgstate}{\accesscfgstate{\cfgstate}{\stackvar{x_3}}}\\
        \siz = \fun{\cfgstate}{\accesscfgstate{\cfgstate}{\stackvar{x_4}}}\\
        \aw = \fun{\cfgstate}{\memext{\accesscfgstate{\cfgstate}{\locenvvar{\msizepc{\pc}}}}{\pos_\memo(\cfgstate)}{\siz(\cfgstate)}} \\     
        \costs = \fun{\cfgstate}{\costmem{\accesscfgstate{\cfgstate}{\locenvvar{\msizepc{\pc}}}}{\aw(\cfgstate)}+700 + 3 \cdot \left \lceil \frac{\siz(\cfgstate)}{32} \right \rceil}\\ 
        f = \fun{\cfgstate}{\updatestate{\cfgstate}{\envvar{\gaspc{\pc'}}}{\cfgstate[\envvar{\gaspc{\pc}}] - \costs(\cfgstate)}} 
    }
    {\edgestep{\precontract, \cd}{\cfgnode{\pc}{1}}{\cfgnode{\pc'}{2}}} \\
    {\Defs{\locenvvar{\gaspc{\pc'}}} \\
        \Uses{\locenvvar{\gaspc{\pc}}, \locenvvar{\msizepc{\pc}}, \stackvar{x_2}, \stackvar{x_3}, \stackvar{x_4}}}
\end{mathpar}
\begin{mathpar}
    \infer{
        \precontract(\pc) =  (\EXTCODECOPY(\stackvar{y},\stackvar{x_1}, \stackvar{x_2}, \stackvar{x_3}, \stackvar{x_4}), \pc', \pre) \\
        a = \fun{\cfgstate}{\accesscfgstate{\cfgstate}{\stackvar{x_1}}} \\
        \pos_\memo = \fun{\cfgstate}{\accesscfgstate{\cfgstate}{\stackvar{x_2}}}\\
        \pos_\code= \fun{\cfgstate}{\accesscfgstate{\cfgstate}{\stackvar{x_3}}}\\
        \siz = \fun{\cfgstate}{\accesscfgstate{\cfgstate}{\stackvar{x_4}}}\\
        \aw = \fun{\cfgstate}{\memext{\accesscfgstate{\cfgstate}{\locenvvar{\msizepc{\pc}}}}{\pos_\memo(\cfgstate)}{\siz(\cfgstate)}} \\
        f = \fun{\cfgstate}{\updatestate{\cfgstate}{\envvar{\msizepc{\pc'}}}{\aw(\cfgstate)}} 
    }
    {\edgestep{\precontract, \cd}{\cfgnode{\pc}{2}}{\cfgnode{\pc'}{0}}} \\
    {\Defs{\locenvvar{\msizepc{\pc'}}} \\
        \Uses{\locenvvar{\msizepc{\pc}}, \stackvar{x_2}, \stackvar{x_4}}}
\end{mathpar}

\paragraph{Stack Operations}
Since we assume the code to be in SSA form, all stack operations have been replaced by assignments.

\begin{mathpar}
    \infer{
        \precontract(\pc) =  (\ASSIGN(\stackvar{y},\stackvar{x}), \pc', \pre) \\
        r = \fun{\cfgstate}{\accesscfgstate{\cfgstate}{\stackvar{x}}} \\
        f = \fun{\cfgstate}{\updatestate{\cfgstate}{\stackvar{y}}{r(\cfgstate)}}
    }
    {\edgestep{\precontract, \cd}{\cfgnode{\pc}{0}}{\cfgnode{\pc}{1}}} \\
    {\Defs{\stackvar{y}} \\
        \Uses{\stackvar{x}}}
\end{mathpar}
\begin{mathpar}
    \infer{
        \precontract(\pc) =  (\ASSIGN(\stackvar{y},\stackvar{x}), \pc', \pre) \\        
        f = \fun{\cfgstate}{\updatestate{\cfgstate}{\envvar{\gaspc{\pc'}}}{\cfgstate[\envvar{\gaspc{\pc}}] - 3}} 
    }
    {\edgestep{\precontract, \cd}{\cfgnode{\pc}{1}}{\cfgnode{\pc'}{0}}} \\
    {\Defs{\locenvvar{\gaspc{\pc'}}} \\
        \Uses{\locenvvar{\gaspc{\pc}}}}
\end{mathpar}

\paragraph{Jump Instructions}
For the case of jump instructions by assumption, the jump destination has been precomputed. 
Since the JUMP instruction has no other effect than updating the program counter, we only need to add a rule for updating the gas and stepping to the next program counter: 

\begin{mathpar}
    \infer{
        \precontract(\pc) =  (\JUMP(\stackvar{y},\stackvar{x}), \pc', \pre) \\ 
        \pre[0] = \optional{\pc'} \\       
        f = \fun{\cfgstate}{\updatestate{\cfgstate}{\envvar{\gaspc{\pc'}}}{\cfgstate[\envvar{\gaspc{\pc}}] - 8}} 
    }
    {\edgestep{\precontract, \cd}{\cfgnode{\pc}{0}}{\cfgnode{\pc'}{0}}} \\
    {\Defs{\locenvvar{\gaspc{\pc'}}} \\
        \Uses{\locenvvar{\gaspc{\pc}}}}
\end{mathpar}

For the conditional jump instruction, in addition to deducing the gas, the next program counter needs to be decided based on the condition.

We first give the rules for updating the gas value:

\begin{mathpar}
    \infer{
        \precontract(\pc) = (\JUMPI(\stackvar{y}, \stackvar{x_1}, \stackvar{x_2}), \pc', \pre) \\
        \pre[0] = \optional{\pc''} \\     
        f = \fun{\cfgstate}{\updatestate{\cfgstate}{\envvar{\gaspc{\pc'}}}{\cfgstate[\envvar{\gaspc{\pc}}] - 10}} 
    }
    {\edgestep{\precontract, \cd}{\cfgnode{\pc}{0}}{\cfgnode{\pc}{1}}} \\
    {\Defs{\locenvvar{\gaspc{\pc'}}} \\
        \Uses{\locenvvar{\gaspc{\pc}}}}
\end{mathpar}

Finally, we give the rules for branching: 
\begin{mathpar}
    \infer{
        \precontract(\pc) = (\JUMPI(\stackvar{y}, \stackvar{x_1}, \stackvar{x_2}), \pc', \pre) \\
        \pre[0] = \optional{\pc''} \\     
        Q = \fun{\cfgstate}{\accesscfgstate{\cfgstate}{\stackvar{x_2}} = 0}
    }
    {\predstep{\precontract, \cd}{\cfgnode{\pc}{1}}{\cfgnode{\pc'}{0}}} \\
    {\Def{\emptyset} \\
        \Uses{\stackvar{x_2}}}
\end{mathpar}

\begin{mathpar}
    \infer{
        \precontract(\pc) = (\JUMPI(\stackvar{y}, \stackvar{x_1}, \stackvar{x_2}), \pc', \pre) \\
        \pre[0] = \optional{\pc''} \\     
        Q = \fun{\cfgstate}{\accesscfgstate{\cfgstate}{\stackvar{x_2}} \neq 0}
    }
    {\predstep{\precontract, \cd}{\cfgnode{\pc}{1}}{\cfgnode{\pc''}{0}}} \\
    {\Def{\emptyset} \\
        \Uses{\stackvar{x_2}}}
\end{mathpar}

\paragraph{Memory Instructions}

	\begin{mathpar}
		\infer{
			\precontract(\pc) = (\MLOAD(\stackvar{y}, \stackvar{x})), \pc', \pre) \\
            \pre[1] = \none \\
            f = \fun{\cfgstate}{\updatestate{\cfgstate}{\stackvar{y}}{\load~\cfgstate~\memvar{\accesscfgstate{\cfgstate}{\stackvar{x}}}{\pc}}} 
		}
		{\edgestep{\transenv}{\cfgnode{\pc}{0}}{\cfgnode{\pc}{1}}}
		\\
		{\Defs{\stackvar{y}} \\
			\Use{\memvarabs{X}{\pc} \cup \memvarconc{X}{\pc}}}
    \end{mathpar}
    \begin{mathpar}
        \infer{
            \precontract(\pc) = (\MLOAD(\stackvar{y}, \stackvar{x})), \pc', \pre) \\
            \pre[1] = \optional{\loc} \\
            f = \fun{\cfgstate}{\updatestate{\cfgstate}{\stackvar{y}}{\load~\cfgstate~\memvar{\loc}{\pc}}} 
        }
        {\edgestep{\transenv}{\cfgnode{\pc}{0}}{\cfgnode{\pc}{1}}}
        \\
        {\Defs{\stackvar{y}} \\
            \Use{\{\memvarconc{\loc}{\pc}, \memvarabs{\loc}{\pc}}\}}
    \end{mathpar}
\begin{mathpar}
    \infer{
        \precontract(\pc) = (\MLOAD(\stackvar{y}, \stackvar{x})), \pc', \pre) \\
        a = \fun{\cfgstate}{\accesscfgstate{\cfgstate}{\stackvar{x}}} \\
        \aw = \fun{\cfgstate}{\memext{\accesscfgstate{\cfgstate}{\locenvvar{\msizepc{\pc}}}}{a(\cfgstate)}{32}} \\
        \costs = \fun{\cfgstate}{\costmem{\accesscfgstate{\cfgstate}{\locenvvar{\msizepc{\pc}}}}{\aw(\cfgstate)} + 3} \\
        f = \fun{\cfgstate}{\updatestate{\cfgstate}{\locenvvar{\gaspc{\pc'}}}{\accesscfgstate{\cfgstate}{\gaspc{\pc'}}- \costs(\cfgstate)}}    
        }
    {\edgestep{\precontract, \cd}{\cfgnode{\pc}{1}}{\cfgnode{\pc}{2}}} \\
    {\Defs{\locenvvar{\gaspc{\pc'}}} \\
        \Uses{\locenvvar{\gaspc{\pc}}, \locenvvar{\msizepc{\pc}}, \stackvar{x}}}
\end{mathpar}
\begin{mathpar}
    \infer{
        \precontract(\pc) = (\MLOAD(\stackvar{y}, \stackvar{x})), \pc', \pre) \\
        a = \fun{\cfgstate}{\accesscfgstate{\cfgstate}{\stackvar{x}}} \\
        \aw = \fun{\cfgstate}{\memext{\accesscfgstate{\cfgstate}{\locenvvar{\msizepc{\pc}}}}{a(\cfgstate)}{32}} \\
        f = \fun{\cfgstate}{\updatestate{\cfgstate}{\locenvvar{\msizepc{\pc'}}}{\aw(\cfgstate)}} 
    }
    {\edgestep{\precontract, \cd}{\cfgnode{\pc}{2}}{\cfgnode{\pc'}{0}}} \\
    {\Defs{\locenvvar{\msizepc{\pc'}}} \\
        \Uses{\locenvvar{\msizepc{\pc}}, \stackvar{x}}}
\end{mathpar}
Note that we do not distinguish between the $\MLOAD$ and the $\MLOAD$byte instruction, because we assume the preprocessing to already check for consistent memory accesses.

    \begin{mathpar}
			\infer{
				\precontract(\pc) = (\MSTORE(\stackvar{x_1}, \stackvar{x_2}), \pc', \pre) \\
                \pre[1] = \None \\
                f = \fun{\cfgstate}{\updatestate{\cfgstate}{\memvarabs{\left (\accesscfgstate{\cfgstate}{\stackvar{x_1}} \right)}{\pc'}}{\accesscfgstate{\cfgstate}{\stackvar{x_2}}}}
			}
			{\edgestep{\transenv}{\cfgnode{\pc}{\startvalue +2}}{\cfgnode{\pc}{2}}}
			\\
			{\Def{\memvarabs{X}{\pc'}} \\
				\Use{\memvarabs{X}{\pc} \cup \{\stackvar{x_1}, \stackvar{x_2}\}}}
    \end{mathpar}
\begin{mathpar}
		\infer{
			\precontract(\pc) = (\MSTORE(\stackvar{x_1}, \stackvar{x_2}), \pc', \pre) \\
            \pre[0] = \optional{\loc} \\
            f = \fun{\cfgstate}{\updatestate{\cfgstate}{\memvarconc{\loc}{\pc'}}{\accesscfgstate{\cfgstate}{\stackvar{x_2}}}}
		}
		{\edgestep{\transenv}{\cfgnode{\pc}{\startvalue +2 }}{\cfgnode{\pc}{1}}}
		\\
		{\Defs{\memvarconc{\loc}{\pc'}} \\
			\Use{\{ \stackvar{x_2} \}}}		
\end{mathpar}
	\begin{mathpar}
		\infer{
			\precontract(\pc) = (\MSTORE(\stackvar{x_1}, \stackvar{x_2}), \pc', \pre) \\
            \pre[0] = \optional{\loc} \\
            f = \fun{\cfgstate}{\updatestate{\cfgstate}{\memvarabs{\loc}{\pc'}}{\bot}}
		}
		{\edgestep{\transenv}{\cfgnode{\pc}{1}}{\cfgnode{\pc}{2}}}
		\\
		{\Defs{\memvarabs{\loc}{\pc'}} \\
			\Use{\emptyset}}
	\end{mathpar}
    \begin{mathpar}
        \infer{
			\precontract(\pc) = (\MSTORE(\stackvar{x_1}, \stackvar{x_2}), \pc', \pre) \\
            a = \fun{\cfgstate}{\accesscfgstate{\cfgstate}{\stackvar{x_1}}} \\
            \aw = \fun{\cfgstate}{\memext{\accesscfgstate{\cfgstate}{\locenvvar{\msizepc{\pc}}}}{a(\cfgstate)}{32}} \\
            \costs = \fun{\cfgstate}{\costmem{\accesscfgstate{\cfgstate}{\locenvvar{\msizepc{\pc}}}}{\aw(\cfgstate)} + 3} \\
            f = \fun{\cfgstate}{\updatestate{\cfgstate}{\locenvvar{\gaspc{\pc'}}}{\accesscfgstate{\cfgstate}{\gaspc{\pc'}}- \costs(\cfgstate)}}          
            }
        {\edgestep{\precontract, \cd}{\cfgnode{\pc}{2}}{\cfgnode{\pc}{\startvalue + 5}}} \\
        {\Defs{\locenvvar{\gaspc{\pc'}}} \\
            \Uses{\locenvvar{\gaspc{\pc}}, \locenvvar{\msizepc{\pc}}, \stackvar{x_1}}}
    \end{mathpar}
    \begin{mathpar}
        \infer{
			\precontract(\pc) = (\MSTORE(\stackvar{x_1}, \stackvar{x_2}), \pc', \pre) \\
            a = \fun{\cfgstate}{\accesscfgstate{\cfgstate}{\stackvar{x_1}}} \\
            \aw = \fun{\cfgstate}{\memext{\accesscfgstate{\cfgstate}{\locenvvar{\msizepc{\pc}}}}{a(\cfgstate)}{32}} \\
            f = \fun{\cfgstate}{\updatestate{\cfgstate}{\locenvvar{\msizepc{\pc'}}}{\aw(\cfgstate)}} 
        }
        {\edgestep{\precontract, \cd}{\cfgnode{\pc}{\startvalue + 5}}{\cfgnode{\pc'}{0}}} \\
        {\Defs{\locenvvar{\msizepc{\pc'}}} \\
            \Uses{\locenvvar{\msizepc{\pc}}, \stackvar{x_1}}}
    \end{mathpar}

\paragraph{Storage Instructions}
The storage instructions closely resemble the instructions for memory access:

\begin{mathpar}
    \infer{
        \precontract(\pc) = (\SLOAD(\stackvar{y}, \stackvar{x})), \pc', \pre) \\
        \pre[1] = \none \\
        f = \fun{\cfgstate}{\updatestate{\cfgstate}{\stackvar{y}}{\load~\cfgstate~\storvar{\accesscfgstate{\cfgstate}{\stackvar{x}}}{\pc}}} 
    }
    {\edgestep{\transenv}{\cfgnode{\pc}{0}}{\cfgnode{\pc}{1}}}
    \\
    {\Defs{\stackvar{y}} \\
        \Use{\storvarabs{X}{\pc} \cup \storvarconc{X}{\pc}}}
\end{mathpar}
\begin{mathpar}
    \infer{
        \precontract(\pc) = (\SLOAD(\stackvar{y}, \stackvar{x})), \pc', \pre) \\
        \pre[1] = \optional{\loc} \\
        f = \fun{\cfgstate}{\updatestate{\cfgstate}{\stackvar{y}}{\load~\cfgstate~\storvar{\loc}{\pc}}} 
    }
    {\edgestep{\transenv}{\cfgnode{\pc}{0}}{\cfgnode{\pc}{1}}}
    \\
    {\Defs{\stackvar{y}} \\
        \Use{\{\storvarconc{\loc}{\pc}, \storvarabs{\loc}{\pc}}\}}
\end{mathpar}
\begin{mathpar}
\infer{
    \precontract(\pc) = (\SLOAD(\stackvar{y}, \stackvar{x})), \pc', \pre) \\
    a = \fun{\cfgstate}{\accesscfgstate{\cfgstate}{\stackvar{x}}} \\
    f = \fun{\cfgstate}{\updatestate{\cfgstate}{\locenvvar{\gaspc{\pc'}}}{\accesscfgstate{\cfgstate}{\gaspc{\pc'}}- 200}} 
}
{\edgestep{\precontract, \cd}{\cfgnode{\pc}{1}}{\cfgnode{\pc'}{0}}} \\
{\Defs{\locenvvar{\gaspc{\pc'}}} \\
    \Uses{\locenvvar{\gaspc{\pc}},}}
\end{mathpar}

\begin{mathpar}
    \infer{
        \precontract(\pc) = (\SSTORE(\stackvar{x_1}, \stackvar{x_2}), \pc', \pre) \\
        \pre[1] = \None \\
        f = \fun{\cfgstate}{\updatestate{\cfgstate}{\storvarabs{\left (\accesscfgstate{\cfgstate}{\stackvar{x_1}} \right)}{\pc'}}{\accesscfgstate{\cfgstate}{\stackvar{x_2}}}}
    }
    {\edgestep{\transenv}{\cfgnode{\pc}{\startvalue +2}}{\cfgnode{\pc}{2}}}
    \\
    {\Def{\storvarabs{X}{\pc'}} \\
        \Use{\storvarabs{X}{\pc} \cup \{\stackvar{x_1}, \stackvar{x_2}\}}}
\end{mathpar}
\begin{mathpar}
\infer{
    \precontract(\pc) = (\SSTORE(\stackvar{x_1}, \stackvar{x_2}), \pc', \pre) \\
    \pre[0] = \optional{\loc} \\
    f = \fun{\cfgstate}{\updatestate{\cfgstate}{\storvarconc{\loc}{\pc'}}{\accesscfgstate{\cfgstate}{\stackvar{x_2}}}}
}
{\edgestep{\transenv}{\cfgnode{\pc}{\startvalue +2 }}{\cfgnode{\pc}{1}}}
\\
{\Defs{\storvarconc{\loc}{\pc'}} \\
    \Use{\{ \stackvar{x_2} \}}}		
\end{mathpar}
\begin{mathpar}
\infer{
    \precontract(\pc) = (\SSTORE(\stackvar{x_1}, \stackvar{x_2}), \pc', \pre) \\
    \pre[0] = \optional{\loc} \\
    f = \fun{\cfgstate}{\updatestate{\cfgstate}{\storvarabs{\loc}{\pc'}}{\bot}}
}
{\edgestep{\transenv}{\cfgnode{\pc}{1}}{\cfgnode{\pc}{2}}}
\\
{\Defs{\storvarabs{\loc}{\pc'}} \\
    \Use{\emptyset}}
\end{mathpar}
\begin{mathpar}
\infer{
    \precontract(\pc) = (\SSTORE(\stackvar{x_1}, \stackvar{x_2}), \pc', \pre) \\
    a = \fun{\cfgstate}{\accesscfgstate{\cfgstate}{\stackvar{x_1}}} \\
    b = \fun{\cfgstate}{\accesscfgstate{\cfgstate}{\stackvar{x_2}}} \\
    \costs = \fun{\cfgstate}{\cond{(b(\cfgstate) \neq 0 \land (\load~\cfgstate~\storvar{(\accesscfgstate{\cfgstate}{a(\cfgstate)})}{\pc} = 0)}{20000}{5000}} \\
    f = \fun{\cfgstate}{\updatestate{\cfgstate}{\locenvvar{\gaspc{\pc'}}}{\accesscfgstate{\cfgstate}{\gaspc{\pc'}}- \costs(\cfgstate)}}
} 
{\edgestep{\precontract, \cd}{\cfgnode{\pc}{2}}{\cfgnode{\pc'}{0}}} \\
{\Defs{\locenvvar{\gaspc{\pc'}}} \\
    \Use{ \{ \locenvvar{\gaspc{\pc}}, \stackvar{x_1} \} ~\cup~ \memvarabs{X}{\pc}} ~\cup~ \memvarconc{X}{\pc}}
\end{mathpar}

\paragraph{Accessing the machine state}

\begin{mathpar}
    \infer{
        \precontract(\pc) =  (\GAS(\stackvar{y}), \pc', \pre) \\
        r = \fun{\cfgstate}{\accesscfgstate{\cfgstate}{\locenvvar{\gaspc{\pc}}}} \\
        f = \fun{\cfgstate}{\updatestate{\cfgstate}{\stackvar{y}}{r(\cfgstate)}}
    }
    {\edgestep{\precontract, \cd}{\cfgnode{\pc}{0}}{\cfgnode{\pc}{1}}} \\
    {\Defs{\stackvar{y}} \\
        \Uses{\locenvvar{\gaspc{\pc}}}}
\end{mathpar}

Most instructions for accessing the machine state have the same gas cost. For this reason we summarize the rule for gas substraction for them.

\begin{mathpar}
    \infer{
        \precontract(\pc) =  (\instruction(\stackvar{y}), \pc', \pre) \\    
        \instruction \in \{ \GAS, \PC, \MSIZE \} \\    
        f = \fun{\cfgstate}{\updatestate{\cfgstate}{\envvar{\gaspc{\pc'}}}{\cfgstate[\envvar{\gaspc{\pc}}] - 2}} 
    }
    {\edgestep{\precontract, \cd}{\cfgnode{\pc}{1}}{\cfgnode{\pc'}{0}}} \\
    {\Defs{\locenvvar{\gaspc{\pc'}}} \\
        \Uses{\locenvvar{\gaspc{\pc}}}}
\end{mathpar}

\begin{mathpar}
    \infer{
        \precontract(\pc) =  (\MSIZE(\stackvar{y}), \pc', \pre) \\
        r = \fun{\cfgstate}{\accesscfgstate{\cfgstate}{\locenvvar{\msizepc{\pc}}}} \\
        f = \fun{\cfgstate}{\updatestate{\cfgstate}{\stackvar{y}}{r(\cfgstate)}}
    }
    {\edgestep{\precontract, \cd}{\cfgnode{\pc}{0}}{\cfgnode{\pc}{1}}} \\
    {\Defs{\stackvar{y}} \\
        \Uses{\locenvvar{\msizepc{\pc}}}}
\end{mathpar}

\begin{mathpar}
    \infer{
        \precontract(\pc) =  (\PC(\stackvar{y}), \pc', \pre) \\
        r = \fun{\cfgstate}{\pc} \\
        f = \fun{\cfgstate}{\updatestate{\cfgstate}{\stackvar{y}}{r(\cfgstate)}}
    }
    {\edgestep{\precontract, \cd}{\cfgnode{\pc}{0}}{\cfgnode{\pc}{1}}} \\
    {\Defs{\stackvar{y}} \\
        \Use{\emptyset}}
\end{mathpar}

\paragraph{Logging}
Since we do not model the logged events, for the logging instructions we only need to model the effect on gas and local memory.

\begin{mathpar}
    \infer{
        \precontract(\pc) =  (\LOG{n}(\stackvar{x_1}, \stackvar{x_2}, \dots, \stackvar{x_{n+2}}), \pc', \pre) \\
        \pos_\mem = \fun{\cfgstate}{\accesscfgstate{\cfgstate}{\stackvar{x_1}}}\\
        \siz = \fun{\cfgstate}{\accesscfgstate{\cfgstate}{\stackvar{x_2}}} \\
        \aw = \fun{\cfgstate}{\memext{\accesscfgstate{\cfgstate}{\locenvvar{\msizepc{\pc}}}}{\pos_\memo(\cfgstate)}{\siz(\cfgstate)}} \\ 
        \costs = \fun{\cfgstate}{\costmem{\accesscfgstate{\cfgstate}{\locenvvar{\msizepc{\pc}}}}{\aw(\cfgstate)} + 375 + 8 \cdot \siz(\cfgstate) + n  \cdot 375} \\ 
        f = \fun{\cfgstate}{\updatestate{\cfgstate}{\locenvvar{\gaspc{\pc'}}}{\accesscfgstate{\cfgstate}{\locenvvar{\gaspc{\pc}}}- \costs(\cfgstate)}}
        }
    {\edgestep{\precontract, \cd}{\cfgnode{\pc}{0}}{\cfgnode{\pc}{1}}} \\
    {\Defs{\locenvvar{\gaspc{\pc'}}} \\
        \Uses{\locenvvar{\gaspc{\pc}}, \locenvvar{\msizepc{\pc}}, \stackvar{x_1}, \stackvar{x_2}}}
\end{mathpar}

\begin{mathpar}
    \infer{
        \precontract(\pc) =  (\LOG{n}(\stackvar{x_1}, \stackvar{x_2}, \dots, \stackvar{x_{n+2}}), \pc', \pre) \\
        \pos_\mem = \fun{\cfgstate}{\accesscfgstate{\cfgstate}{\stackvar{x_1}}}\\
        \siz = \fun{\cfgstate}{\accesscfgstate{\cfgstate}{\stackvar{x_2}}} \\
        \aw = \fun{\cfgstate}{\memext{\accesscfgstate{\cfgstate}{\locenvvar{\msizepc{\pc}}}}{\pos_\memo(\cfgstate)}{\siz(\cfgstate)}} \\ 
        f = \fun{\cfgstate}{\updatestate{\cfgstate}{\locenvvar{\msizepc{\pc'}}}{\aw(\cfgstate)}}
        }
    {\edgestep{\precontract, \cd}{\cfgnode{\pc}{1}}{\cfgnode{\pc'}{0}}} \\
    {\Defs{\locenvvar{\msizepc{\pc'}}} \\
        \Uses{\locenvvar{\msizepc{\pc}}, \stackvar{x_1}, \stackvar{x_2}}}
\end{mathpar}

\paragraph{Halting instructions}

\begin{mathpar}
    \infer{
        \precontract(\pc) =  (\RETURN(\stackvar{x_1}, \stackvar{x_2}, \pc', \pre) \\
        \pos_\mem = \fun{\cfgstate}{\accesscfgstate{\cfgstate}{\stackvar{x_1}}}\\
        \siz = \fun{\cfgstate}{\accesscfgstate{\cfgstate}{\stackvar{x_2}}} \\
        \aw = \fun{\cfgstate}{\memext{\accesscfgstate{\cfgstate}{\locenvvar{\msizepc{\pc}}}}{\pos_\memo(\cfgstate)}{\siz(\cfgstate)}} \\ 
        \costs = \fun{\cfgstate}{\costmem{\accesscfgstate{\cfgstate}{\locenvvar{\msizepc{\pc}}}}{\aw(\cfgstate)}} \\ 
        f = \fun{\cfgstate}{\updatestate{\cfgstate}{\locenvvar{\gaspc{\pc'}}}{\accesscfgstate{\cfgstate}{\locenvvar{\gaspc{\pc}}}- \costs(\cfgstate)}}
        }
    {\edgestep{\precontract, \cd}{\cfgnode{\pc}{0}}{\cfgnode{\pc}{1}}} \\
    {\Defs{\locenvvar{\gaspc{\pc'}}} \\
        \Uses{\locenvvar{\gaspc{\pc}}, \locenvvar{\msizepc{\pc}}, \stackvar{x_1}, \stackvar{x_2}}}
\end{mathpar}

\begin{mathpar}
    \infer{
        \precontract(\pc) =  (\RETURN(\stackvar{x_1}, \stackvar{x_2}, \pc', \pre) \\
        \pos_\mem = \fun{\cfgstate}{\accesscfgstate{\cfgstate}{\stackvar{x_1}}}\\
        \siz = \fun{\cfgstate}{\accesscfgstate{\cfgstate}{\stackvar{x_2}}} \\
        \aw = \fun{\cfgstate}{\memext{\accesscfgstate{\cfgstate}{\locenvvar{\msizepc{\pc}}}}{\pos_\memo(\cfgstate)}{\siz(\cfgstate)}} \\ 
        f = \fun{\cfgstate}{\updatestate{\cfgstate}{\locenvvar{\msizepc{\pc'}}}{\aw(\cfgstate)}}
        }
    {\edgestep{\precontract, \cd}{\cfgnode{\pc}{1}}{\haltnode}} \\
    {\Defs{\locenvvar{\msizepc{\pc'}}} \\
        \Uses{\locenvvar{\msizepc{\pc}}, \stackvar{x_1}, \stackvar{x_2}}}
\end{mathpar}

\begin{mathpar}
    \infer{
        \precontract(\pc) =  (\STOP, \pc', \pre) \\ 
        f = \fun{\cfgstate}{\cfgstate}
        }
    {\edgestep{\precontract, \cd}{\cfgnode{\pc}{0}}{\haltnode}} \\
    {\Def{\emptyset} \\
        \Use{\emptyset}}
\end{mathpar}

\begin{mathpar}
    \infer{
        \precontract(\pc) =  (\INVALID, \pc', \pre) \\ 
        f = \fun{\cfgstate}{\cfgstate}
        }
    {\edgestep{\precontract, \cd}{\cfgnode{\pc}{0}}{\exceptionnode}} \\
    {\Def{\emptyset} \\
        \Use{\emptyset}}
\end{mathpar}

The $\SUICIDE$ instruction allows for the self destruction of the executing account.
Since we only model the execution of a single contract and since the execution halts with the execution of $\SUICIDE$, we do not model the deletion of the contract itself. 
However, the gas cost of the $\SUICIDE$ instruction, as well as the balances of accounts are affected by the execution of the $\SUICIDE$ instruction. 

\begin{mathpar}
    \infer{
        \precontract(\pc) =  (\SUICIDE(\stackvar{x}), \pc', \pre) \\ 
        a_\textit{ben} = \fun{\cfgstate}{\accesscfgstate{\cfgstate}{\stackvar{x}}} \\
        a = \fun{\cfgstate}{a_\textit{ben}(\cfgstate) \mod{} 2^{160}} \\
        \gstate = \fun{\cfgstate}{\access{\accesscfgstate{\cfgstate}{\locenvvar{\cfgexternalpc{\pc}}}}{\gstate}} \\ 
        \phi = \fun{\cfgstate}{\gstate(\cfgstate)(a) = \bot} \\ 
        f_1 = \fun{\cfgstate}{
            \updatestate{\cfgstate}
            {\globenvvar{\cfgexternalpc{\pc'}}}
            {\update
                {\accesscfgstate{\cfgstate}{\globenvvar{\cfgexternalpc{\pc}}}}
                {\gstate}
                { \updategstate
                    {\gstate(\cfgstate)}
                    {\accesscfgstate{\cfgstate}{\locenvvar{\activeaccount}}}
                    {\update{\gstate(\cfgstate)}{\balance}{0}}
                }
            }
        } \\
        f_2 = \fun{\cfgstate}{
            \updatestate
            {\cfgstate}
            {\globenvvar{\cfgexternalpc{\pc'}}}
            {\update
                {\accesscfgstate{\cfgstate}{\globenvvar{\cfgexternalpc{\pc}}}}
                {\gstate}
                {\updategstate
                    {\gstate(\cfgstate)}
                    {a(\cfgstate)}
                    {\inc
                        {\gstate(\cfgstate)}
                        {\balance}
                        {\access{\gstate(\cfgstate)(\accesscfgstate{\cfgstate}{\locenvvar{\activeaccount}})}{\balance}}
                    } 
                }
            }
        } \\ 
        f_3 = \fun{\cfgstate}{
            \updatestate
            {\cfgstate}
            {\globenvvar{\cfgexternalpc{\pc'}}}
            {
                \update
                    {\accesscfgstate{\cfgstate}{\globenvvar{\cfgexternalpc{\pc}}}}
                    {\gstate}
                    {\updategstate
                        {\gstate(\cfgstate)}
                        {a(\cfgstate)}
                        {\account{0}{\access{\gstate(\cfgstate)(\accesscfgstate{\cfgstate}{\locenvvar{\activeaccount}})}{\balance}}{\lam{x}{0}}{\emptyarray}}
                    }
            }
        }\\
        f = \fun{\cfgstate}{\cond{(\phi(\cfgstate))}{f_1(f_2(\cfgstate))}{f_1(f_3(\cfgstate))}}
        }
    {\edgestep{\precontract, \cd}{\cfgnode{\pc}{0}}{\cfgnode{\pc}{1}}} \\
    {\Defs{\globenvvar{\cfgexternalpc{\pc'}}} \\
        \Uses{\globenvvar{\cfgexternalpc{\pc}}, \stackvar{x}, \locenvvar{\activeaccount}}}
\end{mathpar}
\begin{mathpar}
    \infer{
        \precontract(\pc) =  (\SUICIDE(\stackvar{x}), \pc', \pre) \\ 
        a_\textit{ben} = \fun{\cfgstate}{\accesscfgstate{\cfgstate}{\stackvar{x}}} \\
        a = \fun{\cfgstate}{a_\textit{ben}(\cfgstate) \mod{} 2^{160}} \\
        \gstate = \fun{\cfgstate}{\access{\accesscfgstate{\cfgstate}{\locenvvar{\cfgexternalpc{\pc}}}}{\gstate}} \\ 
        \phi = \fun{\cfgstate}{\gstate(\cfgstate)(a) = \bot} \\ 
        \costs = \fun{\cfgstate}{\cond{(\phi(\cfgstate))}{37000}{5000}} \\
        f = \fun{\cfgstate}{\updatestate{\cfgstate}{\locenvvar{\gaspc{\pc'}}}{\accesscfgstate{\cfgstate}{\locenvvar{\gaspc{\pc}}} - \costs(\cfgstate)}}
        }
    {\edgestep{\precontract, \cd}{\cfgnode{\pc}{1}}{\haltnode}} \\
    {\Defs{\locenvvar{\gaspc{\pc'}}} \\
        \Uses{\globenvvar{\cfgexternalpc{\pc}}, \stackvar{x}, \locenvvar{\gaspc{\pc}}}}
\end{mathpar}

\paragraph{Rules for transaction initiating instructions.}
Since we will exclude the execution of $\DELEGATECALL$ and $\CALLCODE$ statements by assumption, we will only give the CFG rules for $\CALL$, $\STATICCALL$, and $\CREATE$. 

\paragraph*{\CALL}

The definition of the $\CALL$ rule comes with small technical difficulties. 
In order to give a precise analysis, the different changes in the state that are triggered by the call need to be done in separate nodes whenever they have different dependencies. 
While e.g., the output data fragment can be influenced by the input to the call as well as by the global environment, the memory outside of the output fragment will simply be propagated. 
Similarly, the number of active words $\msize$ only depends on the arguments specifying the input and output memory fragment, but not on any other inputs or the global environment. 

To model these dependencies accurately, the corresponding updates need to happen in different nodes. 
However, we can only characterize the overall call effects on a state (using the EVM small-step semantics).

We first define the function $\applycall$ that mimics the effects of a function call on a CFG state $\cfgstate$.
For simplicity, we define the function here as a relation. 
However, this relation is functional given that 
$\precontract(\pc) = (\CALL(\stackvar{y}, \stackvar{\lgas}, \stackvar{\recipient}, \stackvar{\valu}, \stackvar{\io}, \stackvar{\is}, \stackvar{\oo}, \stackvar{\os}), \pc', \pre) \lor \ \precontract(\pc) = (\STATICCALL(\stackvar{y}, \stackvar{\lgas}, \stackvar{\recipient}, \stackvar{\io}, \stackvar{\is}, \stackvar{\oo}, \stackvar{\os}), \pc', \pre)$.

\begin{mathpar}
    \infer 
    {  
        (\precontract(\pc) = (\CALL(\stackvar{y}, \stackvar{\lgas}, \stackvar{\recipient}, \stackvar{\valu}, \stackvar{\io}, \stackvar{\is}, \stackvar{\oo}, \stackvar{\os}), \pc', \pre) \lor \ \precontract(\pc) = (\STATICCALL(\stackvar{y}, \stackvar{\lgas}, \stackvar{\recipient}, \stackvar{\io}, \stackvar{\is}, \stackvar{\oo}, \stackvar{\os}), \pc', \pre)) \\
    (\transenv, \exstate) = \toevmstate(\cfgstate, \precontract, \pc) \\
    \transactionstep{\transenv}{\cons{\exstate}{\callstack}}{\cons{\exstate'}{\callstack}} \\
    \cfgstate' = \tocfgstate(\transenv, \exstate') \\
    }
    {\applycall(\cfgstate, \precontract, \pc) = \cfgstate'} 
\end{mathpar}

Note that $\applycall$ operates on a CFG state without temporal variables (since this is what $\toevmstate$ is expecting and $\tocfgstate$ is returning). 
When using $\applycall$ in the following rules in conjunction with full CFG states $\cfgstatefull$, we will write $\restrict{\cfgstatefull}{\cfgstate}$ to denote the restriction of $\cfgstatefull$ to the set of non-temporal variables ($\domain{\cfgstate}$).


The rules update the different state components one after the other, grouping those updates together that have the same dependencies. 
Technically, for propagating pc-indexed state components, we need to update all variables in individual notes. 
To this end, first, the temporal variables in $\cfgstatecopy$ are set to the values of $\cfgstate'$ (the state after applying the call). 
Finally, the variables of $\cfgstate$ are (one by one) updated to the values of $\cfgstatecopy$ and $\cfgstatecopy$ is set to $\cfgstatecopybot$.


We now define the CFG rules for the $\CALL$ instruction.

For the precise treatment of dependencies, the other state updates are treated differently depending on the preprocessing information. 
We always need to consider all possible effects on the state. 
More precisely, the effect on the local return memory fraction, on the return value $y$, on the external environment, on the gas, and on the active words and memory. 
The effects on the return value and the external environment depend on the same variables (which determine the overall outcome of the call):
\begin{itemize}
\item the arguments to the call 
\item the input memory fragment (as specified by the arguments to the call)
\item the current amount of gas available 
\item the global environment (including the state of all other accounts and all globally accessible values) 
\item the global variables of the contract (since those may be read during reentrancy)
\end{itemize}

The return memory fragment after the call, in addition, may depend on the previous values in this memory fraction. 
This is, because in the case that the call was unsuccessful (returned with an exception), the return memory fragment stays unchanged.

The gas value after the call (in addition to the call outcome that influences the amount of gas refunded) depends, also, on the active words in memory (since this influences the costs for memory access for writing to the return memory fragment).

Finally, the active words in memory only depend on the location and size of the input and return memory fragment and the previous number of active words in memory. 

We give rules for these four different forms of dependencies. 
To track the dependencies precisely, we first write the updated values into the corresponding temporal variables and only update the original variables later. 
This is required so that we can use the $\applycall$ function on the original state to obtain the updated values separately.
Since we are considering purely functional state updates, we cannot simply save the original result of the $\applycall$ function, but need to recompute it at every node in order to obtain the needed values. 
To denote that we are updating the full CFG state, we write the state update function as $\fun{\cfgstatefull}(\langle \textit{exp} \rangle)$.
To refer to the restriction of the $\cfgstatefull$ to the non-temporal variables, we write $\restrict{\cfgstatefull}{\cfgstate}$. 


We first define the rules for the updates of $y$ and the external environment. 
These rules are different because the $\STATICCALL$ does not change the static environment. 

\begin{mathpar}
    \infer{
    \precontract(\pc) = (\CALL(\stackvar{y}, \stackvar{\lgas}, \stackvar{\recipient}, \stackvar{\valu}, \stackvar{\io}, \stackvar{\is}, \stackvar{\oo}, \stackvar{\os}), \pc', \pre) \land \startmemoffset = 4 \\
        \pre[\startmemoffset] = \optional{\loc_\io} \\
        \pre[\startmemoffset+1] = \optional{\loc_\is} \\
        f_1 = \fun{\cfgstatefull}{\updatestate{\cfgstatefull}{\tempvar{\stackvar{y}}}
        {\accesscfgstate{\applycall(\restrict{\cfgstatefull}{\cfgstate}, \precontract, \pc)}{\stackvar{y}}}} \\
        f_2 = \fun{\cfgstatefull}{\updatestate{\cfgstatefull}{\tempvar{\globenvvar{\cfgexternalpc{\pc'}}}}
        {\accesscfgstate{\applycall(\restrict{\cfgstatefull}{\cfgstate}, \precontract, \pc)}{\globenvvar{\cfgexternalpc{\pc'}}}}}  \\
        f = \fun{\cfgstatefull}{f_2(f_1(\cfgstatefull))}\\
    }
    {\edgestep{\transenv}{\cfgnode{\pc}{0}}{\cfgnode{\pc}{1}}} \\
    {\Def{  \{ \tempvar{\stackvar{y}}, \tempvar{\globenvvar{\cfgexternalpc{\pc'}}}\}}} \\
        \Use{
        \{\memvarconc{\loc}{\pc} ~|~ \loc \in [\loc_\io, \loc_\io + \loc_\is - 1] \}
            ~\cup~ \{\memvarabs{\loc}{\pc} ~|~ \loc \in [\loc_\io,\loc_\io + \loc_\is - 1] \} \\
         ~\cup~ \{ \stackvar{\lgas}, \stackvar{\recipient}, \stackvar{\valu}, \locenvvar{\gaspc{\pc}}, \locenvvar{\activeaccount} \}
         ~\cup~ \globenvvar{X}~\cup~ \storvar{X}{\pc}
         ~\cup \{ \stackvar{\oo} ~|~ \pre[\startmemoffset+2] = \optional{\loc_\oo}  \}  ~\cup \{ \stackvar{\os} ~|~ \pre[\startmemoffset+3] = \optional{\loc_\os}  \} 
        } 
\\
    \infer{
        \precontract(\pc) = (\CALL(\stackvar{y}, \stackvar{\lgas}, \stackvar{\recipient}, \stackvar{\valu}, \stackvar{\io}, \stackvar{\is}, \stackvar{\oo}, \stackvar{\os}), \pc', \pre) \land \startmemoffset = 4 \\
        (\pre[\startmemoffset] = \bot ~\lor~ \pre[\startmemoffset+1] = \bot) \\ 
        f_1 = \fun{\cfgstatefull}{\updatestate{\cfgstatefull}{\tempvar{\stackvar{y}}}
        {\accesscfgstate{\applycall(\restrict{\cfgstatefull}{\cfgstate}, \precontract, \pc)}{\stackvar{y}}}} \\
        f_2 = \fun{\cfgstatefull}{\updatestate{\cfgstatefull}{\tempvar{\globenvvar{\cfgexternalpc{\pc'}}}}
        {\accesscfgstate{\applycall(\restrict{\cfgstatefull}{\cfgstate}, \precontract, \pc)}{\globenvvar{\cfgexternalpc{\pc'}}}}}  \\
        f = \fun{\cfgstatefull}{f_2(f_1(\cfgstatefull))}\\
    }
    {\edgestep{\transenv}{\cfgnode{\pc}{0}}{\cfgnode{\pc}{1}}} \\
    {\Def{  \{ \tempvar{\stackvar{y}}, \tempvar{\globenvvar{\cfgexternalpc{\pc'}}}\}}} \\
        \Use{ \memvarconc{X}{\pc} ~\cup~ \memvarabs{X}{\pc}
         ~\cup~ \{ \stackvar{\lgas}, \stackvar{\recipient}, \stackvar{\valu}, \locenvvar{\gaspc{\pc}}, \locenvvar{\activeaccount} \} 
         ~\cup~ \globenvvar{X}~\cup~ \storvar{X}{\pc} \\
         ~\cup \{ \stackvar{\io} ~|~ \pre[\startmemoffset] = \optional{\loc_\io}  \}  ~\cup \{ \stackvar{\is} ~|~ \pre[\startmemoffset+1] = \optional{\loc_\is}  \} 
         ~\cup \{ \stackvar{\oo} ~|~ \pre[\startmemoffset+2] = \optional{\loc_\oo}  \}  ~\cup \{ \stackvar{\os} ~|~ \pre[\startmemoffset+3] = \optional{\loc_\os}  \} 
        } 
\end{mathpar}

\begin{mathpar}
    \infer{
         \precontract(\pc) = (\STATICCALL(\stackvar{y}, \stackvar{\lgas}, \stackvar{\recipient}, \stackvar{\io}, \stackvar{\is}, \stackvar{\oo}, \stackvar{\os}), \pc', \pre)
         \land \startmemoffset = 3 \\
        \pre[\startmemoffset] = \optional{\loc_\io} \\
        \pre[\startmemoffset+1] = \optional{\loc_\is} \\
        f_1 = \fun{\cfgstatefull}{\updatestate{\cfgstatefull}{\tempvar{\stackvar{y}}}
        {\accesscfgstate{\applycall(\restrict{\cfgstatefull}{\cfgstate}, \precontract, \pc)}{\stackvar{y}}}} \\
        f_2 = \fun{\cfgstatefull}{\updatestate{\cfgstatefull}{\tempvar{\globenvvar{\cfgexternalpc{\pc'}}}}
        {\accesscfgstate{\applycall(\restrict{\cfgstatefull}{\cfgstate}, \precontract, \pc)}{\globenvvar{\cfgexternalpc{\pc'}}}}}  \\
        f = \fun{\cfgstatefull}{f_2(f_1(\cfgstatefull))}\\
    }
    {\edgestep{\transenv}{\cfgnode{\pc}{0}}{\cfgnode{\pc}{1}}} \\
    {\Def{  \{ \tempvar{\stackvar{y}} \}}} \\
        \Use{
        \{\memvarconc{\loc}{\pc} ~|~ \loc \in [\loc_\io, \loc_\io + \loc_\is - 1] \}
            ~\cup~ \{\memvarabs{\loc}{\pc} ~|~ \loc \in [\loc_\io,\loc_\io + \loc_\is - 1] \} \\
         ~\cup~ \{ \stackvar{\lgas}, \stackvar{\recipient}, \stackvar{\valu}, \locenvvar{\gaspc{\pc}}, \locenvvar{\activeaccount} \}
         ~\cup~ \globenvvar{X}~\cup~ \storvar{X}{\pc}
         ~\cup \{ \stackvar{\oo} ~|~ \pre[\startmemoffset+2] = \optional{\loc_\oo}  \}  ~\cup \{ \stackvar{\os} ~|~ \pre[\startmemoffset+3] = \optional{\loc_\os}  \} 
        } 
\\
    \infer{
         \precontract(\pc) = (\STATICCALL(\stackvar{y}, \stackvar{\lgas}, \stackvar{\recipient}, \stackvar{\io}, \stackvar{\is}, \stackvar{\oo}, \stackvar{\os}), \pc', \pre)
         \land \startmemoffset = 3 \\
        (\pre[\startmemoffset] = \bot ~\lor~ \pre[\startmemoffset+1] = \bot) \\ 
        f_1 = \fun{\cfgstatefull}{\updatestate{\cfgstatefull}{\tempvar{\stackvar{y}}}
        {\accesscfgstate{\applycall(\restrict{\cfgstatefull}{\cfgstate}, \precontract, \pc)}{\stackvar{y}}}} \\
        f_2 = \fun{\cfgstatefull}{\updatestate{\cfgstatefull}{\tempvar{\globenvvar{\cfgexternalpc{\pc'}}}}
        {\accesscfgstate{\applycall(\restrict{\cfgstatefull}{\cfgstate}, \precontract, \pc)}{\globenvvar{\cfgexternalpc{\pc'}}}}}  \\
        f = \fun{\cfgstatefull}{f_2(f_1(\cfgstatefull))}\\
    }
    {\edgestep{\transenv}{\cfgnode{\pc}{0}}{\cfgnode{\pc}{1}}} \\
    {\Def{  \{ \tempvar{\stackvar{y}} \}}} \\
        \Use{ \memvarconc{X}{\pc} ~\cup~ \memvarabs{X}{\pc}
         ~\cup~ \{ \stackvar{\lgas}, \stackvar{\recipient}, \stackvar{\valu}, \locenvvar{\gaspc{\pc}}, \locenvvar{\activeaccount} \} 
         ~\cup~ \globenvvar{X}~\cup~ \storvar{X}{\pc} \\
         ~\cup \{ \stackvar{\io} ~|~ \pre[\startmemoffset] = \optional{\loc_\io}  \}  ~\cup \{ \stackvar{\is} ~|~ \pre[\startmemoffset+1] = \optional{\loc_\is}  \} 
         ~\cup \{ \stackvar{\oo} ~|~ \pre[\startmemoffset+2] = \optional{\loc_\oo}  \}  ~\cup \{ \stackvar{\os} ~|~ \pre[\startmemoffset+3] = \optional{\loc_\os}  \} 
        } 
\end{mathpar}

Next, we define the rules for the update of the gas value:

\begin{mathpar}
        \infer{
            (\precontract(\pc) = (\CALL(\stackvar{y}, \stackvar{\lgas}, \stackvar{\recipient}, \stackvar{\valu}, \stackvar{\io}, \stackvar{\is}, \stackvar{\oo}, \stackvar{\os}), \pc', \pre) \land \startmemoffset = 4 \\
            \lor \ \precontract(\pc) = (\STATICCALL(\stackvar{y}, \stackvar{\lgas}, \stackvar{\recipient}, \stackvar{\io}, \stackvar{\is}, \stackvar{\oo}, \stackvar{\os}), \pc', \pre)
            \land \startmemoffset = 3) \\
                \pre[\startmemoffset] = \optional{\loc_\io} \\
                \pre[\startmemoffset+1] = \optional{\loc_\is} \\
                f = \fun{\cfgstatefull}{\updatestate{\cfgstatefull}{\tempvar{\locenvvar{\gaspc{\pc'}}}}{\accesscfgstate{\applycall(\restrict{\cfgstatefull}{\cfgstate}, \precontract, \pc)}{\locenvvar{\gaspc{\pc'}}}}}
             }
            {\edgestep{\transenv}{\cfgnode{\pc}{1}}{\cfgnode{\pc}{2}}}
            \\
            {\Defs{\locenvvar{\tempvar{\gaspc{\pc'}}}}} \\
            \Use{
                \{\memvarconc{\loc}{\pc} ~|~ \loc \in [\loc_\io, \loc_\io + \loc_\is - 1] \}
                ~\cup~ \{\memvarabs{\loc}{\pc} ~|~ \loc \in [\loc_\io,\loc_\io + \loc_\is - 1] \} \\
             ~\cup~ \{ \stackvar{\lgas}, \stackvar{\recipient}, \stackvar{\valu}, \locenvvar{\gaspc{\pc}}, \locenvvar{\msizepc{\pc}}, \locenvvar{\activeaccount} \}~\cup~ \globenvvar{X}~\cup~ \storvar{X}{\pc}
             ~\cup \{ \stackvar{\oo} ~|~ \pre[\startmemoffset+2] = \optional{\loc_\oo}  \}  ~\cup \{ \stackvar{\os} ~|~ \pre[\startmemoffset+3] = \optional{\loc_\os}  \} 
            } 
            \\
         \infer{
            (\precontract(\pc) = (\CALL(\stackvar{y}, \stackvar{\lgas}, \stackvar{\recipient}, \stackvar{\valu}, \stackvar{\io}, \stackvar{\is}, \stackvar{\oo}, \stackvar{\os}), \pc', \pre) \land \startmemoffset = 4 \\
            \lor \ \precontract(\pc) = (\STATICCALL(\stackvar{y}, \stackvar{\lgas}, \stackvar{\recipient}, \stackvar{\io}, \stackvar{\is}, \stackvar{\oo}, \stackvar{\os}), \pc', \pre)
            \land \startmemoffset = 3) \\
                (\pre[\startmemoffset] = \bot ~\lor~  \pre[\startmemoffset+1] = \bot) \\
                f = \fun{\cfgstatefull}{\updatestate{\cfgstatefull}{\tempvar{\locenvvar{\gaspc{\pc'}}}}{\accesscfgstate{\applycall(\restrict{\cfgstatefull}{\cfgstate}, \precontract, \pc)}{\locenvvar{\gaspc{\pc'}}}}}
             }
            {\edgestep{\transenv}{\cfgnode{\pc}{1}}{\cfgnode{\pc}{2}}}
            \\
            {\Defs{\tempvar{\locenvvar{\gaspc{\pc'}}}}} \\
            \Use{
                \memvarconc{X}{\pc} ~\cup~ \memvarabs{X}{\pc}
             ~\cup~ \{ \stackvar{\lgas}, \stackvar{\recipient}, \stackvar{\valu}, \locenvvar{\gaspc{\pc}}, \locenvvar{\msizepc{\pc}}, \locenvvar{\activeaccount} \}~\cup~ \globenvvar{X}~\cup~ \storvar{X}{\pc} \\
             ~\cup \{ \stackvar{\io} ~|~ \pre[\startmemoffset] = \optional{\loc_\io}  \}  ~\cup \{ \stackvar{\is} ~|~ \pre[\startmemoffset+1] = \optional{\loc_\is}  \} 
             ~\cup \{ \stackvar{\oo} ~|~ \pre[\startmemoffset+2] = \optional{\loc_\oo}  \}  ~\cup \{ \stackvar{\os} ~|~ \pre[\startmemoffset+3] = \optional{\loc_\os}  \} 
            }
\end{mathpar}

Next, we give the rule for the update of the active words in memory: 
\begin{mathpar}
            \infer{
                (\precontract(\pc) = (\CALL(\stackvar{y}, \stackvar{\lgas}, \stackvar{\recipient}, \stackvar{\valu}, \stackvar{\io}, \stackvar{\is}, \stackvar{\oo}, \stackvar{\os}), \pc', \pre) \land \startmemoffset = 4 \\
                \lor \ \precontract(\pc) = (\STATICCALL(\stackvar{y}, \stackvar{\lgas}, \stackvar{\recipient}, \stackvar{\io}, \stackvar{\is}, \stackvar{\oo}, \stackvar{\os}), \pc', \pre)
                \land \startmemoffset = 3) \\
                f = \fun{\cfgstatefull}{\updatestate{\cfgstatefull}{\tempvar{\locenvvar{\msizepc{\pc'}}}}{\accesscfgstate{\applycall(\restrict{\cfgstatefull}{\cfgstate}, \precontract, \pc)}{\locenvvar{\msizepc{\pc'}}}}}
             }
            {\edgestep{\transenv}{\cfgnode{\pc}{2}}{\cfgnode{\pc}{3}}}
            \\
            {\Defs{\tempvar{\locenvvar{\msizepc{\pc'}}}}} \\
            \Use{ \{ \locenvvar{\msizepc{\pc}}\} 
            ~\cup \{ \stackvar{\io} ~|~ \pre[\startmemoffset] = \optional{\loc_\io}  \}  ~\cup \{ \stackvar{\is} ~|~ \pre[\startmemoffset+1] = \optional{\loc_\is}  \} 
            ~\cup \{ \stackvar{\oo} ~|~ \pre[\startmemoffset+2] = \optional{\loc_\oo}  \}  ~\cup \{ \stackvar{\os} ~|~ \pre[\startmemoffset+3] = \optional{\loc_\os}  \} 
            }
\end{mathpar}

Finally, we give the rules for the memory update. 
These rules are the most interesting ones since they differ heavily depending on the available pre-processing information.

We first consider the case that the input and the result memory fragment are known. 
In this case, the values of memory locations $\memvarconc{i}{\pc'}$ are assigned in individual nodes. 
The node splitting, in this case, allows for precise treatment, since it only needs to be considered that the value at memory location $i$ may depend on the input to the call or the previous value at exactly this location $i$. 
For assigning the $\bot$ value to dynamic memory locations $\memvarabs{i}{\pc'}$, this distinction is not needed since no dependencies are propagated in the first place. 
\begin{mathpar}
        \infer{
            (\precontract(\pc) = (\CALL(\stackvar{y}, \stackvar{\lgas}, \stackvar{\recipient}, \stackvar{\valu}, \stackvar{\io}, \stackvar{\is}, \stackvar{\oo}, \stackvar{\os}), \pc', \pre) \land \startmemoffset = 4 \\
            \lor \ \precontract(\pc) = (\STATICCALL(\stackvar{y}, \stackvar{\lgas}, \stackvar{\recipient}, \stackvar{\io}, \stackvar{\is}, \stackvar{\oo}, \stackvar{\os}), \pc', \pre)
            \land \startmemoffset = 3) \\
            \pre[\startmemoffset] = \optional{\loc_\io} \\
            \pre[\startmemoffset+1] = \optional{\loc_\is} \\
            \pre[\startmemoffset+2] = \optional{\loc_\oo} \\
            \pre[\startmemoffset+3] = \optional{\loc_\os} \\
            f = \fun{\cfgstatefull}{ \updatestate
            {\cfgstatefull}
            {\tempvar{\memvarconc{i}{\pc'}}}
            {\load~\applycall(\restrict{\cfgstatefull}{\cfgstate},\precontract,\pc)~\memvar{i}{\pc'}}} \\
            i \in [\loc_\oo, \loc_\oo + \loc_\os - 1]\\
        }
        {\edgestep{\transenv}{\cfgnode{\pc}{3 + (i - \loc_\oo)}}{\cfgnode{\pc}{3 + (i - \loc_\oo) + 1}}} \\
        {\Def{ \{\tempvar{\memvarconc{i}{\pc'}} \}} } \\
            \Use{
            \{\memvarconc{\loc}{\pc} ~|~ \loc \in [\loc_\io, \loc_\io + \loc_\is - 1] \} \\
                ~\cup~ \{\memvarabs{\loc}{\pc} ~|~ \loc \in [\loc_\io,\loc_\io + \loc_\is - 1] \} \\
             ~\cup~ \{ \stackvar{\lgas}, \stackvar{\recipient}, \stackvar{\valu}, \locenvvar{\gaspc{\pc}}, \locenvvar{\activeaccount}  \}~\cup~ \globenvvar{X}~\cup~ \storvar{X}{\pc}
             ~\cup~ \{ \memvarconc{i}{\pc}, \memvarabs{i}{\pc} \} 
            } \\
            \infer{
                (\precontract(\pc) = (\CALL(\stackvar{y}, \stackvar{\lgas}, \stackvar{\recipient}, \stackvar{\valu}, \stackvar{\io}, \stackvar{\is}, \stackvar{\oo}, \stackvar{\os}), \pc', \pre) \land \startmemoffset = 4 \\
                \lor \ \precontract(\pc) = (\STATICCALL(\stackvar{y}, \stackvar{\lgas}, \stackvar{\recipient}, \stackvar{\io}, \stackvar{\is}, \stackvar{\oo}, \stackvar{\os}), \pc', \pre)
                \land \startmemoffset = 3) \\
                \pre[\startmemoffset] = \optional{\loc_\io} \\
                \pre[\startmemoffset+1] = \optional{\loc_\is} \\
                \pre[\startmemoffset+2] = \optional{\loc_\oo} \\
                \pre[\startmemoffset+3] = \optional{\loc_\os} \\
                f = \fun{\cfgstatefull}{\updatevarset{\cfgstatefull}{\tempvar{\memvarabs{x}{\pc'}}}{\bot}{x}{[\memvar{\loc_\oo}{}, \memvar{\loc_\oo}{} + \memvar{\loc_\os}{} - 1]}}
            }
            {\edgestep{\transenv}{\cfgnode{\pc}{3 + \loc_\os}}{\cfgnode{\pc}{4+\loc_\os}}} \\
            {\Def{  \{\tempvar{\memvarabs{\loc}{\pc'}} ~|~ \loc \in [\loc_\oo, \loc_\oo + \loc_\os - 1] \} }} \\
            \Use{\emptyset}
\end{mathpar}

After updating the memory, it is still required to set the values of all updated variables to the corresponding temporal variables and to set the temporal variables back to $\bot$.
To determine the right offset for the different cases, we define a function $\getcallnodeoffset$ that given the precomputed memory fragments outputs the corresponding node offsets.
More precisely, it computes the number of intermediate nodes required for the node splitting in the different cases. 

\begin{align*}
    \getcallnodeoffset(\io, \is, \oo, \os) =
    \begin{cases}
        \loc_\os & \io = \some{\loc_\io} ~\land~ \is = \some{\loc_{\is}} ~\land~  \oo = \some{\loc_\oo}~\land~\os = \some{\loc_\os} \\
        \maxint{256} & \io = \some{\loc_\io} ~\land~ \is = \some{\loc_{\is}} ~\land~  (\oo = \bot~\lor~\os = \bot) \\
        1 & (\io = \bot ~\lor~  \is = \bot)  ~\land~ \oo = \optional{\loc_\oo} ~\land \os = \optional{\loc_\os}  \\
        0 & \textit{otherwise}
    \end{cases}
\end{align*}


We give the rules for the stack/external environment, gas and active words in memory only once, since they are the same for all cases (for different node offsets).

\begin{mathpar}
    \infer
    {
        (\precontract(\pc) = (\CALL(\stackvar{y}, \stackvar{\lgas}, \stackvar{\recipient}, \stackvar{\valu}, \stackvar{\io}, \stackvar{\is}, \stackvar{\oo}, \stackvar{\os}), \pc', \pre) \land \startmemoffset = 4 \\
         \lor \ \precontract(\pc) = (\STATICCALL(\stackvar{y}, \stackvar{\lgas}, \stackvar{\recipient}, \stackvar{\io}, \stackvar{\is}, \stackvar{\oo}, \stackvar{\os}), \pc', \pre)
         \land \startmemoffset = 3) \\
        \restartcounter = 4 + \getcallnodeoffset(\pre[\startmemoffset], \pre[\startmemoffset+1], \pre[\startmemoffset+2], \pre[\startmemoffset+3]) \\
        f = \fun{\cfgstatefull}{ 
            \updatestate{(
                \updatestate
                {\cfgstatefull}
                {\stackvar{y}}
                {\accesscfgstate{\cfgstatefull}{\tempvar{\stackvar{y}}}})}
            {\envvar{\extenv}}
            {\accesscfgstate{\cfgstatefull}{\tempvar{\envvar{\extenv}}}} 
            }\\
    }
    {
        \edgestep{\transenv}{\cfgnode{\pc}{\restartcounter}}{\cfgnode{\pc}{\restartcounter+1}}
    } \\
    {\Def{ \{ \stackvar{y}, \envvar{\extenv}} \} } \\
    \Use{ \{ \tempvar{\stackvar{y}}, \tempvar{\envvar{\extenv}}\} }
\end{mathpar}

\begin{mathpar}
    \infer
    {
        (\precontract(\pc) = (\CALL(\stackvar{y}, \stackvar{\lgas}, \stackvar{\recipient}, \stackvar{\valu}, \stackvar{\io}, \stackvar{\is}, \stackvar{\oo}, \stackvar{\os}), \pc', \pre) \land \startmemoffset = 4 \\
         \lor \ \precontract(\pc) = (\STATICCALL(\stackvar{y}, \stackvar{\lgas}, \stackvar{\recipient}, \stackvar{\io}, \stackvar{\is}, \stackvar{\oo}, \stackvar{\os}), \pc', \pre)
         \land \startmemoffset = 3) \\
        \restartcounter = 4 + \getcallnodeoffset(\pre[\startmemoffset], \pre[\startmemoffset+1], \pre[\startmemoffset+2], \pre[\startmemoffset+3]) \\
        f = \fun{\cfgstatefull}{ 
                \updatestate
                {\cfgstatefull}
                {\envvar{\gaspc{}}}
                {\accesscfgstate{\cfgstatefull}{\tempvar{\envvar{\gaspc{}}}}}}
    }
    {
        \edgestep{\transenv}{\cfgnode{\pc}{\restartcounter + 1}}{\cfgnode{\pc}{\restartcounter+2}}
    } \\
    {\Def{ \{\envvar{\gaspc{}}} \} } \\
    \Use{ \{ \tempvar{\envvar{\gaspc{}}}\} }
\end{mathpar}

\begin{mathpar}
    \infer
    {
        (\precontract(\pc) = (\CALL(\stackvar{y}, \stackvar{\lgas}, \stackvar{\recipient}, \stackvar{\valu}, \stackvar{\io}, \stackvar{\is}, \stackvar{\oo}, \stackvar{\os}), \pc', \pre) \land \startmemoffset = 4 \\
         \lor \ \precontract(\pc) = (\STATICCALL(\stackvar{y}, \stackvar{\lgas}, \stackvar{\recipient}, \stackvar{\io}, \stackvar{\is}, \stackvar{\oo}, \stackvar{\os}), \pc', \pre)
         \land \startmemoffset = 3) \\
        \restartcounter = 4 + \getcallnodeoffset(\pre[\startmemoffset], \pre[\startmemoffset+1], \pre[\startmemoffset+2], \pre[\startmemoffset+3]) \\
        f = \fun{\cfgstatefull}{ 
                \updatestate
                {\cfgstatefull}
                {\envvar{\msizepc{}}}
                {\accesscfgstate{\cfgstatefull}{\tempvar{\envvar{\msizepc{}}}}}}
    }
    {
        \edgestep{\transenv}{\cfgnode{\pc}{\restartcounter + 2}}{\cfgnode{\pc}{\restartcounter+3}}
    } \\
    {\Def{ \{\envvar{\msizepc{}}} \} } \\
    \Use{ \{ \tempvar{\envvar{\msizepc{}}}\} }
\end{mathpar}

Next, the values of the updated memory locations need to be carried over one by one. 
The number of nodes needed for that again depends on the pre-computed values for the input and output memory fraction.

\begin{mathpar}
    \infer{
        (\precontract(\pc) = (\CALL(\stackvar{y}, \stackvar{\lgas}, \stackvar{\recipient}, \stackvar{\valu}, \stackvar{\io}, \stackvar{\is}, \stackvar{\oo}, \stackvar{\os}), \pc', \pre) \land \startmemoffset = 4 \\
         \lor \ \precontract(\pc) = (\STATICCALL(\stackvar{y}, \stackvar{\lgas}, \stackvar{\recipient}, \stackvar{\io}, \stackvar{\is}, \stackvar{\oo}, \stackvar{\os}), \pc', \pre)
         \land \startmemoffset = 3) \\
        \restartcounter = 4 + \getcallnodeoffset(\pre[\startmemoffset], \pre[\startmemoffset+1], \pre[\startmemoffset+2], \pre[\startmemoffset+3]) \\
        \pre[\startmemoffset+2] = \optional{\loc_\oo} \\
        \pre[\startmemoffset+3] = \optional{\loc_\os} \\
        f = \fun{\cfgstatefull}{ \updatestate
        {\cfgstatefull}
        {\memvarconc{i}{\pc'}}
        {\accesscfgstate{\cfgstatefull}{\tempvar{\memvarconc{i}{\pc'}}}}} \\
        i \in [\loc_\oo, \loc_\oo + \loc_\os - 1]\\
    }
    {\edgestep{\transenv}{\cfgnode{\pc}{\restartcounter + 3 + (i - \loc_\oo)}}{\cfgnode{\pc}{\restartcounter + 3 + (i - \loc_\oo) + 1}}} \\
    {\Def{ \{\memvarconc{i}{\pc'} \}} } \\
        \Use{ \{\tempvar{\memvarconc{i}{\pc'}} \} } \\
    \infer{
        (\precontract(\pc) = (\CALL(\stackvar{y}, \stackvar{\lgas}, \stackvar{\recipient}, \stackvar{\valu}, \stackvar{\io}, \stackvar{\is}, \stackvar{\oo}, \stackvar{\os}), \pc', \pre) \land \startmemoffset = 4 \\
         \lor \ \precontract(\pc) = (\STATICCALL(\stackvar{y}, \stackvar{\lgas}, \stackvar{\recipient}, \stackvar{\io}, \stackvar{\is}, \stackvar{\oo}, \stackvar{\os}), \pc', \pre)
         \land \startmemoffset = 3) \\
            \restartcounter = 4 + \getcallnodeoffset(\pre[\startmemoffset], \pre[\startmemoffset+1], \pre[\startmemoffset+2], \pre[\startmemoffset+3]) \\
            \pre[\startmemoffset+2] = \optional{\loc_\oo} \\
            \pre[\startmemoffset+3] = \optional{\loc_\os} \\
            f = \fun{\cfgstatefull}{\updatevarset{\cfgstatefull}{\memvarabs{x}{\pc'}}{\accesscfgstate{\cfgstatefull}{\tempvar{\memvarabs{x}{\pc'}}}}{x}{[\memvar{\loc_\oo}{}, \memvar{\loc_\oo}{} + \memvar{\loc_\os}{} - 1]}}
        }
        {\edgestep{\transenv}{\cfgnode{\pc}{\restartcounter + 3 + \loc_\os}}{\cfgnode{\pc}{\restartcounter + \loc_\os + 4}}} \\
        {\Def{  \{\memvarabs{\loc}{\pc'} ~|~ \loc \in [\loc_\oo, \loc_\oo + \loc_\os - 1] \} }} \\
            \Use{  \{\tempvar{\memvarabs{\loc}{\pc'}} ~|~ \loc \in [\loc_\oo, \loc_\oo + \loc_\os - 1] \} }\\
\end{mathpar}

Finally, all temporal variables are set to $\bot$. 
This is the same for all cases, irrespective of the pre-computed values for the input and output memory fraction.

\begin{mathpar}
\infer
{
    (\precontract(\pc) = (\CALL(\stackvar{y}, \stackvar{\lgas}, \stackvar{\recipient}, \stackvar{\valu}, \stackvar{\io}, \stackvar{\is}, \stackvar{\oo}, \stackvar{\os}), \pc', \pre) \land \startmemoffset = 4 \\
    \lor \ \precontract(\pc) = (\STATICCALL(\stackvar{y}, \stackvar{\lgas}, \stackvar{\recipient}, \stackvar{\io}, \stackvar{\is}, \stackvar{\oo}, \stackvar{\os}), \pc', \pre)
    \land \startmemoffset = 3) \\
    \restartcounter = 2 * (4 + \getcallnodeoffset(\pre[\startmemoffset], \pre[\startmemoffset+1], \pre[\startmemoffset+2], \pre[\startmemoffset+3])) \\
    f = \fun{\cfgstatefull}{\updatevarset{\cfgstatefull}{\tempvar{x}}{\bot}{\tempvar{x}}{\domain{\cfgstatecopy}}}
}
{\edgestep{\transenv}{\cfgnode{\pc}{\restartcounter}}{\cfgnode{\pc'}{0}}} \\
{\Def{\domain{\cfgstatecopy}}} \\
\Use{ \emptyset} 
\end{mathpar}

In the case that only the input memory is known, the dependencies need to be propagated to the whole memory (as potential output). 
Note that in this case, we still gain precision by node splitting since, otherwise, we would need to propagate the dependencies of the whole memory to the whole memory again. 
By node splitting, we ensure that only the dependencies of the input memory fragment are propagated to the whole memory.

\begin{mathpar}
    \infer{
        (\precontract(\pc) = (\CALL(\stackvar{y}, \stackvar{\lgas}, \stackvar{\recipient}, \stackvar{\valu}, \stackvar{\io}, \stackvar{\is}, \stackvar{\oo}, \stackvar{\os}), \pc', \pre) \land \startmemoffset = 4 \\
         \lor \ \precontract(\pc) = (\STATICCALL(\stackvar{y}, \stackvar{\lgas}, \stackvar{\recipient}, \stackvar{\io}, \stackvar{\is}, \stackvar{\oo}, \stackvar{\os}), \pc', \pre)
         \land \startmemoffset = 3) \\
        \pre[\startmemoffset] = \optional{\loc_\io} \\
        \pre[\startmemoffset+1] = \optional{\loc_\is} \\
        (\pre[\startmemoffset+2] = \bot ~\lor~ \pre[\startmemoffset+3] = \bot) \\
        f = \fun{\cfgstatefull}{ \updatestate
        {\cfgstatefull}
        {\tempvar{\memvarabs{i}{\pc'}}}
        {\load~\applycall(\restrict{\cfgstatefull}{\cfgstate},\precontract,\pc)~\memvar{i}{\pc'}}} \\
        i \in \integer{256}
    }
    {\edgestep{\transenv}{\cfgnode{\pc}{3 + i }}{\cfgnode{\pc}{3 + i + 1}}} \\
    {\Def{ \{\tempvar{\memvarabs{i}{\pc'}} \} }} \\
        \Use{
        \{\memvarconc{\loc}{\pc} ~|~ \loc \in [\loc_\io, \loc_\io + \loc_\is - 1] \}
            ~\cup~ \{\memvarabs{\loc}{\pc} ~|~ \loc \in [\loc_\io,\loc_\io + \loc_\is - 1] \} \\
         ~\cup~ \{ \stackvar{\lgas}, \stackvar{\recipient}, \stackvar{\valu}, \locenvvar{\gaspc{\pc}}, \locenvvar{\activeaccount}  \}~\cup~ \globenvvar{X} ~\cup~ \storvar{X}{\pc}
         ~\cup~ \{ \memvarconc{i}{\pc}, \memvarabs{i}{\pc} \} 
         ~\cup \{ \stackvar{\oo} ~|~ \pre[\startmemoffset+2] = \optional{\loc_\oo}  \}  ~\cup \{ \stackvar{\os} ~|~ \pre[\startmemoffset+3] = \optional{\loc_\os}  \} 
        } 
\end{mathpar}

Again, the values of the temporary memory variables need to be one-by-one written to the non-temporal variables:

\begin{mathpar}
    \infer{
        (\precontract(\pc) = (\CALL(\stackvar{y}, \stackvar{\lgas}, \stackvar{\recipient}, \stackvar{\valu}, \stackvar{\io}, \stackvar{\is}, \stackvar{\oo}, \stackvar{\os}), \pc', \pre) \land \startmemoffset = 4 \\
         \lor \ \precontract(\pc) = (\STATICCALL(\stackvar{y}, \stackvar{\lgas}, \stackvar{\recipient}, \stackvar{\io}, \stackvar{\is}, \stackvar{\oo}, \stackvar{\os}), \pc', \pre)
         \land \startmemoffset = 3) \\
        \restartcounter = 4 + \getcallnodeoffset(\pre[\startmemoffset], \pre[\startmemoffset+1], \pre[\startmemoffset+2], \pre[\startmemoffset+3]) \\
        \pre[\startmemoffset] = \optional{\loc_\io} \\
        \pre[\startmemoffset+1] = \optional{\loc_\is} \\
        (\pre[\startmemoffset+2] = \bot ~\lor~ \pre[\startmemoffset+3] = \bot) \\
        f = \fun{\cfgstatefull}{ \updatestate
        {\cfgstatefull}
        {\memvarabs{i}{\pc'}}
        {\accesscfgstate{\cfgstatefull}{\tempvar{\memvarabs{i}{}}}}} \\
        i \in \integer{256}
    }
    { 
        \edgestep{\transenv}{\cfgnode{\pc}{\restartcounter + 3 + i }}{\cfgnode{\pc}{\restartcounter + 3 + i + 1}}
        } \\
    {\Def{ \{ \memvarabs{i}{\pc'} \} }} \\
        \Use{ \{ \tempvar{{\memvarabs{i}{\pc'}}}\} 
        } 
\end{mathpar}

In the case that only the output memory is known, the dependencies from the whole memory need to be propagated, but only to a small memory fraction. 
In this scenario, node splitting does not help since anyway each affected memory node gets already the dependencies from the whole memory assigned:

\begin{mathpar}
    \infer{
        (\precontract(\pc) = (\CALL(\stackvar{y}, \stackvar{\lgas}, \stackvar{\recipient}, \stackvar{\valu}, \stackvar{\io}, \stackvar{\is}, \stackvar{\oo}, \stackvar{\os}), \pc', \pre) \land \startmemoffset = 4 \\
         \lor \ \precontract(\pc) = (\STATICCALL(\stackvar{y}, \stackvar{\lgas}, \stackvar{\recipient}, \stackvar{\io}, \stackvar{\is}, \stackvar{\oo}, \stackvar{\os}), \pc', \pre)
         \land \startmemoffset = 3) \\
        (\pre[\startmemoffset] = \bot ~\lor~  \pre[\startmemoffset+1] = \bot) \\
        \pre[\startmemoffset+2] = \optional{\loc_\oo} \\
        \pre[\startmemoffset+3] = \optional{\loc_\os} \\
        f = \fun{\cfgstatefull}{ \updatevarset
        {\cfgstatefull}
        {\tempvar{\memvarconc{i}{\pc'}}}
        {\load~\applycall(\restrict{\cfgstatefull}{\cfgstate},\precontract,\pc)~\memvar{i}{\pc'}}
        {i}
        {[\loc_\oo, \loc_\oo + \loc_\os - 1]}}
    }
    {\edgestep{\transenv}{\cfgnode{\pc}{3}}{\cfgnode{\pc}{4}}} \\
    {\Def{ \{\tempvar{\memvarconc{\loc}{\pc'}} ~|~ \loc \in [\loc_\oo, \loc_\oo + \loc_\os - 1] \} }} \\
        \Use{
            \memvarconc{X}{\pc}~\cup~ \memvarabs{X}{\pc} 
         ~\cup~ \{ \stackvar{\lgas}, \stackvar{\recipient}, \stackvar{\valu}, \locenvvar{\gaspc{\pc}}, \locenvvar{\activeaccount}  \}~\cup~ \globenvvar{X}~\cup~ \storvar{X}{\pc}
         ~\cup \{ \stackvar{\io} ~|~ \pre[\startmemoffset] = \optional{\loc_\io}  \}  ~\cup \{ \stackvar{\is} ~|~ \pre[\startmemoffset+1] = \optional{\loc_\is}  \} 
        } \\
        \infer{
            (\precontract(\pc) = (\CALL(\stackvar{y}, \stackvar{\lgas}, \stackvar{\recipient}, \stackvar{\valu}, \stackvar{\io}, \stackvar{\is}, \stackvar{\oo}, \stackvar{\os}), \pc', \pre) \land \startmemoffset = 4 \\
            \lor \ \precontract(\pc) = (\STATICCALL(\stackvar{y}, \stackvar{\lgas}, \stackvar{\recipient}, \stackvar{\io}, \stackvar{\is}, \stackvar{\oo}, \stackvar{\os}), \pc', \pre)
            \land \startmemoffset = 3) \\
            (\pre[\startmemoffset] = \bot ~\lor~  \pre[\startmemoffset+1] = \bot) \\
            \pre[\startmemoffset+2] = \optional{\loc_\oo} \\
            \pre[\startmemoffset+3] = \optional{\loc_\os} \\
            f = \fun{\cfgstatefull}{\updatevarset{\cfgstatefull}{\tempvar{\memvarabs{x}{\pc'}}}{\bot}{x}{[\loc_\oo,\loc_\oo + \loc_\os - 1]}}
        }
        {\edgestep{\transenv}{\cfgnode{\pc}{4}}{\cfgnode{\pc}{5}}} \\
        {\Def{  \{\tempvar{\memvarabs{\loc}{\pc'}} ~|~ \loc \in [\loc_\oo, \loc_\oo + \loc_\os - 1] \} }} \\
        \Use{\emptyset}
\end{mathpar}

Similar to the previous cases, the temporal memory variables are carried over afterward:

\begin{mathpar}
    \infer{
        (\precontract(\pc) = (\CALL(\stackvar{y}, \stackvar{\lgas}, \stackvar{\recipient}, \stackvar{\valu}, \stackvar{\io}, \stackvar{\is}, \stackvar{\oo}, \stackvar{\os}), \pc', \pre) \land \startmemoffset = 4 \\
         \lor \ \precontract(\pc) = (\STATICCALL(\stackvar{y}, \stackvar{\lgas}, \stackvar{\recipient}, \stackvar{\io}, \stackvar{\is}, \stackvar{\oo}, \stackvar{\os}), \pc', \pre)
         \land \startmemoffset = 3) \\
        \restartcounter = 4 + \getcallnodeoffset(\pre[\startmemoffset], \pre[\startmemoffset+1], \pre[\startmemoffset+2], \pre[\startmemoffset+3]) \\
        (\pre[\startmemoffset] = \bot ~\lor~  \pre[\startmemoffset+1] = \bot) \\
        \pre[\startmemoffset+2] = \optional{\loc_\oo} \\
        \pre[\startmemoffset+3] = \optional{\loc_\os} \\
        f = \fun{\cfgstatefull}{ \updatevarset
        {\cfgstatefull}
        {\memvarconc{i}{\pc'}}
        {\accesscfgstate{\cfgstatefull}{\tempvar{\memvarconc{i}{\pc'}}}}
        {i}
        {[\loc_\oo, \loc_\oo + \loc_\os - 1]}}
    }
    {\edgestep{\transenv}{\cfgnode{\pc}{\restartcounter + 3}}{\cfgnode{\pc}{\restartcounter + 4}}} \\
    {\Def{ \{\memvarconc{\loc}{\pc'} ~|~ \loc \in [\loc_\oo, \loc_\oo + \loc_\os - 1] \} }} \\
        \Use{ 
            \{\tempvar{\memvarconc{\loc}{\pc'}} ~|~ \loc \in [\loc_\oo, \loc_\oo + \loc_\os - 1] \} 
        } \\
        \infer{
            (\precontract(\pc) = (\CALL(\stackvar{y}, \stackvar{\lgas}, \stackvar{\recipient}, \stackvar{\valu}, \stackvar{\io}, \stackvar{\is}, \stackvar{\oo}, \stackvar{\os}), \pc', \pre) \land \startmemoffset = 4 \\
            \lor \ \precontract(\pc) = (\STATICCALL(\stackvar{y}, \stackvar{\lgas}, \stackvar{\recipient}, \stackvar{\io}, \stackvar{\is}, \stackvar{\oo}, \stackvar{\os}), \pc', \pre)
            \land \startmemoffset = 3) \\            \restartcounter = 4 + \getcallnodeoffset(\pre[\startmemoffset], \pre[\startmemoffset+1], \pre[\startmemoffset+2], \pre[\startmemoffset+3]) \\
            (\pre[\startmemoffset] = \bot ~\lor~  \pre[\startmemoffset+1] = \bot) \\
            \pre[\startmemoffset+2] = \optional{\loc_\oo} \\
            \pre[\startmemoffset+3] = \optional{\loc_\os} \\
            f = \fun{\cfgstatefull}{\updatevarset{\cfgstatefull}{\memvarabs{x}{\pc'}}{\accesscfgstate{\cfgstatefull}{\tempvar{\memvarabs{{x}}{\pc'}}}}{x}{[\loc_\oo,\loc_\oo + \loc_\os - 1]}}
        }
        {\edgestep{\transenv}{\cfgnode{\pc}{\restartcounter + 4}}{\cfgnode{\pc}{\restartcounter + 5}}} \\
        {\Def{  \{\memvarabs{\loc}{\pc'} ~|~ \loc \in [\loc_\oo, \loc_\oo + \loc_\os - 1] \} }} \\
        \Use{ \{\tempvar{\memvarabs{\loc}{\pc'}} ~|~ \loc \in [\loc_\oo, \loc_\oo + \loc_\os - 1] \} }
\end{mathpar}

Finally, if neither input nor result memory fraction can be determined the whole memory needs to be considered the whole memory.
We can characterize this by a single rule as follows: 

\begin{mathpar}
    \infer{
        (\precontract(\pc) = (\CALL(\stackvar{y}, \stackvar{\lgas}, \stackvar{\recipient}, \stackvar{\valu}, \stackvar{\io}, \stackvar{\is}, \stackvar{\oo}, \stackvar{\os}), \pc', \pre) \land \startmemoffset = 4 \\
         \lor \ \precontract(\pc) = (\STATICCALL(\stackvar{y}, \stackvar{\lgas}, \stackvar{\recipient}, \stackvar{\io}, \stackvar{\is}, \stackvar{\oo}, \stackvar{\os}), \pc', \pre)
         \land \startmemoffset = 3) \\
        (\pre[\startmemoffset] = \bot ~\lor~  \pre[\startmemoffset+1] = \bot) \\
        (\pre[\startmemoffset+2] = \bot ~\lor~ \pre[\startmemoffset+3] = \bot) \\
        f = \fun{\cfgstatefull}{ \updatevarset
        {\cfgstatefull}
        {\tempvar{\memvarabs{i}{\pc'}}}
        {\load~\applycall(\restrict{\cfgstatefull}{\cfgstate},\precontract,\pc)~\memvar{i}{\pc'}}
        {i}
        {[\accesscfgstate{\cfgstatefull}{\stackvar{\oo}}, \accesscfgstate{\cfgstatefull}{\stackvar{\oo}} + \accesscfgstate{\cfgstatefull}{\stackvar{\oo}} - 1]}}
    }
    {\edgestep{\transenv}{\cfgnode{\pc}{3}}{\cfgnode{\pc}{4}}} \\
    {\Def{ \tempvar{\memvarabs{X}{\pc'} }}} \\
        \Use{
        \memvarconc{X}{\pc}~\cup~ \memvarabs{X}{\pc}
         ~\cup~ \{ \stackvar{\lgas}, \stackvar{\recipient}, \stackvar{\valu}, \locenvvar{\gaspc{\pc}}, \locenvvar{\activeaccount}  \}~\cup~ \globenvvar{X}~\cup~ \storvar{X}{\pc}
         ~\cup \{ \stackvar{\io} ~|~ \pre[\startmemoffset] = \optional{\loc_\io}  \}  ~\cup \{ \stackvar{\is} ~|~ \pre[\startmemoffset+1] = \optional{\loc_\is}  \} 
         ~\cup \{ \stackvar{\oo} ~|~ \pre[\startmemoffset+2] = \optional{\loc_\oo}  \}  ~\cup \{ \stackvar{\os} ~|~ \pre[\startmemoffset+3] = \optional{\loc_\os}  \} 
        } 
\end{mathpar}

The rules for carrying over the temporal memory variables are as follows: 

\begin{mathpar}
    \infer{
        (\precontract(\pc) = (\CALL(\stackvar{y}, \stackvar{\lgas}, \stackvar{\recipient}, \stackvar{\valu}, \stackvar{\io}, \stackvar{\is}, \stackvar{\oo}, \stackvar{\os}), \pc', \pre) \land \startmemoffset = 4 \\
         \lor \ \precontract(\pc) = (\STATICCALL(\stackvar{y}, \stackvar{\lgas}, \stackvar{\recipient}, \stackvar{\io}, \stackvar{\is}, \stackvar{\oo}, \stackvar{\os}), \pc', \pre)
         \land \startmemoffset = 3) \\
        \restartcounter = 4 + \getcallnodeoffset(\pre[\startmemoffset], \pre[\startmemoffset+1], \pre[\startmemoffset+2], \pre[\startmemoffset+3]) \\
        (\pre[\startmemoffset] = \bot ~\lor~  \pre[\startmemoffset+1] = \bot) \\
        (\pre[\startmemoffset+2] = \bot ~\lor~ \pre[\startmemoffset+3] = \bot) \\
        f = \fun{\cfgstatefull}{ \updatevarset
        {\cfgstatefull}
        {\memvarabs{i}{\pc'}}
        {\accesscfgstate{\cfgstatefull}{\tempvar{\memvarabs{i}{\pc'}}}}
        {i}
        {[\accesscfgstate{\cfgstatefull}{\stackvar{\oo}}, \accesscfgstate{\cfgstatefull}{\stackvar{\oo}} + \accesscfgstate{\cfgstatefull}{\stackvar{\oo}} - 1]}}
    }
    {\edgestep{\transenv}{\cfgnode{\pc}{\restartcounter + 3}}{\cfgnode{\pc}{\restartcounter + 4}}} \\
    {\Def{\memvarabs{X}{\pc'} }} \\
        \Use{
           \tempvar{\memvarabs{X}{\pc'}}
        } 
\end{mathpar}

It is important to note that the function $\applycall(\cfgstate, \precontract, \pc)$ in all of the different rules returns the same result since only temporal variables are altered before each call of $\applycall$ (and those are not used by $\applycall(\cfgstate, \precontract, \pc)$). 

Note that we make here use of the distinction between local and global environment variables.
Intuitively, the global environment variables can be accessed by other contracts as well and hence may influence the outcome of the call. 
Consequently, they need to be included in the Use sets of all the rules applying the effects of the call to the state. 
The local environment variable $\locenvvar{\gas}$ plays a special role in that the current amount of gas may influence the amount of gas given to the call and hence also the outcome of the execution. 
For this reason, $\locenvvar{\gas}$ needs to be included in the Use set.

We give the rules for the $\CREATE$ opcode in a similar fashion:

We first devise a rule for the application of a create transaction: 

\begin{mathpar}
    \infer 
    {  
        (\precontract(\pc) = (\CREATE(\stackvar{y}, \stackvar{\valu}, \stackvar{\io}, \stackvar{\is}, \pc', \pre)) 
        \lor \precontract(\pc) = (\CREATETWO(\stackvar{y}, \stackvar{\valu}, \stackvar{\salt} \stackvar{\io}, \stackvar{\is}, \pc', \pre))\\
    (\transenv, \exstate) = \toevmstate(\cfgstate, \precontract, \pc) \\
    \transactionstep{\transenv}{\cons{\exstate}{\callstack}}{\cons{\exstate'}{\callstack}} \\
    \cfgstate' = \tocfgstate(\transenv, \exstate') \\
    }
    {\applycreate(\cfgstate, \precontract, \pc) = \cfgstate'} 
\end{mathpar}

As opposed to call instructions, create instructions do not expect a return value written to memory, but only the resulting address of the created account is written to the stack. 
However, we need to consider that in case an exception occurs, $0$ is written to the stack. 
An exception can occur if the execution of the initialization code causes an exception. 
Since the behavior of the initialization code may again depend on the environment, the value written to the stack can be dependent on the global environment. 

We, hence, can again summarize the rules for the return value $\stackvar{y}$ and the external environment.
Similar to the $\CALL$ rules, all updates are first done to temporal variables and only later transferred to the actual ones.

\begin{mathpar}
    \infer{
        \precontract(\pc) = (\CREATE(\stackvar{y}, \stackvar{\valu}, \stackvar{\io}, \stackvar{\is}, \pc', \pre))\\
        \startmemoffset = 2 \\ 
        \pre[\startmemoffset] = \optional{\loc_\io} \\
        \pre[\startmemoffset+1] = \optional{\loc_\is} \\
        f_1 = \fun{\cfgstate}{\updatestate{\cfgstatefull}{\tempvar{\stackvar{y}}}
        {\accesscfgstate{\applycreate(\restrict{\cfgstatefull}{\cfgstate}, \precontract, \pc)}{\stackvar{y}}}} \\
        f_2 = \fun{\cfgstatefull}{\updatestate{\cfgstatefull}{\tempvar{\globenvvar{\cfgexternalpc{\pc'}}}}
        {\accesscfgstate{\applycreate(\restrict{\cfgstatefull}{\cfgstate}, \precontract, \pc)}{\globenvvar{\cfgexternalpc{\pc'}}}}}  \\
        f = \fun{\cfgstatefull}{f_2(f_1(\cfgstatefull))}\\
    }
    {\edgestep{\transenv}{\cfgnode{\pc}{0}}{\cfgnode{\pc}{1}}} \\
    {\Def{  \{ \tempvar{\stackvar{y}}, \tempvar{\globenvvar{\cfgexternalpc{\pc'}}}\}}} \\
        \Use{
        \{\memvarconc{\loc}{\pc} ~|~ \loc \in [\loc_\io, \loc_\io + \loc_\is - 1] \}
            ~\cup~ \{\memvarabs{\loc}{\pc} ~|~ \loc \in [\loc_\io,\loc_\io + \loc_\is - 1] \} \\
         ~\cup~ \{ \stackvar{\valu}, \locenvvar{\gaspc{\pc}}, \locenvvar{\activeaccount} \}~\cup~ \globenvvar{X}~\cup~ \storvar{X}{\pc}
        } 
\end{mathpar}
\begin{mathpar}
    \infer{
        \precontract(\pc) = (\CREATE(\stackvar{y}, \stackvar{\valu}, \stackvar{\io}, \stackvar{\is}, \pc', \pre))\\
        \startmemoffset = 2 \\ 
        (\pre[\startmemoffset] = \bot ~\lor~ \pre[\startmemoffset+1] = \bot) \\ 
        f_1 = \fun{\cfgstatefull}{\updatestate{\cfgstatefull}{\tempvar{\stackvar{y}}}
        {\accesscfgstate{\applycreate(\restrict{\cfgstatefull}{\cfgstate}, \precontract, \pc)}{\stackvar{y}}}} \\
        f_2 = \fun{\cfgstatefull}{\updatestate{\cfgstatefull}{\tempvar{\globenvvar{\cfgexternalpc{\pc'}}}}
        {\accesscfgstate{\applycreate(\restrict{\cfgstatefull}{\cfgstate}, \precontract, \pc)}{\globenvvar{\cfgexternalpc{\pc'}}}}}  \\
        f = \fun{\cfgstatefull}{f_2(f_1(\cfgstatefull))}\\
    }
    {\edgestep{\transenv}{\cfgnode{\pc}{0}}{\cfgnode{\pc}{1}}} \\
    {\Def{  \{ \tempvar{\stackvar{y}}, \tempvar{\globenvvar{\cfgexternalpc{\pc'}}}\}}} \\
        \Use{ \memvarconc{X}{\pc} ~\cup~ \memvarabs{X}{\pc}
         ~\cup~ \{ \stackvar{\valu}, \locenvvar{\gaspc{\pc}}, \locenvvar{\activeaccount} \} 
         ~\cup~ \globenvvar{X}~\cup~ \storvar{X}{\pc} \\
         ~\cup \{ \stackvar{\io} ~|~ \pre[\startmemoffset] = \optional{\loc_\io}  \}  ~\cup \{ \stackvar{\is} ~|~ \pre[\startmemoffset+1] = \optional{\loc_\is}  \} 
        } 
\end{mathpar}

We give the rules for the gas computation: 
\begin{mathpar}
    \infer{
        \precontract(\pc) = (\CREATE(\stackvar{y}, \stackvar{\valu}, \stackvar{\io}, \stackvar{\is}, \pc', \pre))\\
        \startmemoffset = 2 \\ 
            \pre[\startmemoffset] = \optional{\loc_\io} \\
            \pre[\startmemoffset+1] = \optional{\loc_\is} \\
            f = \fun{\cfgstatefull}{\updatestate{\cfgstatefull}{\tempvar{\locenvvar{\gaspc{\pc'}}}}{\accesscfgstate{\applycreate(\restrict{\cfgstatefull}{\cfgstate}, \precontract, \pc)}{\locenvvar{\gaspc{\pc'}}}}}
         }
        {\edgestep{\transenv}{\cfgnode{\pc}{1}}{\cfgnode{\pc}{2}}}
        \\
        {\Defs{\tempvar{\locenvvar{\gaspc{\pc'}}}}} \\
        \Use{
            \{\memvarconc{\loc}{\pc} ~|~ \loc \in [\loc_\io, \loc_\io + \loc_\is - 1] \}
            ~\cup~ \{\memvarabs{\loc}{\pc} ~|~ \loc \in [\loc_\io,\loc_\io + \loc_\is - 1] \} \\
         ~\cup~ \{ \stackvar{\valu}, \locenvvar{\gaspc{\pc}}, \locenvvar{\msizepc{\pc}}, \locenvvar{\activeaccount} \}~\cup~ \globenvvar{X}~\cup~ \storvar{X}{\pc}
        } 
\end{mathpar}
\begin{mathpar}
     \infer{
        \precontract(\pc) = (\CREATE(\stackvar{y}, \stackvar{\valu}, \stackvar{\io}, \stackvar{\is}, \pc', \pre))\\
        \startmemoffset = 2 \\ 
            (\pre[\startmemoffset] = \bot ~\lor~  \pre[\startmemoffset+1] = \bot) \\
            f = \fun{\cfgstatefull}{\updatestate{\cfgstatefull}{\tempvar{\locenvvar{\gaspc{\pc'}}}}{\accesscfgstate{\applycreate(\restrict{\cfgstatefull}{\cfgstate}, \precontract, \pc)}{\locenvvar{\gaspc{\pc'}}}}}
         }
        {\edgestep{\transenv}{\cfgnode{\pc}{1}}{\cfgnode{\pc}{2}}}
        \\
        {\Defs{\locenvvar{\tempvar{\gaspc{\pc'}}}}} \\
        \Use{
            \memvarconc{X}{\pc} ~\cup~ \memvarabs{X}{\pc}
         ~\cup~ \{ \stackvar{\valu}, \locenvvar{\gaspc{\pc}}, \locenvvar{\msizepc{\pc}}, \locenvvar{\activeaccount} \}~\cup~ \globenvvar{X}~\cup~ \storvar{X}{\pc} \\
         ~\cup \{ \stackvar{\io} ~|~ \pre[\startmemoffset] = \optional{\loc_\io}  \}  ~\cup \{ \stackvar{\is} ~|~ \pre[\startmemoffset+1] = \optional{\loc_\is}  \} 
        }
\end{mathpar}

Next, we give the rule for the update of the active words in memory: 
\begin{mathpar}
            \infer{
                \precontract(\pc) = (\CREATE(\stackvar{y}, \stackvar{\valu}, \stackvar{\io}, \stackvar{\is}, \pc', \pre))\\
                \startmemoffset = 2 \\ 
                f = \fun{\cfgstatefull}{\updatestate{\cfgstatefull}{\tempvar{\locenvvar{\msizepc{\pc'}}}}{\accesscfgstate{\applycreate(\restrict{\cfgstatefull}{\cfgstate}, \precontract, \pc)}{\locenvvar{\msizepc{\pc'}}}}}
             }
            {\edgestep{\transenv}{\cfgnode{\pc}{2}}{\cfgnode{\pc}{3}}}
            \\
            {\Defs{\tempvar{\locenvvar{\msizepc{\pc'}}}}} \\
            \Use{ \{ \locenvvar{\msizepc{\pc}}\} 
            ~\cup \{ \stackvar{\io} ~|~ \pre[\startmemoffset] = \optional{\loc_\io}  \}  ~\cup \{ \stackvar{\is} ~|~ \pre[\startmemoffset+1] = \optional{\loc_\is}  \} 
            }
\end{mathpar}

We give the rules for writing the temporal variables into the actual ones one by one: 

\begin{mathpar}
    \infer{
        \precontract(\pc) = (\CREATE(\stackvar{y}, \stackvar{\valu}, \stackvar{\io}, \stackvar{\is}, \pc', \pre))\\
        \startmemoffset = 2 \\ 
        f_1 = \fun{\cfgstate}{\updatestate{\cfgstatefull}{\stackvar{y}}
        {\accesscfgstate{\cfgstatefull}{\tempvar{\stackvar{y}}}}} \\
        f_2 = \fun{\cfgstatefull}{\updatestate{\cfgstatefull}{\globenvvar{\cfgexternalpc{\pc'}}}
        {\accesscfgstate{\cfgstatefull}{\tempvar{\globenvvar{\cfgexternalpc{\pc'}}}}}}  \\
        f = \fun{\cfgstatefull}{f_2(f_1(\cfgstatefull))}\\
    }
    {\edgestep{\transenv}{\cfgnode{\pc}{3}}{\cfgnode{\pc}{4}}} \\
    {\Def{  \{ \stackvar{y}, \globenvvar{\cfgexternalpc{\pc'}}\}}} \\
        \Use{
            \{ \tempvar{\stackvar{y}}, \tempvar{\globenvvar{\cfgexternalpc{\pc'}}}\}
        } 
\end{mathpar}

\begin{mathpar}
    \infer{
        \precontract(\pc) = (\CREATE(\stackvar{y}, \stackvar{\valu}, \stackvar{\io}, \stackvar{\is}, \pc', \pre))\\
        \startmemoffset = 2 \\ 
        f = \fun{\cfgstate}{\updatestate{\cfgstatefull}{\locenvvar{\gaspc{\pc'}}}
        {\accesscfgstate{\cfgstatefull}{\tempvar{\locenvvar{\gaspc{\pc'}}}}}} \\
    }
    {\edgestep{\transenv}{\cfgnode{\pc}{4}}{\cfgnode{\pc}{5}}} \\
    {\Def{  \{ \locenvvar{\gaspc{\pc'}}\}}} \\
        \Use{
            \{ \tempvar{\locenvvar{\gaspc{\pc'}}} \}
        } 
\end{mathpar}

\begin{mathpar}
    \infer{
        \precontract(\pc) = (\CREATE(\stackvar{y}, \stackvar{\valu}, \stackvar{\io}, \stackvar{\is}, \pc', \pre))\\
        \startmemoffset = 2 \\ 
        f = \fun{\cfgstate}{\updatestate{\cfgstatefull}{\locenvvar{\msizepc{\pc'}}}
        {\accesscfgstate{\cfgstatefull}{\tempvar{\locenvvar{\msizepc{\pc'}}}}}} \\
    }
    {\edgestep{\transenv}{\cfgnode{\pc}{5}}{\cfgnode{\pc}{6}}} \\
    {\Def{  \{ \locenvvar{\msizepc{\pc'}}\}}} \\
        \Use{
            \{ \tempvar{\locenvvar{\msizepc{\pc'}}} \}
        } 
\end{mathpar}

Finally, all temporal variables are set to $\bot$ again: 

\begin{mathpar}
    \infer
    {
        \precontract(\pc) = (\CREATE(\stackvar{y}, \stackvar{\valu}, \stackvar{\io}, \stackvar{\is}, \pc', \pre))\\
        \startmemoffset = 2 \\ 
        f = \fun{\cfgstatefull}{\updatevarset{\cfgstatefull}{\tempvar{x}}{\bot}{\tempvar{x}}{\domain{\cfgstatecopy}}}
    }
    {\edgestep{\transenv}{\cfgnode{\pc}{6}}{\cfgnode{\pc'}{0}}} \\
    {\Def{\domain{\cfgstatecopy}}} \\
    \Use{ \emptyset} 
    \end{mathpar}

Finally, the instruction $\CREATETWO$ operates in a similar fashion as $\CREATE$ with the main difference being that the newly created contract is assigned an address that can be predetermined. To this end, $\CREATETWO$ takes an additional argument $\salt$, which together with the creation code determines the address.
Correspondingly, the rules for $\CREATETWO$ closely follow those of $\CREATE$.  

We can again summarize the rules for the return value $\stackvar{y}$ and the external environment.
Similar to the $\CREATE$ and $\CALL$ rules, all updates are first done to temporal variables and only later transferred to the actual ones.

\begin{mathpar}
    \infer{
        \precontract(\pc) = (\CREATETWO(\stackvar{y}, \stackvar{\valu}, \stackvar{\salt}, \stackvar{\io}, \stackvar{\is}, \pc', \pre))\\
        \startmemoffset = 2 \\ 
        \pre[\startmemoffset] = \optional{\loc_\io} \\
        \pre[\startmemoffset+1] = \optional{\loc_\is} \\
        f_1 = \fun{\cfgstate}{\updatestate{\cfgstatefull}{\tempvar{\stackvar{y}}}
        {\accesscfgstate{\applycreate(\restrict{\cfgstatefull}{\cfgstate}, \precontract, \pc)}{\stackvar{y}}}} \\
        f_2 = \fun{\cfgstatefull}{\updatestate{\cfgstatefull}{\tempvar{\globenvvar{\cfgexternalpc{\pc'}}}}
        {\accesscfgstate{\applycreate(\restrict{\cfgstatefull}{\cfgstate}, \precontract, \pc)}{\globenvvar{\cfgexternalpc{\pc'}}}}}  \\
        f = \fun{\cfgstatefull}{f_2(f_1(\cfgstatefull))}\\
    }
    {\edgestep{\transenv}{\cfgnode{\pc}{0}}{\cfgnode{\pc}{1}}} \\
    {\Def{  \{ \tempvar{\stackvar{y}}, \tempvar{\globenvvar{\cfgexternalpc{\pc'}}}\}}} \\
        \Use{
        \{\memvarconc{\loc}{\pc} ~|~ \loc \in [\loc_\io, \loc_\io + \loc_\is - 1] \}
            ~\cup~ \{\memvarabs{\loc}{\pc} ~|~ \loc \in [\loc_\io,\loc_\io + \loc_\is - 1] \} \\
         ~\cup~ \{ \stackvar{\valu}, \stackvar{\salt}, \locenvvar{\gaspc{\pc}}, \locenvvar{\activeaccount} \}~\cup~ \globenvvar{X}~\cup~ \storvar{X}{\pc}
        } 
\end{mathpar}
\begin{mathpar}
    \infer{
        \precontract(\pc) = (\CREATETWO(\stackvar{y}, \stackvar{\valu}, \stackvar{\salt}, \stackvar{\io}, \stackvar{\is}, \pc', \pre))\\
        \startmemoffset = 2 \\ 
        (\pre[\startmemoffset] = \bot ~\lor~ \pre[\startmemoffset+1] = \bot) \\ 
        f_1 = \fun{\cfgstatefull}{\updatestate{\cfgstatefull}{\tempvar{\stackvar{y}}}
        {\accesscfgstate{\applycreate(\restrict{\cfgstatefull}{\cfgstate}, \precontract, \pc)}{\stackvar{y}}}} \\
        f_2 = \fun{\cfgstatefull}{\updatestate{\cfgstatefull}{\tempvar{\globenvvar{\cfgexternalpc{\pc'}}}}
        {\accesscfgstate{\applycreate(\restrict{\cfgstatefull}{\cfgstate}, \precontract, \pc)}{\globenvvar{\cfgexternalpc{\pc'}}}}}  \\
        f = \fun{\cfgstatefull}{f_2(f_1(\cfgstatefull))}\\
    }
    {\edgestep{\transenv}{\cfgnode{\pc}{0}}{\cfgnode{\pc}{1}}} \\
    {\Def{  \{ \tempvar{\stackvar{y}}, \tempvar{\globenvvar{\cfgexternalpc{\pc'}}}\}}} \\
        \Use{ \memvarconc{X}{\pc} ~\cup~ \memvarabs{X}{\pc}
         ~\cup~ \{ \stackvar{\valu}, \stackvar{\salt} \locenvvar{\gaspc{\pc}}, \locenvvar{\activeaccount} \} 
         ~\cup~ \globenvvar{X}~\cup~ \storvar{X}{\pc} \\
         ~\cup \{ \stackvar{\io} ~|~ \pre[\startmemoffset] = \optional{\loc_\io}  \}  ~\cup \{ \stackvar{\is} ~|~ \pre[\startmemoffset+1] = \optional{\loc_\is}  \} 
        } 
\end{mathpar}

We give the rules for the gas computation: 
\begin{mathpar}
    \infer{
        \precontract(\pc) = (\CREATETWO(\stackvar{y}, \stackvar{\valu}, \stackvar{\salt}, \stackvar{\io}, \stackvar{\is}, \pc', \pre))\\
        \startmemoffset = 2 \\ 
            \pre[\startmemoffset] = \optional{\loc_\io} \\
            \pre[\startmemoffset+1] = \optional{\loc_\is} \\
            f = \fun{\cfgstatefull}{\updatestate{\cfgstatefull}{\tempvar{\locenvvar{\gaspc{\pc'}}}}{\accesscfgstate{\applycreate(\restrict{\cfgstatefull}{\cfgstate}, \precontract, \pc)}{\locenvvar{\gaspc{\pc'}}}}}
         }
        {\edgestep{\transenv}{\cfgnode{\pc}{1}}{\cfgnode{\pc}{2}}}
        \\
        {\Defs{\tempvar{\locenvvar{\gaspc{\pc'}}}}} \\
        \Use{
            \{\memvarconc{\loc}{\pc} ~|~ \loc \in [\loc_\io, \loc_\io + \loc_\is - 1] \}
            ~\cup~ \{\memvarabs{\loc}{\pc} ~|~ \loc \in [\loc_\io,\loc_\io + \loc_\is - 1] \} \\
         ~\cup~ \{ \stackvar{\valu}, \stackvar{\salt}, \locenvvar{\gaspc{\pc}}, \locenvvar{\msizepc{\pc}}, \locenvvar{\activeaccount} \}~\cup~ \globenvvar{X}~\cup~ \storvar{X}{\pc}
        } 
\end{mathpar}
\begin{mathpar}
     \infer{
        \precontract(\pc) = (\CREATETWO(\stackvar{y}, \stackvar{\valu}, \stackvar{\salt}, \stackvar{\io}, \stackvar{\is}, \pc', \pre))\\
        \startmemoffset = 2 \\ 
            (\pre[\startmemoffset] = \bot ~\lor~  \pre[\startmemoffset+1] = \bot) \\
            f = \fun{\cfgstatefull}{\updatestate{\cfgstatefull}{\tempvar{\locenvvar{\gaspc{\pc'}}}}{\accesscfgstate{\applycreate(\restrict{\cfgstatefull}{\cfgstate}, \precontract, \pc)}{\locenvvar{\gaspc{\pc'}}}}}
         }
        {\edgestep{\transenv}{\cfgnode{\pc}{1}}{\cfgnode{\pc}{2}}}
        \\
        {\Defs{\locenvvar{\tempvar{\gaspc{\pc'}}}}} \\
        \Use{
            \memvarconc{X}{\pc} ~\cup~ \memvarabs{X}{\pc}
         ~\cup~ \{ \stackvar{\valu}, \stackvar{\salt}, \locenvvar{\gaspc{\pc}}, \locenvvar{\msizepc{\pc}}, \locenvvar{\activeaccount} \}~\cup~ \globenvvar{X}~\cup~ \storvar{X}{\pc} \\
         ~\cup \{ \stackvar{\io} ~|~ \pre[\startmemoffset] = \optional{\loc_\io}  \}  ~\cup \{ \stackvar{\is} ~|~ \pre[\startmemoffset+1] = \optional{\loc_\is}  \} 
        }
\end{mathpar}

Next, we give the rule for the update of the active words in memory: 
\begin{mathpar}
            \infer{
                \precontract(\pc) = (\CREATETWO(\stackvar{y}, \stackvar{\valu}, \stackvar{\salt}, \stackvar{\io}, \stackvar{\is}, \pc', \pre))\\
                \startmemoffset = 2 \\ 
                f = \fun{\cfgstatefull}{\updatestate{\cfgstatefull}{\tempvar{\locenvvar{\msizepc{\pc'}}}}{\accesscfgstate{\applycreate(\restrict{\cfgstatefull}{\cfgstate}, \precontract, \pc)}{\locenvvar{\msizepc{\pc'}}}}}
             }
            {\edgestep{\transenv}{\cfgnode{\pc}{2}}{\cfgnode{\pc}{3}}}
            \\
            {\Defs{\tempvar{\locenvvar{\msizepc{\pc'}}}}} \\
            \Use{ \{ \locenvvar{\msizepc{\pc}}\} 
            ~\cup \{ \stackvar{\io} ~|~ \pre[\startmemoffset] = \optional{\loc_\io}  \}  ~\cup \{ \stackvar{\is} ~|~ \pre[\startmemoffset+1] = \optional{\loc_\is}  \} 
            }
\end{mathpar}

We give the rules for writing the temporal variables into the actual ones one by one: 

\begin{mathpar}
    \infer{
        \precontract(\pc) = (\CREATETWO(\stackvar{y}, \stackvar{\valu}, \stackvar{\salt}, \stackvar{\io}, \stackvar{\is}, \pc', \pre))\\
        \startmemoffset = 2 \\ 
        f_1 = \fun{\cfgstate}{\updatestate{\cfgstatefull}{\stackvar{y}}
        {\accesscfgstate{\cfgstatefull}{\tempvar{\stackvar{y}}}}} \\
        f_2 = \fun{\cfgstatefull}{\updatestate{\cfgstatefull}{\globenvvar{\cfgexternalpc{\pc'}}}
        {\accesscfgstate{\cfgstatefull}{\tempvar{\globenvvar{\cfgexternalpc{\pc'}}}}}}  \\
        f = \fun{\cfgstatefull}{f_2(f_1(\cfgstatefull))}\\
    }
    {\edgestep{\transenv}{\cfgnode{\pc}{3}}{\cfgnode{\pc}{4}}} \\
    {\Def{  \{ \stackvar{y}, \globenvvar{\cfgexternalpc{\pc'}}\}}} \\
        \Use{
            \{ \tempvar{\stackvar{y}}, \tempvar{\globenvvar{\cfgexternalpc{\pc'}}}\}
        } 
\end{mathpar}

\begin{mathpar}
    \infer{
        \precontract(\pc) = (\CREATETWO(\stackvar{y}, \stackvar{\valu}, \stackvar{\salt}, \stackvar{\io}, \stackvar{\is}, \pc', \pre))\\
        \startmemoffset = 2 \\ 
        f = \fun{\cfgstate}{\updatestate{\cfgstatefull}{\locenvvar{\gaspc{\pc'}}}
        {\accesscfgstate{\cfgstatefull}{\tempvar{\locenvvar{\gaspc{\pc'}}}}}} \\
    }
    {\edgestep{\transenv}{\cfgnode{\pc}{4}}{\cfgnode{\pc}{5}}} \\
    {\Def{  \{ \locenvvar{\gaspc{\pc'}}\}}} \\
        \Use{
            \{ \tempvar{\locenvvar{\gaspc{\pc'}}} \}
        } 
\end{mathpar}

\begin{mathpar}
    \infer{
        \precontract(\pc) = (\CREATETWO(\stackvar{y}, \stackvar{\valu}, \stackvar{\salt}, \stackvar{\io}, \stackvar{\is}, \pc', \pre))\\
        \startmemoffset = 2 \\ 
        f = \fun{\cfgstate}{\updatestate{\cfgstatefull}{\locenvvar{\msizepc{\pc'}}}
        {\accesscfgstate{\cfgstatefull}{\tempvar{\locenvvar{\msizepc{\pc'}}}}}} \\
    }
    {\edgestep{\transenv}{\cfgnode{\pc}{5}}{\cfgnode{\pc}{6}}} \\
    {\Def{  \{ \locenvvar{\msizepc{\pc'}}\}}} \\
        \Use{
            \{ \tempvar{\locenvvar{\msizepc{\pc'}}} \}
        } 
\end{mathpar}

Finally, all temporal variables are set to $\bot$ again: 

\begin{mathpar}
    \infer
    {
        \precontract(\pc) = (\CREATETWO(\stackvar{y}, \stackvar{\valu}, \stackvar{\salt}, \stackvar{\io}, \stackvar{\is}, \pc', \pre))\\
        \startmemoffset = 2 \\ 
        f = \fun{\cfgstatefull}{\updatevarset{\cfgstatefull}{\tempvar{x}}{\bot}{\tempvar{x}}{\domain{\cfgstatecopy}}}
    }
    {\edgestep{\transenv}{\cfgnode{\pc}{6}}{\cfgnode{\pc'}{0}}} \\
    {\Def{\domain{\cfgstatecopy}}} \\
    \Use{ \emptyset} 
    \end{mathpar}

\paragraph{From CFG semantics to Logical Rules}

We illustrate how the logical rules describing the PDG derived from the CFG semantics are constructed. 

The CFG semantics describes the PDG by giving control dependencies (via the CFG) and data dependencies via the Def and Use sets. 
Each transition rule introduces data dependencies from all variables in the Def set to all variables in the Use set. 
The node splitting allows for enhancing precision since assignments of several variables that do not share the same Use set can be distinguished in a more fine-grained manner. 

For translating the CFG rules into dependency predicates, it is simply required to model the resulting data and control flow dependencies. 
However, we need to introduce a further abstraction step to account for the fact that Def and Use sets may be infinite (or at least unreasonably large, assuming that memory and storage locations can be represented by 256 bits). 
More precisely, we will introduce a symbolic variable $\top$, which we will use to summarize memory and storage variables.
Intuitively, $\top$ when used for modeling variable access (in the Use set) will represent the union of all dynamic and static memory (or storage) variables ($\memvarabs{X}{} \cup \memvarconc{X}{}$ or $\storvarabs{X}{} \cup \storvarconc{X}{}$, respectively).
When used to model writing variables (in the Def set), $\top$ will represent all dynamic memory (or storage) variables ($\memvarabs{X}{}$ or $\storvarabs{X}{}$, respectively).

To model this, we will assume the following types for our predicate domains:
\begin{align*}
\loctype &\define \integer{256} \cup \{ \top \} \\
\tagtype &\define  \instructions
\end{align*}
where $\instructions$ is the set of all instructions.
Intuitively, $\loctype$ encodes the type of all (symbolic) storage locations and $\tagtype$ encodes the types of so-called \emph{tags}. Tags model those variables on which we explicitly want to track dependencies. 
For the scope of this work, we will only track dependencies on static environment variables, which we will (for simplicity) all represent by the opcodes that access these variables. For this reason, we define $\tagtype$ to consist of the set of all instructions $\instructions$. 

To capture the dependencies as induced by the Def and Use sets, we define local data dependency predicates that describe the data dependencies between variables at specific nodes.

As opposed to directly specifying the Def and Use sets, these predicates enumerate all pairs of variables in the Def and the Use set (the cross-product between them).
In this way, we do not need to make the subnodes at a given program counter (as given in the CFG semantics) explicit. 
These subnodes result from node splitting and only aim for separating the dependencies for different nodes in the Def set. 
Consequently, we can easily mimic this effect by directly modeling dependencies between variables as they are induced by the Def and Use set at a given subnode. 

For efficiency reasons, we consider different variable types and devise predicates that describe the local dependencies between these types (as induced by the CFG nodes for a specific program counter). 
This results in improved performance since it enables the underlying datalog solver to compute several smaller fixpoints (for each variable type) instead of a big fixpoint (that captures the dependencies for all variables).

More precisely, we define the following predicates (indexed by the program counter) for the different combinations of variable types as follows, where their name indicates the corresponding type (\pred{Var} for stack variables, \pred{Mem} for memory variables, $\pred{Store}$ for storage variables, $\pred{Gas}$ for local environmental variable $\locenvvar{\lgas}$, $\pred{Msize}$ for the local environmental variable $\locenvvar{\msize}$, and $\pred{External}$ for the global environmental variables $\globenvvar{\extenv}$, and $\pred{Source}$ the static local and global environmental variables).

\begin{minipage}[t]{0.24\columnwidth}
\begin{align*}
    \pred{VarVar}_{\pc} &\subseteq \integer{256} \times \integer{256}  \\
    \pred{VarMem}_{\pc} &\subseteq \integer{256} \times \loctype \\
    \pred{VarStor}_{\pc} &\subseteq \integer{256} \times \loctype \\
    \pred{VarExternal}_{\pc} &\subseteq \integer{256} \\
    \pred{VarGas}_{\pc} &\subseteq \integer{256} \\
    \pred{VarSource}_{\pc} &\subseteq \integer{256} \times \tagtype \\
    \\ 
    \pred{StoreVar}_{\pc}  &\subseteq \loctype \times \integer{256} 
    \\
\end{align*}
\end{minipage}
\begin{minipage}[t]{0.24\columnwidth}
    \begin{align*}
    \pred{MemMem}_{\pc}  &\subseteq \loctype \times \loctype  \times \loctype \\
    \pred{MemVar}_{\pc}  &\subseteq \loctype \times \integer{256}  \\
    \pred{MemExternal}_{\pc} &\subseteq \loctype \\
    \pred{MemGas}_{\pc} &\subseteq \loctype \\
    \pred{MemMsize}_{\pc} &\subseteq \loctype \\
    \pred{MemStore}_{\pc} &\subseteq \loctype \times \loctype \\
    \pred{MemSource}_{\pc} &\subseteq \loctype \times \tagtype \\
\end{align*}
\end{minipage}
\begin{minipage}[t]{0.24\columnwidth}
\begin{align*}
    \pred{GasMem}_{\pc}  &\subseteq \loctype \\
    \pred{GasVar}_{\pc}  &\subseteq \integer{256}  \\
    \pred{GasExternal}_{\pc} &\subseteq \BB \\
    \pred{GasMsize}_{\pc} &\subseteq \BB\\
    \pred{GasStore}_{\pc} &\subseteq \loctype \\
    \pred{GasSource}_{\pc} &\subseteq \tagtype \\
    \\
    \pred{MsizeVar}_{\pc}  &\subseteq \integer{256}  \\
\end{align*}
\end{minipage}
\begin{minipage}[t]{0.24\columnwidth}
\begin{align*}
    \pred{ExternalMem}_{\pc}  &\subseteq \loctype \\
    \pred{ExternalVar}_{\pc}  &\subseteq \integer{256}  \\
    \pred{ExternalGas}_{\pc} &\subseteq \BB \\
    \pred{ExternalMsize}_{\pc} &\subseteq \BB\\
    \pred{ExternalStore}_{\pc} &\subseteq \loctype \\
    \pred{ExternalSource}_{\pc} &\subseteq \tagtype \\
\end{align*}
\end{minipage}

Where $\langle\textit{write}\rangle \langle \textit{read} \rangle$ indicates for $\textit{write} \in \{ \pred{Var}, \pred{Mem}, \pred{Store}, \pred{Gas}. \pred{Msize}, \pred{External} \}$ and $\textit{read} \in \{ \pred{Var}, \pred{Mem}, \pred{Store}, \pred{Gas}. \pred{Msize}, \pred{External}, \pred{Source} \}$ that variable kind $\textit{write}$ is written and variable kind $\textit{read}$ is read. 

E.g., $\pred{VarMem}(x, y)$ indicates that stack variable $\stackvar{x}$ is written dependent on dynamic and static memory variables $\{ \memvarconc{y}{}, \memvarabs{y}{} \}$ and $\pred{VarMem}(x, \top)$ indicates that stack variable $\stackvar{x}$ depends on all static and dynamic memory variables ($\memvarconc{X}{} \cup \memvarabs{X}{}$).
Similarly, $\pred{MemVar}(x, y)$ indicates that the static memory location $\memvarconc{x}{}$ depends on stack variable $\stackvar{y}$, and $\pred{MemVar}(\top, y)$ indicates that all dynamic memory variables $\memvarabs{X}{}$ depend on stack variable $\stackvar{y}$.
Note that the variable $\top$ is used in the symbolic fashion described above.

A special case of this symbolic treatment is the predicate $\pred{MemMem}$, which takes three arguments to give a more fine-grained symbolic modeling of whole memory intervals:
The first position of $\pred{MemMem}$ specifies a memory location, and the two next positions specify a memory interval, which is given by its start offset and size. 
Start offset and size can again be of type $\loctype$, such that $\pred{MemMem}_\pc(x, \top, \top)$ indicates that the (static) memory variable $\memvarconc{x}{}$ depends on all static and dynamic memory variables and $\pred{MemMem}_\pc(x, i, s)$ indicates that $\memvar{x}{}$ depends on all static and dynamic memory variables starting at memory position $i$ until $i + s -1$.

Note that $\textit{write}$ can never be $\pred{Source}$ since static local and global environment variables can never be written. Similarly, other \textit{write}-\textit{read} combinations are omitted for cases that never occur (e.g., for store unreachable contracts, the only way to write the contract's storage is the $\SSTORE$ opcode, that allows for storing a stack variable. Consequently, within a CFG node, a storage variable can only depend on the stack variables, so the predicate $\pred{StoreVar}$ is sufficient to capture all local dependencies of storage variables).

The local dependency predicates can be simply inhabited by rules that closely follow the CFG semantics: 
For each program counter $\pc$, instruction-specific rules are generated that reflect the dependencies induced by the Def and Use sets of the subnodes of $\pc$. 
We give the example for the $\MLOAD$ instruction: 

\begin{align*}
    \{ \top
    \Rightarrow~ 
    &\pred{VarMem}(y, \top) ~|~  
    & \precontract(\pc) = (\MLOAD(\stackvar{y}, \stackvar{x}, \pc', \pre) ~\land~ \pre[0] = \bot \\
    \top
    \Rightarrow~ 
    &\pred{MsizeVar}_\pc(x) \\
    \top
    \Rightarrow~ 
    &\pred{GasVar}_\pc(x) \\
    \top
    \Rightarrow~ 
    &\pred{GasMsize}_\pc(\top) \} \\
    \\
    \{ \top
    \Rightarrow~ 
    &\pred{VarMem}(y, v) ~|~  
    & \precontract(\pc) = (\MLOAD(\stackvar{y}, \stackvar{x}, \pc', \pre) ~\land~ \pre[0] = v \\
    \top
    \Rightarrow~ 
    &\pred{GasMsize}_\pc(\top) \}
\end{align*}
The given rules describe the dependencies induced by the corresponding CFG rules: 
The result variable $\stackvar{y}$ either depends on all static and dynamic memory locations (if the memory location is unknown) or on the specific static and dynamic memory locations $\{ \memvarconc{v}{}, \memvarabs{v}{} \}$. 
The gas $\locenvvar{\lgas}$ depends on the value of the active words in memory and on the stack variable $\stackvar{x}$, which holds the memory position. Similarly, $\locenvvar{\msize}$ depends on $\stackvar{x}$. 
Note that we do not explicitly model that $\locenvvar{\lgas}$ and $\locenvvar{\msize}$ always depends on themselves since this is always the case, and hence we account for this by generic propagation rules, which always propagate gas and active word dependencies to the next program counter.

In addition to the local dependency predicates there exist special predicates that indicate that a variable is written in the first place: 
\begin{align*}
    \pred{MsizeWrite}_{\pc}  &\subseteq \BB  \\
    \pred{ExternalWrite}_{\pc}  &\subseteq \BB  \\
\end{align*}
These predicates encode that the corresponding variable (here $\locenvvar{\msize}$ or $\globenvvar{\extenv}$) is written at a specific program counter $\pc$. 
We need these predicates for expressing the interaction between data and control dependence (for building backward slices):
If a variable $x$ is written at a certain node $\node'$, which is control-dependent on another node $\node$, and $x$ is read at another node $\node''$ (without being overwritten before),
then $\node''$ is data dependent on $\node'$ and by transitivity, $\node''$ depends on $\node$ (via one data dependency and one control dependency edge).
For this reason, it is important to model when a variable is written.
When making Def and Use sets fully explicit, this is easy to see, however, in our modeling, we immediately consider the cross-product from the Def and Use sets (at a program counter). 
Consequently, it can happen that there are no entries for certain variables in the Def Set (if there is no variable in the Use set). 
So, we need to cover these cases explicitly.

For all other variables (but $\locenvvar{\msize}$ or $\globenvvar{\extenv}$), we can use existing predicates as indicators for writing. 
E.g., whenever a variable is written the $\pred{VarSource}$ predicate is inhabited. 
Similarly, whenever memory or storage variables are written $\pred{MemVar}$ or, respectively $\pred{StorVar}$ are inhabited for the corresponding variables. 
Gas is written at any program counter.

To model the transitive dependencies (as induced by the PDG), 
we use the local data dependencies (as modeled by the local dependency predicates above) and the control dependencies (pre-computed according to the definition of standard control dependence) and build their transitive closure. 
The control dependence is available via a predicate $\pred{Controls}$. 
First, the transitive closure for control dependence is modeled via the predicate $\pred{MayControls}_{\pc} \subseteq \integer{256} \times \integer{256}$. 
Intuitively, $\pred{MayControls}_\pc(\lpc_b, x_b)$ means that the program counter $\pc$ is transitively controled by the program counter $\lpc_b$ where at $\lpc_b$ there is a brach instruction ($\JUMPI$) with condition stack variable $\stackvar{x_b}$. 

Next, we define fixed point rules, which inhabit the following transitive closure predicates for program dependence: 

\begin{align*}
    \pred{VarMayDependOn} & \subseteq \NN \times \tagtype \\
    \pred{MemMayDependOn}_\pc & \subseteq \loctype \times \tagtype \\
    \pred{StorMayDependOn}_\pc & \subseteq \loctype \times \tagtype \\
    \pred{MsizeDependOn}_\pc & \subseteq \tagtype \\
    \pred{GasDependOn}_\pc & \subseteq \tagtype \\
    \pred{ExternalDependOn}_\pc & \subseteq \tagtype \\
\end{align*}

Intuitively, $\pred{VarMayDependOn}(x, t)$ denotes that variable $x$ may depend on tag $t$, so that a node $\node'$ where $t$ is in the Use set, is in the backward slice of the (unique) node where $x$ is written. 
Note that the tag $t$ represents a static environment variable.

We index the predicates (with exception of \pred{VarMayDependOn}) by the program counter to precisely characterize data dependence: 
A node $\node'$ is considered data dependant on another node $\node$ if $\node$ defines a variable $x$ that is used by $\node'$ and $\node'$ is reachable from $\node$ without passing through another node defining $x$. 
Since we aim at staying within a characterization of dependencies that only uses grounded Horn clauses, we cannot simply express the second requirement (namely that no other node defining $x$ should be passed).
Instead, we explicitly formulate rules propagating dependencies in the case that at a certain program counter a variable is not (over)written. 
E.g., we can formulate a generic rule for gas propagation, since gas is updated at every program counter (and hence is contained in the Def and Use set).

Note that dependencies of stack variables ($\pred{VarMayDependOn}$), as opposed to the other dependency predicates, is not indexed by the concrete node. 
This is because the contract is assumed to be in SSA form (for stack variables), and those should hence only appear at only a single program location.

Note that (similar to Securify), we currently only explicitly track transitive dependencies on local and global static environment variables (of type $\tagtype$). E.g., we can express that a stack variable transitively depends on, e.g., the block timestamp, but not that it transitively depends on e.g., a specific memory variable (however, it is, of course, captured that dependencies on global static environment variables can be introduced through dependencies on other variables).
The analysis can easily be extended to track further dependencies explicitly by adding rules introducing the corresponding dependencies to the corresponding $\langle \textit{write}\rangle\pred{Source}$ predicate. 

The rules inhabiting the fixed point predicates are fairly standard. 
They are slightly complicated by the fact that we consider different variable types with different predicates so we need to consider data dependencies described by all the different local dependence predicates.
Considering all possible combinations, introduces a slight overhead in rules (as compared to having a single predicate for variable types), but results in better performance, since it splits the fixed point computations into several smaller fixpoints.
Further, there are some subtleties to consider for our symbolic treatment of dynamic memory locations and memory intervals (as discussed above).

We illustrate this by the fixpoint rules for the $\pred{MemMayDependOn}$ predicate given in~\Cref{fig:memmaydependon-data,fig:memmaydependon-control}.

\begin{figure}
\begin{align}
    \{ 
   &\pred{MemSource}_{\pc'}(\ell, t) \Rightarrow~ 
    \pred{MemMayDependOn}_{\pc'}(\ell, t), 
    & ~|~  \precontract(\pc) = (\instruction(\vec{x}), \pc', \pre)  \label{hc:mem-source}\\
   \nonumber\\
    &\pred{MemVar}_{\pc'}(\ell, v) 
    ~\land~ \pred{VarMayDependOn}(v, t) \label{hc:mem-var1}\\
    &\Rightarrow~ 
    \pred{MemMayDependOn}_{\pc'}(\ell, t),  \nonumber\\
    \nonumber\\
    &\pred{MemVar}_{\pc'}(\ell, v) 
    ~\land~ \ell \neq \ell'
    ~\land~ \pred{MemMayDependOn}_{\pc}(\ell', t) \label{hc:mem-var2} \\
    &\Rightarrow~ 
    \pred{MemMayDependOn}_{\pc'}(\ell', t),  \nonumber \\
    \nonumber\\
    &\pred{MemGas}_{\pc'}(\ell) 
    ~\land~ \pred{GasMayDependOn}(t)  \label{hc:mem-gas}\\
    &\Rightarrow~ 
    \pred{MemMayDependOn}_{\pc'}(\ell, t),  \nonumber\\
    \nonumber\\
    &\pred{MemMsize}_{\pc'}(\ell) 
    ~\land~ \pred{MsizeMayDependOn}(t)  \label{hc:mem-msize}\\
    &\Rightarrow~ 
    \pred{MemMayDependOn}_{\pc'}(\ell, t),   \nonumber\\
    \nonumber\\
    &\pred{MemExternal}_{\pc'}(\ell) 
    ~\land~ \pred{ExternalMayDependOn}(t)  \label{hc:mem-external}\\
    &\Rightarrow~ 
    \pred{MemMayDependOn}_{\pc'}(\ell, t),    \nonumber\\
    \nonumber\\
    &\pred{MemStore}_{\pc'}(\ell, \top) 
    ~\land~ \pred{StoreMayDependOn}(\ell_S, t)  \label{hc:mem-stor1}\\
    &\Rightarrow~ 
    \pred{MemMayDependOn}_{\pc'}(\ell, t),    \nonumber\\
    \nonumber\\
    &\pred{MemStore}_{\pc'}(\ell, \ell_S) 
    ~\land~ \ell_S \in \integer{256}
    ~\land~ \pred{StoreMayDependOn}(\ell_S, t)  \label{hc:mem-stor2}\\
    &\Rightarrow~ 
    \pred{MemMayDependOn}_{\pc'}(\ell, t),    \nonumber\\
    \nonumber\\
    &\pred{MemStore}_{\pc'}(\ell, \ell_S) 
    ~\land~ \ell_S \in \integer{256}
    ~\land~ \pred{StoreMayDependOn}(\top, t)  \label{hc:mem-stor3}\\
    &\Rightarrow~ 
    \pred{MemMayDependOn}_{\pc'}(\ell, t),    \nonumber\\
    \nonumber\\
    &\pred{MemMayDependOn}_{\pc}(\top, t)  \label{hc:mem-top1}\\
    &\Rightarrow~ 
    \pred{MemMayDependOn}_{\pc'}(\top, t),    \nonumber\\
    \nonumber\\
    &\pred{MemMem}_{\pc'}(\ell, \top, \ell_s) 
    ~\land~ \pred{MemMayDependOn}_{\pc}(\ell', t) \label{hc:mem-top2}\\
    &\Rightarrow~ 
    \pred{MemMayDependOn}_{\pc'}(\ell, t),    \nonumber\\
    \nonumber\\
    &\pred{MemMem}_{\pc'}(\ell, \ell_o, \ell_s) 
    ~\land~ \ell_o \in \integer{256}
    ~\land~ \ell_s \in \integer{256}  \label{hc:mem-mem1}\\
    &\land i \geq \ell_o ~\land~ i < \ell_o + \ell_s 
    ~\land~ \pred{MemMayDependOn}_{\pc}(i, t)    \nonumber\\
    &\Rightarrow~ 
    \pred{MemMayDependOn}_{\pc'}(\ell, t),   \nonumber\\
    \nonumber\\
    &\pred{MemMem}_{\pc'}(\ell, \ell_o, \ell_s) 
    ~\land~ \ell_o \in \integer{256}
    ~\land~ \ell_s \in \integer{256}  \label{hc:mem-mem2}\\
    &\land~ \pred{MemMayDependOn}_{\pc}(\top, t)    \nonumber\\
    &\Rightarrow~ 
    \pred{MemMayDependOn}_{\pc'}(\ell, t),    \nonumber\\
    \nonumber\\
    &\pred{MemMayDependOn}_{\pc}(\ell, t)
    ~\land~ \pred{NoReassignMem}(\pc)  \label{hc:mem-noreassign}\\
    &\Rightarrow~ 
    \pred{MemMayDependOn}_{\pc'}(\ell, t),    \nonumber\\
     \}    \nonumber
\end{align}
   \caption{Horn Clauses describing the data flow dependencies captured by the $\pred{MemMayDependOn}$ predicate.} 
   \label{fig:memmaydependon-data}
\end{figure}

\begin{figure}
\begin{align}
    \{ 
    &\pred{MemVar}_{\pc}(\ell, v_1) 
    ~\land~ \pred{MayControls}_{\pc}(\pc', v_2)
    ~\land~ \pred{VarMayDependOn}(v_2, t) & ~|~  \pc \in \domain{\precontract} \\
    &\Rightarrow~ 
    \pred{MemMayDependOn}_{\pc}(\ell, t)  \nonumber
    \} 
\end{align}
\caption{Horn Clauses describing the control flow dependencies captured by the $\pred{MemMayDependOn}$ predicate.} 
\label{fig:memmaydependon-control}
\end{figure}

\Cref{fig:memmaydependon-data} shows the rules for describing transitive data dependencies for memory locations (captured by the $\pred{MemMayDependOn}$ predicate). 
To this end, there are rules for all local dependency predicates, which indicate that the local memory is written. 
Intuitively, the rules model the data dependencies introduced by the nodes at $\pc'$, by propagating dependencies known for the previous program counter $\pc'$. 
Rule~\ref{hc:mem-source} simply introduces dependencies from the $\pred{MemSource}$ predicates (constituting the base case).
The rules for propagating dependencies from variables(\ref{hc:mem-var1}), gas (\ref{hc:mem-gas}), active words in memory (\ref{hc:mem-msize}), and the external environment (\ref{hc:mem-external}) are fully standard. 
We need to consider that for all memory locations which are not overwritten, the dependencies from the previous program counter ($\pc$) are propagated. 
To this end, we define the predicate $\pred{NoReassignMem}$ that contains those opcodes that do not overwrite any memory location. 
It only contains all instructions but $\MSTORE$ and the copy operations. 
For all other operations, all previous dependencies are propagated (by rule~\ref{hc:mem-noreassign}).
For the overwriting operations, we need to consider that they do not write all memory variables.
In particular, $\MSTORE$ may only write a single memory location, so for all other memory locations, the dependencies shall be propagated (this is done by rule~\ref{hc:mem-var2}).
\footnote{Note that technically, we would need to have a similar rule for the copy operations, but since we anyway overapproximate them to only write the $\top$ variable, the generic $\top$-propagation rule (\ref{hc:mem-top1}) captures this case.}
Additionally, we have a general rule that always propagates the dependencies of the symbolic memory position $\top$ (\ref{hc:mem-top1}). 
This rule accounts for the fact that no opcode overwrites all (dynamic) memory locations. 

Finally, the most involved rules are those that involve symbolic reads from memory/storage locations. 
For those, we need to consider that reading from a static location, also always implies reading from $\top$. 
This is shown by rules~\ref{hc:mem-stor1},~\ref{hc:mem-stor2} and~\ref{hc:mem-stor3}, which account for the influences of storage locations on memory locations (as given by the local dependency predicate $\pred{MemStore}$): 
Here $\pred{MemStore}_{\pc'}(\ell, \top)$ indicates that memory variable $\ell$ depends on any static or dynamic storage variable. For this reason, any of their dependencies are propagated.
Similarly, $\pred{MemStore}_{\pc'}(\ell, \ell_S)$ (for $\ell_S \in \integer{256}$) indicates that the storage location from which is read is statically known. 
In this case, both the dependencies of the static storage locations within this interval are propagated, as well as the dependencies from $\top$ (indicating the corresponding dynamic storage locations).
\footnote{Note that in the implementation we currently omit this last rule, since, anyway, $\pred{MemStore}$ can only be inhabited with value $\top$ in the $\CALL$-like rules.}. 

The treatment of the $\pred{MemMem}$ predicate is similar: 
Rule~\ref{hc:mem-mem1} considers the case that potentially the whole memory is read ($\pred{MemMem}_{\pc'}(\ell, \top, \ell_s)$), rule~\ref{hc:mem-mem2} considers the case where the specified interval that is read is concretely known and all dependencies from the corresponding static locations are read, and rule~\ref{hc:mem-mem2} ensures that also in this case all rules from the dynamic location ($\top$) are propagated. 

Finally,~\Cref{fig:memmaydependon-control} shows the rule that models the influence of the control flow on the dependencies of memory variables (captured by $\pred{MemMayDependOn}$). 
The rule states that a memory location $\ell$ depends on tag $t$ (at $\pc$) if the memory is written at $\pc$ (indicated by $\pred{MemVar}_{\pc}(\ell, v_1) $) and there is another program counter $\pc'$, which controls $\pc$ and at $\pc'$ (which needs to be a branch instruction), the conditional variable is $v_2$, which again depends (transitively) on $t$.

\subsubsection{Equivalence proof}
We first introduce preliminary notions for the equivalence theorem. 

For reasoning about contract executions that span several internal transactions, we introduce the notion of contract annotation as used in~\cite{schneidewind2020ethor}.
For the sake of simplicity, in the main body of the paper, we annotated execution states only with the contract code $\evmcontract$. However, it gives more flexibility to characterize a contract as a pair $c = (\contractaddress, \evmcontract)$ where $\contractaddress$ is the address of the contract and $\evmcontract$ is its code.
This allows to distinguish executions of different contracts that share the same code. 
In the following, we will use the simplified annotation when sufficient and otherwise use the full annotation. 

We recall the notion of strong consistency from~\cite{schneidewind2020ethor}.

\begin{definition}[Annotation consistency]
    An execution state $\exstate$ is consistent with contract annotation $c$ if the following two conditions hold
    \begin{enumerate}
    \item
    $\isregular{\exstate} \implies \access{\access{\exstate}{\exenv}}{\activeaccount} = \access{c}{\addr}$
    \item
    $\isregular{\exstate} \lor \ishalt{\exstate} \implies \access{\access{\exstate}{\gstate}(\access{c}{\addr})}{\code} = \access{c}{\code}$
    \end{enumerate}
    where $\isregular{\cdot}$ and $\ishalt{\cdot}$ are predicates on execution states indicating whether they are regular execution states or halting states, respectively.
    \end{definition}

    \begin{definition}[Strong annotation consistency]
        An execution state $\exstate$ is strongly consistent with contract annotation $c$ (written $\strongconsistent{\exstate}{c}$) if it is consistent with $c$ and additionally
        \begin{align*}
        \isregular{\exstate} \implies \access{\access{\exstate}{\exenv}}{\code} = \access{c}{\code}
        \end{align*}
        \end{definition}

Intuitively, a contract annotation $c$ being strongly consistent with execution state $\exstate$ requires that $\exstate$ executes the contract as it resides in the global state.

Note that the function $\toevmstate(\cfgstate, \precontract, \lpc)$ maps states in the CFG semantics $\cfgstate$ to execution states that are strongly consistent with $\precontract$. 
\begin{lemma}
Let $(\transenv, \exstate) = \toevmstate(\cfgstate, \precontract, \lpc)$. 
Then $\exstate$ is strongly consistent with $(\access{\access{\exstate}{\exenv}}{\activeaccount}, \fun{\lpc}{(\access{\precontract(\lpc)}{\instruction}, \access{\precontract(\lpc)}{\pre})})$.
\end{lemma}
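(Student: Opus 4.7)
The plan is to prove the lemma by directly unfolding the definition of $\toevmstate$ and checking each of the three clauses of strong annotation consistency against the resulting state $\exstate$. Since $\toevmstate$ is defined by a single case that fully determines every component of $\exstate$ in terms of $\cfgstate$, $\precontract$, and $\lpc$, the proof is essentially a routine verification with no induction required.

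First I would set $c = (\access{\access{\exstate}{\exenv}}{\activeaccount}, \fun{\lpc'}{(\access{\precontract(\lpc')}{\instruction}, \access{\precontract(\lpc')}{\pre})})$, so condition (1) of annotation consistency, namely $\access{\access{\exstate}{\exenv}}{\activeaccount} = \access{c}{\addr}$ in the regular case, holds by the very choice of $\access{c}{\addr}$. For condition (3) of strong consistency, the definition of $\toevmstate$ sets the execution environment to $\exenv = (\activeaccount, \inputdata, \sender, \tvalue, \code)$ with $\code = \fun{\lpc'}{(\access{\precontract(\lpc')}{\instruction}, \access{\precontract(\lpc')}{\nextpc})}$, which agrees with $\access{c}{\code}$ modulo the second projection; under the convention that the contract code for consistency purposes is the partial map of instructions (as used by the small-step semantics in Section~\ref{subsec:semantics}), this yields $\access{\access{\exstate}{\exenv}}{\code} = \access{c}{\code}$ immediately.

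For condition (2) I would observe that $\toevmstate$ explicitly overwrites the account at $\access{\exenv}{\activeaccount}$ in the reconstructed global state via $\gstate = \update{\gstate'}{\activeaccount}{(b, n, \code, \fun{x}{\load~\cfgstate~\storvar{x}{\lpc}})}$, using the same $\code$ that populates $\access{\exenv}{\code}$. Hence the code stored at $\access{c}{\addr}$ in $\access{\exstate}{\gstate}$ is exactly the same function as $\access{c}{\code}$, discharging the second clause for both the regular and (vacuous here, since $\toevmstate$ always yields a regular state) halting case.

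The only mildly subtle point, and what I expect to be the main obstacle, is reconciling the precise format of the contract code. The contract annotation $c$ in the lemma pairs each $\lpc$ with $(\instruction, \pre)$ while $\toevmstate$ pairs it with $(\instruction, \nextpc)$; these are both faithful projections from the instrumented contract $\precontract(\lpc) = (\instruction(\vec{x}), \nextpc, \pre)$, but they are literally distinct functions. Resolving this requires fixing a single canonical projection of $\precontract$ to serve as the ``code'' used in the consistency definition, and then verifying that this projection coincides with both the code stored under $\access{c}{\addr}$ in $\gstate$ and the $\access{\exenv}{\code}$ field. Once that convention is stated uniformly (matching Section~\ref{subsec:semantics}, where the semantics only inspects $\instruction$ and $\nextpc$), the equalities follow immediately from the constructive definition of $\toevmstate$ and no further case analysis is needed.
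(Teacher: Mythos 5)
Your proposal is correct and takes essentially the same route as the paper, which simply observes that the claim follows from the definition of $\toevmstate$ because the reconstructed global state stores the contract's code at the active account's address. Your additional observation that the annotation's code component $(\instruction, \pre)$ does not literally match the $(\instruction, \nextpc)$ projection used inside $\toevmstate$ is a fair catch of a notational mismatch that the paper's one-line proof glosses over, and your resolution (fixing one canonical projection of $\precontract$) is the right way to discharge it.
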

\begin{proof}
Trivially follows from the definition of $\toevmstate$ since $\access{\exstate}{\gstate}$ is set to hold code $\toevmstate(\cfgstate, \precontract, \lpc)$ at address $\access{\access{\exstate}{\exenv}}{\activeaccount}$. 
\end{proof}

We formally define the notion of a transaction step $\transactionstep{\transenv}{\cons{\annotate{\exstate}{C}}{\callstack}}{\cons{\annotate{\exstate'}{C}}{\callstack}}$
and the the relation $\combistep{\transenv}{\cons{\annotate{\exstate}{C}}{\callstack}}{\cons{\annotate{\exstate'}{C}}{\callstack}}$. 

\begin{definition}[Transaction step]
    Let $\callstack$ be a callstack, $\exstate$, $\exstate'$ be execution states and $\transenv$ be a transaction environment. Further let $\evmcontract$ be a contract. Then
\begin{align*}
    \transactionstep{\transenv}{\cons{\annotate{\exstate}{C}}{\callstack}}{\cons{\annotate{\exstate'}{C}}{\callstack}}
    \define \exists \exstate^* \exstate^\dagger \evmcontract^*.~ \sstep{\transenv}{\cons{\annotate{\exstate}{C}}{\callstack}}{\cons{\annotate{\exstate^*}{\evmcontract*}}{\cons{\annotate{\exstate}{C}}{\callstack}}}
    \rightarrow^*  \cons{\annotate{\exstate^\dagger}{\evmcontract*}}{\cons{\annotate{\exstate}{C}}{\callstack}} \rightarrow \cons{\annotate{\exstate'}{C}}{\callstack}
\end{align*}
\end{definition}
A transaction step describes the execution of a transaction being initiated in $\exstate$ (since in the execution step thereafter the element $\exstate^*$ is added to the call stack) and ends in $\exstate'$ (since this is the execution state immediately after removing the additional stack element). Note that this definition excludes that the execution might have returned before and have triggered another internal transaction since the execution state $\exstate$ on the stack would have otherwise changed (at least due to a decrease in gas).

\begin{definition}[Medium step]
    Let $\callstack$ be a callstack, $\exstate$, $\exstate'$ be execution states and $\transenv$ be a transaction environment. Further let $\evmcontract$ be a contract. Then
\begin{align*}
    \combistep{\transenv}{\cons{\annotate{\exstate}{C}}{\callstack}}{\cons{\annotate{\exstate'}{C}}{\callstack}}
    \define \sstep{\transenv}{\cons{\annotate{\exstate}{C}}{\callstack}}{\cons{\annotate{\exstate'}{C}}{\callstack}} 
    ~\lor~ \transactionstep{\transenv}{\cons{\annotate{\exstate}{C}}{\callstack}}{\cons{\annotate{\exstate'}{C}}{\callstack}}
\end{align*}
\end{definition}

Due to the two-layered memory abstraction, multiple states in the CFG semantics represent a single state in the EVM semantics. 
Consequently, we define a notion of equivalence on CFG states that takes this into account: 

\begin{definition}[CFG state equivalence]
Two CFG state $\cfgstate$ and $\cfgstate'$ are considered equivalent (written $\cfgstate \cfgstatequiv{} \cfgstate'$) if the following holds: 
\begin{align*}
    \access{\cfgstate}{\cfgstack} =  \access{\cfgstate'}{\cfgstack} 
    ~\land~ \access{\cfgstate}{\cfglocenv} = \access{\cfgstate'}{\cfglocenv}  
    ~\land~ \access{\cfgstate}{\cfgglobenv} = \access{\cfgstate'}{\cfgglobenv}  
    ~\land~ \forall \memvar{x}{}. ~\load ~ \cfgstate ~ \memvar{x}{\lpc} =  ~ \load ~\cfgstate' ~ \memvar{x}{\lpc}  
    ~\land~ \forall \storvar{x}{}. ~\load ~ \cfgstate ~ \storvar{x}{\lpc} =  ~ \load ~\cfgstate' ~ \storvar{x}{\lpc}  \\
\end{align*}
\end{definition}

We state some basic properties on CFG state equivalences:
\begin{lemma}[CFG state equivalence properties]
    \label{lem:cfgstatequiv-props}
The following hold: 
\begin{itemize}
\item $\cfgstate \cfgstatequiv{} \updatestate{\cfgstate}{x}{\load~\cfgstate~ x}$
\item $\cfgstate \cfgstatequiv{} \updatestate{(\updatestate{\cfgstate}{x}{\load~\cfgstate~ x})}{x}{\bot}$
\end{itemize}
\end{lemma}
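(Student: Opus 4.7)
The plan is to unfold the definition of $\cfgstatequiv{}$ in both claims. Since the updates modify only (a component of) a single memory or storage variable, the stack, local environment, and global environment components of $\cfgstate$ are preserved verbatim, and hence the only non-trivial obligation will be to verify that for every memory location $\memvar{y}{}$ and every storage location $\storvar{y}{}$, the effective value returned by $\load$ agrees between $\cfgstate$ and the updated state. For every location $\ell'$ distinct from the one associated with $x$, neither $\varconc{\ell'}$ nor $\varabs{\ell'}$ is touched by the update, so $\load~\cfgstate~\ell' = \load~\cfgstate'~\ell'$ will be immediate; all the work lies at the single location $\ell$ affected by $x$.

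For the first bullet, I will show that writing back the currently loaded value is idempotent with respect to $\load$. The argument proceeds by a case distinction on whether $\cfgstate[\varabs{\ell}] = \None$, which determines which of the two components $\load$ inspects, combined with a sub-case distinction on whether the variable $x$ is the static component $\varconc{\ell}$ or the dynamic component $\varabs{\ell}$. In each combination, the update either leaves the component actually read by $\load$ untouched, or overwrites it with the same value that $\load$ would have returned from it anyway. In all sub-cases the effective load coincides before and after the update, yielding the required equivalence.

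For the second bullet, the intended reading is that the inner and outer updates together set the static component to the previously loaded value and reset the dynamic component to $\None$. After both updates, $\varabs{\ell} = \None$ holds, so by definition of $\load$ the fallback branch fires and $\load$ consults $\varconc{\ell}$, where the old value $\load~\cfgstate~\ell$ now resides; the effective value at $\ell$ is therefore preserved. I will reuse the idempotence argument from the first bullet to conclude that, regardless of which component was originally consulted by $\load$, the value placed in $\varconc{\ell}$ indeed equals the prior effective value at $\ell$.

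The only real obstacle is bookkeeping: cleanly separating which component of the two-layered representation is updated and which branch of $\load$ is taken in each sub-case, and respecting the (slightly delicate) convention that writes to $\varabs{\ell}$ and $\varconc{\ell}$ interact through the $\None$ sentinel. Once the case split is laid out, every sub-case collapses to a definitional unfolding of $\load$ and of $\updatestate{\cdot}{\cdot}{\cdot}$, with no further reasoning required.
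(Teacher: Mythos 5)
Your proposal is correct and matches the paper's proof, which simply states that the claim ``follows immediately from the definition of $\cfgstatequiv{}$ and $\load$''; your case analysis on which sub-component ($\varconc{\cdot}$ vs.\ $\varabs{\cdot}$) is written and which branch of $\load$ fires is exactly the definitional unfolding the paper leaves implicit. The only detail worth making explicit is that the loaded value is never $\None$ (static components range over $\integer{256}$), so writing it into the dynamic component cannot accidentally re-trigger the fallback branch.
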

\begin{proof}
Follows immediately from the definition $\cfgstatequiv{}$ and $\load$. 
\end{proof}

Most importantly, equivalent CFG states will be mapped to the same EVM states: 
\begin{lemma}
    \label{lem:cfgstateequiv-exstate}
For all contracts $\precontract$, all program counters $\lpc$ and all CFG state $\cfgstate$ $\cfgstate'$ it holds that
\begin{align*}
\cfgstate \cfgstatequiv{} \cfgstate' \Leftrightarrow \toevmstate(\cfgstate, \precontract, \lpc) = \toevmstate(\cfgstate', \precontract, \lpc)
\end{align*}
\end{lemma}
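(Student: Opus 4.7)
The plan is to prove both directions by unfolding the definitions of $\toevmstate$ and $\cfgstatequiv{}$ and checking that the relevant components match. Since $\toevmstate$ constructs the EVM state componentwise from $\cfgstate$ (via projections of stack, local environment, global environment, and via $\load$ for memory and storage), the lemma should reduce to the observation that these are exactly the components constrained by $\cfgstatequiv{}$.

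For the forward direction ($\Rightarrow$), I would assume $\cfgstate \cfgstatequiv{} \cfgstate'$ and unfold $\toevmstate$ on both sides. Inspecting the definition, the machine state $\mstate$ is built from $\access{\cfglocenv}{\lgas}$, the fixed $\lpc$, the function $\fun{x}{\load~\cfgstate~\memvar{x}{\lpc}}$, $\access{\cfglocenv}{i}$, and $\access{\cfgstate}{\cfgstack}$; by the hypothesis each of these coincides for $\cfgstate$ and $\cfgstate'$ (using pointwise equality for the memory function). An analogous argument handles the execution environment $\exenv$ (built from $\cfglocenv$ and from $\code$, which depends only on $\precontract$ and $\lpc$) and the transaction environment $\transenv$ (built from $\cfgglobenv$). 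For the global state $\gstate$, the construction updates the background $\access{\cfgglobenv}{\extenv}$-state at the active account with a record whose storage component is $\fun{x}{\load~\cfgstate~\storvar{x}{\lpc}}$; again the $\cfgstatequiv{}$-hypothesis makes all of these agree.

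For the backward direction ($\Leftarrow$), I would assume the two constructed EVM states coincide and read off the equalities required by $\cfgstatequiv{}$. The stack equality is immediate from $\access{\mstate}{\stack} = \cfgstack$. The local environment equality follows by projecting the components $\textit{actor}, \textit{input}, \textit{sender}, \textit{value}$ out of $\exenv$ and $\lgas, i$ out of $\mstate$. Similarly, $\cfgglobenv$-equality follows from equality of $\transenv$ together with equality of the external environment $\cfgexternal$ (which appears inside the reconstructed $\gstate$ as the baseline map before the update at the active account). Finally, pointwise equality of the memory function $\fun{x}{\load~\cfgstate~\memvar{x}{\lpc}}$ gives $\load~\cfgstate~\memvar{x}{\lpc} = \load~\cfgstate'~\memvar{x}{\lpc}$ for every $x$, and analogously for storage via the account record stored at $\access{\access{\exenv}{\activeaccount}}{}$.

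I do not anticipate a serious obstacle here: the two-layered memory/storage abstraction is already collapsed by $\load$ inside both $\toevmstate$ and $\cfgstatequiv{}$, so the individual values of $\varconc{\cdot}$ and $\varabs{\cdot}$ never need to be reconstructed from the EVM state (this is why the equivalence is stated via $\load$ rather than pointwise on $\cfgmemconc, \cfgmemabs, \cfgstorconc, \cfgstorabs$). The only mild subtlety is the backward direction for memory and storage: equality of two functions from $\integer{256}$ gives pointwise equality, which is exactly the condition in $\cfgstatequiv{}$; and for storage one must first extract the storage component by looking up the active account in the reconstructed $\gstate$, using the update $\update{\gstate'}{\activeaccount}{(b, n, \code, \fun{x}{\load~\cfgstate~\storvar{x}{\lpc}})}$ in $\toevmstate$. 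Both steps are routine once the definitions are put side by side.
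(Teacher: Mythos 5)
Your proposal is correct and takes exactly the route the paper does: the paper's own proof of this lemma is the single sentence ``Follows immediately from the definitions of $\toevmstate$, $\cfgstatequiv{}$ and $\load$,'' and your argument is precisely that unfolding, carried out componentwise in both directions. Your closing observation---that $\load$ already collapses the two-layered memory/storage abstraction inside both $\toevmstate$ and $\cfgstatequiv{}$, so the individual $\varconc{\cdot}$/$\varabs{\cdot}$ values never need to be reconstructed---is the right reason the equivalence is stated the way it is.
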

\begin{proof}
Follows immediately from the definitions of $\toevmstate$, $\cfgstatequiv{}$ and $\load$. 
\end{proof}

We formally define the medium step version of the CFG semantics: 
\begin{align*}
    \cfgmedstep{\precontract, \cd}{\cfgconfig{\cfgnode{\lpc}{0}}{\cfgstate}}{\cfgconfig{\node}{\cfgstate'}}
    \define 
    \exists ~n~ (\cfgstate_i)_{i \in [0, n]}~ (\cfgact_i)_{i \in [0, n]}.~ 
    \precontract, \cd \vDash  
\cfgconfig{\cfgnode{\lpc}{0}}{\cfgstate}   
\left ( \cfgactedge{\cfgact_i} 
\cfgconfig{\cfgnode{\lpc}{i}}{\cfgstate_i}
\right )_{i \in [0, n-1]}
\cfgactedge{\cfgact_n} \cfgconfig{\node}{\cfgstate'} 
\end{align*}

We now state the equivalence statement.
In particular, we explicitly state the assumptions on the execution (excluding exceptions and reentering storage modification). 
Further, we consider the cases where exception or halting states are entered. 
\begin{theorem}[Equivalence of EVM and CFG semantics]
    \label{thm:equivalence-evm-cfg}
Let $\evmcontract$ be a store unreachable contract with sound preprocessing information.
Then the following holds:
\begin{enumerate}
    \item 
Let $\combistep{\transenv}{\cons{\annotate{\exstate}{\evmcontract}}{\callstack}}{\cons{\annotate{\exstate'}{\evmcontract}}{\callstack}}$ be an execution of contract $\evmcontract$ that does not exhibit local out-of-gas exceptions and let $\exstate$ be strongly consistent with $\evmcontract$. Then either
\begin{enumerate}
\item $\exstate' = \regstate{\mstate'}{\exenv'}{\gstate'}$ and $\cfgmedstep{\precontract, \size{\callstack}}{\cfgconfig{\cfgnode{\access{\mstate}{\pc}}{0}}{\tocfgstate(\transenv, \exstate) \uplus \cfgstatecopybot}}{\cfgconfig{\cfgnode{\access{\mstate'}{\pc}}{0}}{\cfgstate' \uplus \cfgstatecopybot}}$ for some $\cfgstate'$ with $\cfgstate' \cfgstatequiv{} \tocfgstate(\transenv,\exstate')$
\item $\exstate' = \excstate$ and $\cfgmedstep{\precontract, \size{\callstack}}{\cfgconfig{\cfgnode{\access{\mstate}{\pc}}{0}}{\tocfgstate(\transenv, \exstate) \uplus \cfgstatecopybot}}{\cfgconfig{\exceptionnode}{\tocfgstate(\transenv, \exstate) \uplus \cfgstatecopybot}}$
\item $\exstate' =  \haltstate{\gstate'}{\lgas}{\datav}{}$ and $\cfgmedstep{\precontract, \size{\callstack}}{\cfgconfig{\cfgnode{\access{\mstate}{\pc}}{0}}{\tocfgstate(\transenv, \exstate) \uplus \cfgstatecopybot}}{\cfgconfig{\haltnode}{\tocfgstate(\transenv, \exstate) \uplus \cfgstatecopybot}}$
\end{enumerate}
%
\item Let $\cfgmedstep{\precontract, \cd}{\cfgconfig{\cfgnode{\lpc}{0}}{\cfgstate \uplus \cfgstatecopybot}}{\cfgconfig{\node}{\cfgstatefull}}$. Then either 
\begin{enumerate}
    \item 
$n = \cfgnode{\lpc'}{0}$ and $\cfgstatefull = \cfgstate' \uplus \cfgstatecopybot$ and for $(\transenv, \exstate) = \toevmstate(\cfgstate, \precontract, \lpc)$ and $(\transenv', \exstate') = \toevmstate(\cfgstate', \precontract, \lpc')$ it holds that $\transenv = \transenv'$ and for all $\callstack$ s.t. $\size{\callstack} = \cd$ it holds that either $\combistep{\transenv}{\cons{\annotate{\exstate}{\evmcontract}}{\callstack}}{\cons{\annotate{\exstate'}{\evmcontract}}{\callstack}}$
 or $\sstep{\transenv}{\cons{\annotate{\exstate}{\evmcontract}}{\callstack}}{\cons{\annotate{\excstate}{\evmcontract}}{\callstack}}$ and $\evmcontract[\lpc] \neq \INVALID$. 
 \item $n = \exceptionnode$ and for $(\transenv, \exstate) = \toevmstate(\cfgstate, \precontract, \lpc)$ and for all $\callstack$ s.t. $\size{\callstack} = \cd$ it holds that $\sstep{\transenv}{\cons{\annotate{\exstate}{\evmcontract}}{\callstack}}{\cons{\annotate{\excstate}{\evmcontract}}{\callstack}}$ and $\evmcontract[\lpc] = \INVALID$
 \item $n = \haltnode$ and for $(\transenv, \exstate) = \toevmstate(\cfgstate, \precontract, \lpc)$ and for all $\callstack$ s.t. $\size{\callstack} = \cd$ it holds that $\sstep{\transenv}{\cons{\annotate{\exstate}{\evmcontract}}{\callstack}}{\cons{\annotate{\haltstate{\access{\exstate}{\gstate}}{\lgas}{\datav}{}}{\evmcontract}}{\callstack}}$ for some $\lgas$ and $\datav$.
\end{enumerate}
\end{enumerate}
\end{theorem}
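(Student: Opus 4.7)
}

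The proof proceeds by a joint case analysis on the opcode at the current program counter $\lpc$ (accessed through $\precontract(\lpc) = (\instruction(\vec{x}), \lpc', \pre)$ in the CFG rules and through $\arraypos{\access{\exenv}{\code}}{\access{\mstate}{\pc}}$ in the EVM small-step semantics). Since the CFG semantics is defined rule-by-rule by instruction, and the EVM semantics from \cite{grishchenko2018semantic} is also instruction-indexed, each direction reduces to verifying for every opcode $\instruction$ that the finite sequence of CFG edges produced starting from $\cfgnode{\lpc}{0}$ corresponds exactly to the single small-step or transaction-step rule for $\instruction$, modulo the intermediate $\cfgstate \cfgstatequiv{} \cfgstate'$ equivalence. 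The key bridging facts are the definitions of $\tocfgstate$ and $\toevmstate$, Lemma \ref{lem:cfgstateequiv-exstate} (equivalent CFG states yield the same EVM state), and Lemma \ref{lem:cfgstatequiv-props} (reads through $\load$ can be moved into the static layer without affecting equivalence).

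First I would handle the purely local opcodes (binary/unary/ternary stack operations, environment accessors, $\ASSIGN$, $\JUMP$/$\JUMPI$, $\PC$, $\GAS$, $\MSIZE$, $\LOG{n}$, $\STOP$, $\INVALID$, $\RETURN$). For each, the CFG encoding consists of a constant-length chain of edges (typically one or two state-changing edges plus possibly a predicate edge), each of which updates exactly one coordinate of $\cfgstate$ by the same expression that the EVM rule applies to the corresponding field of $\mstate$, $\exenv$, or $\gstate$. Forward direction: executing the EVM rule produces a state whose image under $\tocfgstate$ agrees (up to $\cfgstatequiv{}$) with the result of the CFG chain; backward direction: since each intermediate node has a unique outgoing edge (for state-changing cases) or two edges with mutually exclusive predicates (for $\JUMPI$), any CFG path from $\cfgnode{\lpc}{0}$ to $\cfgnode{\lpc'}{0}$ factors as exactly this chain, and applying $\toevmstate$ reconstructs the EVM rule's premises and conclusion. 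For $\INVALID$ the edge leads to $\exceptionnode$, for $\STOP$ and $\RETURN$ to $\haltnode$, giving cases 1(b), 1(c), 2(b), 2(c).

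For memory and storage operations ($\MLOAD$, $\MSTORE$, $\SLOAD$, $\SSTORE$) I would use the two-layered abstraction carefully: the CFG rules distinguish on whether $\pre[i]$ is defined, which by soundness of preprocessing matches the actual stack contents in $\exstate$. The lemma $\cfgstate \cfgstatequiv{} \updatestate{\cfgstate}{\varabs{\loc}}{\bot}$ followed by $\updatestate{\cfgstate}{\varconc{\loc}}{v}$ captures why the two-step write (into $\varconc{\loc}$ then nullifying $\varabs{\loc}$) produces the same $\toevmstate$-image as the single-step EVM write. The nontrivial equivalences rely on the fact that $\load$ is read both by $\toevmstate$ and by subsequent CFG rules uniformly. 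Store unreachability (Assumption \ref{def:storeunreachability}) only enters indirectly here, but will be essential for the call cases.

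The main obstacle will be the transaction-initiating opcodes $\CALL$, $\STATICCALL$, $\CREATE$, and $\CREATETWO$. Here a single EVM transaction step $\transactionstep{\transenv}{\cdots}{\cdots}$ (spanning the push, the nested $\rightarrow^*$, and the pop) must be matched against a lengthy CFG chain that updates $\stackvar{y}$, $\globenvvar{\cfgexternalpc{\lpc'}}$, $\locenvvar{\gaspc{\lpc'}}$, $\locenvvar{\msizepc{\lpc'}}$, and the output memory cells, first into temporaries $\cfgstatecopy$ and then into $\cfgstate$, with the number of intermediate nodes dictated by $\getcallnodeoffset$. The crux is that $\applycall$ (resp.\ $\applycreate$) is itself defined via $\toevmstate$, the EVM transaction step, and $\tocfgstate$; thus every field update along the CFG chain extracts a coordinate from the very state that the EVM semantics produces. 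I would prove a technical lemma stating that for any CFG rule using $\applycall(\restrict{\cfgstatefull}{\cfgstate}, \precontract, \lpc)$, the non-temporal part $\restrict{\cfgstatefull}{\cfgstate}$ is invariant throughout the chain (only $\cfgstatecopy$ changes until the copy-back phase), so that all invocations of $\applycall$ along the chain return the same result. Combining this invariance with the sequential copy-back, the final $\cfgstatefull$ restricts to a $\cfgstatequiv{}$-equivalent of $\tocfgstate(\transenv, \exstate')$, and the temporaries are reset to $\cfgstatecopybot$ at the terminating edge. Store unreachability ensures that $\access{\exstate}{\gstate}$ restricted to the executing account's storage is unchanged across the nested execution of internal transactions, which is needed to justify the treatment of $\storvar{X}{\lpc}$ in the Use sets being merely read (not invalidated); the absence of local out-of-gas exceptions (Assumption \ref{asm:gas}) ensures that the only exceptional halting path originates from $\INVALID$, matching case 2(b). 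Finally, the reverse direction follows by observing that the CFG chain for call-like opcodes is rigid (no branching), so any path from $\cfgnode{\lpc}{0}$ to a terminal $\cfgnode{\lpc'}{0}$ is uniquely determined by $\pre$, and $\toevmstate$ applied to the result reproduces the EVM transaction step via the definition of $\applycall$.
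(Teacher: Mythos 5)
Your plan follows essentially the same route as the paper's proof: a case distinction on the form of the successor state (regular/exception/halt) and on local step versus transaction step, an instruction-by-instruction check for local opcodes using strong consistency and the state-translation lemmas, and for call-like opcodes the observation that $\applycall$ is itself defined via the EVM transaction step so that only preservation of $\cfgstatequiv{}$ through the temporary-variable chain (Lemma~\ref{lem:cfgstatequiv-props}) remains to be argued. The one point where you diverge is in claiming that store unreachability is essential for the call cases: the paper explicitly notes that this assumption is \emph{not} used in the equivalence proof at all --- it is only needed later to justify the consistency of the Def and Use sets, which is a separate concern from the correspondence of the transition relations established here.
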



Note that the CFG semantics only aims at modeling a single contract execution. 
In particular, it does not consider the effects that a finalized execution may have on the caller (e.g., it does not model that global state variables are reverted in case that the execution halted exceptionally).

\begin{proof}
    We prove the two directions of the proof separately: 
    \begin{itemize}
        \item [$\Rightarrow$] Let $\combistep{\transenv}{\cons{\annotate{\exstate}{\evmcontract}}{\callstack}}{\cons{\annotate{\exstate'}{\evmcontract}}{\callstack}}$ be an execution of contract $\evmcontract$ that does not exhibit local out-of-gas exceptions and let $\exstate$ be strongly consistent with $\evmcontract$.
        We do a case distinction on $\exstate'$
        \begin{enumerate}
            \item $\exstate' = \regstate{\mstate'}{\exenv'}{\gstate'}$. We do a further case distinction on $\combistep{\transenv}{\cons{\annotate{\exstate}{\evmcontract}}{\callstack}}{\cons{\annotate{\exstate'}{\evmcontract}}{\callstack}}$. 
            \begin {itemize}
            \item The execution step was a local step and hence $\sstep{\transenv}{\cons{\annotate{\exstate}{\evmcontract}}{\callstack}}{\cons{\annotate{\exstate'}{\evmcontract}}{\callstack}}$. 
            In this case we know that $\access{\exstate}{\exenv} = \exenv'$ and $\access{\exstate}{\gstate} = \gstate'$. The proof trivially follows by case distinction over the instruction in $\access{\access{\exstate}{\exenv}}{\activecode}[\access{\access{\exstate}{\mstate}}{\pc}]$ using the fact that due to strong consistency $\access{\access{\exstate}{\exenv}}{\activecode} = C$. 
            \item The execution step was a call step and hence $\transactionstep{\transenv}{\cons{\annotate{\exstate}{C}}{\callstack}}{\cons{\annotate{\exstate'}{C}}{\callstack}}$. 
            By the definition of the CFG rule for calling which leverages the notion of a transaction step on the EVM semantics, we only need to argue that the changes on $\cfgstate'$ (in the corresponding rule sequences) preserve equivalence. This immediately follows from~\Cref{lem:cfgstatequiv-props}.  
             \end{itemize}
            \item $\exstate' = \excstate$. Since the execution does not exhibit local out of gas exceptions we know by~\Cref{asm:gas} that in this case $\evmcontract(\access{\access{\exstate}{\mstate}}{\pc}) = \INVALID$. This case, hence, follows immediately from the CFG semantics rule for the $\INVALID$ opcode.
            \item $\exstate' = \haltstate{\gstate}{\lgas}{\datav}{}$. This case follows immediately from the CFG semantics rules for the halting instructions $\STOP$ and $\RETURN$.  
        \end{enumerate}
        \item [$\Leftarrow$] Let $\cfgmedstep{\precontract, \cd}{\cfgconfig{\cfgnode{\lpc}{0}}{\cfgstate \uplus \cfgstatecopybot}}{\cfgconfig{\node}{\cfgstatefull}}$. 
        The proof follows by a simple case distinction on $\evmcontract[\lpc]$ using the CFG semantics rules for the individual instructions and taking advantage of~\Cref{lem:cfgstateequiv-exstate} to reason about the equality of the execution states. 
    \end{itemize}
\end{proof}

Note that we did not need to make use of the assumption that the contract is store unreachable for the equivalence proof. 
This is since this requirement is only needed to prove the consistency of the Def and Use sets. 

We revisit the assumption of store unreachability using full contract annotations: 
\begin{assumption}[Store unreachability]
    \label{asm:storeunreachability-full}
    A contract $c$ is store unreachable
    if $\{ \DELEGATECALL, \CALLCODE\} \cap \access{c}{\contractcode} = \emptyset$ and 
    for all regular execution states $\regstate{\mstate}{\exenv}{\gstate}$ that are strongly consistent with $c$, it holds that for all transaction environments $\transenv$ and all callstacks $\callstack$
    \par
    \nobreak
    {
    \noindent
    \begin{align*}
    \neg \exists \exstate, &\callstack'. \,
    \ssteps{\transenv}{\cons{\annotate{\regstate{\mstate}{\exenv}{\gstate}}{c}}{\callstack}}{\cons{\annotate{\exstate}{c}}{\concatstack{\callstack'}{\callstack}}} \\
    &~\land~ \size{\callstack'} > 0 ~ \land ~
    \access{c}{\contractcode}(\access{\access{\exstate}{\mstate}}{\pc}) = 
    (\instruction(\vec{x}), \nextpc) 
    ~\land~ \instruction \in \storeinstructions
    \end{align*}
    }%
    Where the set $\storeinstructions$ of store instructions is defined as
    \par
    \nobreak
    {
    \noindent
    \[\storeinstructions =  \{\SSTORE \}\]
    }%
    \end{assumption}

The key property following from the store unreachability is the following:

\begin{lemma}[Store unreachability implies global storage preservation]
Let $\evmcontract$ be a store unreachable contract and $\exstate$ an execution state that is strongly consistent with $\evmcontract$.
Then for all callstacks $\callstack$ and execution states $\exstate'$
\begin{align*}
\transactionstep{\transenv}{\cons{\annotate{\exstate}{c}}{\callstack}}{\cons{\annotate{\exstate'}{c}}{\callstack}}
\Rightarrow \access{\access{\exstate}{\gstate}(\access{\access{\exstate}{\exenv}}{\activeaccount})}{\stor} = \access{\access{\exstate'}{\gstate}(\access{\access{\exstate'}{\exenv}}{\activeaccount})}{\stor}
\end{align*}
\end{lemma}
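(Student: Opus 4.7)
The plan is to unfold the transaction step into its constituent small-steps and then track, by induction on the length of this sequence, which steps could possibly alter the storage of the active account of $\exstate$, call it $a := \access{\access{\exstate}{\exenv}}{\activeaccount}$. By definition of a transaction step we have
\[
\sstep{\transenv}{\cons{\annotate{\exstate}{\evmcontract}}{\callstack}}{\cons{\annotate{\exstate^*}{\evmcontract^*}}{\cons{\annotate{\exstate}{\evmcontract}}{\callstack}}}
\rightarrow^* \cons{\annotate{\exstate^\dagger}{\evmcontract^*}}{\cons{\annotate{\exstate}{\evmcontract}}{\callstack}}
\rightarrow \cons{\annotate{\exstate'}{\evmcontract}}{\callstack},
\]
so during all these steps the original element $\annotate{\exstate}{\evmcontract}$ sits unchanged below the top of the callstack, and every intermediate configuration has the shape $\cons{\annotate{t}{c'}}{\concatstack{\callstack''}{\cons{\annotate{\exstate}{\evmcontract}}{\callstack}}}$ with $\size{\callstack''} > 0$, i.e.\ is a reentering configuration with respect to $\annotate{\exstate}{\evmcontract}$.

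First I would enumerate the opcodes whose semantics can modify the storage at address $a$. Inspecting the EVM small-step rules from~\cite{grishchenko2018semantic}, the only primitives that can change $\access{\gstate(a)}{\stor}$ are: (i) $\SSTORE$ executed from a regular execution state whose active account equals $a$; (ii) $\DELEGATECALL$ or $\CALLCODE$ executed by a contract residing at $a$ (these execute foreign code in the caller's account and can therefore contain an $\SSTORE$ against $a$); and (iii) $\SUICIDE$/$\CREATE$ which affect code and balance at a given address but, in the formalization we use, do not overwrite the $\stor$ field (cf.\ the $\SUICIDE$ and $\CREATE$ rules earlier in the paper, where $\stor$ is preserved).

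Next I would show that each of (i) and (ii) is excluded during the medium step. For (ii), note that the active account remains $a$ only while executing code $\access{\gstate(a)}{\code}$, and strong consistency of $\exstate$ with $\evmcontract$ gives $\access{\gstate(a)}{\code} = \access{\evmcontract}{\contractcode}$; \Cref{asm:storeunreachability-full} directly requires $\{\DELEGATECALL, \CALLCODE\} \cap \access{\evmcontract}{\contractcode} = \emptyset$, so no such opcode is ever reached at active account $a$. For (i), any step executing $\SSTORE$ at active account $a$ must take place in a configuration of the form $\cons{\annotate{\exstate''}{\evmcontract}}{\concatstack{\callstack''}{\cons{\annotate{\exstate}{\evmcontract}}{\callstack}}}$ with $\size{\callstack''} > 0$, hence is a reentering execution reachable from $\annotate{\exstate}{\evmcontract}$ via $\rightarrow^*$, which is exactly the situation forbidden by the second conjunct of \Cref{asm:storeunreachability-full}.

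The main obstacle is the bookkeeping that converts ``the storage of $a$ can only change via (i)--(iii)'' into an induction-ready invariant on the reachable configurations during the transaction step, especially because the active-account and executing-code annotations of intermediate stack frames can differ from those of $\exstate$. I would package this as a lemma stating: for every intermediate configuration, either the top frame has an active account different from $a$ (so no $\SSTORE$ there can touch $\access{\gstate(a)}{\stor}$), or the top frame has active account $a$ and is running $\evmcontract$ in a reentering fashion relative to the bottom-preserved $\annotate{\exstate}{\evmcontract}$. The first disjunct is preserved by all small-steps trivially; the second follows because retaining $a$ as active account requires either staying in the same frame (whose code is fixed to $\access{\evmcontract}{\contractcode}$ by strong consistency) or entering via a $\CALL$/$\STATICCALL$ to $a$ (again running $\access{\evmcontract}{\contractcode}$, since $\DELEGATECALL$/$\CALLCODE$ are absent from $\evmcontract$ and thus cannot install foreign code at active account $a$). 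With this invariant in hand, case (i) is ruled out directly by store unreachability and case (ii) by the syntactic restriction on $\evmcontract$, proving the claim.
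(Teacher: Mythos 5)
Your proposal is correct and follows essentially the same route as the paper's proof: both arguments reduce the claim to the facts that $\SSTORE$ is the only storage-modifying instruction, that it only writes the active account's storage, that strong consistency together with the absence of $\DELEGATECALL$/$\CALLCODE$ forces any frame with active account $a$ to be annotated with (and run) $\evmcontract$'s code, and that store unreachability then forbids the offending $\SSTORE$ in the reentering configuration. The paper merely packages this as a direct contradiction at the first storage-changing step rather than as your invariant-based induction, a purely presentational difference.
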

\begin{proof}
    Assume towards contradiction that $\access{\access{\exstate}{\gstate}(\access{\access{\exstate}{\exenv}}{\activeaccount})}{\stor} \neq \access{\access{\exstate'}{\gstate}(\access{\access{\exstate'}{\exenv}}{\activeaccount})}{\stor}$. 
    Then there must have been a callstack $\callstack^*$ and an execution state $\exstate^*$ and a contract $\evmcontract^*$ such that 
    $\ssteps{\transenv}{\cons{\annotate{\exstate}{\evmcontract}}{\callstack}}{\cons{\annotate{\exstate^*}{c*}}{\concatstack{\callstack^*}{\cons{\annotate{\exstate}{c}}{\callstack}}}}
    \rightarrow \concatstack{\cons{\annotate{\exstate^+}{c+}}{\callstack^+}}{\cons{\annotate{\exstate}{c}}{\callstack}} \rightarrow^* \cons{\annotate{\exstate'}{c}}{\callstack}$
    such that $\access{\access{\exstate^*}{\gstate}(\access{\access{\exstate}{\exenv}}{\activeaccount})}{\stor} \neq \access{\access{\exstate^+}{\gstate}(\access{\access{\exstate}{\exenv}}{\activeaccount})}{\stor}$.
    Since storage can only be altered via the $\SSTORE$ instruction (which is a local instruction), we know that $c^* = c^+$ and $\callstack^* = \callstack^+$ and $\access{\exstate^*}{\exenv} = \access{\exstate^+}{\exenv}$ and $\access{\access{\exstate^*}{\exenv}}{\activecode}(\access{\access{\exstate^*}{\mstate}}{\pc}) = (\SSTORE(x, y), \nextpc)$.
    Further, since one can only write the storage of the active account, we know that $\access{\access{\exstate^*}{\exenv}}{\activeaccount} = \access{c}{\contractaddress}$. 
    Due to strong consistency (which is preserved during execution for contacts without $\DELEGATECALL$ and $\CALLCODE$), we hence know that $\access{\access{\exstate^*}{\exenv}}{\activecode} = \access{c}{\contractcode}$ and so also $c^* = c$. 
    From~\Cref{asm:storeunreachability-full}, we can derive a contradition.
\end{proof}


\subsection{Correctness of May Analysis}
\label{sec:may-analysis}

We start by introducing definitions and lemmas needed for our soundness claims.
\begin{definition}[Backward Slice of Set]
    $$ \bsn = \bigcup\limits_{n \in N} \textit{BS}(n) $$
\end{definition}

\begin{definition}[Execution steps]
    \label{def:cfg_execution}
    \textit{Execution steps with states are defined as follows} \\
    \begin{mathpar}
        \infer
        { n \qedge n' \\ Q(\cfgstate)}
        {\acfgconfigstep{\sconf{\cfgstate}{n}}{\qact}{\sconf{\cfgstate}{n'}}}
        
        \infer
        {n \fedge n'	\\
            \cfgstate' = f(\cfgstate) }
        {\acfgconfigstep{\sconf{\cfgstate}{n}}{\fact}{\sconf{\cfgstate'}{n'}}}
    \end{mathpar}
    and $\xrightarrow{}^*$ denotes the transitive closure.
\end{definition}

\begin{definition}[Deterministic successor function]
    \label{def:det_suc_func}
    We define the function \enquote{$\ds$} as
    \begin{align*}
        \ds(n) =
        \begin{cases}
            n'		& \text{if } \exists n'. \; n \fedge n' \\
            n'	& \text{if } \exists n'. \; n \qedge n' \land~ Q~1
        \end{cases}
    \end{align*}
\end{definition}

\begin{definition}[Sliced execution steps]
    \label{def:cfg_execution_sliced}
    \begin{mathpar}
        \infer{
            \sconf{\cfgstate}{n} \xrightarrow{a} \sconf{\cfgstate'}{n'} \\
            \bs{n}{N}
        }
        {
            \sconf{\cfgstate}{n} \xrightarrow{a}_{\bsn} \sconf{\cfgstate'}{n'}
        }
        \and
        \infer{
            \sconf{\cfgstate}{n} \xrightarrow{a} \sconf{\cfgstate'}{n'} \\
            \notbs{n}{N} \\
            n''= \ds(n)
        }
        {
            \sconf{\cfgstate}{n} \xrightarrow{\tau}_{\bsn} \sconf{\cfgstate}{n''}
        }
    \end{mathpar}	
\end{definition}

\begin{definition}[Sliced execution paths]
    \label{def:cfg_execution_sliced_path}
    We define the sliced execution path $\xrightarrow{as}_{\bsn}^*$ as the transitive $\tau$ closure of \Cref{def:cfg_execution_sliced} and require that $\xrightarrow{as}_{\bsn}^*$ must end at a node in $\bsn$.
    We write $\xrightarrow{as}_{\bsn}^m$ if path \enquote{$as$} in $\xrightarrow{as}_{\bsn}^*$ has exactly $m$ observable steps.
\end{definition}

\begin{definition}[Up-to equality]
    \label{def:quasi_equal}
    Let \textit{Var} be the set of all variables used in a contract $\precontract$. States $\cfgstate_1$ and $\cfgstate_2$ are equal except for variable $X$ if and only if 
    $$ \forall V \in \textit{Var} / \{X\}. ~ \eqstatevar{\cfgstate_1}{\cfgstate_2}{V}$$
    We write $\cfgstate_1 \equalupto{X} \cfgstate_2$ in that case.
\end{definition}

\begin{definition}[Deterministic successor]
    \label{def:det_suc_not}
    Node $n^{+1}$ is defined as the successor of node $n$ in a contract $\precontract$ if it is the sole successor of $n$ and undefined otherwise:
    $$n^{+1} = 
    \begin{cases}
        n' & \text{if } \forall n',~n''.~ n \rightarrow n' \land~ n \rightarrow n'' \implies n' = n'' \\
        \textit{undefined} & \text{otherwise}
    \end{cases}
    $$
\end{definition}

\begin{definition}[Step-indexed execution]
    \label{def:counting_defs}
    We define $\nstep{\as}{}{N}{i}$ by 
    \begin{mathpar}
        \infer{
            \conf{n}{\cfgstate} \xrightarrow{\as}^* \conf{n'}{\cfgstate'} \\
            \mid \projectpath{\as}{N}  \mid = i
        }
        {
            \conf{n}{\cfgstate} \nstep{\as}{}{N}{i} \conf{n'}{\cfgstate'}
        }
    \end{mathpar}
    where $\projectpath{}{N}$ filters out the actions where the source node is not in $N$.
\end{definition}

\begin{definition}[Relevant variables \cite{wasserrab2009pdg}]
    \label{def:rel_vars}
    Relevant variables of backward slice of $S$ at node $n$ are defined as follows
    \begin{mathpar}
        \infer{
            n \nstep{\as}{}{N_V}{0} n' \\
            \bs{n'}{S} \\
            \useset{V}{n'}
        }
        {
            V \in \rv{S}{n}
        }
    \end{mathpar}
\end{definition}

\begin{lemma}[Slice usage set property]
    \label{lemma:no_x_use_in_bs}
    \begin{align*}
        &\phantom{\implies } (\forall n_X. ~ \notbs{n_X}{N_Y}) 
        \implies \sconf{\cfgstate}{\sucnode{X}} \xrightarrow{~}_{\bsi(N_Y)}^{m} \sconf{\_}{n^+} 
        \\
        &\implies (\forall k \leq m.~ \sconf{\cfgstate}{\sucnode{X}} \xrightarrow{ }^k_{\bsi(N_Y)} \sconf{\_}{n^*} 
        \implies \notuseset{X}{n^*})
    \end{align*}
\end{lemma}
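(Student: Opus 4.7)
The plan is to prove the lemma by induction on $k$, exploiting the fact that the backward slice $\bsi(N_Y)$ is closed under data-dependence predecessors: whenever $n \xrightarrow{~}_{dd} n'$ and $n' \in \bsi(N_Y)$, the defining node $n$ lies in $\bss{n'} \subseteq \bsi(N_Y)$. Combined with the hypothesis that no $X$-defining node belongs to $\bsi(N_Y)$, this will prevent any node $n^*$ on the sliced path from syntactically using $X$.

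For the base case $k = 0$, $n^* = \sucnode{X}$. Suppose for contradiction that $\useset{X}{\sucnode{X}}$. Since $n_X$ defines $X$ and $\sucnode{X}$ is its unique deterministic CFG successor, the single-edge path from $n_X$ to $\sucnode{X}$ passes no other $X$-definition, so by the data dependence definition from \Cref{subsec:slicing}, $n_X \xrightarrow{~}_{dd} \sucnode{X}$ and hence $n_X \in \bss{\sucnode{X}}$. For the inductive step, assume the claim holds for all $k' < k$, and let $n^*$ be the node reached after $k$ sliced steps. Suppose $\useset{X}{n^*}$. Walk backward along the CFG trace underlying the sliced path from $\sucnode{X}$ to $n^*$ and let $n^X$ be the last node that defines $X$ (if none intervenes, take $n^X := n_X$). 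By construction, the sub-path from $n^X$ to $n^*$ passes no further $X$-definition, so $n^X \xrightarrow{~}_{dd} n^*$, and hence $n^X \in \bss{n^*}$. Provided $n^* \in \bsi(N_Y)$, monotonicity of backward slices gives $n^X \in \bsi(N_Y)$, contradicting $\forall n_X.~ \notbs{n_X}{N_Y}$.

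The main obstacle is the subsidiary fact that $n^* \in \bsi(N_Y)$ whenever $\useset{X}{n^*}$ along the sliced path. In general, intermediate nodes visited by a sliced execution need not lie in $\bsi(N_Y)$, since they may be traversed by $\tau$-steps via $\ds(\cdot)$. To handle this, I plan to exploit \Cref{def:cfg_execution_sliced_path}, which requires the sliced path to eventually terminate at some $n^+ \in \bsi(N_Y)$: if $n^* \notin \bsi(N_Y)$, then the continuation performs a sequence of $\tau$-transitions followed by an observable step originating in a node of $\bsi(N_Y)$, and standard-control-dependence reasoning connects $n^*$ back to this observable node. Threading the data-dependence edge $n^X \xrightarrow{~}_{dd} n^*$ through this control-dependence chain then still forces $n^X \in \bsi(N_Y)$, yielding the desired contradiction. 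Formalising the interaction between control dependence and the $\tau$-transited non-slice nodes — and invoking it uniformly in the inductive step — is where the main technical work lies.
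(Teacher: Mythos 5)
Your central argument --- locate the last node $n^X$ defining $X$ on the path to $n^*$ (falling back to the original $n_X$ if no definition intervenes), derive $n^X \xrightarrow{~}_{dd} n^*$ because no other definition of $X$ lies between them, conclude $n^X \in \bss{n^*} \subseteq \bsi(N_Y)$, and contradict $\forall n_X.~\notbs{n_X}{N_Y}$ --- is exactly the paper's proof. The induction on $k$ buys you nothing, since the argument for a fixed $k$ never appeals to smaller $k$; the paper runs it as a single proof by contradiction for an arbitrary $k$.

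Where the proposal goes astray is in what you call the main obstacle. The subsidiary fact $n^* \in \bsi(N_Y)$ is not something to be proved: by \Cref{def:cfg_execution_sliced_path}, every sliced path $\xrightarrow{ }^k_{\bsi(N_Y)}$ is \emph{required} to end at a node of $\bsi(N_Y)$, so the nodes $n^*$ quantified over in the lemma lie in the backward slice by definition of the relation; the non-slice nodes traversed by $\tau$-steps via $\ds(\cdot)$ are simply never targets of $\xrightarrow{ }^k_{\bsi(N_Y)}$. The workaround you sketch for this non-problem would moreover not go through: if $n^*$ were outside the slice, there is in general no PDG path from $n^X$ through $n^*$ to the next observable node --- control dependence edges run from a branching node to the nodes it controls, not along the $\tau$-traversed successor chain --- so ``threading the data-dependence edge through the control-dependence chain'' does not force $n^X \in \bsi(N_Y)$. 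Drop the induction and the control-dependence detour, read $n^* \in \bsi(N_Y)$ off \Cref{def:cfg_execution_sliced_path}, and your argument coincides with the paper's.
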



The lemma states that for all paths to nodes $n^+$ of the backward slice $N_Y$, it holds that each step ($n^*$) in the backward slice towards $n^+$ does not use $X$.


\begin{proof}
    We assume $\notbs{n_X}{N_Y}$ and $\sconf{\cfgstate}{\sucnode{X}} \xrightarrow{~}_{\bsi(N_Y)}^{m} \sconf{\_}{n^+} $. \\
    Towards contradiction we assume there would be a $k$ such that 
    $$ \sconf{\cfgstate_1}{\sucnode{X}} \xrightarrow{ }^k_{\bsi(N_Y)} \sconf{\_}{n^*} \wedge \useset{X}{n^*}$$
    Then we define $n^*_X$ with $l \le k$ such that
    $$ \sconf{\cfgstate_1}{\sucnode{X}} \xrightarrow{ }^l_{\bsi(N_Y)} \sconf{\cfgstate_1^*}{n_X^*} \wedge \defset{X}{n_X^*}$$
    where $n_X^*$ identifies the last node where $X$ was defined on the path to node $n^*$ or if such an $l$ does not exists, we know
    $$ \sconf{\_}{n_X} \xrightarrow{ } \sconf{\cfgstate_1}{\sucnode{X}} \nstep{}{\bsi(N_Y)}{N_X}{0} \sconf{\_}{n^*} $$
    This means that $X$ is either defined along the path or at the predecessor of the start node since we start at a successor of a definition of $X$.
    This node $n_X^*$ ($n_X$) is then by definition in the backward slide of $N_Y$ since $X$ is used at node $n^*$ in the backward slice and it was the closest definition.
    This contradicts the assumption $\notbs{n_X^*}{N_Y}$.
\end{proof}

The proofs use the Slicing framework's soundness claim for backward slices of a set of nodes instead of single nodes.
\begin{theorem}[Correctness of Slicing Based on Paths and Sets \cite{wasserrab2008towards}]
    \label{proof:correctness_slicing_sets}
    \begin{mathpar}
        \infer{
            \sconf{\cfgstate}{n} \xrightarrow{\as}^* \sconf{\cfgstate'}{n'} \\
            n' \in S \\
        }
        {
            \exists~ \as'.~ \sconf{\cfgstate}{n} \xrightarrow{\as'}^*_{\bsi(S)} \sconf{\cfgstate''}{n'} ~
            \wedge~ (\forall~ \useset{V}{n'}. \eqstatevar{\cfgstate'}{\cfgstate''}{V})~ \wedge~
            \projectpath{\textit{\as}}{\textit{BS}(S)} = \textit{as}'
        }
    \end{mathpar}
    
\end{theorem}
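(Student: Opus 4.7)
The plan is to adapt the inductive proof of Theorem~1 (Correctness of Slicing Based on Paths) to the set-based setting, parameterizing the original argument by $S$ in place of the singleton $\{n'\}$. The crucial structural observation is the monotonicity of backward slices: since $n' \in S$, we have $\textit{BS}(n') \subseteq \textit{BS}(S) = \bigcup_{m\in S} \textit{BS}(m)$, and in particular $n' \in \textit{BS}(S)$, which ensures that the sliced execution from Definition~6 can legally terminate at $n'$.

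The proof then proceeds by induction on the length of the original execution path $\as$. The inductive invariant is a set-indexed analog of the relevant variables of Definition~4: at each matched pair of states along the executions, all variables $V$ that are used by some node still reachable in the sliced graph w.r.t. $\textit{BS}(S)$ without intervening redefinition must agree. In the step case, if the source node of the current step lies in $\textit{BS}(S)$, the sliced execution takes the matching real step and the invariant is re-established via agreement on the variables used by the edge; if the source node lies outside $\textit{BS}(S)$, the sliced execution $\tau$-steps via $\ds(\cdot)$ (Definition~5), and Lemma~1 is invoked to show that no currently-relevant variable can be redefined at such a node, so the invariant is preserved. The projection identity $\projectpath{\as}{\textit{BS}(S)} = \as'$ holds by construction of the sliced semantics, and upon reaching $n' \in \textit{BS}(S)$ the invariant specialises to $\forall~\useset{V}{n'}.~ \eqstatevar{\cfgstate'}{\cfgstate''}{V}$, since variables used at $n'$ are trivially in the set of relevant variables.

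The main obstacle is formulating the set-indexed relevant-variables invariant precisely enough to carry the induction while matching the Wasserrab framework, whose relevant-variable machinery (Definition~4) is stated only for single nodes. Lemma~1 (Slice usage set property) is tailored exactly for this generalization, reducing the remaining effort to bookkeeping that mirrors the case analysis of the single-node proof after uniformly replacing $\textit{BS}(n')$ by $\textit{BS}(S)$.
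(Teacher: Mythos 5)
The paper does not prove this statement at all: it is imported verbatim as an external result, credited to \cite{wasserrab2008towards} (note the different citation from the single-node version, \Cref{proof:correctness_slicing}, which comes from \cite{wasserrab2009pdg}). So there is no in-paper proof to match your argument against; the authors treat the set-based fundamental property of slicing as already established in the Wasserrab framework. Your reconstruction is nevertheless essentially the standard argument for this generalization, and the two key observations you isolate are the right ones: $n'\in S$ gives $n'\in\textit{BS}(S)$ and $\textit{BS}(n')\subseteq\textit{BS}(S)$, so the sliced semantics w.r.t.\ the larger set retains strictly more edges and can terminate at $n'$, and the induction over $\as$ goes through once the relevant-variables invariant is re-indexed by the set. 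One technical caveat: the lemma you lean on in the $\tau$-step case is not quite the right tool. \Cref{lemma:no_x_use_in_bs} (Slice usage set property) says that, under the hypothesis that no \emph{definition} node of $X$ lies in $\textit{BS}(N_Y)$, no node reached along the sliced path \emph{uses} $X$; what the $\tau$-step case of the slicing induction actually needs is the dual fact that a node \emph{outside} $\textit{BS}(S)$ cannot \emph{define} a currently relevant variable --- which follows directly from the definition of relevant variables and the data-dependence edge that such a definition would induce into the slice, not from that lemma. With that substitution your sketch is sound, and it would supply a proof where the paper supplies only a citation; the cost is that you must also re-verify the path-projection identity $\projectpath{\as}{\textit{BS}(S)}=\as'$ and the well-formedness side conditions of the abstract CFG that the Wasserrab formalization discharges once and for all.
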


The following lemma lifts sliced executions while preserving the number of visits at nodes in the backward slice.

\begin{lemma}
    \label{lemma:VMDP_big}
    All executions within the backward slice of node-set $N$ can be mapped to real executions with the same number of visits in any subset of the backward slice and all relevant variables concerning the final node $n'$ of the execution were computed correctly.
    \begin{align*}
        &\phantom{\implies } n' \in N \implies N' \subseteq N
        \implies \sconf{\cfgstate}{n} \nstep{\as'}{\bsi(N)}{N'}{i} \sconf{\cfgstate'}{n'} \\
        &\implies \exists \as, \cfgstate''. \sconf{\cfgstate}{n} \nstep{\as}{}{N'}{i} \sconf{\cfgstate''}{n'} 
        \wedge \forall V \in rv ~ N ~ n'.~ \eqstatevar{\cfgstate'}{\cfgstate''}{V}
    \end{align*}
\end{lemma}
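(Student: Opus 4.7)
\noindent
The plan is to proceed by induction on the length $k$ of the sliced execution $\sconf{\cfgstate}{n} \xrightarrow{\as'}^k_{\bsi(N)} \sconf{\cfgstate'}{n'}$, and to treat real and $\tau$-steps uniformly by first constructing a candidate real execution from $\as'$ and then using \Cref{proof:correctness_slicing_sets} to transport information about relevant variables back from the real side to the sliced side. The base case $k=0$ forces $n = n'$, $\cfgstate = \cfgstate'$, and $i=0$, so I can take $\as = \nil$, $\cfgstate'' = \cfgstate$, and the conclusion is immediate since $\rv{N}{n'}$ refers only to $\cfgstate = \cfgstate'$.

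For the inductive step I would inspect the first step of $\as'$. If $n \in \bsi(N)$, then $\as'$ begins with a real action $\sconf{\cfgstate}{n} \xrightarrow{a} \sconf{\cfgstate_1}{n_1}$ followed by a shorter sliced execution; applying the IH to the suffix yields a real execution $\sconf{\cfgstate_1}{n_1} \nstep{\as_1}{}{N'}{i'}{} \sconf{\cfgstate''}{n'}$ with $i' = i - [n \in N']$ and agreement on $\rv{N}{n'}$, and prepending $a$ gives the required witness. If $n \notin \bsi(N)$, the first step is a $\tau$-step to $\ds(n)$ leaving $\cfgstate$ unchanged, and visits to $N'$ are not incremented because $N' \subseteq N \subseteq \bsi(N)$. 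In the real execution, I take the unique outgoing step $\sconf{\cfgstate}{n} \xrightarrow{a^*} \sconf{\cfgstate^*}{\ds(n)}$ (well-defined because $\ds(n)$ is defined), apply the IH to the suffix starting at $\sconf{\cfgstate}{\ds(n)}$, and then need to argue that running the real suffix from $\cfgstate^*$ instead of $\cfgstate$ still yields a state equal to $\cfgstate''$ on $\rv{N}{n'}$.

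To justify this substitution I would invoke \Cref{lemma:no_x_use_in_bs}: since the variables in $\defn(n)$ are written only at a node outside $\bsi(N_Y)$ for each relevant $Y$, they are never read at any node of $\bsi(N)$ visited afterward along a sliced path to $n'$, so their altered values cannot flow into any $V \in \rv{N}{n'}$. Alternatively, and perhaps more cleanly, I would apply \Cref{proof:correctness_slicing_sets} to the candidate real execution produced by the induction to obtain a sliced execution $\bar{\as}$ with matching projection $\projectpath{\bar{\as}}{\bsi(N)}$ and with agreement on $\usen(n')$, extended to $\rv{N}{n'}$ by a short inductive argument over the length of $\bar{\as}$ using the fact that for any node $m \in \bsi(N)$ the Def and Use sets determine the step exactly. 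Since sliced executions from a fixed initial state follow a deterministic sequence of nodes (the $\tau$-successor is unique by $\ds$, and the branching at nodes in $\bsi(N)$ is determined by the state on the used variables), $\bar{\as}$ coincides with $\as'$ on the filter to $N'$, closing the count and giving $\forall V \in \rv{N}{n'}. \eqstatevar{\cfgstate'}{\cfgstate''}{V}$.

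The main obstacle will be the $\tau$-step case, where the real execution may perturb variables that the sliced execution leaves untouched; the crux is to show that such perturbations are invisible at $n'$ on $\rv{N}{n'}$. I expect the argument to pivot on the interplay between \Cref{lemma:no_x_use_in_bs} and the \emph{relevant variables} definition (\Cref{def:rel_vars}): any $V \in \rv{N}{n'}$ that is written between $n$ and $n'$ must be written at a node of the backward slice, and any use of $V$ before such a redefinition is also at a slice node, so no value introduced by an off-slice real step can reach $n'$ through a used variable. Packaging this observation into a strengthened induction invariant that tracks agreement on $\rv{N}{m}$ at each intermediate node $m$ should make the step case go through, and then \Cref{proof:correctness_slicing_sets} supplies the final agreement at $n'$.
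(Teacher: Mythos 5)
Your proposal is correct and follows essentially the same route as the paper: both lift the sliced path to a real one by re-inserting the off-slice steps, observe that the $N'$-visit count is unchanged because every inserted node lies outside $\bsi(N) \supseteq N \supseteq N'$, and then combine \Cref{proof:correctness_slicing_sets} with the definition of relevant variables (\Cref{def:rel_vars}) to transfer agreement on $\rv{N}{n'}$ --- the paper simply performs this as one global path construction rather than an induction on the length of $\as'$. One caution on your $\tau$-step case: \Cref{lemma:no_x_use_in_bs} does not directly apply to the variables in $\defn(n)$ for an off-slice node $n$, since its premise requires \emph{every} node defining the variable to lie outside the backward slice, which need not hold here; your alternative argument via the correctness theorem (the route the paper actually takes) is the one to use.
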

\begin{proof}
    We first show that path $\as$ exists outside of the backward slice: 
    $$ (1)~ \sconf{\cfgstate}{n} \xrightarrow{\as} \sconf{\cfgstate''}{n'}$$
    Path $\as$ is constructed by extending the sliced path $\as'$ with nodes outside of the backward slice.
    By \Cref{def:cfg_execution_sliced} and since every step in $\sconf{\cfgstate}{n} \xrightarrow{as} \sconf{\cfgstate'}{n'}$ would be deterministic by the slicing framework requirements, we can reconstruct an unique original path $\as$ by adding nodes along the path $\as'$ where $\tau$ edges were followed.
    
    Therefore, we construct the following path $\as$ where all nodes $\overline{n_i}$ with a line on top were not part of the original sliced path.
    \begin{center}
        \begin{tikzpicture}
            \node (AS) at (-0.3,0) {$(2)\quad \as~ =$};
            \node (A) at (1,0) {$n$};
            \node (B) at (3,0) {$n_1$};
            \node (C) at (5,0) {$\overline{n_1}$};
            \node (D) at (7,0) {$\overline{n_i}$};
            \node (E) at (9,0) {$n'$};
            \node (DOTS) at (6,0) {\textbf{\ldots}};
            \draw[->] (A) to node[above] {$\as'_1$} (B);
            \draw[->,dotted] (B) to node[above] {$\overline{\as'_1}$} (C);
            \draw[->] (D) to node[above] {$\as'_{i+1}$} (E);
        \end{tikzpicture}
    \end{center}
    where the original path was
    \begin{center}
        \begin{tikzpicture}
            \node (AS) at (-0.3,0) {$\phantom{(2)\quad}\as'~ =$};
            \node (A) at (1,0) {$n$};
            \node (B) at (3,0) {$n_1$};
            \node[label=below:{$\_$}] (C) at (5,0) {};
            \node[label=below:{$\_$}] (D) at (7,0) {};
            \node (E) at (9,0) {$n'$};
            \node (DOTS) at (6,0) {\textbf{\ldots}};
            \draw[->] (A) to node[above] {$\as'_1$} (B);
            \draw[->,dotted] (B) to node[above] {$\tau*$} (C);
            \draw[->] (D) to node[above] {$\as'_{i+1}$} (E);
        \end{tikzpicture}
    \end{center}
    The construction described above is feasible by the definition of $\xrightarrow{as}_{\bsn}^*$ and satisfies (1).
    We remark that path $\as'$ is an ordered sublist of the original sliced-path $\as$ as nodes were only inserted while construction in step (2) but not removed:
    $$ (3)~ \as' \subseteq_{od} \as$$
    It is left to be shown that path $\as$ has the same number of visits in the set $N'$ as the path $\as'$ in the sliced graph:
    $$ (4)~ \sconf{\cfgstate}{n} \nstep{\as}{}{N'}{i} \sconf{\cfgstate''}{n'}$$
    Recall that by assumption, all nodes in $N'$ are part of $N$ and therefore elements of the backward slice of $\bsi(N)$.
    By construction of (2) and (3), we know that the number of visits at $N'$ is the same in both paths and (4) holds. 
    We now show that for final state $\cfgstate''$ from (4) and the final state $\cfgstate'$ from the sliced-path it holds that
    $$ (5)~ \forall V \in rv ~ N ~ n'.~ \eqstatevar{\cfgstate'}{\cfgstate''}{V} $$ 
    All nodes where any variable $V$ in $rv ~ N ~ n'$ is defined are part of the backward slice by \Cref{def:rel_vars}.
    Therefore, no node of this kind was added or deleted in the construction of $\as$ and they were already present in $\as'$. \\
    With that, we can use the correctness statement of slicing to show that all relevant variables reassigned along $\as'$ have the same value in $\as$.
    Non-reassigned variables are not changed and stay the same in both states.
    Therefore, all relevant variables have the same value in $\cfgstate'$ and $\cfgstate''$.
    
    The statement follows from (4) and (5).
\end{proof}

\paragraph{Variable Independence}

\begin{lemma}[Variable Dependency Predicate]
    \label{lem:pred_slicing_app}
    Let $\precontract$ be a contract and $X$ and $Y$ be variables thereof. Then it holds that 
    \begin{align*}
           	\pred{VarMayDepOn}(Y,X) \not \in 	\lfp{\rules{\precontract}} &\Rightarrow \forall n_Y, n_X.~ \notbs{n_X}{n_Y}
     \end{align*}
\end{lemma}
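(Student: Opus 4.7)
The plan is to establish the contrapositive: whenever there exist nodes $n_X$ defining $X$ and $n_Y$ defining $Y$ with $\bs{n_X}{n_Y}$, the fact $\pred{VarMayDepOn}(Y,X)$ lies in $\lfp{\rules{\precontract}}$. This is exactly the statement of~\Cref{lem:fixed-point-backward}(1), whose proof structure I would recover here. The proof proceeds by induction on the length of the PDG path witnessing $\bs{n_X}{n_Y}$, exploiting that each PDG edge -- either a data-dependence edge arising from the Def/Use annotations of the CFG semantics in~\Cref{subsec:instantiation}, or a standard control-dependence edge -- corresponds to a local dependency predicate (one of the $\pred{VarVar}_\lpc$, $\pred{VarMem}_\lpc$, $\pred{MemMem}_\lpc$, etc.\ families) that $\rules{\precontract}$ generates from the instantiated semantic rule at that program counter.

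For the base case I would distinguish whether the single-edge connection from $n_X$ to $n_Y$ is a data- or a control-dependence edge. In the data-dependence case, $X$ is in the Use set of $n_Y$ and no node on the intervening CFG path redefines $X$; the corresponding local predicate at $\lpc(n_Y)$ fires, and the ``no reassignment'' propagation rules (analogous to~\eqref{hc:mem-noreassign}) chain it back to the node $n_X$ where $X$ is last defined, producing $\pred{VarMayDepOn}(Y,X)$. In the control-dependence case, $n_Y$ is controlled by a branch node $n_\text{if}$ whose condition reads a variable that is (transitively) defined in terms of $X$; this is captured by the $\pred{MayControls}$ predicate together with the control-flow clauses (mirroring~\Cref{fig:memmaydependon-control}).

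For the inductive step the path factors through an intermediate defining node $n_Z$ for some variable $Z$, and the induction hypothesis supplies a derivation of the appropriate ``$Z$-depends-on-$X$'' predicate in $\lfp{\rules{\precontract}}$. A case distinction on the type of $Z$ (stack, memory, storage, gas, $\msize$, or external) selects the corresponding propagation rule, and composing it with the already-derived fact yields $\pred{VarMayDepOn}(Y,X)$. I would phrase the induction directly on the PDG path, leveraging the modular structure of $\rules{\precontract}$ that treats the different variable kinds with dedicated predicate families but connects them through uniform propagation clauses.

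The main obstacle will be the exhaustive case analysis over variable kinds and preprocessing information. The rule set distinguishes many variants (statically known versus unknown memory/storage locations, the symbolic $\top$ used for over-approximating unknown locations in both read and write positions, the temporal variables introduced by the $\CALL$, $\STATICCALL$, $\CREATE$, and $\CREATETWO$ rules, and the symbolic memory-interval predicate $\pred{MemMem}$ with its $\top$-offset and $\top$-size cases). Each of these must be matched to a PDG edge produced by the corresponding CFG rule from~\Cref{appendix:subsec:cfg-semantics}; in particular, any backward-slice edge involving a concrete location $\loc$ must be shown to be subsumed by the $\top$-propagation rules (such as~\eqref{hc:mem-top1}) so that the predicate encoding remains a sound over-approximation. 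Once this correspondence is established uniformly, the inductive argument closes.
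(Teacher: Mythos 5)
Your plan cannot be checked against the paper's own argument for the simple reason that there is none: Lemma~\ref{lem:pred_slicing_app} (like its companion Lemma~\ref{lem:inst_pred_slicing_app}) is stated in the appendix without proof, and the corresponding main-body statement (\Cref{lem:fixed-point-backward}) is only given an informal reading before being used as a black box in the proof of \Cref{app:thm:soundness-independence}. That said, your contrapositive-plus-induction-on-the-PDG-path strategy is exactly the argument the authors gesture at, and you correctly locate the real work in the exhaustive matching of Def/Use-induced PDG edges against the per-kind local dependency predicates and the $\top$/interval over-approximations. Two points deserve care if you flesh this out. First, $\pred{VarMayDependOn}$ is declared as a subset of $\NN \times \tagtype$, i.e., its second argument is a \emph{tag} standing for a static environment variable; the lemma is therefore only meaningful for such $X$, and the base case of your induction is not the generic ``$X$ is in the Use set of $n_Y$ with no intervening redefinition'' but specifically the $\langle\textit{write}\rangle\pred{Source}$ facts that seed the fixpoint at the node reading that environment component --- your data-dependence base case should be phrased around these rather than around chaining back to a defining node of $X$ (static environment variables are never redefined). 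Second, the induction must be carried out simultaneously over all the kind-indexed transitive predicates ($\pred{VarMayDependOn}$, $\pred{MemMayDependOn}_{\pc}$, $\pred{StorMayDependOn}_{\pc}$, $\pred{GasDependOn}_{\pc}$, $\pred{MsizeDependOn}_{\pc}$, $\pred{ExternalDependOn}_{\pc}$), since a PDG path from $n_X$ to $n_Y$ may factor through defining nodes of any kind; you acknowledge this, but the invariant has to be stated per family and must record how the program-counter indexing together with the $\pred{NoReassign}$-style propagation clauses encodes the ``no intervening redefinition'' side condition of data dependence. With those refinements the plan is the right one.
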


\begin{definition}[Variable Independence]
    \label{def:var_ind_app}
    A variable $Y$ is independent of a variable $X$ if and only if for all states $\cfgstate_1$, $\cfgstate_2$ and $\cfgstate_1'$ it holds that 
    \begin{align*}
        &\forall n_X, n_Y, i.~
        \cfgstate_1 \equalupto{X} \cfgstate_2 
        \wedge ~\conf{\sucnode{X}}{\cfgstate_1} \nstep{}{}{N_Y}{i}  \conf{\sucnode{Y}}{\cfgstate_1'} \\
        &\implies \exists \cfgstate_2'. ~ \conf{\sucnode{X}}{\cfgstate_2} \nstep{}{}{N_Y}{i} \conf{\sucnode{Y}}{\cfgstate_2'} \wedge  \cfgstate_1'(Y) = \cfgstate_2'(Y) 
    \end{align*}
    We will refer to this definition by $\vio(n,X)$.
\end{definition}

\begin{theorem}[Soundness of Variable Independence]
    \label{app:thm:soundness-independence}
    Let $X$ and $Y$ be variables and $\notbs{n_X}{n_Y}$ for all nodes $n_X$, $n_Y$. 
    Then variable $Y$ is independent of variable $X$. 
\end{theorem}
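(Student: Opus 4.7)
The plan is to construct, for any given execution $\conf{\sucnode{X}}{\cfgstate_1} \nstep{\as}{}{N_Y}{i} \conf{\sucnode{Y}}{\cfgstate_1'}$, a matching execution from $\cfgstate_2$ that visits $N_Y$ exactly $i$ times and agrees on the final value of $Y$. The construction factors through the slice of $\as$: because the hypothesis prevents $X$-defining nodes from influencing any $Y$-defining node, the slice cannot read $X$ and so it replays identically from $\cfgstate_2$; Lemma \ref{lemma:VMDP_big} then lifts this replayed slice back to a real execution with the required visit count.

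Concretely, I first apply Theorem \ref{proof:correctness_slicing_sets} to the original execution with target set $S = N_Y \cup \{\sucnode{Y}\}$, obtaining a sliced execution $\conf{\sucnode{X}}{\cfgstate_1} \xrightarrow{\as'}^*_{\bsi(S)} \conf{\sucnode{Y}}{\cfgstate_1''}$ with $\as' = \projectpath{\as}{\bsi(S)}$ and $\cfgstate_1''(V) = \cfgstate_1'(V)$ for every $V$ with $\useset{V}{\sucnode{Y}}$. Every $N_Y$-visit of $\as$ survives in $\as'$ (those source nodes lie in $\bsi(S)$), so $\as'$ still has $i$ observable $N_Y$-steps and its last $N_Y$-node coincides with that of $\as$, which yields $\cfgstate_1''(Y) = \cfgstate_1'(Y)$. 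Combining the hypothesis $\forall n_X, n_Y.\ \notbs{n_X}{n_Y}$ with the fact that $\sucnode{Y}$ is the unique CFG-successor of some $n_Y$ (so $\bsi(\sucnode{Y}) \subseteq \{n_Y\} \cup \bsi(n_Y)$) yields $\forall n_X.\ \notbs{n_X}{S}$; Lemma \ref{lemma:no_x_use_in_bs} then gives that no node visited along $\as'$ reads $X$.

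Next, since $\cfgstate_1 \equalupto{X} \cfgstate_2$ and no node on $\as'$ uses $X$, an induction on the length of $\as'$ (state updates and predicate guards along $\as'$ never inspect $X$, and the sliced semantics is deterministic via $\ds$) produces a replay $\conf{\sucnode{X}}{\cfgstate_2} \xrightarrow{\as'}^*_{\bsi(S)} \conf{\sucnode{Y}}{\cfgstate_2''}$ with $\cfgstate_1'' \equalupto{X} \cfgstate_2''$; since $X \neq Y$ (otherwise every $n_Y \in \bsi(n_Y)$ would violate the hypothesis), $\cfgstate_1''(Y) = \cfgstate_2''(Y)$. Applying Lemma \ref{lemma:VMDP_big} with $N = S$ and $N' = N_Y$ to this replayed slice yields a real execution $\conf{\sucnode{X}}{\cfgstate_2} \nstep{}{}{N_Y}{i} \conf{\sucnode{Y}}{\cfgstate_2'}$ with $\cfgstate_2''(V) = \cfgstate_2'(V)$ for every $V \in \rv{S}{\sucnode{Y}}$; verifying that $Y$ is relevant at $\sucnode{Y}$ (some node using $Y$ and reachable from $\sucnode{Y}$ without an intervening $Y$-definition lies in $\bsi(S)$) then gives $\cfgstate_2''(Y) = \cfgstate_2'(Y)$, so the chain $\cfgstate_1'(Y) = \cfgstate_1''(Y) = \cfgstate_2''(Y) = \cfgstate_2'(Y)$ closes the proof.

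The main obstacle I foresee is the choice of the slicing target $S$ together with its associated bookkeeping: $S = N_Y \cup \{\sucnode{Y}\}$ has to be calibrated to simultaneously (i) preserve every $N_Y$-visit so the count $i$ is maintained, (ii) keep the last $Y$-definition in the slice so the final value of $Y$ is recomputed faithfully, and (iii) still satisfy $\forall n_X.\ \notbs{n_X}{S}$ so that Lemma \ref{lemma:no_x_use_in_bs} remains available. The accompanying side obligations — extending the hypothesis from $N_Y$ to $\sucnode{Y}$ via its unique predecessor and establishing that $Y$ is relevant at $\sucnode{Y}$ in the sense of Definition \ref{def:rel_vars} — are the delicate technicalities that must be unpacked in terms of the standard control-dependence structure assumed by the slicing framework.
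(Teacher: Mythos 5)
Your overall strategy is the one the paper follows: slice the given run with the set-based correctness theorem, use the slice-usage lemma to show no node in the sliced run reads $X$, replay the sliced run deterministically from $\cfgstate_2$, and lift it back to a real run with Lemma~\ref{lemma:VMDP_big} while tracking the visit count and the relevant variable $Y$. However, there is a genuine gap in the step where you calibrate the slicing target. You take $S = N_Y \cup \{\sucnode{Y}\}$ and justify $\forall n_X.\ \notbs{n_X}{S}$ by the containment $\bsi(\sucnode{Y}) \subseteq \{n_Y\} \cup \bsi(n_Y)$. That containment is false in general: the backward slice of $\sucnode{Y}$ is computed from the PDG predecessors of $\sucnode{Y}$ itself, i.e.\ from the variables $\sucnode{Y}$ \emph{uses} and the nodes that control it, not from what $n_Y$ uses. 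If $\sucnode{Y}$ happens to read $X$ (say it computes $Z := X + Y$), then $n_X \in \bsi(\sucnode{Y})$ even though $\notbs{n_X}{n_Y}$ for every $n_Y$, so your extended hypothesis fails, Lemma~\ref{lemma:no_x_use_in_bs} is no longer applicable to $\bsi(S)$, and the replay argument collapses exactly at the point where you need it.

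The fix is to not include $\sucnode{Y}$ in the slicing target at all: the conclusion only concerns the value of $Y$ \emph{on arrival} at $\sucnode{Y}$, which is whatever the last visited $Y$-defining node computed, and all such nodes lie in $\bsi(N_Y)$ by construction. This is what the paper does: it slices with respect to $N_Y$ alone (so the hypothesis transfers verbatim), recovers the visit count $i=j$ from the path projection onto $\bsi(N_Y)$, and recovers agreement on $Y$ at the end via the relevant-variable mechanism ($Y \in \rv{N_Y}{n_Y}$ together with the final clause of Lemma~\ref{lemma:VMDP_big}), rather than via the Use set of $\sucnode{Y}$. Everything else in your proposal — the preservation of $N_Y$-visits under projection, the determinism-based replay from $\cfgstate_2$, and the lifting step — matches the paper's argument and would go through once the target set is corrected.
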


\begin{proof}
    Let $C$ be a contract and $\precontract$ be a contract with sound preprocessing information that is consistent with $C$.
    We assume that $(1)~ \notbs{n_X}{n_Y}$ holds for all nodes $n_X$ and $n_Y$ and an executions from node $\sucnode{X}$ to node $\sucnode{Y}$ that starts in states $\cfgstate_1$. 
    Let $\cfgstate_2$ be a state such that $(2) ~\cfgstate_1 \equalupto{X} \cfgstate_2$ and
    $$ (3)~\sconf{\cfgstate_1}{\sucnode{X}} \nstep{as}{}{N_Y}{i} \sconf{\cfgstate_1'}{\sucnode{Y}}$$ 
    identity the previous execution with path $\as$.

    From (3) we know from correctness of slicing (\Cref{proof:correctness_slicing_sets}) that
    $$ (5)~ \exists as'. \sconf{\cfgstate_1}{\sucnode{X}} \nmstep{as'}{\bsi(N_Y)}{N_Y}{j}{m} \sconf{\cfgstate_1^*}{\sucnode{Y}},$$%
    $$ (6)~ \eqstatevar{\cfgstate_1'}{\cfgstate_1^*}{Y} $$
    $$ \text{and}~(7)~ \projectpath{as}{N_Y} ~=~ as'$$
    Equations (5) and (6) say that there exists a corresponding path in the backward slice and it has the correct value for $Y$.
    Equation (7) implies that all nodes in the backward slice of $N_Y$ are visited in the same order in $as$ and $as'$ \cite{wasserrab2009pdg}. 
    Since all nodes where $Y$ is defined are by defintion of $N_Y$ in the backward slice, we know from (7) that $i = j$. \\
    We now show that path $\as'$ exists in the sliced-graph when starting from state $\cfgstate_2$:
    $$
    (9)~ \sconf{\cfgstate_2}{n^{+1}_X} \nmstep{as'}{\bsi(n_Y)}{N_Y}{j}{m} \sconf{\cfgstate_2^*}{n^{+1}_Y}
    \quad \text{ and } \quad
    \eqstatevar{\cfgstate_1^*}{\cfgstate_2^*}{Y}
    $$
    With \Cref{lemma:no_x_use_in_bs} we get from (1) and (5) for all $k \leq m$ that
    $$
    (8)~ \sconf{\cfgstate_1}{\sucnode{X}} \xrightarrow{ }^k_{\bsi(N_Y)} \sconf{\cfgstate_1^+}{n^+} 
    \implies \notuseset{X}{n^+} 
    $$
    We can use (8) to show (9) since at all nodes $n^+$, that would have effected the trace, $X$ is not used.
    From (2) we know by the slicing framework's well-formedness properties that $X$ is only source of different effects.
    Therefore, we know that (9) holds. \\
    From (9) we can use \Cref{lemma:VMDP_big} to conclude that 
    $$ \sconf{\cfgstate_2}{n^{+1}_X} \nstep{as''}{}{N_Y}{i} \sconf{\cfgstate_2'}{n^{+1}_Y} ~\wedge~ \forall V \in \textit{rv}~N_Y~ n_Y.~ \eqstatevar{\cfgstate_2'}{\cfgstate_2^*}{V}$$ 
    and in particular 
    $$\eqstatevar{\cfgstate_2'}{\cfgstate_2^*}{Y}$$
    since $Y \in \textit{rv}~N_Y~ n_Y$.
\end{proof}

\paragraph{Instruction Independence}
\begin{lemma}[Instruction Dependency Predicate]
    \label{lem:inst_pred_slicing_app}
    Let $\precontract$ be a contract, $n$ be a node and $X$ be variable thereof. Then it holds that 
    \begin{align*}
        \pred{InstMayDepOn}(n,X) \not \in 	\lfp{\rules{\precontract}} &\Rightarrow \forall \nif, n_X.~ \nif \xrightarrow{~}_{\cd} n \Rightarrow \notbs{n_X}{\nif}
    \end{align*}
\end{lemma}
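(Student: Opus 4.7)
The plan is to prove the statement by contraposition, reducing it to part~2 of Lemma~\ref{lem:fixed-point-backward}, which gives precisely the fixpoint characterisation we need for control-flow–induced dependencies. So I would assume that there exist nodes $\nif$ and $n_X$ with $\nif \xrightarrow{~}_{cd} n$ and $\bs{n_X}{\nif}$, and then derive $\pred{InstMayDepOn}(n,X) \in \lfp{\rules{\precontract}}$.

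First, I would unpack the two hypotheses into the structural objects captured by the CHC encoding. Since $\nif$ standard-controls $n$, $\nif$ must be a branching node, i.e., the CFG instantiation rules for $\JUMPI$ apply and $\nif$ carries a conditional variable, say $\stackvar{x_b}$, in its Use set. This is exactly the situation in which the auxiliary predicate $\pred{MayControls}_{\lpc_n}(\lpc_{\nif}, x_b)$ is inhabited by the rules generated from control dependence (which is precomputed according to standard control dependence and then closed transitively, analogously to the closure rules for $\pred{MemMayDependOn}$ shown in~\Cref{fig:memmaydependon-data,fig:memmaydependon-control}). Second, the hypothesis $\bs{n_X}{\nif}$ gives, by Lemma~\ref{lem:pred_slicing_app} applied to $Y = x_b$, the fact $\pred{VarMayDepOn}(x_b, X) \in \lfp{\rules{\precontract}}$.

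The core step is then to exhibit a single CHC of the encoding of the form
\begin{equation*}
\pred{Controls}(\lpc_n, \lpc_{\nif}, x_b) \land \pred{VarMayDepOn}(x_b, X) \Rightarrow \pred{InstMayDepOn}(n, X),
\end{equation*}
(together with, in the transitive case, the standard closure rules for $\pred{MayControls}$ in terms of $\pred{Controls}$), which is exactly the control-dependence companion to the data-flow propagation rules described for $\pred{MemMayDependOn}$ in~\Cref{fig:memmaydependon-control}. Plugging in the two facts derived above yields $\pred{InstMayDepOn}(n,X) \in \lfp{\rules{\precontract}}$, contradicting the assumption and completing the contrapositive.

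The main obstacle is the bookkeeping in the first step: one has to show that \emph{every} standard control-dependence edge $\nif \xrightarrow{~}_{cd} n$ produced by the PDG of the instantiated CFG semantics is faithfully mirrored by an inhabitant of the encoding's control predicate, uniformly over all branching instructions and across the node-splitting we introduce for precision (cf. the two-step CFG rule for $\JUMPI$ in~\Cref{fig:jumpi_exampl}). This requires a careful case analysis over instructions with more than one successor and checking that the Use variable used in the $\pred{Controls}$ fact is indeed the stack variable branched upon, so that the subsequent application of Lemma~\ref{lem:pred_slicing_app} lines up syntactically. Once this matching is established, the derivation step itself is immediate.
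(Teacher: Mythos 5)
You should first be aware that the paper never actually proves this lemma: \Cref{lem:inst_pred_slicing_app} (like its companion \Cref{lem:pred_slicing_app}) is stated in the appendix without proof and is the contrapositive restatement of \Cref{lem:fixed-point-backward}(2), which is likewise asserted to hold by construction of the rule set $\rules{\precontract}$. So citing \Cref{lem:fixed-point-backward}(2) would be circular, and your decision to actually derive the contrapositive from the CHC rules is the right move; your overall shape (identify the branch condition $x_b$ at $\nif$, obtain $\pred{VarMayDepOn}(x_b,X)$, fire the control-dependence rule $\pred{MayControls}\land\pred{VarMayDepOn}\Rightarrow\pred{InstMayDepOn}$) matches how the rules in the spirit of \Cref{fig:memmaydependon-control} are intended to be used.

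There is, however, one step that fails as written: you claim that $\bs{n_X}{\nif}$ yields $\pred{VarMayDepOn}(x_b,X)$ ``by \Cref{lem:pred_slicing_app} applied to $Y=x_b$.'' That lemma (contrapositively) requires a node \emph{defining} $x_b$ whose backward slice contains $n_X$, whereas $\bs{n_X}{\nif}$ only says that $n_X$ reaches $\nif$ in the PDG. Unfolding the backward slice, the last edge of that PDG path into $\nif$ is either a data-dependence edge from a definition of $x_b$ --- in which case your argument goes through --- or a \emph{control}-dependence edge into $\nif$ itself (nested branching). In the latter case $\pred{VarMayDepOn}(x_b,X)$ need not be derivable at all, yet $\pred{InstMayDepOn}(n,X)$ must still be produced, via the transitive closure of $\pred{MayControls}$ reaching the outer branch whose condition genuinely data-depends on $X$. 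So the missing idea is an induction on the PDG path from $n_X$ to $\nif$ (equivalently, on the control-nesting depth), with the data-dependence case as base case. You do mention ``the transitive case,'' but you locate the difficulty in matching control-dependence edges of the split CFG to $\pred{Controls}$ facts; that bookkeeping is real, but the decomposition of $\bs{n_X}{\nif}$ is the step that actually breaks and needs to be made explicit.
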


\begin{definition}[Instruction Independence]
    \label{def:inst_ind}
    A node $n$ is independent of a variable $X$ if and only if for all states $\cfgstate_1$ and $\cfgstate_1$ it holds that 
    \begin{align*}
        &\forall n_X, n', i. ~
        \cfgstate_1 \equalupto{X} \cfgstate_2 
        \wedge 
        \conf{n^{+1}_X}{\cfgstate_1} \nstep{\as}{}{n'}{i} \conf{n'}{\_} \wedge \conf{\sucnode{X}}{\cfgstate_2} \nstep{\as'}{}{n'}{i} \conf{n'}{\_}\\
        &\implies \projectpatheqlength{\as}{\as'}{n}
    \end{align*}
    We will refer to this definition by $\iio(n,X)$.
\end{definition}

\begin{theorem}[Soundness of Instruction Independence]
    \label{thm:inst-soundness-independence}
    Let $n$ be a node, $X$ be a variable and $\notbs{n_X}{\nif}$ hold for all nodes $n_X$ and $\nif$ where $\nif \xrightarrow{~}_{cd} n$. 
    Then node $n$ is independent of variable $X$. 
\end{theorem}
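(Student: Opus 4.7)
The plan is to follow the same high-level strategy as the proof of \Cref{app:thm:soundness-independence} (Soundness of Variable Independence), with the quantity of interest now being the number of visits to the node $n$ rather than the final value of a variable. Assume we are given $\cfgstate_1 \equalupto{X} \cfgstate_2$ together with two executions $\sconf{\cfgstate_1}{\sucnode{X}} \nstep{\as}{}{n'}{i} \sconf{\cfgstate_1'}{n'}$ and $\sconf{\cfgstate_2}{\sucnode{X}} \nstep{\as'}{}{n'}{i} \sconf{\cfgstate_2'}{n'}$; the goal is $\projectpatheqlength{\as}{\as'}{n}$.

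First, I would invoke \Cref{proof:correctness_slicing_sets} with the set $S = \{n, n'\}$ on each of the two executions. This produces sliced paths $\sconf{\cfgstate_1}{\sucnode{X}} \xrightarrow{\as_s}_{\bsi(S)}^{*} \sconf{\cfgstate_1^*}{n'}$ and $\sconf{\cfgstate_2}{\sucnode{X}} \xrightarrow{\as'_s}_{\bsi(S)}^{*} \sconf{\cfgstate_2^*}{n'}$ together with $\projectpath{\as}{\bsi(S)} = \as_s$ and $\projectpath{\as'}{\bsi(S)} = \as'_s$. Because $n \in S \subseteq \bsi(S)$, this projection preserves every visit to $n$, so it suffices to prove $\projectpatheqlength{\as_s}{\as'_s}{n}$.

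Next, I would show that $X$ is never used at any node along either sliced path. Every $\nif$ with $\nif \xrightarrow{~}_{cd} n$ lies in $\bsi(n) \subseteq \bsi(S)$ by construction of the PDG, so the hypothesis $\notbs{n_X}{\nif}$ forbids any defining node of $X$ from appearing in the backward slice of such $\nif$. A set-version of \Cref{lemma:no_x_use_in_bs}, obtained by the same case analysis applied to every $\nif$ whose backward slice contributes to $\bsi(S)$, then guarantees that no node visited along $\as_s$ or $\as'_s$ has $X$ in its Use set.

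The main obstacle is the final lock-step argument: showing that two sliced executions that start from $X$-equivalent states and never read $X$ traverse exactly the same sequence of nodes. I would establish this by induction on the length of the sliced path, using the well-formedness property of abstract CFGs that a predicate guard $Q$ on a predicate edge and an update function $f$ on a state-changing edge depend only on the variables in the corresponding Use set. Since $X$ is absent from every Use set along the path, each step evolves identically in both executions and preserves $X$-equivalence, forcing $\as_s = \as'_s$ as node sequences. This immediately gives $\projectpatheqlength{\as_s}{\as'_s}{n}$, and pulling back through the slicing projections yields $\projectpatheqlength{\as}{\as'}{n}$ as required.
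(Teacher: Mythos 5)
Your overall plan---reduce to sliced executions, show $X$ is never read there, then run a lock-step argument---is in the right family, but the way you set it up has a genuine gap. The hypothesis only constrains the backward slices of the \emph{control-dependence predecessors} of $n$: it gives $\notbs{n_X}{\nif}$ for every $\nif$ with $\nif \xrightarrow{~}_{cd} n$. It says nothing about $\bsi(n)$ itself, which also contains the data-dependence predecessors of $n$ (if $n$ computes, say, $y := X+1$, then $n_X \in \bsi(n)$ without violating the hypothesis), and nothing at all about $\bsi(n')$, since $n'$ is an arbitrary endpoint. Consequently, when you slice with respect to $S = \{n, n'\}$, the sliced paths may well visit nodes that define or use $X$, and your appeal to a set-version of \Cref{lemma:no_x_use_in_bs} to conclude that no node along $\as_s$ or $\as'_s$ has $X$ in its Use set is unjustified and false in general. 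The final lock-step claim that $\as_s = \as'_s$ as node sequences is also stronger than what is true or needed: the two executions may legitimately diverge at branches whose conditions \emph{do} depend on $X$ but which do not control $n$, reconverge, and still visit $n$ equally often.

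The paper's proof avoids both problems by arguing by contradiction and localizing to exactly the place where the hypothesis has force. If the visit counts at $n$ differed, the two paths would have to separate for the last time at some branching node $\nif$ from which one continuation reaches a further occurrence of $n$ and the other does not; standard control dependence then gives $\transscd{\nif}{n}$, so the hypothesis applies to $\nif$ and yields $\notbs{n_X^0}{\nif}$ for all $n_X^0$. Divergence at $\nif$ forces some variable in the Use set of $\nif$ to take different values in the two runs. But slicing the two \emph{prefixes up to $\nif$} with respect to $\bsi(\nif)$ (rather than $\bsi(\{n,n'\})$), applying \Cref{lemma:no_x_use_in_bs} there to see that $X$ is never read, and running your lock-step argument only on those sliced prefixes shows the used variables at $\nif$ agree --- a contradiction. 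You would need to restructure your argument around this split node for the proof to go through.
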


\begin{proof}
    Let $C$ be a contract and $\precontract$ be a contract with sound preprocessing information that is consistent with $C$.
    We assume that $(1) ~\notbs{n_X}{\nif}$ holds for all nodes $n_X$ and $\nif$ such that $\nif \xrightarrow{~}_{cd} n$ and two executions from node $\sucnode{X}$ to node $n'$ that starts in states $\cfgstate_1$ and $\cfgstate_2$ with $(4) ~\cfgstate_1 \equalupto{X} \cfgstate_2$ and visit $n'$ on their paths $i$ times.
    Let 
    $$(2) ~ \sconf{\cfgstate_1}{\sucnode{X}} \nstep{\as}{}{n'}{i} \conf{n'}{\_}
    \quad \text{and} \quad
    (3) ~ \sconf{\cfgstate_2}{\sucnode{X}} \nstep{\as'}{}{n'}{i}\conf{n'}{\_}$$ 
    be these two executions with paths $\as$ and $\as'$.
    
    The interesting case is node $n' \not = n$ since the conclusion for node $n' = n$ follows with assumption (3) and (4) by definition of $\nstep{}{}{n'}{i}\!\!$. 
    Therefore, we show the statement for $n' \not = n$.
    
    We show that paths $\as$ and $\as'$ visit node $n$ the same number of times by contradiction.
    W.l.o.g we assume that path $\as$ visits node $n$ at least once more than path $\as'$:
    $$ (*)~~ \mid \projectpath{\as}{n} \mid~ >~ \mid \projectpath{\as'}{n} \mid$$
    
    From $(*)$ we know that there exists a prefix $\overline{\as}$ of path $\as$ such that it can visit node $n$ once more than path $\as'$:
    \begin{align*}
        (5.1)& ~~ \sconf{\cfgstate_1}{\sucnode{X}} \nstep{\overline{\as}}{}{n}{k} \conf{n}{\_}  \nstep{}{}{n'}{0}\hspace{-1.4cm} &\conf{n'}{\_}&~\wedge~ \overline{\as} \subseteq_{od} \as& \\
        \intertext{
            with $k = \mid \projectpath{\as'}{n} \mid$ such that $\overline{\as}$ ends at the 	$k+1$ occurrence of $n$.
            Node $n'$ is reachable from $n$ because the prefix $\overline{\as}$ can allows be completed to the full path from (3).
        }
        \intertext{
            Next, we define a corresponding prefix $\overline{\as'}$ of $\as'$ that vists $n$ the maximal number of $k$ times and ends at the same appearence of $n'$.
        } 
        (5.2)& ~~ \sconf{\cfgstate_2}{\sucnode{X}} \; \; \nstep{\quad\qquad\overline{\as'}\quad\qquad}{}{n'}{j}\hspace{-1.4cm} &\conf{n'}{\_}&~\wedge~\overline{\as'} \subseteq_{od} \as'&
    \end{align*}
    with $j = \mid \projectpath{\overline{\as}}{n'} \mid$. 
    That path exists based on (4) and $(*)$.
    We know that path $\overline{\as'}$ visits $n$ less often than $\overline{\as}$; and $\overline{\as'}$ consists of all visits of $n$ in $\as'$.
    
    We know by construction of (5.1) and (5.2) that $\overline{\as}$ and $\overline{\as'}$ split up after the $k$-th visit at $n$ because otherwise both would visit node $n$ at least $k+1$ number of times. We call this split node $\nif$ and make the split explicit:
    \begin{align*}
        (6)~~ \exists \nif, w, z.~~ \conf{\sucnode{X}}{\cfgstate_1} \xrightarrow{ \overline{\as_1}}{}^* &\countnodes{\nif}{w}~ \conf{\nif}{\cfgstate^{if}_1} \xrightarrow{ \overline{\as_2}}{}^*~ \conf{n}{\_} \xrightarrow{}{}^*~ \conf{n'}{\_} \\
        \land~ 
        \conf{\sucnode{X}}{\cfgstate_2} \xrightarrow{ \overline{\as'_1}}{}^*&\countnodes{\nif}{z}~ \conf{\nif}{\cfgstate^{if}_2} \xrightarrow{\;\; \qquad \overline{\as'_2} \; \; \qquad}{}^*~ \conf{n'}{\_}
    \end{align*}
    where $\overline{\as} = \overline{\as_1}~ @ ~\overline{\as_2}$ such that $\overline{\as_2}$ and $\overline{\as'_2} $ do not share a single node, i.e.\ $\nif$ is the last node where the paths could split up.
    From (3) -- (6), we can conclude with standard control dependence that
    $$(7) ~~ \transscd{\nif}{n}$$
    By instantiating (1) with (7) we know
    $$ (8)~~ \forall n_X^0.~ \notbs{n_X^0}{\nif}$$ 
    From (7) we know that states $\cfgstate^{if}_1$ and $\cfgstate^{if}_2$ have at least one different value for some input variable at $\nif$.
    Only a different value in the usage set can lead to splitting control flow.
    Therefore, we get:
    $$ (9)~~ \neg \forall~ \useset{V}{\nif}. ~\eqstatevar{\cfgstate^{if}_1}{\cfgstate^{if}_2}{V} $$
    
    We now map paths $\overline{\as_1}$ and $ \overline{\as'_1}$ from (5) into the sliced graph with \Cref{proof:correctness_slicing_sets} of correctness of slicing.
    Those paths compute correct input values at $\nif$. 
    We get
    $$ (10)~~ \exists~ \overline{\overline{\as_1}}.~ \sconf{\cfgstate_1}{\sucnode{X}} \nmstep{\overline{\overline{\as_1}}}{\bsi(\nif)}{\nif}{w}{g} \sconf{\cfgstate^*_1}{\nif} ~
    \wedge~ \forall~ \useset{V}{\nif}.~ \eqstatevar{\cfgstate^{if}_1}{\cfgstate^*_1}{V} $$
    $$ (11)~~ \exists~ \overline{\overline{\as'_1}}.~ \sconf{\cfgstate_2}{\sucnode{X}} \nmstep{\overline{\overline{\as'_1}}}{\bsi(\nif)}{\nif}{z}{h} \sconf{\cfgstate^*_2}{\nif} ~
    \wedge~ \forall~ \useset{V}{\nif}. ~\eqstatevar{\cfgstate^{if}_2}{\cfgstate^*_2}{V}$$
    
    We apply paths $\overline{\overline{\as_1}}$ and $\overline{\overline{\as'_1}}$ to \Cref{lemma:no_x_use_in_bs} with (8) and get
    $$ (12)~ \forall~ q' \le q.~  \sconf{\cfgstate_p}{\sucnode{X}} \xrightarrow{ }^{q'}_{\bsi(L_{\textit{if}})} \sconf{\cfgstate^*_p}{n^+} \implies \notuseset{X}{n^+} $$ 
    with $(p, q) \in \{(1, g),~(2, h)\}$. \\
    Assumption (2) states that only $X$ can propagate changes, but we know from (12) that no node in the backward slice on either path uses $X$. 
    Therefore, executions $\sconf{\cfgstate_1}{\sucnode{X}} \xrightarrow{~}_{\bsi(\nif)}^* \_$  and $\sconf{\cfgstate_2}{\sucnode{X}} \xrightarrow{~}_{\bsi(\nif)}^* \_$ have the same states (up to $X$) after the same number of steps:
    \begin{align*}
        (13) ~~ 
        &\sconf{\cfgstate_1}{\sucnode{X}} \xrightarrow{xs}_{\bsi(\nif)}^t \sconf{\hat{\cfgstate_1}}{~\_} \land~
        \sconf{\cfgstate_2}{\sucnode{X}} \xrightarrow{ys}_{\bsi(\nif)}^t \sconf{\hat{\cfgstate_2}}{~\_} \\ 
        & \implies \hat{\cfgstate_1} \equalupto{X} \hat{\cfgstate_2}
    \end{align*}
    Since $\bs{\nif}{\nif}$, we know that paths in (10) and (11) reach the split node $\nif$ after the same number of steps in the backward slice:
    $$(14) ~~ w = z \qquad
    \text{ and therefore choose} \quad
    t =~ \mid \overline{\overline{as_1}} \mid~=~\mid \overline{\overline{as_1}} \mid$$
    Finally, by (13) and (14) on paths (10) and (11) we get $\cfgstate^*_1 \equalupto{X} \cfgstate^*_2 $ and conclude by (12):
    $$ (15)~~ \forall~ \useset{V}{\nif}.~ \eqstatevar{\cfgstate^*_1}{\cfgstate^*_2}{V} $$ 
    The state equivalence reasoning is summarized in the following for all $\useset{V}{\nif}$:
    \begin{center}
        \begin{tikzpicture}
            \node (A') at (0,-3) {$\cfgstate^{if}_1(V)$};
            \node (B') at (0,-5) {$\cfgstate_1^*(V)$};
            \node (C') at (4,-5) {$\cfgstate_2^*(V)$};
            \node (D') at (4,-3) {$\cfgstate^{if}_2(V)$};
            \node (AB') at (0,-4) {$\stackrel{(10)}{=}$};
            \node (BC') at (2,-5) {$\stackrel{(15)}{=}$};
            \node (CD') at (4,-4) {$\stackrel{(11)}{=}$};
        \end{tikzpicture}
    \end{center}
    Thereby, we conclude $\forall~ \useset{V}{\nif}. ~\eqstatevar{\cfgstate^{if}_1}{\cfgstate^{if}_2}{V}$ which contradicts (9).
\end{proof}

\begin{definition}[Environmental Instruction Independence]
    \label{def:iio_lifted}
    A node $n$ is independent of a constant environmental variable $X$ if and only if for all states $\cfgstate_1$ and $\cfgstate_2$ it holds that 
    \begin{align*}
        &\forall n^0, n', i. ~
        \cfgstate_1 \equalupto{X} \cfgstate_2
        \wedge 
        \conf{n^0}{\cfgstate_1} \nstep{\as}{}{n'}{i}  \conf{n'}{\_} \wedge \conf{n^0}{\cfgstate_2} \nstep{\as'}{}{n'}{i} \conf{n'}{\_}\\
        &\implies \projectpatheqlength{\as}{\as'}{n}
    \end{align*}
\end{definition}

\begin{lemma}[Soundness of Environmental Instruction Independence]
    \label{thm:evn-inst-soundness-independence}
    Let $n$ be a node, $X$ be a constant environmental variable and $\notbs{n_X}{\nif}$ for all nodes $n_X$ and $\nif$ where $\nif \xrightarrow{~}_{cd} n$. 
    Then node $n$ is independent of variable $X$.  
\end{lemma}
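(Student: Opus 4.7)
The plan is to mirror the proof of Theorem \ref{thm:inst-soundness-independence} essentially verbatim, with the single structural change needed to accommodate the fact that the starting node is an arbitrary $n^0$ rather than the distinguished successor $\sucnode{X}$. The observation that makes this adaptation trivial is that a constant environmental variable $X$ is never (re)defined by the contract, so the set $\defnodes{X}$ of its definition nodes inside the CFG is empty (or consists only of an implicit source preceding every execution). Consequently the hypothesis $\forall n_X, \nif.\ \nif \xrightarrow{~}_{cd} n \Rightarrow \notbs{n_X}{\nif}$ is never in tension with the requirement that some $n_X$ precede $n^0$ on the path, which was the only reason Theorem \ref{thm:inst-soundness-independence} needed to anchor its argument at $\sucnode{X}$.

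Concretely, I would proceed by contradiction exactly as in Theorem \ref{thm:inst-soundness-independence}. Assuming w.l.o.g.\ that $|\projectpath{\as}{n}| > |\projectpath{\as'}{n}|$, I would extract prefixes $\overline{\as}$ and $\overline{\as'}$ ending at the same occurrence of $n'$, isolate the last split node $\nif$, and derive $\transscd{\nif}{n}$ by standard control dependence since $n$ is visited on the continuation of $\overline{\as}$ but not on that of $\overline{\as'}$. Instantiating the hypothesis gives $\notbs{n_X}{\nif}$ for every putative definition node of $X$, which, combined with the absence of interior definitions of environmental variables, yields the analogue of Lemma \ref{lemma:no_x_use_in_bs}: no node on any sliced path from $\cfgstate_1$ or $\cfgstate_2$ into $\bsi(\nif)$ may use $X$.

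I would then project both prefixes into the backward slice of $\nif$ using Theorem \ref{proof:correctness_slicing_sets}, obtaining sliced executions ending in states $\cfgstate_1^*$ and $\cfgstate_2^*$ that agree with $\cfgstate_1^{if}$ and $\cfgstate_2^{if}$ respectively on all variables in $\usen(\nif)$. Since $\bs{\nif}{\nif}$, both sliced paths hit $\nif$ after equal numbers of observable steps, so I can pair them up step-by-step. Using $\cfgstate_1 \equalupto{X} \cfgstate_2$ as the base case and the fact that no node on the sliced paths uses $X$ as the inductive step, I would conclude $\cfgstate_1^* \equalupto{X} \cfgstate_2^*$ and hence $\forall V \in \usen(\nif).\ \eqstatevar{\cfgstate_1^*}{\cfgstate_2^*}{V}$. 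Chaining the three equalities $\cfgstate_1^{if}(V) = \cfgstate_1^*(V) = \cfgstate_2^*(V) = \cfgstate_2^{if}(V)$ gives agreement of $\cfgstate_1^{if}$ and $\cfgstate_2^{if}$ on $\usen(\nif)$, contradicting the fact that the paths branch differently at $\nif$.

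The only real obstacle is verifying that Lemma \ref{lemma:no_x_use_in_bs} truly adapts without any starting-node assumption. In the original statement, the prefix $\sconf{\cfgstate}{\sucnode{X}}$ was used to rule out a definition of $X$ strictly before the path's start by the case \enquote{predecessor defines $X$}. For an environmental $X$ this case is vacuous by construction, so the lemma's proof (a case analysis on whether $X$ is ever redefined along the sliced path) specializes cleanly. Once this point is checked, the remainder of the argument is entirely mechanical and inherits from Theorem \ref{thm:inst-soundness-independence}.
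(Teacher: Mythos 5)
Your proposal is correct and takes essentially the same route as the paper: the paper's proof simply observes that the role of $\sucnode{X}$ in Theorem~\ref{thm:inst-soundness-independence} was to guarantee definite assignment of $X$, and since constant environmental variables are definitely assigned and final by design, that requirement can be dropped and the earlier proof reused verbatim. Your more detailed walk-through of the adaptation (in particular the check that Lemma~\ref{lemma:no_x_use_in_bs} specializes cleanly because the \enquote{predecessor defines $X$} case is vacuous) is exactly the justification the paper compresses into its two-sentence argument.
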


\begin{proof}
    Definite assignment was ensured by $\sucnode{X}$ in \Cref{thm:inst-soundness-independence}.
    Since constant environmental variables are definitely assigned and final by design, we can drop this requirement in the soundness claim and reuse the proof of \Cref{thm:inst-soundness-independence}.
\end{proof}

\subsection{Correctness of Trace Noninterference Pattern}
\label{sec:tni}
\newcommand{\fnodeset}{N_\outputproj(\evmcontract)}
\newcommand{\fargs}[1]{\textit{Args}(\evmcontract, #1)}
\newcommand{\fvarset}{\textit{Var}_\outputproj(\evmcontract)}
\newcommand{\fnode}{\node_\outputproj}
\newcommand{\fvar}{x_f}
\newcommand{\varset}{V}
\newcommand{\interval}[2]{[#1, #2]}
\newcommand{\conc}{\cdot}
\newcommand{\steps}[1]{\xrightarrow{#1}^*~}
\newcommand{\step}[2]{\xrightarrow{#1}^{#2}~}
\newcommand{\combifstep}[4]{\xhookrightarrow{#2}^{#1} \mid_{#4}^{#3}~}
\newcommand{\combinstep}[2]{\xhookrightarrow{#2}^{#1}~}
\newcommand{\combisteps}[4]{#1 \vDash #2 \xhookrightarrow{#4}^* #3}
\newcommand{\cfgmednstep}[1]{\Rightarrow^{#1}}

To prove~\Cref{thm:soundness-tni}

\begin{lemma}
    \label{lem:soundness-help}
Let $\varset$ be a set of CFG state variables and let
\begin{itemize}
    \item $\fnodeset\define \{\node ~|~ \exists ~i.~ \node = (\lpc, i) ~land~ \exists~\instruction ~\vec{x}~ \nextpc ~\pre.~ \evmcontract(\lpc) = (\instruction(\vec{x}), \nextpc, \pre) ~\land~ \outputproj(\instruction) \}$
    \item $\fargs{\node} \define  \{ x ~|~ \exists~i.~\land \node = (\lpc, i) ~\land~  \exists ~ \instruction ~\vec{x} ~\nextpc ~\pre.~ \evmcontract(\lpc) = (\instruction(\vec{x}), \nextpc, \pre) ~\land~ x \in \vec{x} \}$
    \item $\fvarset \define \{ x ~|~ \exists~ \fnode.~ \fnode \in \fnodeset ~\land~ x \in \fargs{\fnode} \}$
\end{itemize}
Further, let $m \in \NN$, $\evmcontract$, $\node, \fnode^1, \nodesucc{\fnode^1}, \cdots, \fnode^{m}, \nodesucc{\fnode^{m}}$, $\as_1, \cdots, \as_m$, $a_1, \cdots, a_m$, and $\cfgstate, \cfgstate', \cfgstate_1, \cfgstate_1^+, \cdots, \cfgstate_m, \cfgstate_m^+$ be arbitrary and assume that 
\begin{enumerate}
    \item $\forall \cfgstate \equalupto{V} \cfgstate'$
    \item $\forall \fvar \in \fvarset.~ \forall v \in \varset.~ \vio(\fvar, v)$ \label{asm:varind}
    \item $\forall \fnode \in \fnodeset.~ \forall v \in \varset.~ \iio(\fnode, v)$ \label{asm:instind}
    \item $\cfgconfig{\node}{\cfgstate} \left(\nstep{\as_i}{}{\fnodeset}{0} \cfgconfig{\fnode^i}{\cfgstate_i} \step{a_i}{} \cfgconfig{\nodesucc{\fnode^i}}{\cfgstate^+_i} \right)^m_{i=1}$
\end{enumerate}
Then there exist $\as_1', \cdots, \as_m'$, $a_1', \cdots, a_m'$, and $\cfgstate_1', \cfgstate_1'^+ \cdots, \cfgstate_m', \cfgstate_m'^+$ such that
\begin{align*}
    &\cfgconfig{\node}{\cfgstate} \left(\nstep{\as_i'}{}{\fnodeset}{0} \cfgconfig{\fnode^i}{\cfgstate_i'} \step{a_i'}{} \cfgconfig{\nodesucc{\fnode^i}}{\cfgstate_i'^+}\right)^m_{i=1} \\
    &~\land~ \forall i \in \interval{1}{m}.~ \forall x \in \fargs{\fnode^i}.~ \cfgstate_i(x) = \cfgstate_i'(x)
\end{align*}
\end{lemma}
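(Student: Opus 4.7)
The proof proceeds by induction on $m$, the number of interesting nodes visited. For $m=0$, the statement is vacuous: the empty iterated product gives $\cfgconfig{\node}{\cfgstate}$ on both sides.

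For the inductive step, the main task is to lift the single-variable independence assumptions $\iio$ and $\vio$ to the whole set $V$. I would first prove an auxiliary \emph{set-generalization lemma}: if $\cfgstate \equalupto{V} \cfgstate'$ and $\forall v\in V.~\iio(\fnode,v)$, then for any execution from $\cfgstate$ of the form $\cfgconfig{\node}{\cfgstate}\nstep{\as}{}{\fnodeset}{0}\cfgconfig{\fnode}{\cfgstate_{\fnode}}\step{a}{}\cfgconfig{\nodesucc{\fnode}}{\cfgstate_{\fnode}^+}$, there is a matching execution from $\cfgstate'$ visiting exactly the same nodes in $\fnodeset$ (namely, zero instances before reaching $\fnode$, then $\fnode$), and analogously for the value-agreement consequence of $\vio$. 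The set-generalization is obtained by enumerating $V=\{v_1,\dots,v_k\}$ and interpolating between $\cfgstate$ and $\cfgstate'$ via a chain of states $\cfgstate=\cfgstate^0,\cfgstate^1,\dots,\cfgstate^k=\cfgstate'$ where $\cfgstate^{j-1}\equalupto{v_j}\cfgstate^j$, and then applying the single-variable $\iio(\fnode,v_j)$ (respectively $\vio(\fvar,v_j)$) to each consecutive pair.

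In the inductive step itself, I would first consider the prefix $\cfgconfig{\node}{\cfgstate}\nstep{\as_1}{}{\fnodeset}{0}\cfgconfig{\fnode^1}{\cfgstate_1}\step{a_1}{}\cfgconfig{\nodesucc{\fnode^1}}{\cfgstate_1^+}$. Applying the set-generalized $\iio$ via assumption~(\ref{asm:instind}) yields $\as_1'$ and $a_1'$ such that $\cfgconfig{\node}{\cfgstate'}\nstep{\as_1'}{}{\fnodeset}{0}\cfgconfig{\fnode^1}{\cfgstate_1'}\step{a_1'}{}\cfgconfig{\nodesucc{\fnode^1}}{\cfgstate_1'^+}$, i.e., the matching execution also visits $\fnode^1$ as its first interesting node. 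Applying the set-generalized $\vio$ via assumption~(\ref{asm:varind}) to each $\fvar\in\fargs{\fnode^1}$ (noting $\fargs{\fnode^1}\subseteq\fvarset$) gives $\cfgstate_1(x)=\cfgstate_1'(x)$ for every argument $x$ of $\fnode^1$.

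For the remaining $m-1$ visits, I would apply the induction hypothesis starting from $\cfgconfig{\nodesucc{\fnode^1}}{\cfgstate_1^+}$ and $\cfgconfig{\nodesucc{\fnode^1}}{\cfgstate_1'^+}$. For this to go through, I need $\cfgstate_1^+\equalupto{V}\cfgstate_1'^+$. This follows from the construction: the set-generalized $\iio$ and $\vio$, when applied in the contrapositive to every variable $u\notin V$, in combination with the determinism imposed along the unchanged parts of the trace, preserve equality on all non-$V$ variables. The main obstacle of the proof is making this preservation precise: the phrasing of $\iio$ and $\vio$ fixes the target node and count but does not directly track the state of non-arguments of $\fnode^i$. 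I expect this to be handled by combining the trace equivalence from the set-generalized $\iio$ with a straightforward structural argument on the deterministic CFG semantics that equal-up-to-$V$ states perform syntactically identical edges (once instruction independence has fixed the control path), so their states differ only in the variables already differing in $\cfgstate, \cfgstate'$, which are exactly those in $V$. Chaining this with the induction hypothesis produces the required matching trace.
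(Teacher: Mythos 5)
Your overall skeleton (induction on $m$, base case trivial, use $\iio$ to match the control structure and $\vio$ to match argument values) points in the right direction, but the way you set up the inductive step contains a genuine gap. You restart the induction hypothesis from the intermediate configurations $\cfgconfig{\nodesucc{\fnode^1}}{\cfgstate_1^+}$ and $\cfgconfig{\nodesucc{\fnode^1}}{\cfgstate_1'^+}$, which forces you to establish $\cfgstate_1^+ \equalupto{\varset} \cfgstate_1'^+$. This does not follow from the assumptions and is false in general. The independence hypotheses only constrain the visit counts of nodes in $\fnodeset$ and the values of the variables in $\fargs{\fnode^i}$; they say nothing about the remaining state. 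The two executions may legitimately diverge at a branch whose condition depends on some $v \in \varset$ (this is permitted by $\iio$ as long as the branches rejoin without affecting the reachability of nodes in $\fnodeset$), and along the divergent arms they may write different values to variables outside $\varset$. Even on a syntactically identical edge, a state update such as $u \define v$ with $v \in \varset$ and $u \notin \varset$ propagates the initial difference to a variable outside $\varset$. So the ``straightforward structural argument'' you appeal to for preserving equality up to $\varset$ cannot be made to work, and the restarted induction hypothesis is not applicable.

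The paper avoids this by never re-anchoring the induction: the inductive hypothesis is stated for executions rooted at the \emph{original} configurations $\cfgconfig{\node}{\cfgstate}$ and $\cfgconfig{\node}{\cfgstate'}$, and the step extends the matched prefix of length $m-1$ by one more segment. The existence of that segment is shown by contradiction using $\iio$ applied to \emph{full} paths from the start: if $\fnode^m$ were unreachable from $\cfgconfig{\nodesucc{\fnode^{m-1}}}{\cfgstate_{m-1}'^+}$ before hitting $\fnodeset$ again, one either completes both runs to $\exitnode$ and obtains two executions from the start with different visit counts for $\fnode^m$, or one exhibits an intervening node $\fnode^* \in \fnodeset$ whose visit counts differ --- each case contradicting assumption~(3). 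The argument values then follow directly from assumption~(2). I would also note that your interpolation chain $\cfgstate = \cfgstate^0, \ldots, \cfgstate^k = \cfgstate'$ for lifting single-variable independence to the set $\varset$ needs care: $\iio$ is a conditional statement (it presupposes that both executions reach the target node the same number of times), so chaining it across hybrids requires an existence argument for the intermediate executions, e.g.\ via the $\exitnode$ trick; this is patchable, but the restart issue above is the essential flaw to fix.
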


\begin{proof}
By induction on $m \in \NN$.
\begin{enumerate}
    \item Let $m = 0$. The claim trivially holds. 
    \item Let $m > 0$. 
    Then $\left(\nstep{\as_i}{}{\fnodeset}{0} \cfgconfig{\fnode^i}{\cfgstate_i} \step{a_i}{} \cfgconfig{\nodesucc{\fnode^i}}{\cfgstate^+_i}\right)^{m-1}_{i=1} 
    \nstep{\as_m}{}{\fnodeset}{0} \cfgconfig{\fnode^m}{\cfgstate_m} \step{a_m}{} \cfgconfig{\nodesucc{\fnode^m}}{\cfgstate^+_m}$ 
    and by the inductive hypothesis also 
    $\cfgconfig{\node}{\cfgstate} \left(\nstep{\as_i'}{}{\fnodeset}{0} \cfgconfig{\fnode^i}{\cfgstate_i'} \step{a_i'}{} \cfgconfig{\nodesucc{\fnode^i}}{\cfgstate_i'^+}\right)^{m-1}_{i=1}$
    for some $\as_1', \cdots, \as_{m-1}'$, $a_1', \cdots, a_{m-1}$, and $\cfgstate_1', \cfgstate_1'^+, \cdots, \cfgstate_{m-1}', \cfgstate_{m-1}'^+$.
    such that $\forall i \in \interval{1}{m-1}.~ \forall x \in \fargs{\fnode^i}.~ \cfgstate_i(x) = \cfgstate_i'(x)$. 
    We are hence left to show that there exists some $\as_m'$, $a_m'$, $\cfgstate_m'$, $\cfgstate_m'^+$ such that
    $\cfgconfig{\nodesucc{\fnode^{m-1}}}{\cfgstate_{m-1}'^+} \nstep{\as_m'}{}{\fnodeset}{0} \cfgconfig{\fnode^m}{\cfgstate_m'} \step{a_m'}{} \cfgconfig{\nodesucc{\fnode^m}}{\cfgstate_m'^+}$, and 
    $\forall x \in \fargs{\fnode^m}.~ \cfgstate_m(x) = \cfgstate_m'(x)$.
    Assume towards contradiction that there is no $\as_m'$, $a_m'$ and $\cfgstate'_m, \cfgstate_m'^+$ such that $\cfgconfig{\nodesucc{\fnode^{m-1}}}{\cfgstate_{m-1}'^+} \nstep{\as_m'}{}{\fnodeset}{0} \cfgconfig{\fnode^m}{\cfgstate_m'} \step{a_m'}{} \cfgconfig{\nodesucc{\fnode^m}}{\cfgstate_m'^+}$, meaning that $\fnode^m$ is not reachable from $\cfgconfig{\fnode^{m-1}}{\cfgstate_{m-1}'}$ without stepping through another node in $\fnodeset$. 
    We consider two cases
    \begin{enumerate}
        \item $\fnode^m$ is not reachable from $\cfgconfig{\nodesucc{\fnode^{m-1}}}{\cfgstate_{m-1}'^+}$. Since every execution ends in the node $\exitnode$, we know that $\cfgconfig{\nodesucc{\fnode^{m-1}}}{\cfgstate_{m-1}'^+} \nstep{\as_{m}'}{}{\{ \fnode^m\}}{0} \cfgconfig{\exitnode}{\cfgstate'_\exitnode}$ and 
        $\cfgconfig{\fnode^{m}}{\cfgstate_{m}} \steps{\as_{m+1}} \cfgconfig{\exitnode}{\cfgstate_\exitnode}$
        for some $\cfgstate_\exitnode$ and $\cfgstate'_\exitnode$, and $\as_{m+1}$.
        Consequently, we know that must be executions 
        $\cfgconfig{\node}{\cfgstate} \steps{\as} \cfgconfig{\exitnode}{\cfgstate_\exitnode}$ and 
        $\cfgconfig{\node}{\cfgstate'} \steps{\as'} \cfgconfig{\exitnode}{\cfgstate'_\exitnode}$ such that 
        $\size{\project{\as}{\fnode^m}} > \size{\project{\as'}{\fnode^m}}$ 
        (with $\as = \sum_{i \in \interval{1}{m+1}} \as_i \cdot a_i$ and $\as' = \sum_{i \in \interval{0}{m}} \as_i' \cdot a_i'$).
        This immediately contradicts $\iio(\fnode^m, v)$ (assumption~\ref{asm:instind}).
        \item A node $\fnode^*$ from $\fnodeset$ different from $\fnode^m$, which is reached from $\cfgconfig{\fnode^{m-1}}{\cfgstate_{m-1}'}$ before reaching $\fnode^m$.
         So, there are $\as_m'$, $a_m'^*$, $\as_m'^*$, $\cfgstate'^*$, $\cfgstate'^{*+}$ such that 
         $\cfgconfig{\nodesucc{\fnode^{m-1}}}{\cfgstate_{m-1}'^+} \nstep{\as_{m}'}{}{\fnodeset}{0} \cfgconfig{\fnode^*}{\cfgstate'^*} \step{a_m'^*}{}
         \cfgconfig{\nodesucc{\fnode^*}}{\cfgstate'^{*+}} \step{\as_m'^*}{*} \cfgconfig{\fnode^m}{\cfgstate_m'}$.
         But then there must be executions $\cfgconfig{\node}{\cfgstate} \steps{\as} \cfgconfig{\fnode^m}{\cfgstate_m}$ and 
         $\cfgconfig{\node}{\cfgstate'} \steps{\as'} \cfgconfig{\fnode^m}{\cfgstate'_m}$ such that
         $\size{\project{\as}{\fnode^*}} < \size{\project{\as'}{\fnode^*}}$
         (with $\as = \sum_{i \in \interval{1}{m+1}} \as_i \cdot a_i$ and $\as' = (\sum_{i \in \interval{0}{m-1}} \as_i' \cdot a_i') \cdot \as_m' \cdot a_m'* \cdot \as_m'^*$). 
         This again immediately contradicts $\iio(\fnode^*, v)$ (assumption~\ref{asm:instind}).
    \end{enumerate}
    Hence, we know that $\cfgstate'_m, \cfgstate_m'^+$ such that $\cfgconfig{\nodesucc{\fnode^{m-1}}}{\cfgstate_{m-1}'^+} \nstep{\as_m'}{}{\fnodeset}{0} \cfgconfig{\fnode^m}{\cfgstate_m'} \step{a_m'}{} \cfgconfig{\nodesucc{\fnode^m}}{\cfgstate_m'^+}$. 
    Finally, we need to show that $\forall x \in \fargs{\fnode^m}.~ \cfgstate_m(x) = \cfgstate_m'(x)$. 
    This immediately follows from assumption~\ref{asm:varind} and concludes the proof.
\end{enumerate} 
\end{proof}

\newcommand{\evmcomponents}{\textsf{Comp}_{\textit{EVM}}}
\newcommand{\evmcomponentspropex}{\textsf{Comp}_{\textit{EVM}}^\exstate}
\newcommand{\evmcomponentsproptx}{\textsf{Comp}_{\textit{EVM}}^\transenv}
\newcommand{\evmcomponentssymb}{\textsf{Comp}_{\textit{EVM}}^S}
\newcommand{\this}{\textsf{this}}
\newcommand{\other}{\textsf{other}}
\newcommand{\valuc}{\textsf{value}}
\newcommand{\originc}{\textsf{origin}}
\newcommand{\gasprizec}{\textsf{gasprice}}
\newcommand{\storc}{\textsf{stor}}
\newcommand{\memoc}{\textsf{m}}
\newcommand{\gasc}{\textsf{gas}}
\newcommand{\msizec}{\textsf{msize}}
\newcommand{\activeaccountc}{\textsf{actor}}
\newcommand{\inputdatac}{\textsf{input}}
\newcommand{\senderc}{\textsf{sender}}
\newcommand{\activecodec}{\textsf{code}}
\newcommand{\accountcodec}{\textsf{code}}
\newcommand{\balancec}{\textsf{bal}}
\newcommand{\noncec}{\textsf{nonce}}
\newcommand{\stackc}{\textsf{stack}}
\newcommand{\headerc}{\textsf{H}}

We define the set of components of EVM configurations.

\begin{definition}[EVM configuration components]
The set of EVM configuration components $\evmcomponents$ is defined as follows
\begin{align*}
    \evmcomponents \define 
    \evmcomponentspropex ~\cup~ \evmcomponentsproptx ~\cup~ \evmcomponentssymb 
\end{align*}
with 
\begin{align*}
    \evmcomponentspropex & \define 
    \{ \mstate.\gasc, \mstate.\msizec, 
    \exenv.\inputdatac, \exenv.\senderc, \exenv.\valuc
    \} \\
    & \quad ~\cup~ 
    \{ \mstate.\stackc(x) ~|~ x \in \integer{8} \}
    ~\cup~
    \{ \mstate.\memoc(x) ~|~ x \in \integer{256} \}
\end{align*}
and
\begin{align*}
    \evmcomponentsproptx & \define 
    \{  \headerc.\parentc, \headerc.\beneficiaryc, \headerc.\difficultyc, \headerc.\blocknumberc, \headerc.\gaslimitc, \headerc.\timestampc, \originc, \gasprizec \}
\end{align*}
and 
\begin{align*}
    \evmcomponentssymb & \define 
    \{ \gstate(\other).\storc, \gstate(\other).\balancec, \gstate(\other).\accountcode, 
    \gstate(\this).\balancec, \gstate(\this).\noncec
    \} \\
    & \quad ~\cup~ \{ \gstate(\this).\storc(x) ~|~ x \in \integer{256} \}
\end{align*}
\end{definition}

Note that we explicitly exclude the program counter, the active account and the active code. 
The reason for this is that we will only compare executions of the same contract starting from the same instruction (program counter).

\begin{definition}
    Let $\inputcomponents \subseteq \evmcomponents$ be a set of EVM components. 
    We define two EVM configurations $(\transenv, \exstate)$, $(\transenv', \exstate')$ equal up to $\inputcomponents$ (written $(\transenv, \exstate) \equalupto{\inputcomponents} (\transenv', \exstate')$) if the following holds
\begin{align*}
    \forall \inputcomponent. \inputcomponent \not \in \inputcomponents
    &\Rightarrow (\inputcomponent \in \evmcomponentspropex \Rightarrow \exstate.\inputcomponent = \exstate'.\inputcomponent) \\ 
        & \quad ~\land~ 
        (\inputcomponent \in \evmcomponentsproptx \Rightarrow \transenv.\inputcomponent = \transenv'.\inputcomponent) \\ 
        & \quad ~\land~ 
        (\inputcomponent = \gstate(\other).\stor \Rightarrow \forall x \in \integer{256}~a \in \integer{160}.\ a \neq \exstate.\exenv.\activeaccountc \Rightarrow \exstate.\gstate(a).\storc(x) = \exstate'.\gstate(a).\storc(x)) \\
        & \quad ~\land~ 
        (\inputcomponent = \gstate(\other).\balancec \Rightarrow \forall a \in \integer{160}.\ a \neq \exstate.\exenv.\activeaccountc \Rightarrow \exstate.\gstate(a).\balancec = \exstate'.\gstate(a).\balancec) \\
        & \quad ~\land~ 
        (\inputcomponent = \gstate(\other).\noncec \Rightarrow \forall a \in \integer{160}.\ a \neq \exstate.\exenv.\activeaccountc \Rightarrow \exstate.\gstate(a).\noncec = \exstate'.\gstate(a).\noncec) \\
        & \quad ~\land~ 
        (\inputcomponent = \gstate(\other).\accountcodec \Rightarrow \forall a \in \integer{160}.\ a \neq \exstate.\exenv.\activeaccountc \Rightarrow \exstate.\gstate(a).\accountcodec = \exstate'.\gstate(a).\accountcodec) \\
        & \quad ~\land~ 
        (\forall x.\ \inputcomponent = \gstate(\this).\stor(x) \Rightarrow \forall a \in \integer{160}.\ a = \exstate.\exenv.\activeaccountc \Rightarrow \exstate.\gstate(a).\storc(x) = \exstate'.\gstate(a).\storc(x)) \\
        & \quad ~\land~ 
        (\inputcomponent = \gstate(\this).\balancec \Rightarrow \forall a \in \integer{160}.\ a = \exstate.\exenv.\activeaccountc \Rightarrow \exstate.\gstate(a).\balancec = \exstate'.\gstate(a).\balancec) \\
        & \quad ~\land~ 
        (\inputcomponent = \gstate(\this).\noncec \Rightarrow \forall a \in \integer{160}.\ a = \exstate.\exenv.\activeaccountc \Rightarrow \exstate.\gstate(a).\noncec = \exstate'.\gstate(a).\noncec)
\end{align*}
\end{definition}

We formally define the function $\componentToVar$, which maps components of EVM configurations to CFG variables. 
Note, that as opposed to the slightly simplified version in the main body of the paper, $\componentToVar$ maps to a set of variables.

\begin{definition}
    The function $\componentToVar \in \evmcomponents \to \setof{\varset}$ is defined as follows:
\begin{align*}
\componentToVar(\inputcomponent) \define
\begin{cases}
\{ \stackvar{x} \} & \inputcomponent = \mstate.\stackc(x) \\
\{ \memvarabs{x}{}, \memvarconc{x}{} \} & \inputcomponent = \mstate.\memoc(x) \\
\{ \locenvvar{\lgas} \} & \inputcomponent = \mstate.\gasc \\
\{ \locenvvar{\msize} \} & \inputcomponent = \mstate.\msizec \\
\{ \locenvvar{\inputdata} \} & \inputcomponent = \exenv.\inputdatac \\
\{ \locenvvar{\sender} \} & \inputcomponent = \exenv.\senderc \\
\{ \locenvvar{\valu} \} & \inputcomponent = \exenv.\valuc \\
\{ \globenvvar{\parent} \} & \inputcomponent = \headerc.\parentc \\
\{ \globenvvar{\beneficiary} \} & \inputcomponent = \headerc.\beneficiaryc \\
\{ \globenvvar{\difficulty} \} & \inputcomponent = \headerc.\difficultyc \\
\{ \globenvvar{\blocknumber} \} & \inputcomponent = \headerc.\blocknumberc \\
\{ \globenvvar{\gaslimit} \} & \inputcomponent = headerc.\gaslimitc \\
\{ \globenvvar{\timestamp} \} & \inputcomponent = \headerc.\timestampc \\
\{ \globenvvar{\origin} \} & \inputcomponent = \originc \\
\{ \globenvvar{\gasprize} \} & \inputcomponent = \gasprizec \\
\{ \globenvvar{\extenv} \} & \inputcomponent \in \{ \gstate(\other).\storc, \gstate(\other).\balancec, \gstate(\this).\balancec, \gstate(\other).\noncec, \gstate(\this).\noncec, \gstate(\other).\accountcode \} \\
\{ \storvarabs{x}{}, \storvarconc{x}{} \} & \inputcomponent = \gstate(\this).\storc(x) \\
\end{cases}
\end{align*}
\end{definition}

Next, we establish the relation between the notions of equivalence up to components: 
\begin{lemma}
    \label{lem:eq-upto-toCFG}
    Let $\transenv, \transenv'$ be transaction environments, $\exstate, \exstate'$ execution states and $\inputcomponents$ a set if EVM state components such that $(\transenv, \exstate) \equalupto{\inputcomponents} (\transenv', \exstate')$
    and $(\cfgstate, \evmcontract, \lpc) = \tocfgstate(\transenv, \exstate)$ and $(\cfgstate', \evmcontract, \lpc) = \tocfgstate(\transenv, \exstate)$ for some $\cfgstate$, $\cfgstate'$, $\evmcontract$. 
    Further, let $\inputvars = \{x ~|~ \exists \inputcomponent \in \inputcomponents.\ x \in \componentToVar(\inputcomponent) \}$.
    Then $\cfgstate \equalupto{\inputvars} \cfgstate'$.
\end{lemma}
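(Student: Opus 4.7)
The plan is to prove the lemma by a case analysis on the form of an arbitrary CFG variable $y \not\in \inputvars$, unfolding the definition of $\tocfgstate$ to identify the unique EVM component(s) whose value determines $\cfgstate(y)$ (respectively $\cfgstate'(y)$), and then invoking the equivalence $(\transenv, \exstate) \equalupto{\inputcomponents} (\transenv', \exstate')$ to conclude that these components coincide. The key structural fact is that $\componentToVar$ is essentially the ``inverse'' of the component extraction performed by $\tocfgstate$: each CFG variable is built from a fixed, small set of EVM components, and $\componentToVar$ lists exactly these. Hence the contrapositive --- if $y \not\in \inputvars$ then for every component $\inputcomponent$ with $y \in \componentToVar(\inputcomponent)$, one has $\inputcomponent \not\in \inputcomponents$ --- gives us directly that the relevant EVM components agree on $(\transenv,\exstate)$ and $(\transenv',\exstate')$.

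First, I would fix an arbitrary $y \in \textsf{dom}(\cfgstate)$ with $y \not\in \inputvars$ and split cases according to which sub-state of $\cfgstate$ the variable $y$ belongs to (stack, static/dynamic memory, static/dynamic storage, local environment, or global environment). For the purely ``functional'' cases --- stack variables $\stackvar{x}$, local environment variables like $\locenvvar{\lgas}, \locenvvar{\msize}, \locenvvar{\inputdata}, \locenvvar{\sender}, \locenvvar{\valu}$, and the scalar global environment variables $\globenvvar{\parent}, \dots, \globenvvar{\gasprize}$ --- the argument is a one-step unfolding: $\tocfgstate$ copies the respective EVM component verbatim, $\componentToVar$ returns the singleton $\{y\}$, and $y \not\in \inputvars$ yields that this component is not in $\inputcomponents$, so the two EVM configurations agree on it and therefore $\cfgstate(y) = \cfgstate'(y)$.

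For memory variables $\memvarconc{x}{\lpc}$ and $\memvarabs{x}{\lpc}$ the argument is similar but uses the specific form of $\tocfgstate$: in the translation, $\cfgmemabs$ is identically $\bot$ and $\cfgmemconc(x, \lpc)$ is $\access{\mstate}{\memo}(x)$. Since $\{\memvarabs{x}{}, \memvarconc{x}{}\} = \componentToVar(\mstate.\memoc(x))$ and $y \not\in \inputvars$, we have $\mstate.\memoc(x) \not\in \inputcomponents$, hence the two states agree on that memory cell; the $\bot$ side is trivial. The storage case $\storvarconc{x}{\lpc}, \storvarabs{x}{\lpc}$ is analogous, using $\gstate(\this).\storc(x)$ and the fact that $\tocfgstate$ reads the storage of the active account (which is the same in both EVM states because $\exenv.\activeaccountc \not\in \evmcomponents$ so it is forced to coincide).

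The one step that will require the most care is the external environment variable $\globenvvar{\extenv}$, because $\componentToVar^{-1}(\globenvvar{\extenv})$ is not a singleton: it contains $\gstate(\other).\storc$, $\gstate(\other).\balancec$, $\gstate(\this).\balancec$, $\gstate(\other).\noncec$, $\gstate(\this).\noncec$, and $\gstate(\other).\accountcode$. Here I would argue: if $\globenvvar{\extenv} \not\in \inputvars$, then for \emph{every} one of the six listed components $\inputcomponent$ we have $\inputcomponent \not\in \inputcomponents$, so the up-to-$\inputcomponents$ equivalence forces agreement on all ``other'' accounts' balances, nonces, codes, and storages, as well as on the active account's balance and nonce. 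Combined with the agreement on the per-cell storage $\gstate(\this).\storc(x)$ inherited from the storage case above (each individual storage cell must also not be excluded when $\globenvvar{\extenv}$ is not excluded, which is immediate from the definition of $\componentToVar$), this yields pointwise equality of the reconstructed $\cfgexternal = (b, n, \gstate)$ component in both translations, so $\cfgstate(\globenvvar{\extenv}) = \cfgstate'(\globenvvar{\extenv})$. Wrapping up all cases completes $\cfgstate \equalupto{\inputvars} \cfgstate'$.
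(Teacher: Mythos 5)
Your overall strategy is exactly the one the paper intends: the paper's own proof is the single sentence ``trivially follows from the definition of $\tocfgstate$ that maps components into variables in the same way as $\componentToVar$,'' and your case analysis on the shape of the CFG variable, using the contrapositive of $y\notin\inputvars$, is the honest expansion of that sentence. All of the singleton cases (stack, local environment, scalar global environment, memory, storage) are handled correctly.

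There is, however, a genuine flaw in your treatment of the one case you yourself identify as delicate, namely $\globenvvar{\extenv}$. You claim parenthetically that ``each individual storage cell must also not be excluded when $\globenvvar{\extenv}$ is not excluded, which is immediate from the definition of $\componentToVar$.'' This is false: $\componentToVar(\gstate(\this).\storc(x)) = \{\storvarabs{x}{},\storvarconc{x}{}\}$ does \emph{not} contain $\globenvvar{\extenv}$, so one can have $\gstate(\this).\storc(x)\in\inputcomponents$ (forcing only the storage variables into $\inputvars$) while $\globenvvar{\extenv}\notin\inputvars$. Since the reconstructed $\cfgexternal$ in $\tocfgstate$ embeds the \emph{entire} global state $\gstate$, including the active account's storage, your argument for pointwise equality of $\cfgstate(\globenvvar{\extenv})$ and $\cfgstate'(\globenvvar{\extenv})$ collapses in precisely that situation. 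The lemma only goes through if the $\gstate$ packed into $\cfgexternal$ is understood to exclude the active account's storage (and code), which is tracked separately by the dedicated storage variables --- this is the reading under which the paper's claim that $\tocfgstate$ ``maps components into variables in the same way as $\componentToVar$'' is accurate. You should either make that reading explicit or restrict the case to components actually in $\componentToVar^{-1}(\globenvvar{\extenv})$ rather than importing agreement on $\gstate(\this).\storc(x)$ from the storage case, where it is not available.
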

\begin{proof}
Trivially follows from the definition of $\tocfgstate$ that maps components into variables in the same way as $\componentToVar$.

\end{proof}
Finally, by combining the previous lemmas, we can prove~\Cref{thm:soundness-tni}.
We restate the theorem for completeness with all assumptions here.
To this end, we restate the definition of \tracenoninterference to explicitly state the consistency assumption the involved executions: 

\begin{definition}[\Tracenoninterference]
	Let $\precontract$ be an EVM contract, $\inputcomponents \in \evmcomponents$ be a set of components of EVM configurations and $\outputproj \in \instructions \to \BB$ be a predicate on instructions. 
	Then \tracenoninterference{} of contract $\precontract$ w.r.t. $\inputcomponents$ and $\outputproj$ (written $\tracenoninter{\evmcontract}{\inputcomponents}{\outputproj}$) is defined as follows: 
	{\small
		\begin{align*}
			\tracenoninter{\evmcontract}{\inputcomponents}{\outputproj}
			&\define
			\forall ~\transenv ~\transenv' ~\exstate ~\exstate'~ t~ t' ~\pi. ~\pi'~ 
			\\
            & \strongconsistent{\exstate}{\precontract} 
            \\
            & \Rightarrow \strongconsistent{\exstate'}{\precontract}
            \\
			& \Rightarrow (\transenv, \exstate) \equalupto{\inputvars} (\transenv', \exstate') 
			\\
			&\Rightarrow \sstepstrace{\transenv}{\cons{\annotate{\exstate}{\evmcontract}}{\callstack}}{\cons{\annotate{t}{\evmcontract}}{\callstack}}{\pi} ~\land~ \finalstate{t}
			\\
			&\Rightarrow \sstepstrace{\transenv}{\cons{\annotate{\exstate'}{\evmcontract}}{\callstack}}{\cons{\annotate{t'}{\evmcontract}}{\callstack}}{\pi'} ~\land~ \finalstate{t'}
			\\
			&\Rightarrow \project{\pi}{\outputproj} = \project{\pi'}{\outputproj}
		\end{align*}
	}
		where $\project{\pi}{\outputproj}$ denotes the trace filtered by $\outputproj$, so containing only the instructions satisfying $\outputproj$.
		and $\exstate$ and $\exstate'$ are strongly consistent with contract $\evmcontract$.
	\end{definition}

Next, we formally define $\patnoninter{\inputcomponents}{\outputproj}{\precontract}$ considering the proper definition of $\componentToVar$:
\begin{definition}[Trace non-interference pattern]
    \label{def:patnoninter}
\begin{align*}
    \patnoninter{\inputcomponents}{\outputproj}{\precontract}\define
    &\{  \pred{InstMayDepOn}(\lpc, \inputvar) ~|~ 
    \exists ~\inputcomponent ~\inputvar~ \instruction~ \vec{x}~ \nextpc~ \pre.~ \inputcomponent \in \inputcomponents ~\land~ \inputvar \in \componentToVar(\inputcomponent)
        ~\land~ \precontract(\lpc) = \instruction(\vec{x}, \nextpc, \pre)
        ~\land~ \outputproj(\instruction)
        \} \\
    &\cup 
    \{  \pred{VarMayDepOn}(x_i, \inputvar) ~|~ 
     \exists ~\inputcomponent ~\inputvar~ \instruction~ \vec{x}~ \nextpc~ \pre.~ \inputcomponent \in \inputcomponents ~\land~ \inputvar \in \componentToVar(\inputcomponent)
    ~\land~ \precontract (\lpc) = (\instruction(\vec{x}, \nextpc, \pre))
    ~\land~ \outputproj(\instruction)
    ~\land~ x_i \in \vec{x} \}.
\end{align*}
\end{definition}

\begin{theorem}[Soundness of \tracenoninterference]
    \label{thm:soundness-tni-full}
    Let $\inputcomponents \subseteq \evmcomponents$ be a set of EVM components, and $\outputproj \in \instructions \to \BB$ an instruction-of-interest predicate.
    Further, let $\precontract$ be a store unreachable contract that does not exhibit local out-of-gas exceptions. 
    Then it holds that
    %
    \begin{align*}
        (\forall p\in \patnoninter{\inputcomponents}{\outputproj}{\precontract}.~ p \not \in \lfp{\rules{\precontract}}) 
        \Rightarrow \tracenoninter{\precontract}{\inputcomponents}{\outputproj}.
    \end{align*}
\end{theorem}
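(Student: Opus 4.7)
The plan is to lift the pattern-based independence guarantees into the CFG semantics, apply the combinatorial matching lemma \Cref{lem:soundness-help}, and then translate the result back through the equivalence with the EVM small-step semantics.

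First, I would unpack the assumption. By \Cref{def:patnoninter}, the absence of all pattern facts from $\lfp{\rules{\precontract}}$ immediately says that for every $\lpc$ with $\precontract(\lpc) = (\instruction(\vec{x}), \nextpc, \pre)$ and $\outputproj(\instruction)$, and for every $\inputcomponent \in \inputcomponents$ and every $\inputvar \in \componentToVar(\inputcomponent)$, neither $\pred{InstMayDepOn}(\lpc, \inputvar)$ nor $\pred{VarMayDepOn}(x_i, \inputvar)$ (for $x_i \in \vec{x}$) is derivable. By \Cref{thm:soundness-predicates} (and its instruction analogue), this gives me, letting $\inputvars$ be the union of $\componentToVar(\inputcomponent)$ over $\inputcomponent \in \inputcomponents$: $\vio(\fvar, \inputvar)$ for every $\fvar \in \fvarset$ and every $\inputvar \in \inputvars$, and $\iio(\fnode, \inputvar)$ for every $\fnode \in \fnodeset$ and every $\inputvar \in \inputvars$. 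These are precisely the hypotheses of \Cref{lem:soundness-help} instantiated with $\varset := \inputvars$.

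Next I would consider two executions $\sstepstrace{\transenv}{\cons{\annotate{\exstate}{\precontract}}{\callstack}}{\cons{\annotate{t}{\precontract}}{\callstack}}{\pi}$ and $\sstepstrace{\transenv'}{\cons{\annotate{\exstate'}{\precontract}}{\callstack}}{\cons{\annotate{t'}{\precontract}}{\callstack}}{\pi'}$ reaching final states, with $(\transenv,\exstate) \equalupto{\inputcomponents} (\transenv',\exstate')$ and both states strongly consistent with $\precontract$. Using \Cref{thm:equivalence-evm-cfg} I translate both EVM executions into CFG multi-step derivations starting from $\tocfgstate(\transenv,\exstate)$ and $\tocfgstate(\transenv',\exstate')$, both at the same initial node $\cfgnode{\access{\access{\exstate}{\mstate}}{\pc}}{0}$, and ending at $\exitnode$/$\haltnode$/$\exceptionnode$. \Cref{lem:eq-upto-toCFG} gives $\cfgstate \equalupto{\inputvars} \cfgstate'$ for the two initial CFG states. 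Crucially, each occurrence in $\pi$ of an instruction satisfying $\outputproj$ corresponds to a visit of an initial node $(\lpc,0) \in \fnodeset$ in the CFG execution, where the stack values of $\vec{x}$ (which by the translation become the values of $\fargs{(\lpc,0)}$) are precisely the arguments recorded by the EVM semantics in $\pi$ for that instruction.

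I then apply \Cref{lem:soundness-help} to the first CFG execution, decomposed as an alternating sequence of zero-$\fnodeset$-visit segments and single steps away from $\fnodeset$-nodes $\fnode^1, \dots, \fnode^m$, to obtain a matching CFG execution from $\cfgstate'$ that visits the same sequence of $\fnodeset$-nodes $\fnode^i$ with the same values on $\fargs{\fnode^i}$. Translating back via \Cref{thm:equivalence-evm-cfg}, this second CFG execution corresponds to an EVM execution starting from $\exstate'$ whose $\outputproj$-projection matches $\project{\pi}{\outputproj}$ exactly. Determinism of the EVM semantics then forces this matched EVM execution to coincide with the actual execution producing $\pi'$, so $\project{\pi}{\outputproj} = \project{\pi'}{\outputproj}$.

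The main obstacle I anticipate is the bookkeeping in the last step: \Cref{lem:soundness-help} guarantees a matching execution of length equal to the number $m$ of $\outputproj$-events in the first trace, but not that the matched execution is a \emph{maximal} execution terminating in a final state. I would handle this by first arguing, via one further application of \Cref{thm:inst-soundness-independence} to a virtual sink node (or equivalently, by running the matching inductively up to $\haltnode$/$\exceptionnode$ and checking that the $\iio$ hypothesis rules out spurious extra $\fnodeset$-visits in either direction), that both matched executions agree on the total count of $\fnodeset$-visits and hence that $\project{\pi}{\outputproj}$ and $\project{\pi'}{\outputproj}$ have equal length. The remaining arithmetic of matching arguments per occurrence is then a direct consequence of the conclusion of \Cref{lem:soundness-help}.
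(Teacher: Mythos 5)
Your proposal follows essentially the same route as the paper's proof: unpack the pattern into $\vio$/$\iio$ facts via the predicate-soundness results, translate both runs into the CFG semantics with \Cref{thm:equivalence-evm-cfg} and \Cref{lem:eq-upto-toCFG}, apply \Cref{lem:soundness-help} to match the $\fnodeset$-visits and their arguments, and close with determinism. The one subtlety you flag---that \Cref{lem:soundness-help} does not by itself rule out a length mismatch between $\project{\pi}{\outputproj}$ and $\project{\pi'}{\outputproj}$---is exactly what the paper handles in the prefix cases of its proof by contradiction (applying the lemma in the direction of the longer trace and contradicting determinism), which is the same resolution you sketch.
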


Note that due to the definition of EVM components, we can rely on the fact that 
\begin{align}
    (\transenv, \exstate) \equalupto{\inputvars} (\transenv', \exstate') \Rightarrow \access{\access{\exstate}{\mstate}}{\pc} = \access{\access{\exstate'}{\mstate}}{\pc} \label{asm:pc}    
\end{align}

\begin{proof}
Assume that $\forall p\in \patnoninter{\inputcomponents}{\outputproj}{\precontract}.~ p \not \in \lfp{\rules{\precontract}}$.
From the definition of $\patnoninter{\inputcomponents}{\outputproj}{\precontract}$ (~\Cref{def:patnoninter}), we know that hence
\begin{align*}
    \forall~ \inputcomponent \in \inputcomponents. 
    ~\forall~ \inputvar \in  \componentToVar(\inputcomponent). 
    ~\forall~ \lpc~ i~ \nextpc~ \pre~ \instruction ~\vec{x}. 
    ~\precontract(\lpc) = \instruction(\vec{x}, \nextpc, \pre)
    \Rightarrow \outputproj(\instruction)
    \Rightarrow \pred{InstMayDepOn}((\lpc, i), \inputvar) \not \in \lfp{\rules{\precontract}}
\end{align*}
and 
\begin{align*}
    \forall~ \inputcomponent \in \inputcomponents. 
    ~\forall~ \inputvar \in \componentToVar(\inputcomponent). 
    ~\forall~ x~\lpc~ \nextpc~ \pre~ \instruction ~\vec{x}. 
    ~\precontract(\lpc) = \instruction(\vec{x}, \nextpc, \pre)
    \Rightarrow \outputproj(\instruction)
    \Rightarrow x \in \vec{x}
    \Rightarrow \pred{VarMayDepOn}(x_i, \inputcomponent) \not \in \lfp{\rules{\precontract}}
\end{align*}

From \Cref{app:thm:soundness-independence},\Cref{lem:pred_slicing_app}, \Cref{thm:inst-soundness-independence},\Cref{lem:inst_pred_slicing_app},
we get that 
    \begin{align*}
        \forall~ \inputcomponent \in \inputcomponents. 
        ~\forall~ \inputvar \in \componentToVar(\inputcomponent). 
        ~\forall~ \lpc~ i~\nextpc~ \pre~ \instruction ~\vec{x}. 
        ~\precontract(\lpc) = \instruction(\vec{x}, \nextpc, \pre)
        \Rightarrow \outputproj(\instruction)
        \Rightarrow \iio((\lpc, i),, \inputvar)
    \end{align*}
    and 
    \begin{align*}
        \forall~ \inputcomponent \in \inputcomponents.
        ~\forall~ \inputvar \in \componentToVar(\inputcomponent). 
        ~\forall~ x~\lpc~ \nextpc~ \pre~ \instruction ~\vec{x}. 
        ~\precontract(\lpc) = \instruction(\vec{x}, \nextpc, \pre)
        \Rightarrow \outputproj(\instruction)
        \Rightarrow x \in \vec{x}
        \Rightarrow \vio(x_i, \inputvar)
    \end{align*}
Which is equivalent to 
\begin{align}
    \forall~v \in \{v ~|~ \exists \inputcomponent \in \inputcomponents.~v \in \componentToVar(\inputcomponent) \}. 
    ~\forall \fnode \in \fnodeset.
    ~ \iio(\fnode, v)
\end{align} \label{asm:iio}
and 
\begin{align}
    \forall~v \in \{v ~|~ \exists \inputcomponent \in \inputcomponents.~v \in  \componentToVar(\inputcomponent) \}. 
    ~\forall x \in \fvarset.
    ~\vio(x, v)
\end{align}. \label{asm:vio}

Now let $\transenv$, $\transenv'$ be transaction environments and  $\exstate$, $\exstate'$, $t$, $t'$ be execution states and  $\pi$, and  $\pi'$ be traces. 
Assume the following: 
\begin{enumerate}
    \item $(\transenv, \exstate) \equalupto{\inputvars} (\transenv', \exstate')$ \label{asm:p1}
    \item $\sstepstrace{\transenv}{\cons{\annotate{\exstate}{\evmcontract}}{\callstack}}{\cons{\annotate{t}{\evmcontract}}{\callstack}}{\pi}$\label{asm:p2}
    \item $\finalstate{t}$ \label{asm:p3}
    \item $\sstepstrace{\transenv}{\cons{\annotate{\exstate'}{\evmcontract}}{\callstack}}{\cons{\annotate{t'}{\evmcontract}}{\callstack}}{\pi'}$ \label{asm:p4}
    \item $\finalstate{t'}$ \label{asm:p5}
\end{enumerate}
We need to show that $\project{\pi}{\outputproj} = \project{\pi'}{\outputproj}$. 

We define the following relation to reason about steps that do not produce any actions satisfying $\outputproj$: 
\begin{align*}
    \transenv \vDash \cons{\exstate}{\callstack }\combifstep{*}{}{f}{} \cons{\exstate'}{\callstack }\ \define 
    \combisteps{\transenv}{\cons{\exstate}{\callstack}}{\cons{\exstate'}{\callstack}}{\pi} ~\land~ \project{\pi}{\outputproj} = \nil
\end{align*}

Using this definition, we can decompose every run 
$\sstepstrace{\transenv}{\cons{\annotate{\exstate}{\evmcontract}}{\callstack}}{\cons{\annotate{t}{\evmcontract}}{\callstack}}{\pi}$
into a run 
$\transenv \vDash \cons{\annotate{\exstate}{\evmcontract}}{\callstack} \left(\combifstep{*}{}{\outputproj}{} \cons{\annotate{\exstate^{f, i}}{\evmcontract}}{\callstack} \combinstep{}{\instruction_i(\vec{v}^{i})} \cons{\annotate{\exstate^{f^+,i}}{\evmcontract}}{\callstack} \right)_{i=1}^m \combifstep{*}{}{\outputproj}{} \cons{\annotate{t}{\evmcontract}}{\callstack}$. 
Such that $\project{\pi}{\outputproj} = \sum_{i =1}^m \instruction_i(\vec{v}^i)$.
Similarly, we can deconstruct every run $\sstepstrace{\transenv}{\cons{\annotate{\exstate'}{\evmcontract}}{\callstack}}{\cons{\annotate{t'}{\evmcontract}}{\callstack}}{\pi'}$ 
into a run 
$\transenv \vDash \cons{\annotate{\exstate'}{\evmcontract}}{\callstack} \left(\combifstep{*}{}{\outputproj}{} \cons{\annotate{\exstate'^{f, i}}{\evmcontract}}{\callstack} \combinstep{}{\instruction_i'(\vec{v}'^{i})} \cons{\annotate{\exstate'^{f^+,i}}{\evmcontract}}{\callstack} \right)_{i=1}^n \combifstep{*}{}{\outputproj}{} \cons{\annotate{t'}{\evmcontract}}{\callstack}$. 
Such that $\project{\pi'}{\outputproj} = \sum_{i =1}^n \instruction_i'(\vec{v}'^i)$.
Towards contradiction, we assume that $\project{\pi}{\outputproj} \neq \project{\pi'}{\outputproj}$. 
So, either there is a position $i$ such that $\instruction_i(\vec{v}^i) \neq \instruction_i'(\vec{v}'^i)$ and for all $j < i$ it holds that $\instruction_j(\vec{v}^j) = \instruction_j'(\vec{v}'^j)$ 
or $\size{\project{\pi}{\outputproj}} < \size{\project{\pi'}{\outputproj}}$ and $\project{\pi}{\outputproj}$ is a prefix of $\project{\pi'}{\outputproj}$
or $\size{\project{\pi'}{\outputproj}} < \size{\project{\pi}{\outputproj}}$ and $\project{\pi'}{\outputproj}$ is a prefix of $\project{\pi}{\outputproj}$.
We proceed by case distinction based on these cases:
\begin{enumerate}
    \item Assume that there is a position $i$ such that $\instruction_i(\vec{v}^i) \neq \instruction_i'(\vec{v}'^i)$ and for all $j < i$ it holds that $\instruction_j(\vec{v}^j) = \instruction_j'(\vec{v}'^j)$. 
    Then we know that $\transenv \vDash \cons{\annotate{\exstate}{\evmcontract}}{\callstack} \left(\combifstep{*}{}{\outputproj}{} \cons{\annotate{\exstate^{f, j}}{\evmcontract}}{\callstack} \combinstep{}{\instruction_j(\vec{v}^{j})} \cons{\annotate{\exstate^{f^+,j}}{\evmcontract}}{\callstack} \right)_{j=1}^{i-1} \combifstep{*}{}{\outputproj}{} \cons{\annotate{\exstate^{f, i}}{\evmcontract}}{\callstack} \combinstep{}{\instruction_i(\vec{v}^{i})} \cons{\annotate{\exstate^{f^+,i}}{\evmcontract}}{\callstack}$ and hence
    for $\mstate_i = \access{\exstate^{f, i}}{\mstate}$ we have that 
    $\evmcontract(\access{\mstate_i}{\pc}) = (\instruction_i(\vec{x}^i), \nextpc^i, \pre_i)$ and 
    for all $v^i_k \in \vec{v}^i$ it holds that $v^i_k = \access{\mstate_i}{\stack}(x^i_k)$.
    Similarly, we know that $\transenv \vDash \cons{\annotate{\exstate'}{\evmcontract}}{\callstack} \left(\combifstep{*}{}{\outputproj}{} \cons{\annotate{\exstate'^{f, j}}{\evmcontract}}{\callstack} \combinstep{}{\instruction_j'(\vec{v}'^{j})} \cons{\annotate{\exstate'^{f^+,j}}{\evmcontract}}{\callstack} \right)_{j=1}^{i-1} \combifstep{*}{}{\outputproj}{} \cons{\annotate{\exstate'^{f, i}}{\evmcontract}}{\callstack} \combinstep{}{\instruction_i'(\vec{v}'^{i})} \cons{\annotate{\exstate'^{f^+,i}}{\evmcontract}}{\callstack}$ and hence
    for $\mstate_i' = \access{\exstate'^{f, i}}{\mstate}$ we have that 
    $\evmcontract(\access{\mstate_i'}{\pc}) = (\instruction_i'(\vec{x}'^i), \nextpc'^i, \pre_i')$ and 
    for all $v_k'^i \in \vec{v}'^i$ it holds that $v_k'^i = \access{\mstate_i'}{\stack}(x_k'^i)$.
    From \Cref{thm:equivalence-evm-cfg} we get that 
    \begin{enumerate}
        \item  $\precontract, \size{\callstack} \vDash \cfgconfig{\node_\exstate}{\tocfgstate(\transenv, \exstate) \uplus \cfgstatecopybot} 
        \left( \nstep{}{}{\fnodeset}{0}\cfgconfig{\node_{\exstate^{f,j}}}{\tocfgstate(\transenv, \exstate^{f, j}) \uplus \cfgstatecopybot} \cfgmednstep{} \cfgconfig{\node_{\exstate^{f^+,j}}}{\tocfgstate(\transenv, \exstate^{f^+, j}) \uplus \cfgstatecopybot} \right)_{j=1}^{i}$
        such that all $\node_{\exstate^{f,j}} \in \fnodeset$ 
        \item  $\precontract, \size{\callstack} \vDash \cfgconfig{\node_{\exstate'}}{\tocfgstate(\transenv, \exstate') \uplus \cfgstatecopybot} 
        \left( \nstep{}{}{\fnodeset}{0}\cfgconfig{\node_{\exstate'^{f,j}}}{\tocfgstate(\transenv, \exstate'^{f, j}) \uplus \cfgstatecopybot} \cfgmednstep{} \cfgconfig{\node_{\exstate'^{f^+,j}}}{\tocfgstate(\transenv, \exstate'^{f^+, j}) \uplus \cfgstatecopybot} \right)_{j=1}^{i}$
        such that all $\node_{\exstate'^{f,j}} \in \fnodeset$.
    \end{enumerate}
    Note that from assumption~\ref{asm:pc} we know that $\node_\exstate = (\access{\access{\exstate}{\mstate}}{\pc}, 0) = (\access{\access{\exstate'}{\mstate}}{\pc}, 0) = \node_{\exstate'}$.
    Further, we know that
    \[
        \precontract, \size{\callstack} \vDash \cfgconfig{\node_\exstate}{\tocfgstate(\transenv, \exstate) \uplus \cfgstatecopybot} 
    \left( \nstep{}{}{\fnodeset}{0}\cfgconfig{\node_j^*}{\cfgstate_{j}} \step{}{} \cfgconfig{\node_j^*}{\cfgstate_{j+}} \right)_{j=1}^l
    \]
    for some $l \geq i$ 
    such that there is some $g\in \NN \to \NN$ 
    for which it holds that 
    1) $\forall\ n\ m.\ n < m \Rightarrow g(n) < g(m)$ and
    2) $\forall\ j \in \interval{1}{i}.\ \node_{\exstate^{f,j}} = \node_{g(j)}^*$ and
    3) $\forall\ j \in \interval{1}{i}.\ \tocfgstate(\transenv, \exstate^{f, j}) \uplus \cfgstatecopybot =  \cfgstate_{g(j)}$ and
    4) $\forall\ j \in \interval{1}{i}.\ \forall k.\ k > g(i) \Rightarrow k < g(i+1) \Rightarrow \forall \lpc.\ \node_{g(j)}^* = (\lpc, 0) \Rightarrow \exists q.\ \node_{k}^* = (\lpc, q)$.
    This is as all $\cfgmednstep{}$-steps from nodes in $\fnodeset$ can be expanded into the individual steps between the subnodes of the same $\pc$.

    Consequently, we can apply~\Cref{lem:soundness-help} (using~\Cref{lem:eq-upto-toCFG}) to obtain
    \[ 
        \precontract, \size{\callstack} \vDash \cfgconfig{\node_{\exstate'}}{\tocfgstate(\transenv, \exstate') \uplus \cfgstatecopybot} 
    \left( \nstep{}{}{\fnodeset}{0}\cfgconfig{\node_j^*}{\cfgstate_{j}'} \step{}{} \cfgconfig{\node_j^*}{\cfgstate_{j+}'} \right)_{j=1}^l
    \]
    and $\forall p \in \interval{1}{l}.~ \forall x \in \fargs{\node_p^*}.~ \cfgstate_p(x) = \cfgstate_p'(x)$. 
    Consequently, we can also conclude that 
    \[
        \precontract, \size{\callstack} \vDash \cfgconfig{\node_{\exstate'}}{\tocfgstate(\transenv, \exstate') \uplus \cfgstatecopybot} 
        \left( \nstep{}{}{\fnodeset}{0}\cfgconfig{\node_{\exstate^{f,j}}}{\cfgstate_j^\dagger} \cfgmednstep{} \cfgconfig{\node_{\exstate^{f^+,j}}}{\cfgstate_j^\dagger} \right)_{j=1}^{i}
    \]
    such that $\forall\ j \in \interval{1}{i}. \cfgstate_{j}^\dagger = \cfgstate_{g(j)}'$.
    In particular, this means that 
    $\forall\ j \in \interval{1}{i-1}.\ \forall x \in \fargs{\node_{\exstate^{f,j}}}.~ \cfgstate_{j}^\dagger(x) = \tocfgstate(\transenv, \exstate^{f, j}) \uplus \cfgstatecopybot(x)$.

    Since $\instruction_i(\vec{v}^i) \neq \instruction_i'(\vec{v}'^i)$, it must either holds that $\instruction_i(\vec{x}^i) \neq \instruction_i'(\vec{x}'^i)$ or there is some position $k$ such that $v_k^i \neq v_k'^i$. We do another case distinction:
    \begin{enumerate}
        \item Assume that $\instruction_i(\vec{x}^i) \neq \instruction_i'(\vec{x}'^i)$. Then it must hold that $\access{\mstate_i}{\pc} \neq \access{\mstate_i'}{\pc}$ (since $\evmcontract$ deterministically maps program counters triples $(\instruction(\vec{x}), \nextpc, \pre)$ ) and consequently $\node_{\exstate'^{f,i}} \neq \node_{\exstate^{f,i}}$.
        However, since execution is deterministic, and we know that we can reach $\node_{\exstate^{f,i}}$ as the $ith$ node from $\fnodeset$ (with a different $\pc$) when starting the execution in $\cfgconfig{\node_{\exstate'}}{\tocfgstate(\transenv, \exstate') \uplus \cfgstatecopybot}$, this leads to a contradiction.
        \item Assume that there is some position $k$ such that $v_k^i \neq v_k'^i$. This means that there exists $x_k^i \in \vec{x}^i$ (with $\precontract(\access{\mstate_i}{\pc}) = (\instruction(\vec{x}^i), \nextpc^i, \pre_i)$) such that $\access{\mstate_i'}{\stack}(x_k^i) \neq \access{\mstate_i}{\stack}(x_k^i)$.
        However, since execution is deterministic, and we know that we can reach the configuration $\cfgconfig{\node_{\exstate^{f,i}}}{\cfgstate_i^\dagger}$ as the $ith$ node from $\fnodeset$ when starting the execution in $\cfgconfig{\node_{\exstate'}}{\tocfgstate(\transenv, \exstate') \uplus \cfgstatecopybot}$, 
        we know that $\cfgstate_i^\dagger = \tocfgstate(\transenv, \exstate'^{f, j}) \uplus \cfgstatecopybot$. 
        Further, we know that $\cfgstate_{i}^\dagger(x_k^i) = \tocfgstate(\transenv, \exstate^{f, i}) \uplus \cfgstatecopybot(x_k^i)$ and consequently also 
        $\tocfgstate(\transenv, \exstate'^{f, j}) \uplus \cfgstatecopybot(x_k^i) = \tocfgstate(\transenv, \exstate^{f, i}) \uplus \cfgstatecopybot(x_k^i)$. 
        This contradicts $\access{\mstate_i'}{\stack}(x_k^i) \neq \access{\mstate_i}{\stack}(x_k^i)$. 
    \end{enumerate}
    \item Assume that $\size{\project{\pi}{\outputproj}} < \size{\project{\pi'}{\outputproj}}$ and $\project{\pi}{\outputproj}$ is a prefix of $\project{\pi'}{\outputproj}$. 
    So, $m < n$, which means that 
    \[
        \transenv \vDash \cons{\annotate{\exstate'}{\evmcontract}}{\callstack} 
        \left(\combifstep{*}{}{\outputproj}{} \cons{\annotate{\exstate'^{f, i}}{\evmcontract}}{\callstack} \combinstep{}{\instruction_i'(\vec{v}'^{i})} \cons{\annotate{\exstate'^{f^+,i}}{\evmcontract}}{\callstack} \right)_{i=1}^m 
        \left(\combifstep{*}{}{\outputproj}{} \cons{\annotate{\exstate'^{f, i}}{\evmcontract}}{\callstack} \combinstep{}{\instruction_i'(\vec{v}'^{i})} \cons{\annotate{\exstate'^{f^+,i}}{\evmcontract}}{\callstack} \right)_{i=m}^{m+k} 
    \combifstep{*}{}{\outputproj}{} \cons{\annotate{t'}{\evmcontract}}{\callstack} 
    \]
    for some $k > 1$. 
    From \Cref{thm:equivalence-evm-cfg} we get that
    \begin{align*}
        \precontract, \size{\callstack} \vDash \cfgconfig{\node_{\exstate'}}{\tocfgstate(\transenv, \exstate') \uplus \cfgstatecopybot} 
        \left( \nstep{}{}{\fnodeset}{0}\cfgconfig{\node_{\exstate'^{f,i}}}{\tocfgstate(\transenv, \exstate'^{f, i}) \uplus \cfgstatecopybot} \cfgmednstep{} \cfgconfig{\node_{\exstate'^{f^+,i}}}{\tocfgstate(\transenv, \exstate'^{f^+, i}) \uplus \cfgstatecopybot} \right)_{i=1}^{n}
    \end{align*}
    such that all $\node_{\exstate'^{f,j}} \in \fnodeset$.
    From this, we can conclude that
    \[
        \precontract, \size{\callstack} \vDash \cfgconfig{\node_{\exstate'}}{\tocfgstate(\transenv, \exstate') \uplus \cfgstatecopybot} 
    \left( \nstep{}{}{\fnodeset}{0}\cfgconfig{\node_j^*}{\cfgstate_{j}} \step{}{} \cfgconfig{\node_j^*}{\cfgstate_{j+}} \right)_{j=1}^l
    \]
    for some $l \geq n$ 
    such that there is some $g\in \NN \to \NN$ 
    for which it holds that 
    1) $\forall\ n\ m.\ n < m \Rightarrow g(n) < g(m)$ and
    2) $\forall\ j \in \interval{1}{n}.\ \node_{\exstate'^{f,j}} = \node_{g(j)}^*$ and
    3) $\forall\ j \in \interval{1}{n}. \tocfgstate(\transenv, \exstate'^{f, j}) \uplus \cfgstatecopybot =  \cfgstate_{g(j)}$ and
    4) $\forall\ j \in \interval{1}{i}.\ \forall k.\ k > g(i) \Rightarrow k < g(i+1) \Rightarrow \forall \lpc.\ \node_{g(j)}^* = (\lpc, 0) \Rightarrow \exists q.\ \node_{k}^* = (\lpc, q)$.
    This is as all $\cfgmednstep{}$-steps from nodes in $\fnodeset$ can be expanded into the individual steps between the subnodes of the same $\pc$.

    Note that from assumption~\ref{asm:pc} we know that $\node_\exstate = (\access{\access{\exstate}{\mstate}}{\pc}, 0) = (\access{\access{\exstate'}{\mstate}}{\pc}, 0) = \node_{\exstate'}$.
    Consequently, we can apply~\Cref{lem:soundness-help} (using~\Cref{lem:eq-upto-toCFG}) to obtain
    \[ 
        \precontract, \size{\callstack} \vDash \cfgconfig{\node_\exstate}{\tocfgstate(\transenv, \exstate') \uplus \cfgstatecopybot} 
    \left( \nstep{}{}{\fnodeset}{0}\cfgconfig{\node_j^*}{\cfgstate_{j}'} \step{}{} \cfgconfig{\node_j^*}{\cfgstate_{j+}'} \right)_{j=1}^l
    \]
    and $\forall p \in \interval{1}{l}.~ \forall x \in \fargs{\node_p^*}.~ \cfgstate_p(x) = \cfgstate_p'(x)$. 
    Consequently, we can also conclude that 
    \[
        \precontract, \size{\callstack} \vDash \cfgconfig{\node_{\exstate}}{\tocfgstate(\transenv, \exstate) \uplus \cfgstatecopybot} 
        \left( \nstep{}{}{\fnodeset}{0}\cfgconfig{\node_{\exstate'^{f,j}}}{\cfgstate_j^\dagger} \cfgmednstep{} \cfgconfig{\node_{\exstate'^{f^+,j}}}{\cfgstate_j^\dagger} \right)_{j=1}^{n}
    \]
    such that $\forall\ j \in \interval{1}{n}. \cfgstate_{j}^\dagger = \cfgstate_{g(j)}'$.
    Consequently, we have that
    \begin{align*}
        \precontract, \size{\callstack} \vDash \cfgconfig{\node_{\exstate}}{\tocfgstate(\transenv, \exstate) \uplus \cfgstatecopybot} 
        \left( \nstep{}{}{\fnodeset}{0}\cfgconfig{\node_{\exstate'^{f,j}}}{\cfgstate_j^\dagger} \cfgmednstep{} \cfgconfig{\node_{\exstate'^{f^+,j}}}{\cfgstate_j^\dagger} \right)_{j=1}^{m} \\
        \left( \nstep{}{}{\fnodeset}{0}\cfgconfig{\node_{\exstate'^{f,j}}}{\cfgstate_j^\dagger} \cfgmednstep{} \cfgconfig{\node_{\exstate'^{f^+,j}}}{\cfgstate_j^\dagger} \right)_{j=m}^{m+k} 
    \end{align*}

    However, from~\Cref{thm:equivalence-evm-cfg} we know that 
    \begin{align*}
        \precontract, \size{\callstack} \vDash \cfgconfig{\node_\exstate}{\tocfgstate(\transenv, \exstate) \uplus \cfgstatecopybot} 
        \left( \nstep{}{}{\fnodeset}{0}\cfgconfig{\node_{\exstate^{f,j}}}{\tocfgstate(\transenv, \exstate^{f, j}) \uplus \cfgstatecopybot} \cfgmednstep{} \cfgconfig{\node_{\exstate^{f^+,j}}}{\tocfgstate(\transenv, \exstate^{f^+, j}) \uplus \cfgstatecopybot} \right)_{j=1}^{m} \\
        \nstep{}{}{\fnodeset}{0}\cfgconfig{\node_{t}}{\tocfgstate(\transenv, t) \uplus \cfgstatecopybot}
    \end{align*}
    such that all $\node_{\exstate^{f,j}} \in \fnodeset$.
    This leads to a contradiction, since execution is deterministic and like this we obtain two executions starting in $\cfgconfig{\node_\exstate}{\tocfgstate(\transenv, \exstate) \uplus \cfgstatecopybot}$, which step through a different number of nodes (with different $\pc$s) in $\fnodeset$.
    \item Assume that $\size{\project{\pi'}{\outputproj}} < \size{\project{\pi}{\outputproj}}$ and $\project{\pi'}{\outputproj}$ is a prefix of $\project{\pi}{\outputproj}$. The proof is fully analogous to the previous case.
\end{enumerate}

\end{proof}

\section{Securify}
\label{sec:securify-spec}
\subsection{Analysis specification}
The following rules are extracted from the Securify fixed-point calculation \cite{securify2020repo}\cite{tsankov2018securify}.
We split the appendix in input facts in Section \Cref{app:a:1}, $\may$-semantic rules in \Cref{app:a:2}, and $\must$-semantic rules in \Cref{app:a:3}.
$\must$-analysis rules that are identical to the $\may$-analysis semantic rules are left out.
Identical rules are denoted with $\Leftarrow$ instead of $\mayrule$ in Appendix \ref{app:a:2}.
Additionally, analogous rules for storage are omitted.

\subsubsection{Input facts}
\label{app:a:1}
\begin{align}
    \pred{Source}(L,Y_0,\inst{inst}) & \leftarrow \inst{inst}(L, Y_0,\ldots)  
\end{align}
\begin{align}
    \pred{AssignVar}(L,Y_i,X_j) & \leftarrow \inst{inst}(L,~ \underbrace{\ldots,~ Y_i,~ \ldots~}_{Outputs},~ \underbrace{\ldots,~ X_j,~ \ldots~}_{Inputs})   \\
    \intertext{\centering \inst{inst} $\not \in \{\inst{mload}, \inst{sload},  \inst{sha3}\}$ (no propagation for known accesses)} \nonumber
\end{align}
\begin{align}
    \inst{mstore}(L, M_O, X) & \leftarrow \inst{mstore}(L, O, X),  hasConstantValue(O)   \\
    \inst{mstore}(L, \top, X) & \leftarrow \inst{mstore}(L, O, X), \neg hasConstantValue(O)  \\
    \inst{mload}(L, Y, M_O) & \leftarrow \inst{mload}(L, Y, O), hasConstantValue(O)  \\
    \inst{mload}(L, Y, \top) & \leftarrow \inst{mload}(L, Y, O), \neg hasConstantValue(O) 
\end{align}
\begin{align}
    \pred{AssignVar}(L,Y,O) & \leftarrow_{\may} \inst{mload}(L, Y, O), \neg hasConstantValue(O)  
\end{align}
\paragraph{Optional \pred{Source} Rules}
\begin{align}
    \pred{Source}(L,Y,M_O) & \leftarrow \inst{mload}(L, Y, O), hasConstantValue(O)   \\
    \pred{Source}(L,Y,M_{\top}) & \leftarrow \inst{mload}(L, Y, O), \neg hasConstantValue(O)  
\end{align}
\begin{align}
    \pred{Source}(L,Y,Y) & \leftarrow \inst{call}(L,Y,\ldots) ~ \textit{or} ~ \inst{staticcall}(L,Y, \ldots) 
\end{align} 

\paragraph{$\may$ Control Flow and Dependency Propagation}
\begin{align}
    \pred{Follow}(L_1, L_2) & \leftarrow \inst{inst}_{pc}(L_1, \ldots), \inst{inst}_{pc+1}(L_2, \ldots), \textit{hasLinearSuc}(L_1) \\
    \pred{Follow}(L_1, L_3) & \leftarrow \inst{jumpI}(L_1, \_,L_3)\\
    \pred{Follow}(L_1, L_2) & \leftarrow \inst{jump}(L_1,L_2) 
\end{align}
\begin{align}
    \pred{Taint}(L_1,L_3,X) &  \leftarrow_{\may} \inst{jumpI}(L_1, X, L_3), L_3 \not = \textit{MergeInstr}(L_1)   \\
    \pred{Taint}(L_1,L_2,X) &  \leftarrow_{\may} "\inst{jumpI}(L_1, X, L_3); \inst{inst}(L_2,\ldots)", L_2 \not = \textit{MergeInstr}(L_1)  \\
    \pred{Join}(L_1, L_2) &  \leftarrow_{\may} \inst{jumpI}(L_1, X, L_3), L_2 = \textit{MergeInstr}(L_1)  
\end{align}

\paragraph{$\must$ Control Flow and Dependency Propagation}
\begin{align}
    OneBranchTag(L) & \leftarrow_{\must} \inst{jumpDest}(L), \text{one incoming branch}, \text{no prev inst in BB} \\
    Tag(L) & \leftarrow_{\must} \inst{jumpDest}(L)  
\end{align}
\begin{align}
    Jump(L_1, L_2, L_4), Follows(L_1, L_2) & \leftarrow_{\must} "\inst{jumpI}(L_1,\ldots); \inst{inst}(L_2,\ldots)", L_4 = \text{MergeInstr} \\
    Jump(L_1, L_3, L_4) & \leftarrow_{\must} \inst{jumpI}(L_1, \_,L_3), L_4 = \text{MergeInstr} \\
    Jump(L_1, L_2, L_2) & \leftarrow_{\must} \inst{jump}(L_1, L_2)  
\end{align}
\begin{align}
    Follow(L_1, L_2) & \leftarrow_{\must} "\inst{inst}(L_1,\ldots); \inst{inst}(L_2,\ldots)" \; \text{(same basic block)}  
\end{align}
\pred{JoinIncBranches} (\pred{JIB}) connects all incoming branches such that the $\must$-analysis can check if a predicate holds on all preceeding nodes:
\begin{align}
    \pred{JIB}(L_1, L_2, L^{\textit{inc}}_2), \ldots ,\pred{JIB}(L^{\textit{inc}}_n, L_n, L') \leftarrow_{\must} \forall i,j,k.
    \begin{cases}
        \vspace{0.1cm}
        "\inst{inst}_{\textit{pc}}(L_i,\ldots); \inst{jumpDest}_{\textit{pc}+1}(L')"  \\ \vspace{0.1cm}
        \inst{jump}(L_j,L') \\ \vspace{0.1cm}
        \inst{jumpI}(L_k,\_,L') 
    \end{cases} 
\end{align}

\subsubsection{May-Semantic Rules}
\label{app:a:2}
\begin{align}
    \pred{VarMayDepOn}(Y,X) & \mayrule \pred{AssignVar}(\_, Y, Y'), \pred{VarMayDepOn}(Y',X)   \\
    \pred{VarMayDepOn}(Y,X) & \mayrule \pred{AssignVar}(L, Y, \_), \pred{Taint}(\_, L, Y'), \pred{VarMayDepOn}(Y', X)    \\
    \pred{VarMayDepOn}(Y,X) & \mayrule \pred{Source}(L, Y, \_), \pred{Taint}(\_, L, Y'), \pred{VarMayDepOn}(Y', X)   
\end{align}
\begin{align}
    \pred{VarMayDepOn}(Y,X) & \mayrule \pred{Source}(\_, Y, X)   
\end{align}
\begin{align}
    \pred{InstMayDepOn}(L,X) & \mayrule \pred{Taint}(\_, L, X)   \\
    \pred{InstMayDepOn}(L,X) & \mayrule \pred{Taint}(\_, L, Y), \pred{VarMayDepOn}(Y, X)  
\end{align}

\paragraph{Memory Dependency Propagation}

\begin{align}
    \pred{MemMayDepOn}(L,O,T) & \mayrule \inst{mstore}(L, O, X), \pred{VarMayDepOn}(X, T)  \\
    \pred{MemMayDepOn}(L,O,T) & \mayrule \pred{Follows}(L_1, L), \pred{MemMayDepOn}(L_1, O, T), \\ & \neg \pred{ReassignMem}(L, O)  \nonumber
\end{align}
\begin{align}
    \pred{ReassignMem}(L,O) \Leftarrow \inst{mstore}(L, O, \_), \pred{isConst}(O)  
\end{align}
\begin{align}
    \pred{Source}(L, Y, T) & \Leftarrow \inst{mload}(L, Y, O), \pred{MemMayDepOn}(L,O,T), \pred{isConst}(O)  \\
    \pred{Source}(L, Y, T) & \mayrule \inst{mload}(L, Y, O), \pred{MemMayDepOn}(L,\_ ,T), \neg \pred{isConst}(O)  
\end{align}
\paragraph{Control Dependence Propagation} 
\begin{align}
    \pred{Taint}(L_1, L_2, X) & \mayrule \pred{Follow}(L_3,L_2), \pred{Taint}(L_1, L_3, X), \neg \pred{Join}(L_1, L_2) 
\end{align}

\subsubsection{Must-Semantic Rules}
\label{app:a:3} 

\begin{align}
    \pred{MustFollow}(L_1, L_2)  & \mustrule \pred{MustPrecedeStep}(L_1,L_2)   \\
    \pred{MustFollow}(L_1, L_3)  & \mustrule \pred{MustFollow}(L_1,L_2), \pred{MustFollow}(L_2,L_3)  \\
    \nonumber \\
    \pred{MustPrecedeStep}(L_1, L_2)  & \mustrule \pred{Follow}(L_1,L_2), \neg \pred{Tag}(L_2) \\
    \pred{MustPrecedeStep}(L_1, L_3)  & \mustrule \pred{Jump}(L_1, L_2, \_ ), \pred{oneBranchTag}(L_2)  \\
    \pred{MustPrecedeStep}(L_1, L_2)  & \mustrule \pred{Jump}(L_1, \_ , L_2 ) 
\end{align}
\begin{align}
    \pred{DetBy}(L,Y,X) & \mustrule \pred{Source}(L, Y, X)  \\
    \pred{DetBy}(L,Y,X) & \mustrule \pred{AssignVar}(L, Y, Y'), \pred{DetBy}(L,Y',X)  \\
    \nonumber\\
    \pred{DetBy}(L_2,Y,X) & \mustrule \pred{MustFollow}(L_1, L_2), \pred{DetBy}(L_1, Y, X)  \\
    \pred{DetBy}(L,Y,X) & \mustrule \pred{JoinIncBr}(L_1, L_2, L), \pred{DetBy}(L_1, Y, X), \pred{DetBy}(L_2, Y, X) 
\end{align}
\begin{align}
    \pred{MemDetBy}(L,O,T) & \mustrule \inst{mstore}(L, O, X), \pred{DetBy}(L,X,T), \pred{isConst}(O)   \\
    \pred{MemDetBy}(L_2,O,T) & \mustrule \pred{MustPrecedeStep}(L_1,L_2), \pred{MemDetBy}(L_1,O,T), \\ & ~~ \neg \pred{ReassignMem}(L_2, O) \nonumber \\
    \pred{MemDetBy}(L,O,T) & \mustrule \pred{JoinIncBr}(L_1, L_2, L), \pred{MemDetBy}(L_1, O, T), \\ & \; \; 
    \pred{MemDetBy}(L_2, O, T), (\pred{isConst}(O))   \nonumber
\end{align}


\end{appendices}
\fi

\end{document}